\RequirePackage{amsthm}
\documentclass[sn-mathphys,Numbered]{sn-jnl}

\usepackage{graphicx}%
\usepackage{multirow}%
\usepackage{amsmath,amssymb,amsfonts}%
\usepackage{amsthm}%
\usepackage{mathrsfs}%
\usepackage[title]{appendix}%
\usepackage{xcolor}%
\usepackage{textcomp}%
\usepackage{manyfoot}%
\usepackage{booktabs}%
\usepackage{algpseudocode}%
\usepackage{listings}%

\usepackage{mathrsfs}%
\usepackage{algorithm2e,setspace}
\usepackage{bbm}
\usepackage{color}
\usepackage{mathabx}
\usepackage{subfig}

\usepackage{tikz}
\usetikzlibrary{calc}
\usetikzlibrary{3d}



\theoremstyle{thmstyleone}%
\hypersetup{
    colorlinks=true, 
    linktoc=all,     
    linkcolor=blue,
}

\newtheorem{thm}{Theorem}
\newtheorem{corollary}[thm]{Corollary}
\newtheorem{definition}[thm]{Definition}
\newtheorem{lem}[thm]{Lemma}

\newtheorem{rmk}[thm]{Remark}

\newcommand{\x}{\mathbf{x}}

\newcommand\numberthis{\addtocounter{equation}{1}\tag{\theequation}}

\newcommand{\norm}[2]{\left\|#1\right\|_{#2}}

\newcommand{\vb}[1]{\mathbf{#1}}
\newcommand{\abs}[1]{\left|#1\right|}
\newcommand{\E}{\mathbb{E}}

\newcommand{\pr}{\mathbb{P}}
\newcommand{\R}{\mathbb{R}}
\newcommand{\C}{\mathbb{C}}

\newcommand{\eps}{\epsilon}
\newcommand\sbullet[1][.5]{\mathbin{\vcenter{\hbox{\scalebox{#1}{$\bullet$}}}}}

\newcommand{\tensor}{\otimes}
\renewcommand{\E}{\mathbb{E}}

\DeclareMathOperator{\tr}{tr}

\DeclareMathOperator*{\argmin}{arg\,min}

\raggedbottom

\begin{document}
\title{Fast and Low-Memory Compressive Sensing Algorithms for Low Tucker-Rank Tensor Approximation from Streamed Measurements
}


\author*[1]{\fnm{Cullen} \sur{Haselby}}\email{haselbyc@msu.edu}
\author[1,2]{\fnm{Mark A.} \sur{Iwen}}\email{iwenmark@msu.edu}
\author[3]{\fnm{Deanna} \sur{Needell}}\email{deanna@math.ucla.edu}
\author[4]{\fnm{Elizaveta} \sur{Rebrova}}\email{elre@princeton.edu}
\author[3]{\fnm{William} \sur{Swartworth}}\email{wswartworth@math.ucla.edu}

\affil*[1]{\orgdiv{Department of Mathematics}, \orgname{Michigan State University}, \orgaddress{  \city{East Lansing}, \state{MI}, \country{USA}}}

\affil[2]{\orgdiv{Department of Computational Mathematics, Science and Engineering}, \orgname{Michigan State University}, \orgaddress{  \city{East Lansing}, \state{MI}, \country{USA}}}

\affil[3]{\orgdiv{Department of Mathematics}, \orgname{University of California}, \orgaddress{ \city{Los Angeles}, \state{CA}, \country{USA}}}

\affil[4]{\orgdiv{ Department of Operations Research \& Financial Engineering}, \orgname{Princeton University}, \orgaddress{ \city{Princeton}, \state{NJ}, \country{USA}}}


\abstract{In this paper we consider the problem of recovering a low-rank Tucker approximation to a massive tensor based solely on structured random compressive measurements.  Crucially, the proposed random measurement ensembles are both designed to be compactly represented (i.e., low-memory), and can also be efficiently computed in one-pass over the tensor.  Thus, the proposed compressive sensing approach may be used to produce a low-rank factorization of a huge tensor that is too large to store in memory with a total memory footprint on the order of the much smaller desired low-rank factorization.  In addition, the compressive sensing recovery algorithm itself (which takes the compressive measurements as input, and then outputs a low-rank factorization) also runs in a time which principally depends only on the size of the sought factorization, making its runtime sub-linear in the size of the large tensor one is approximating.  Finally, unlike prior works related to (streaming) algorithms for low-rank tensor approximation from such compressive measurements, we present a unified analysis of both Kronecker and Khatri-Rao structured measurement ensembles culminating in error guarantees comparing the error of our recovery algorithm's approximation of the input tensor to the best possible low-rank Tucker approximation error achievable for the tensor by any possible algorithm.  We further include an empirical study of the proposed approach that verifies our theoretical findings and explores various trade-offs of parameters of interest.}

\keywords{tensor, tucker decomposition, compressive sensing, streaming}



\maketitle

\section{Introduction}

With the rapid increase in data acquisition and data-driven applications comes the need for efficient methods to acquire, store, reconstruct, and analyze large-scale data. In many settings, this data is multi-modal, making a tensor representation the most natural.  These settings range from medical to communications to a widespread use in data science in general \cite{kolda2009tensor,cichocki2011tensor,sun2008incremental,le2001diffusion,chen2021tensor}. Although the tensor is a higher mode analogue of the matrix, linear algebraic and computational results for matrices generally do not extend beyond two modes. However, as with matrices, tensor data often has an implicit low-rank structure that can be utilized for efficient computation. In contrast to matrices, there are several non-equivalent notions of low-rankness for tensors. Such low-rank structures structures arise from factorizations such as Tucker, CANDECOMP/PARAFAC
(CP), tubal, and tensor train \cite{kolda2009tensor}. 

Because of the large-scale nature of tensor data -- both in the number of modes and the dimensionality of each mode, computationally efficient methods are critical for computing such factorizations, as well as recovering the data from compressed measurements. An ideal framework for this recovery should reduce overall memory requirements, involve measurement schemes with fast matrix-vector multiplies, apply to a wide range of low-rank decompositions, be robust to non-exact low-rankness, and provide a computationally tractable, provably accurate recovery scheme. The framework should apply to both static and streaming settings, the latter occurring when the tensor data is being updated sequentially over time (e.g. slice by slice, entry-wise, or rank-one updates) and when one wishes to save on space by only storing a compressed version of the data.

\subsection{Prior Work}

A number of papers aim to obtain low-rank tensor decompositions in a fast an efficient way, often using randomization. Some of the Tucker-rank related works include \cite{kaya2016high,austin2016parallel,zhou2014decomposition,tsourakakis2010mach}.  Here, we assume that the tensor is received as part of a stream of data, i.e., the complete raw tensor is not stored and we minimize the number of visits to each tensor entry (to at most one or two times each, referring to the one-pass or two-pass algorithms, respectively).  In the earlier work \cite{Malik2018}, the authors describe a variant of Tucker-Alternating Least Squares (aka Higher Order Orthogonal Iteration, multi-pass scenario) that employed TensorSketch to produce the necessary measurements. However, the quality of reconstruction for iterative schemes is sensitive to the initialization used for the factors and core, as well as requiring possibly many iterations to converge or overcome ``swamps'', a well documented issue with the ALS approach.

Herein we aim to recover a low-rank tensor from oblivious linear measurements such as those employed in compressive sensing tasks. Some of these measurements satisfy the Tensor Restricted Isometry Property (TRIP) and are amenable to provable recovery by iterative algorithms, related tensor works include \cite{rauhut2017low,grotheer2020stochastic,haselby2023modewise}. One of the main goals of this work is to make related measurement maps and recovery algorithms more memory-efficient by using more structured linear measurements.  It is known to be hard to design such measurements that also satisfy the TRIP \cite{haselby2023modewise,jiang2022near}.  As a result, in this work we employ direct recovery procedures that allow for error analysis that does not rely on the TRIP.

In the prior work of Hendrikx and De Lathauwer \cite{DeLath2022}, the authors  describe an algorithm that recovers a tensor from linear combinations of its entries where the measurement operations themselves are constrained to be Kronecker-structured. In that work they give necessary and sufficient conditions for perfect recovery of exactly low-rank tensors. They also describe some relevant heuristics and provide empirical results for the performance of the recovery when the tensor or its measurements are subject to white noise. Additionally they demonstrate how to adapt their method to two other tensor formats -- CP and tensor train.  

In the work of Sun et al \cite{Sun2019}, a comparable low Tucker-rank recovery procedure to \cite{DeLath2022} was described (Algorithm 4.1, Tucker Sketch).  However, the measurement ensembles analyzed therein were both random and structurally less constrained, allowing for a tractable probabilistic analysis (i.e., the Gaussian measurements used to estimate factors in \cite{Sun2019} were not constrained to be strictly modewise as in the Kronecker-structured case considered in \cite{DeLath2022}).  As a result, quasi-optimal low-rank approximation guarantees for unstructured Gaussian linear measurements of the recovered tensor are proven in expectation in \cite{Sun2019}.  Furthermore, motivated by huge memory requirements of unstructured Gaussian measurement maps analyzed therein (which will generally take up as much storage as the data tensor itself), the authors also empirically investigate the performance of more structured measurement maps constructed via Khatri-Rao products of smaller Gaussian matrices.  In those experiments they show that these Khatri-Rao structured measurements deliver nearly as good approximations in practice as the unstructured memory-hungry maps they analyzed theoretically, with the additional benefit of also requiring significantly less space to store. 

\subsection{Contribution}
In this work, we further study recovery from the Kronecker and Khatri-Rao measurement ensembles introduced in \cite{Sun2019, DeLath2022}. Our analysis  unifies and adds to both of these works in several ways. 
First, unlike in \cite{DeLath2022}, our error analysis applies in the non-exact low-rank case, and quantifies how our low-rank approximation error bounds depend on the relevant parameters of the problem.  Second, though the overarching structure of our theoretical analysis employs a similar strategy to that in \cite{Sun2019}, it is instead applied to more structured, memory efficient, and difficult to analyze Kronecker and Khatri-Rao structured measurement ensembles.  The acquired measurements can then be used in a single unified recovery method (see Algorithm \ref{alg:loo_one_pass_prime}, which effectively matches those utilized in both \cite{Sun2019} and \cite{DeLath2022}) to produce a low-rank factorization.  

Among other differences from \cite{Sun2019}, we believe it is important to emphasize that the more structured measurement ensembles analyzed herein can operate on the tensor data in subtly different modewise manners than the measurements analyzed there.  That is, our measurements yield several smaller tensors whose entries are obtained by matrix-vector operations involving measurement matrices operating on various fibers or slices of the tensor data. This can have significant practical advantages since neither the entire original tensor nor the entire measurement operator need to be kept in working memory in order to obtain such measurements.  Additionally, besides applying the analysis to these more structured measurement ensembles, we remove the theoretical assumption in \cite{Sun2019} that the entries of the sketching matrices are drawn independently from a Gaussian distribution, and instead rely on a more general Johnson-Lindenstrauss property-based analysis in order to derive with-high-probability recovery guarantees for more general measurement ensembles.  The advantage of this is that there are several well known distributions of random matrices with, e.g., fast matrix-vector multiplies that are known to satisfy this property, and our alternate analysis makes it a straightforward exercise to derive measurement requirements and error bounds for measurement ensembles constructed from such different choices of sketching matrices. 

Theorem~\ref{thm:sketchdimonepass} summarizes our main results. In order to state it, we will first describe some tensor concepts and operations related to the measurement process. The precise statements of the results are given in Theorem~\ref{thm:onepass_kron_nondist} (for generic measurement ensembles) and Theorems~\ref{thm:onepass_kron} and \ref{thm:onepass_khat} (specialized for sub-gaussian measurement ensembles).

\subsection{Tensor and Measurement Preliminaries} 
\label{sec:tensor_notation}
Here we recall terminology and set notation that will be useful in describing the tensor operations used in stating our main result and throughout this paper.  For a more thorough introduction we refer the reader to, e.g., \cite{zare2018extension,kolda2009tensor}.

 \textbf{Tensor order, fibers, and unfoldings.}   The \emph{order} of a tensor is its number of  \emph{modes}. That is, $\mathcal{X}\in\mathbb{R}^{n_1\times \dots n_d}$ is an order $d$ tensor, or a $d$-mode tensor. \emph{Mode-$j$ fibers} are the tensor analogue of rows and columns in the matrix case. They are the vectors given by fixing all but one of the indices and varying the $j$-th coordinate. For example, the 3-mode tensor tensor $\mathcal{X}\in\mathbb{R}^{n_1 \times n_2 \times n_3}$ will have  $n_2n_3$ mode-1 fibers $\mathbf{x}_{:, j, k} \in \mathbbm{R}^{n_1}$ indexed by $j\in[n_2]$ and $k\in[n_3]$, $n_1n_3$ mode-2 fibers $\mathbf{x}_{i, :, k} \in \mathbbm{R}^{n_2}$ indexed by $i\in[n_1]$ and $k\in[n_3]$, and $n_1 n_2$ mode-3 fibers $\mathbf{x}_{i, j, :} \in \mathbbm{R}^{n_3}$ indexed by $i\in[n_1]$ and $j\in[n_2]$.
The \emph{mode-$j$ unfolding} of a tensor $\mathcal{X} \in\mathbb{R}^{n_1\times \dots n_d}$ is an $n_j \times \prod_{\substack{k=1, k\neq j}}^{d} n_k$ matrix, $X_{[j]}$, formed by arranging the mode-$j$ fibers of $\mathcal{X}$ as the columns of $X_{[j]}$. The ordering of these columns is not important so long as it is consistent across calculations. 
    
    \textbf{Inner product and norm.} The set of all $d$-mode tensors $\mathcal{X} \in \mathbbm{R}^{n_1 \times \ldots \times n_d}$ forms a vector space when equipped with component-wise addition and scalar multiplication. The \emph{standard inner product} of two tensors $X, Y \in \mathbbm{R}^{n_1 \times \ldots \times n_d}$ is
    \begin{equation*}
        \langle \mathcal{X},\mathcal{Y} \rangle := \sum_{j_1 \in [n_1],\ldots j_d \in [n_d]} \mathcal{X}_{j_1, \ldots, j_d} \mathcal{Y}_{j_1, \ldots, j_d} \,.
    \end{equation*}
     The standard tensor norm is then defined by this inner product as $\norm{\mathcal{X}}{2} := \sqrt{\langle \mathcal{X},\mathcal{X} \rangle}$.  Note that the standard tensor inner product and norm above correspond to the standard dot product and Euclidean norm as applied to vectorized tensors in $\mathbbm{R}^{N}$, where $N = \prod_{j = 1}^d n_j$.

\begin{definition}[Modewise product] \label{def:mwprod} The $j$-mode product of a $d$-mode tensor $\mathcal{X}\in\R^{n_1 \times\dots\times n_d}$ with a matrix $\Phi \in \R^{m_j \times n_j}$ is denoted as $\mathcal{X}\times_j \Phi$, and defined componentwise as
\begin{equation}
    \left( \mathcal{X} \times_j \Phi \right)_{i_1, \ldots, i_{j-1}, \ell, i_{j+1}, \ldots, i_d}=\sum\limits_{i_j=1}^{n_j} \Phi_{\ell, i_j}\mathcal{X}_{i_1, \ldots, i_j, \ldots, i_d}
\end{equation}
for $\ell \in [m_j]$ and $i_j \in [n_j]$. Equivalently, for the respective unfoldings, $\left( \mathcal{X}\times_j \Phi \right)_{[j]}= \Phi X_{[j]}$.
\end{definition}


Additionally, we let $\circ$ denote the outer product of two tensors.  That is, if $\mathcal{X}\in\R^{n_1 \times\dots\times n_d}$ and $\mathcal{Y}\in\R^{m_1 \times\dots\times m_f}$, then $\mathcal{X} \circ \mathcal{Y} \in \mathbbm{R}^{n_1 \times\dots\times n_d \times m_1 \times \dots \times m_f}$ with entries given by $\left(\mathcal{X} \circ \mathcal{Y} \right)_{j_1, \dots, j_d, i_1, \dots, i_f} := \mathcal{X}_{j_1, \dots, j_d} \mathcal{Y}_{i_1, \dots, i_f}$.

\subsubsection{The Tucker Decomposition and Low-Rank Approximation}

Next, we define the Tucker decomposition of tensors, which decomposes a tensor into a core tensor that is multiplied by a factor matrix along each of its modes. Crucially, this decomposition relates to the proposed modewise measurement operations in a convenient fashion as we detail in section~\ref{sec:MainResBRKS}.

\begin{definition}[Tucker decomposition] \label{def:tucker} Let $\mathcal{X}\in\R^{n_1\times \dots\times n_d}$.  We say that $\mathcal{X}$ has a Tucker decomposition of rank ${\bf r} = (r_1,\dots,r_d)$ if there exists $\mathcal{G}\in\R^{r_1 \times \dots \times r_d}$, and $U_j \in \R^{n_j \times r_j}$ with orthonormal columns for all $j\in [d]$, so that
\begin{equation} \label{eq:tucker}
    \mathcal{X} = \mathcal{G}\times_1 U_1 \times_2 U_2 \times_3 \dots \times_d U_d = \sum_{\vb{i} \in \{ (i_1, \dots, i_d) ~|~ i_j \in [r_j]~ \forall j \in [d] \}} \mathcal{G}_{\vb{i}} \cdot \vb{u}^{(1)}_{i_1} \circ \vb{u}^{(2)}_{i_2} \circ \dots \circ \vb{u}^{(d)}_{i_d},
\end{equation}
where $\vb{u}^{(j)}_{i_j}$ denotes the $i_j$-th column of $U_j$, and $\mathcal{G}_{\vb{i}} := \mathcal{G}_{i_1, \dots, i_d} \in \R$. We will also use the shorthand $\mathcal{X} = [\![ \mathcal{G}, U_1,\dots, U_d ]\!] := \mathcal{G}\times_1 U_1 \times_2 U_2 \times_3 \dots \times_d U_d$ when referring to the rank-${\vb r}$ Tucker decomposition of a given rank-${\vb r}$ tensor $\mathcal{X}$.
\end{definition}

There is an equivalent formulation to equation \eqref{eq:tucker} using unfoldings of $\mathcal{X}$ and of $\mathcal{G}$ in terms of matrix-matrix products:
\begin{equation}
\label{eqn:mat_tucker}
    X_{[j]} = U_j G_{[j]} \left(U_d \otimes \dots\otimes U_{j+1} \otimes U_{j-1} \otimes \dots \otimes U_1\right)^T,
\end{equation}
where $\otimes$ denotes the standard Kronecker matrix product (see \eqref{def:KronProd} and, e.g., \cite{DeLath2000}).

Given an arbitrary tensor $\mathcal{X}\in\R^{n_1\times \dots\times n_d}$ of unknown Tucker rank it is often of interest to compute a good rank-${\bf r}$ approximation of $\mathcal{X}$.  In fact, an optimal Tucker rank-${\bf r}$ minimizer $[\![\mathcal{X}]\!]^{\rm opt}_{\bf r} \in \R^{n_1\times \dots\times n_d}$ satisfying 
\begin{equation}
\norm{\mathcal{X} - [\![\mathcal{X}]\!]^{\rm opt}_{\bf r}}{2} = \inf_{\displaystyle{\substack{\mathcal{G}\in\R^{r_1 \times \dots \times r_d},\\ U_j \in \R^{n_j \times r_j}}}} \left\|\mathcal{X} - [\![ \mathcal{G}, U_1,\dots, U_d ]\!] \right \|_2
\label{Def:OptRankrApprox}
\end{equation}
always exists (see, e.g., Theorems 10.3 and Theorem 10.8 in \cite{Hackbusch2012}).  However, computing such an $[\![\mathcal{X}]\!]^{\rm opt}_{\bf r}$ (which is not unique) is generally a challenging task that is accomplished only approximately via iterative techniques (see, e.g., \cite{kolda2009tensor}).
As a result, in such situations one usually seeks to instead compute a \emph{quasi-optimal} rank-${\bf r}$ tensor  $\tilde{\mathcal{X}}$ satisfying
\begin{equation} \label{equ:quasi_optimal}
    \norm{\mathcal{X} - \tilde{\mathcal{X}}}{2} \leq C \inf_{\displaystyle{\substack{\mathcal{G}\in\R^{r_1 \times \dots \times r_d},\\ U_j \in \R^{n_j \times r_j}}}} \left\|\mathcal{X} - [\![ \mathcal{G}, U_1,\dots, U_d ]\!] \right \|_2,
\end{equation}
where $C \in \R^+$ is a positive constant independent of $\mathcal{X}$.  

The following lemma demonstrates that one can recover a quasi-optimal rank-{\bf r} approximation of an arbitrary tensor $\mathcal{X}\in\R^{n_1\times \dots\times n_d}$ by simply computing the left singular vectors of $X_{[j]}$ for all $j \in [d]$. It is a variant of \cite[Theorem 10.3]{Hackbusch2012} which we prove here for the sake of completeness.  

\begin{lem}
\label{lem:QuasiOptimalfromSVDs}
Let $\mathcal{X}\in\R^{n_1\times \dots\times n_d}$, denote the $i^{\rm th}$-largest singular value of $X_{[j]}$ by $\sigma_i \left(X_{[j]} \right)$, and define $\Delta_{r,j} := \sum_{i=r+1}^{\tilde{n}_j} \sigma_i \left( X_{[j]} \right)^2$ for all $j \in [d]$ and $r \in \left[\tilde{n}_j := \min \left\{ n_j, \prod_{j \neq k} n_k\right\} \right]$.  Fix ${\bf r} \in [\tilde{n}_1] \times \dots \times [\tilde{n}_d]$ and suppose that $[\![\mathcal{X}]\!]^{\rm opt}_{\bf r}$ satisfies \eqref{Def:OptRankrApprox}.  Then,
\begin{equation*}
    \sum_{j=1}^d \Delta_{r_j,j} \leq d \norm{   \mathcal{X} - [\![\mathcal{X}]\!]^{\rm opt}_{\bf r}}{2}^2.
\end{equation*}
\end{lem}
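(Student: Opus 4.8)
The plan is to reduce the statement to a coordinate-wise (per mode) application of the Eckart--Young theorem for matrices. First I would record the elementary fact that unfolding is an isometry: for any tensor $\mathcal{Y} \in \R^{n_1 \times \dots \times n_d}$ and any $j \in [d]$, the matrix $Y_{[j]}$ is just a rearrangement of the entries of $\mathcal{Y}$, so $\norm{Y_{[j]}}{F} = \norm{\mathcal{Y}}{2}$. In particular, writing $\mathcal{E} := \mathcal{X} - [\![\mathcal{X}]\!]^{\rm opt}_{\bf r}$, we have $\norm{E_{[j]}}{F} = \norm{\mathcal{E}}{2}$ for every $j$.

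Next I would observe that the mode-$j$ unfolding of the optimal Tucker approximant has rank at most $r_j$. Indeed, if $[\![\mathcal{X}]\!]^{\rm opt}_{\bf r} = [\![ \mathcal{G}, U_1, \dots, U_d ]\!]$ with $U_j \in \R^{n_j \times r_j}$, then by the unfolded form \eqref{eqn:mat_tucker} its mode-$j$ unfolding equals $U_j G_{[j]} (U_d \otimes \dots \otimes U_{j+1} \otimes U_{j-1} \otimes \dots \otimes U_1)^T$, which factors through the $r_j$-column matrix $U_j$ and hence has rank $\le r_j$. (If the infimum in \eqref{Def:OptRankrApprox} is attained only in a limit, I would instead note that any minimizer can be taken with $U_j$ having at most $r_j$ columns, or pass to a convergent subsequence; the rank-$\le r_j$ bound on the unfolding is the only property needed.)

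Now fix $j \in [d]$. By the Eckart--Young theorem applied to the matrix $X_{[j]}$, the minimal Frobenius-norm error of approximating $X_{[j]}$ by a matrix of rank at most $r_j$ equals $\big( \sum_{i = r_j + 1}^{\tilde n_j} \sigma_i(X_{[j]})^2 \big)^{1/2} = \Delta_{r_j, j}^{1/2}$. Since $\big([\![\mathcal{X}]\!]^{\rm opt}_{\bf r}\big)_{[j]}$ is one particular matrix of rank at most $r_j$, this minimality gives
\begin{equation*}
\Delta_{r_j, j} \;\le\; \norm{X_{[j]} - \big([\![\mathcal{X}]\!]^{\rm opt}_{\bf r}\big)_{[j]}}{F}^2 \;=\; \norm{E_{[j]}}{F}^2 \;=\; \norm{\mathcal{E}}{2}^2 \;=\; \norm{\mathcal{X} - [\![\mathcal{X}]\!]^{\rm opt}_{\bf r}}{2}^2.
\end{equation*}
Summing this inequality over $j = 1, \dots, d$ yields $\sum_{j=1}^d \Delta_{r_j, j} \le d\, \norm{\mathcal{X} - [\![\mathcal{X}]\!]^{\rm opt}_{\bf r}}{2}^2$, as desired.

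The argument is essentially routine; the only point requiring care is the second step, justifying that the optimal (or a near-optimal limiting) Tucker-rank-$\bf r$ tensor has every mode-$j$ unfolding of rank at most $r_j$, so that Eckart--Young can be invoked against it. Everything else is the isometry of unfolding and a sum of $d$ identical upper bounds.
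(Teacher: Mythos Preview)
Your proof is correct and follows essentially the same route as the paper's: both invoke Eckart--Young to bound each $\Delta_{r_j,j}$ by $\norm{X_{[j]} - ([\![\mathcal{X}]\!]^{\rm opt}_{\bf r})_{[j]}}{F}^2$, use \eqref{eqn:mat_tucker} to see that the unfolding of the optimal approximant has rank at most $r_j$, identify the Frobenius norm of the unfolding with the tensor norm, and sum over $j$. Your parenthetical about attainment of the infimum is already covered by the paper's earlier citation of Hackbusch guaranteeing that a minimizer exists.
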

\begin{proof}
    Let $P_j \in \R^{n_j \times n_j}$ be a rank $r_j$ orthogonal projection matrix satisfying $\norm{ X_{[j]} - P_j X_{[j]} }{F}^2 = \Delta_{r_j,j} \leq \norm{ X_{[j]} - Y }{F}^2$ for all rank at most $r_j$ matrices $Y$.\footnote{The Eckart-Young Theorem guarantees that we can compute such a $P_j$ from the left singular values of $X_{[j]}$.}  Noting that $Y = \left( [\![\mathcal{X}]\!]^{\rm opt}_{\bf r} \right)_{[j]}$ will be a rank at most $r_j$ matrix by \eqref{eqn:mat_tucker}, we then have that
    $$\Delta_{r_j,j} \leq \norm{ X_{[j]} - \left( [\![\mathcal{X}]\!]^{\rm opt}_{\bf r} \right)_{[j]} }{F}^2 = \norm{   \mathcal{X} - [\![\mathcal{X}]\!]^{\rm opt}_{\bf r}}{2}^2~~\forall j \in [d].$$
    Summing over the $d$ values of $j$ above now finishes the proof.
\end{proof}
Note, in order to simplify notation moving forward, we will assume that $n_j = n$ and $r_j = r$ for all $j\in[d]$; that is our tensors will have equal side lengths and the rank will likewise be the same in every mode, thus suppressing the need for additional level of sub-scripting. 
\subsection{Kronecker Products, Khatri-Rao Products, and Leave-One-Out Modewise Measurement Notation and Examples}
\label{sec:The_MEASUREMENTS!}
We now will describe and provide some notation for the measurement operators used throughout the rest of this paper. The general concept we are able to specialize to several cases is the following:

\begin{definition}[Leave-One-Out Measurements] \label{def:loo_measurements} Given a $d$-mode tensor $\mathcal{X}$ with side-lengths $n$, leave-one-out measurements are the result of reducing all but one mode using linear operations on the unfolding of the tensor. That is, \[
B_j = \Omega_{(j,j)} X_{[j]} \Omega_{-j}^T
\] are leave one out measurements whenever $\Omega_{(j,j)}\in\R^{n\times n}$ and full-rank, $\Omega_{-j} \in \R^{ m^{\prime} \times n^{d-1}}$ where $m^{\prime} \leq n^{d-1}$.
\end{definition}
Any measurement process that can be equivalently described as left multiplication of an unfolded tensor by a full-rank square matrix and right multiplied by some other linear operator that reduces all other modes is a leave-one measurement process. We will consider three specific cases for how to construct the overall measurement operator $\Omega_{-j}$ in Definition \ref{def:loo_measurements}: Kronecker structured, Khatri-Rao structured and unstructured measurement operators.

We will need to recall some matrix products.
The \emph{Kronecker product} of matrices $A \in \R^{m_1 \times n_1}$ and $B\in\R^{m_2 \times n_2}$ is the matrix $A\otimes B \in\R^{m_1 m_2 \times n_1 n_2}$ and is defined by
\begin{equation}
A\otimes B := \begin{bmatrix}
a_{1,1} B & a_{1,2} B & \dots& a_{1,n_1} B \\
a_{2,1} B & a_{2,2} B & \dots& a_{2,n_1} B \\
\vdots  & \vdots & \ddots& \vdots \\
a_{m_1, 1} B & a_{m_1,2} B & \dots& a_{m_1,n_1} B \\
\end{bmatrix}.
\label{def:KronProd}
\end{equation}
The \emph{Khatri-Rao product} of two matrices is the matrix that results from computing the Kronecker product of their matching columns. That is, for $A \in \R^{m_1 \times n}$ and $B\in\R^{m_2 \times n}$, their Khatri-Rao product is the matrix $A \odot B \in \R^{m_1 m_2 \times n}$ defined by
\[
A\odot B := \begin{bmatrix}
\vb{a}_1 \otimes \vb{b}_1 &\vb{a}_2 \otimes \vb{b}_2 & \dots& \vb{a}_n \otimes \vb{b}_n 
\end{bmatrix}.
\]
In this paper, we also use a so-called row-wise Khatri-Rao product (sometimes called the \emph{face-splitting product}) of $A \in \R^{m \times n_1}$ and $B\in\R^{m \times n_2}$, denoted by $A \sbullet B \in \R^{m \times n_1 n_2}$, and defined by $(A \sbullet B)^T := A^T \odot B^T$. Given the close relationship between the Khatri-Rao and face-splitting products, we will refer to both of them as being ``Khatri-Rao structured'' below.

\subsubsection{Kronecker Structured Measurements} 
\label{sec:KronMeasIntro}

Kronecker structured leave-one-out measurements are constructed by taking the Kronecker product of several component matrices, in addition to one square measurement matrix of full-rank to be applied to the mode which is uncompressed (which may simply be the identity). We will require leave one out measurements for each mode. Collectively the matrices needed to define this ensemble will form a set $\left\{\Omega_{(i,j)}\right\}_{i,j\in [d]}$ where $\Omega_{(i,j)}\in\R^{m\times n}$ when $j\neq i$, and where $\Omega_{(i,i)}\in\R^{n\times n}$ for all $i \in [d]$. 

Our measurements then are
\begin{align}
\label{equ:BRKSfactors}
    \mathcal{B}_j := \mathcal{X} \bigtimes_{i=1}^d \Omega_{(j,i)} 
\end{align}
for all $j\in[d]$ using the matrices $\left\{\Omega_{(j,i)}\right\}_{i=1}^d$. Crucially, we can identify this tensor of measurements $\mathcal{B}_j$ with a flattened version which makes it clear that Kronecker structured modewise measurements do define leave-one-out measurements conforming to Definition \ref{def:loo_measurements}.
\begin{equation} \label{eq:kronecker-sketch}
    B_j = \Omega_{(j,j)} X_{[j]} \Omega_{-j}^T =  \Omega_{(j,j)} X_{[j]}(\Omega_{(j,1)} \otimes \Omega_{(j,2)} \otimes \dots \Omega_{j,j-1} \otimes \Omega_{j,j+1}\otimes \dots \otimes \Omega_{(j,d)})^T
\end{equation}
See Algorithm~\ref{alg:kron_sketching} for an outline of the sketching procedure. 

Conceptually, the Kronecker products of matrices can be rewritten in terms of matrix products of auxiliary matrices. This will be useful if we wish to examine via, e.g., \eqref{eqn:mat_tucker}, how several modewise products change the properties of the resulting factor matrices $U_j$ in the Tucker decomposition of a given tensor. As an illustration, consider the case of a three mode tensor $\mathcal{X} \in\R^{n_1 \times n_2 \times n_3}$. Let $\vb{x} \in \mathbbm{R}^{n_1 n_2 n_3}$ denote the vectorization of $\mathcal{X}$, and further suppose that $\Omega_j \in\R^{m_j\times n_j}$ for $j=1,2,3$ are three measurement matrices used to produce modewise measurements of $\mathcal{X}$ given by $\mathcal{X} \times_1 \Omega_1 \times \Omega_2 \times_3 \Omega_3$.  Allowing for an additional reshaping, we can identify these three modewise operations equivalently with a single matrix-vector product using a variant of \eqref{eqn:mat_tucker}. That is, one can see that $\text{vec}\left(\mathcal{X}\times_1 \Omega_1 \times \Omega_2 \times_3 \Omega_3\right) = \Omega\vb{x}$ where $\Omega = \Omega_3 \otimes \Omega_2\otimes \Omega_1 \in \R^{m_1 m_2 m_3 \times n_1 n_2 n_3}$.  Let $I_n$ denote the $n \times n$ identity matrix.  The mixed-product property of the Kronecker product can now be used to further show that in fact $\Omega =\tilde{\Omega}_3 \tilde{\Omega}_2 \tilde{\Omega}_1$, where
\begin{align*}
    \tilde{\Omega}_1 &= I_{n_3} \otimes I_{n_2} \otimes \Omega_1 \in \R^{m_1 n_2 n_3 \times n_1 n_2 n_3}, \\
    \tilde{\Omega}_2 &= I_{n_3} \otimes \Omega_2 \otimes I_{m_1} \in \R^{ m_1 m_2 n_3 \times m_1 n_2 n_3}, \\
    \tilde{\Omega}_3 &= \Omega_3 \otimes I_{m_2} \otimes I_{m_1} \in \R^{m_1 m_2 m_3 \times m_1 m_2 n_3}. 
\end{align*}
Hence, $\text{vec}\left(\mathcal{X}\times_1 \Omega_1 \times \Omega_2 \times_3 \Omega_3\right) = \tilde{\Omega}_3 \tilde{\Omega}_2 \tilde{\Omega}_1 \text{vec}(\mathcal{X})$.  This example can be easily be extended to any number of modes.

\subsubsection{Khatri-Rao Structured Measurements}
\label{sec:bound_2pass_error_khat}
Leave-one-out measurement ensembles which are Khatri-Rao products of smaller maps are also possible and have been considered before in works such \cite{Sun2019}. That is,
\begin{align*}
    B_j &= \Omega_{(j,j)} X_{[j]}\Omega^T_{-j}\\
    &= \Omega_{(j,j)}X_{[j]} \left(\Omega_{(j,1)}^T \odot \Omega_{(j,2)}^T \odot \dots  \Omega^T_{(j,j-1)} \odot \Omega^T_{(j,j+1)} \dots \odot \Omega_{(j,d)}^T \right)^T\\
    &=  \Omega_{(j,j)}X_{[j]} \left(\Omega_{(j,1)} \sbullet \Omega_{(j,2)} \sbullet \dots  \Omega^T_{(j,j-1)} \sbullet \Omega_{(j,j+1)} \dots \sbullet \Omega_{(j,d)} \right)^T \numberthis \label{eqn:loo_khat}
\end{align*}
where $\Omega_{(j,j)}$ is a full-rank square matrix, and $\Omega_{(j,i)}\in \R^{m\times n}$ for $j\neq i$. Note that in this case we can consider $\Omega_{-j} \in \R^{m \times n^{d-1}}$ as sketching sub-tensors of size $n^{d-1}$ to size $m$ (as opposed to $m^{d-1}$ as was done previously with Kronecker structured measurement maps). See Algorithm~\ref{alg:khat_sketching} for an outline of the sketching procedure.

\subsubsection{Unstructured Measurements}

In \cite{Minster2019}, a type of leave-one-out measurements are described and analyzed, however in that work their measurement ensembles are not structured, but instead are Gaussian matrices where $\Omega_{-j} \in \R^{m \times n^{d-1}}$ has entries all drawn independently and is right multiplied by an unfolding of the tensor:
\[
   B_j = \Omega_{(j,j)} X_{[j]}\Omega^T_{-j}\\
\]
where again $\Omega_{(j,j)}$ is some square, full-rank matrix.

As an important observation, the storage requirements for the sketching matrix itself in this case is large, comparable in size to the data itself, which can be undesirable. Furthermore, the error analysis in \cite{Minster2019} conducted on this type of measurement when the unstructured matrix $\Omega_{-j}$ has i.i.d. Gaussian entries relies on probabilistic bounds known for Gaussian matrices. Comparable bounds for other distributions, or for matrices constructed using Kronecker or Khatri-Rhao products are not covered in the analysis. Note, Kronecker or Khatri-Rao structured measurements alleviate to a large degree the storage problem associated with $\Omega_{-j}\in\R^{m \times n^{d-1}}$ when compared to in the unstructured case since the entire sketching matrix does not need to be maintained in memory, but rather just constituent components $\Omega_{(j,i)}\in \R^{m\times n}$ can be stored, and the action of the Kronecker or Khatri-Rao product on the tensor can computed as part of the sketching phase. This does incur additional operations in the sketching phase, and in our runtime analysis and numerical experiments we detail the trade-off between space and run-time for these various choices of measurement operators.

\subsubsection{Core Measurements} Regardless of which type leave-one-measurements are used for each of the modes, in order to recover the full factorization using only a single access to the data $\mathcal{X}$, we will require an additional set of measurements for use in estimating the core. These are in all cases, modewise, and can compress all modes.

\begin{align}
\label{equ:BRKScore}
    \mathcal{B}_c := \mathcal{X} \bigtimes_{j = 1}^d \Phi_j 
\end{align}
using the matrices $\left\{\Phi_{j}\right\}_{j=1}^d$. See Algorithm~\ref{alg:core_sketching} for an outline of the sketching procedure for the core. 

All together then, these $d+1$ measurement tensors, $d$ of the leave-one-out type and one of the type in \eqref{equ:BRKScore} will be used to recover the parts of the original full tensor. Figure~\ref{fig:brks_diagram} is a schematic rendition of the overall measurement procedure in the three mode case ($d = 3$).

Note that the $n^d$ entries of the original tensor $\mathcal{X}$ are compressed into $d n m^{d-1} + m_c^d$ entries of the $d+1$ total different measurement tensors; one leave-one-out measurement for each of the $d$ modes as well as the one measurement tensor for use in estimating the core. Recovery naturally will require the storage of the measurement operators in some fashion along with the measurement tensors. As dense matrices, the measurement operators collectively have $d((d-1)mn +n^2)+dm_c n$ total entries. However the number of random variables required to generate these matrices may be significantly fewer depending on the method employed. For example, when using sub-sampled Fourier transforms, the number of random bits required to construct an ensemble is linear in $n$. Additionally, some choices of measurement matrices allow for near-linear time matrix-vector multiplication.  This is often accomplished by choosing matrices that exploit either the Fast-Fourier transform, or fast Hadamard multiplication, and will allow us to compress our tensors faster than with Gaussian measurements for example.

As we detail, applying the measurement operators is asymptotically the most computationally intensive part of the algorithm, and thus in settings where it is useful to economize the computational effort to obtain measurements these matrices have significant advantages. Whichever choice for type of measurements are used, however, the efficiency in terms of the run-time of the recovery algorithm is largely dependent on the ratios $m/n$ and $m_c / n$. 

\begin{figure}%
\subfloat[\centering Leave-one-measurements from Algorithm \ref{alg:kron_sketching} \label{fig:loo_a}]{
\begin{tikzpicture}[>=latex,scale=1.5]
\pgfmathsetmacro{\x}{1}
\pgfmathsetmacro{\o}{3.2}
\pgfmathsetmacro{\y}{1}
\pgfmathsetmacro{\z}{1.5}
\pgfmathsetmacro{\tx}{0.5}
\pgfmathsetmacro{\ty}{0.5}
\pgfmathsetmacro{\tz}{1.5}
\path (0,0,\y) coordinate (A) (\x,0,\y) coordinate (B) (\x,0,0) coordinate (C) (0,0,0)
coordinate (D) (0,\z,\y) coordinate (E) (\x,\z,\y) coordinate (F) (\x,\z,0) coordinate (G)
(0,\z,0) coordinate (H);
\draw[red,densely dotted,thick] (A)--(B);
\draw[purple, solid] (B)--(C);
\draw[blue,dashed] (C)--(G);
\draw (G)-- (F)-- (B) (A)--(E)-- node[below,black]{$\mathcal{X}$}(F)--(G)--(H)--(E);

\draw[thick, red,densely dotted] ($(A)+(0,-4pt)$) -- node[below,black]{$\Omega_{(1,2)}$}($(B)+(0,-4pt)$);
\draw[thin, black] ($(A)+(0,-4pt)+(0,-0.5)$) -- ($(B)+(0,-4pt)+(0,-0.5)$);

\draw[thick, orange,densely dotted] ($(B)+(0,-4pt)$) -- ($(B)+(0,-4pt)+(0,-0.5)$);
\draw[thin, black] ($(A)+(0,-4pt)$) -- ($(A)+(0,-4pt)+(0,-0.5)$);
\draw[thin, white] ($(A)+(0,-4pt)+(0,-0.5)$) -- ($(A)+(0,-4pt)+(0,-0.8)$);

\draw[thin, purple, solid] ($(B)+(-45:4pt)$) -- ($(C)+(-45:4pt)$);
\draw[thin, black] ($(B)+(-45:4pt)+2*(-45:4pt)$) -- node[below,sloped,black]{$\Omega_{(1,3)}$}($(C)+(-45:4pt)+2*(-45:4pt)$);

\draw[thin, magenta,solid] ($(C)+(-45:4pt)$) -- ($(C)+(-45:4pt)+2*(-45:4pt)$);
\draw[thin, black] ($(B)+(-45:4pt)$) -- ($(B)+(-45:4pt)+2*(-45:4pt)$);

\draw[thin,blue, dashed] ($(C)+(4pt,0)$) -- node[below,sloped,black]{$\Omega_{(1,1)}$}($(G)+(4pt,0)$);
\draw[thin,black] ($(C)+(4pt,0)+(1.5,0)$) --node[right,black]{$=$}($(G)+(4pt,0)+(1.5,0)$);
\draw[thin,cyan, dashed] ($(G)+(4pt,0)$) -- ($(G)+(4pt,0)+(1.5,0)$);
\draw[thin,black] ($(C)+(4pt,0)$) -- ($(C)+(4pt,0)+(1.5,0)$);

\path (\o,0,\ty) coordinate (A) (\tx+\o,0,\ty) coordinate (B) (\tx+\o,0,0) coordinate (C) (\o,0,0)
coordinate (D) (\o,\tz,\ty) coordinate (E) (\tx+\o,\tz,\ty) coordinate (F) (\tx+\o,\tz,0) coordinate (G)
(\o,\tz,0) coordinate (H);
\draw[thick, orange,densely dotted] (A)--(B);
\draw[magenta] (B)--(C);
\draw[cyan,dashed] (C)--(G);
\draw (G)--(F)--(B) (A)--(E)--node[below,black]{$\mathcal{B}_{1}$}(F)--(G)--(H)--(E);

\end{tikzpicture}

} %
~
\subfloat[\centering Core measurements from Algorithm \ref{alg:core_sketching} \label{fig:loo_b} ]{
\begin{tikzpicture}[>=latex,scale=1.5]
\pgfmathsetmacro{\x}{1}
\pgfmathsetmacro{\o}{2.7}
\pgfmathsetmacro{\y}{1}
\pgfmathsetmacro{\z}{1.5}
\pgfmathsetmacro{\tx}{0.8}
\pgfmathsetmacro{\ty}{0.8}
\pgfmathsetmacro{\tz}{1.1}
\path (0,0,\y) coordinate (A) (\x,0,\y) coordinate (B) (\x,0,0) coordinate (C) (0,0,0)
coordinate (D) (0,\z,\y) coordinate (E) (\x,\z,\y) coordinate (F) (\x,\z,0) coordinate (G)
(0,\z,0) coordinate (H);
\draw[thick, red, densely dotted	] (A)--(B);
\draw[purple] (B)--(C);
\draw[blue, dashed] (C)--(G);
\draw (G)-- (F)-- (B) (A)--(E)-- node[below,black]{$\mathcal{X}$}(F)--(G)--(H)--(E);

\draw[thick, red, densely dotted] ($(A)+(0,-4pt)$) -- node[below,black]{$\Phi_{2}$}($(B)+(0,-4pt)$);
\draw[thin, black] ($(A)+(0,-4pt)+(0,-0.8)$) -- ($(B)+(0,-4pt)+(0,-0.8)$);

\draw[thick, orange, densely dotted	] ($(B)+(0,-4pt)$) -- ($(B)+(0,-4pt)+(0,-0.8)$);
\draw[thin, black] ($(A)+(0,-4pt)$) -- ($(A)+(0,-4pt)+(0,-0.8)$);

\draw[thin, purple] ($(B)+(-45:4pt)$) -- node[below,sloped,black]{$\Phi_3$}($(C)+(-45:4pt)$);
\draw[thin, black] ($(B)+(-45:4pt)+4*(-45:4pt)$) -- ($(C)+(-45:4pt)+4*(-45:4pt)$);

\draw[thin, magenta] ($(C)+(-45:4pt)$) -- ($(C)+(-45:4pt)+4*(-45:4pt)$);
\draw[thin, black] ($(B)+(-45:4pt)$) -- ($(B)+(-45:4pt)+4*(-45:4pt)$);

\draw[thin,blue, dashed] ($(C)+(4pt,0)$) -- node[below,sloped,black]{$\Phi_{1}$}($(G)+(4pt,0)$);
\draw[thin,black] ($(C)+(4pt,0)+(0.8,0)$) --node[right,black]{$=$}($(G)+(4pt,0)+(0.8,0)$);
\draw[thin,cyan, dashed] ($(G)+(4pt,0)$) -- ($(G)+(4pt,0)+(0.8,0)$);
\draw[thin,black] ($(C)+(4pt,0)$) -- ($(C)+(4pt,0)+(0.8,0)$);

\path (\o,0,\ty) coordinate (A) (\tx+\o,0,\ty) coordinate (B) (\tx+\o,0,0) coordinate (C) (\o,0,0)
coordinate (D) (\o,\tz,\ty) coordinate (E) (\tx+\o,\tz,\ty) coordinate (F) (\tx+\o,\tz,0) coordinate (G)
(\o,\tz,0) coordinate (H);
\draw[thick, orange,densely dotted] (A)--(B);
\draw[magenta] (B)--(C);
\draw[cyan, dashed] (C)--(G);
\draw (G)--(F)--(B) (A)--(E)--node[below,black]{$\mathcal{B}_c$}(F)--(G)--(H)--(E);

\end{tikzpicture}
} %
\caption{Schematic for the two types of measurement tensors used in recovery Algorithm \ref{alg:kron_1pass} for a three mode tensor. Measurement tensors $\mathcal{B}_{i}$ of the type shown in $(a)$ for $i=1$ will be used to estimate the factors of the Tucker decomposition, and the measurement tensor $\mathcal{B}_c$ in $(b)$ will be used to estimate the core of the Tucker decomposition.}
\label{fig:brks_diagram}

\end{figure}

\subsubsection{The Canonical One-Pass Recovery Algorithm} 

We can now state Algorithm \ref{alg:loo_one_pass_prime} as the canonical algorithm for one-pass recovery the tensor using leave-one-out measurements. The algorithm outputs an estimate in factored form $\mathcal{X}_1 = [\![ \mathcal{H}, Q_1,\dots, Q_d ]\!]$, and we say one-pass to emphasize that after measuring the tensor $\mathcal{X}$, no additional access to the data is required to obtain the estimate $\mathcal{X}_1$. The inputs to the algorithm are leave-one-out measurements $B_i$ for each of the modes of any type, as well as measurements $\mathcal{B}_c$ for use in recovering the core, some of the related measurement operators $\Omega_{(i,i)},\Phi_i$, and a target rank $\vb{r}$ parameter. Algorithm~\ref{alg:loo_one_pass_prime} consists of two loops. The first loop recovers the factor matrices $Q_i$ in the Tucker factorization. The second loop depends on the output of the first loop and computes the core $\mathcal{H}$ of the Tucker factorization. 

Note that within the pseudo-code for Algorithm~\ref{alg:loo_one_pass_prime} the ``unfold'' function in the body of Algorithm \ref{alg:loo_one_pass_prime} takes a tensor and flattens it into the given shape by arranging the specified mode's fibers as the columns, e.g. $H \gets \text{unfold}(\mathcal{B}_c,m_c\times m_c^{d-1},\text{mode}=1)$ 
is equivalent to $H \gets \left(\mathcal{B}_c\right)_{[1]}$ using the notation of Section~\ref{sec:tensor_notation}, the matrix that is obtained by flattening the $d$-mode tensor $\mathcal{B}_{c}$ along mode-$1$. The ``fold'' function is the inverse of ``unfold'', and takes a matrix that is an unfolding along the specified mode of a tensor and reshapes it into a tensor with the specified and compatible dimensions, e.g. $\mathcal{B} \gets \text{fold}(H,r \underbrace{\times m_c\times \dots \times m_c}_{d-1},\text{mode}=1)$ takes the matrix $H$ with dimensions $r\times m_c^{d-1}$ and reshapes it into a $d$ mode tensor where the first mode is of length $r$ and the remaining $d-1$ modes are of length $m_c$.

In the scenario where it is possible to access the original tensor $\mathcal{X}$ twice (we refer to this as the two-pass scenario), we can compute the optimal core $\mathcal{G}$ given our estimated factor matrices $Q_i$ to obtain an estimate $\mathcal{X}_2 = [\![ \mathcal{G}, Q_1,\dots, Q_d ]\!]$. See Algorithm~\ref{alg:kron_2pass} and Algorithm~\ref{alg:khat_2pass} in Appendix~\ref{appendix:algs} for the detailed formulation of the two-pass recovery procedure.

\RestyleAlgo{ruled}

\SetKwComment{Comment}{/* }{ */}
\begin{algorithm}[hbt!]
\SetKwComment{Comment}{\# }{}
\SetKwInOut{Input}{input}\SetKwInOut{Output}{output}%
\caption{One Pass HOSVD Recovery from Leave-One-Out Measurements}	\label{alg:loo_one_pass_prime}
 \Input {\par $B_{i}\in\R^{n\times m^{d-1}}$ for $i\in[d]$ leave-one-out measurements
  \par $\Omega_{(i,i)}\in\R^{n \times n}$ for $i\in[d]$ full-rank sensing matrix for uncompressed mode
  \par $\mathcal{B}_c$ a $d$ mode tensor of measurements with side lengths $m_c$ 
    \par $\Phi_{i}\in\R^{m_c \times n}$ for $i\in[d]$ measurement matrices for core measurements
  \par $\vb{r}=(r,r,\dots,r)$ the rank of the HOSVD
  }
\Output{$\hat{\mathcal{X}} =  [\![ \mathcal{H}, Q_1,\dots Q_d ]\!] $ }

\Comment{Factor matrix recovery}
\For{$i\in[d]$}{
\Comment{Solve $n\times n$ linear system}
\par Solve $\Omega_{(i,i)} F_{i} = B_{i} $ for $F_{i}$

\Comment{Compute SVD and keep the $r$ leading singular vectors}
\par $ U,\Sigma, V^T \gets \text{SVD}(F_{i})$
\par $Q_i \gets U_{:,:r}$
}
\Comment{Core recovery}

\For{$i\in[d]$}{
\Comment{unfold measurements, mode-$i$ fibers are columns, size $m_c\times r^{(i-1)}m_c^{d-1-(i-1)}$}
\par $H \gets \text{unfold}(\mathcal{B}_c,m_c\times r^{(i-1)}m_c^{d-1-(i-1)},\text{mode}=i)$ 

\Comment{Undo the mode-$i$ measurement operator and factor's action by finding least square solution to $m_c\times r$ over-determined linear system }
\par Solve $\Phi_i Q_i H_{\text{new}} = H$ for $H_{new}$

\Comment{reshape the flattened partially solved core into a tensor}

\par $\mathcal{B}_c \gets \text{fold}(H_{\text{new}},\underbrace{r\times r\dots \times r}_{i}\underbrace{\times m_c\times \dots \times m_c}_{d-i},\text{mode}=i)$

\Comment{Each iteration $m_c\to r$ in $i$th mode}
} 
\par $\mathcal{H}\gets \mathcal{B}_c$
\end{algorithm}
With the measurement operators described and the canonical algorithm outlined, we can now state a representative main result. The full unabridged results are found in Section \ref{sec:main_putting_it_all_together}.

\begin{thm}[A Summary Main Result] \label{thm:sketchdimonepass} Let $\mathcal{X}$ be a $d$-mode tensor of side length $n$, $\epsilon \in (0,1), \delta \in (0,\frac{1}{3})$, and $r \in [n] := \{1, \dots, n \}$.  Denote an optimal rank $\vb{r} =(r,\dots,r) \in \mathbb{R}^d$ ($d \ge 2$) Tucker approximation of $\mathcal{X}$ by $[\![\mathcal{X}]\!]^{\rm opt}_{\bf r}$.  There exists a one-pass recovery algorithm (see Algorithm~\ref{alg:loo_one_pass_prime}) that outputs a Tucker factorization $\mathcal{X}_1 = [\![ \mathcal{H}, Q_1,\dots, Q_d ]\!]$ of $\mathcal{X}$ from leave-one-out linear measurements that will satisfy 
\begin{equation}
    \norm{\mathcal{X} - \mathcal{X}_1}{2}  \leq  (1+e^{\epsilon}) \sqrt{ \frac{d(1 + \epsilon)}{1 - \epsilon}} \norm{   \mathcal{X} - [\![\mathcal{X}]\!]^{\rm opt}_{\bf r}}{2}
\end{equation}
with probability at least $1-\delta$.  The total number of linear measurements the algorithm will use is bounded above by $m'(r,d,\epsilon,n, \delta) =  \left[nd + \frac{C r d^{2}}{\epsilon^{2}} \ln \left(\frac{d n^{d}}{\delta} \right) \right] \left( \frac{C r d^{2}}{\epsilon^{2}} \ln \left(\frac{d n^{d}}{\delta}\right) \right)^{d-1}$, where $C > 0$ is an absolute constant. Furthermore, the recovery algorithm (Algorithm~\ref{alg:loo_one_pass_prime}) runs in time $o(n^d)$ for large $n \gg r$.
\end{thm}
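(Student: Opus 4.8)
The plan is to reduce the one-pass guarantee to three ingredients and glue them with the triangle inequality. Write $\hat{\mathcal{X}} := [\![\mathcal{G},Q_1,\dots,Q_d]\!]$ for the \emph{two-pass} estimate, i.e.\ the recovered factors $Q_1,\dots,Q_d$ combined with the optimal core $\mathcal{G} = \mathcal{X}\times_1 Q_1^T\times_2\cdots\times_d Q_d^T$, and $\mathcal{X}_1 = [\![\mathcal{H},Q_1,\dots,Q_d]\!]$ for the one-pass output of Algorithm~\ref{alg:loo_one_pass_prime}. Then $\norm{\mathcal{X}-\mathcal{X}_1}{2}\le\norm{\mathcal{X}-\hat{\mathcal{X}}}{2}+\norm{\hat{\mathcal{X}}-\mathcal{X}_1}{2}$. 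The first ingredient, (i) a deterministic quasi-optimality bound in the spirit of Lemma~\ref{lem:QuasiOptimalfromSVDs}, together with the second, (ii) a randomized range-finder argument showing the structured maps $\Omega_{-j}$ preserve enough of the geometry of the unfolding $X_{[j]}$ that the leading $r$ left singular vectors of $F_j := X_{[j]}\Omega_{-j}^T$ span a near-optimal rank-$r$ subspace, will bound $\norm{\mathcal{X}-\hat{\mathcal{X}}}{2}\le\sqrt{d(1+\epsilon)/(1-\epsilon)}\,\norm{\mathcal{X}-[\![\mathcal{X}]\!]^{\rm opt}_{\bf r}}{2}$; here the $\sqrt d$ comes from summing the $d$ modewise projection errors as in Lemma~\ref{lem:QuasiOptimalfromSVDs} and the $(1\pm\epsilon)$ from the sketch distortion. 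The third ingredient, (iii) a second embedding argument showing that solving the least-squares systems $\Phi_i Q_i H_{\mathrm{new}} = H$ against the core sketch $\mathcal{B}_c$ recovers a core $\mathcal{H}$ with $\norm{\hat{\mathcal{X}}-\mathcal{X}_1}{2}\le e^{\epsilon}\norm{\mathcal{X}-\hat{\mathcal{X}}}{2}$, then yields the claimed $(1+e^{\epsilon})\sqrt{d(1+\epsilon)/(1-\epsilon)}$ prefactor.

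For step (ii) I would argue one mode at a time. Since $\Omega_{(j,j)}$ is square and full-rank, the solve $\Omega_{(j,j)}F_j=B_j$ recovers $F_j=X_{[j]}\Omega_{-j}^T$ exactly, so $Q_j$ is the top-$r$ left-singular-vector matrix of a sketch of $X_{[j]}$. The standard randomized range-finder / sketched-SVD analysis (as used for the unstructured case in \cite{Sun2019,Minster2019}) shows that if $\Omega_{-j}^T$ acts as an $(\epsilon/d)$-distortion embedding on the row space of the best rank-$r$ truncation of $X_{[j]}$ (a subspace of dimension at most $r$ in $\R^{n^{d-1}}$) and does not inflate the residual energy, then $\norm{X_{[j]}-Q_jQ_j^T X_{[j]}}{F}^2\le\frac{1+\epsilon}{1-\epsilon}\,\Delta_{r,j}$. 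Combining this over $j\in[d]$ with the standard HOSVD sequential-projection inequality $\norm{\mathcal{X}-\mathcal{X}\times_1 Q_1Q_1^T\times_2\cdots\times_d Q_dQ_d^T}{2}^2\le\sum_{j=1}^d\norm{\mathcal{X}-\mathcal{X}\times_j Q_jQ_j^T}{2}^2$, and then applying Lemma~\ref{lem:QuasiOptimalfromSVDs} to replace $\sum_j\Delta_{r,j}$ by $d\,\norm{\mathcal{X}-[\![\mathcal{X}]\!]^{\rm opt}_{\bf r}}{2}^2$, gives the bound on $\norm{\mathcal{X}-\hat{\mathcal{X}}}{2}$. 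The crux of this step is establishing that the \emph{structured} maps $\Omega_{-j}$ — the Kronecker products $\Omega_{(j,1)}\otimes\cdots\otimes\Omega_{(j,d)}$ (omitting index $j$), or their Khatri-Rao/face-splitting analogues in \eqref{eqn:loo_khat} — are such embeddings on a fixed low-dimensional subspace with the stated dimension; I would do this from a Johnson--Lindenstrauss-property hypothesis on the component matrices plus a covering argument over the sphere of the $\le r$-dimensional target subspace, paying an $\epsilon/d$ distortion per mode and a union bound over modes and over $\epsilon$-nets of size $e^{O(r)}$, which is exactly what produces the $d^2/\epsilon^2$ and $\ln(dn^d/\delta)$ factors in the sketch dimension $m$.

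Step (iii) is the genuinely one-pass part: once the $Q_i$ are fixed, the algorithm never revisits $\mathcal{X}$ and has only $\mathcal{B}_c=\mathcal{X}\times_1\Phi_1\times_2\cdots\times_d\Phi_d$, from which it peels modes off one at a time via the least-squares solves. Conditioning on the (random) $Q_i$ and using that each $\Phi_i$ is an oblivious subspace embedding (from the same JL hypothesis, which is why $m_c$ must be of the same order as $m$), $\Phi_i Q_i\in\R^{m_c\times r}$ has a well-conditioned pseudo-inverse, so the $i$th peeling step introduces only a $(1+O(\epsilon/d))$ multiplicative distortion relative to applying $Q_i^T$ to $\mathcal{X}$; writing $\mathcal{H}=\mathcal{B}_c\times_1(\Phi_1 Q_1)^\dagger\times_2\cdots\times_d(\Phi_d Q_d)^\dagger$, expanding $\mathcal{B}_c$, and comparing term by term with $\mathcal{G}=\mathcal{X}\times_1 Q_1^T\times_2\cdots\times_d Q_d^T$ yields a sum of $d$ error terms each controlled by a single-mode embedding failure, which chain to the $(1+\epsilon/d)^d\le e^{\epsilon}$ factor. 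The main obstacle throughout is exactly steps (ii)--(iii): unlike the i.i.d.\ Gaussian maps of \cite{Sun2019,Minster2019}, Kronecker and Khatri-Rao maps are \emph{not} oblivious subspace embeddings on all of $\R^{n^{d-1}}$ (this is the TRIP-hardness noted in the introduction), so the entire argument must be organized so that the embedding property is only ever invoked on an explicitly identified, low-dimensional (data-dependent but fixed-once-conditioned) family of subspaces, and the net/union-bound bookkeeping that recovers precisely $m'(r,d,\epsilon,n,\delta)$ is the delicate piece. The runtime claim is then immediate: every step of Algorithm~\ref{alg:loo_one_pass_prime} — the $n\times n$ solves $\Omega_{(i,i)}F_i=B_i$, the SVDs of the $n\times m^{d-1}$ matrices $F_i$, and the least-squares core peels on tensors of total size $\le m_c^d$ — costs $\poly(n,m,m_c,r)$ with at most an $n\cdot\poly(r,\log)$ dependence on $n$, hence $o(n^d)$ once $n\gg r$.
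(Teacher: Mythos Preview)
Your proposal is correct and follows essentially the same approach as the paper: the triangle inequality split into Term I ($\norm{\mathcal{X}-\hat{\mathcal{X}}}{2}$) and Term II ($\norm{\hat{\mathcal{X}}-\mathcal{X}_1}{2}$), bounding Term I via the HOSVD sequential-projection inequality plus a sketch-quality guarantee on each unfolding and then invoking Lemma~\ref{lem:QuasiOptimalfromSVDs}, and bounding Term II by $e^{\epsilon}$ times Term I via $d$ chained $(1+\epsilon/d)$ per-mode least-squares solves. The paper formalizes your ``$(\epsilon/d)$-distortion embedding \dots and does not inflate the residual energy'' condition as the Projection-Cost-Preserving (PCP) property (Definition~\ref{def:pcp}, Theorem~\ref{thm:musc2}), which requires not just OSE on the top-$r$ row space but also AMM and JL conditions on the tail $X_{\setminus r}$, and packages your step (iii) as an Affine-Embedding property (Definition~\ref{def:affineemd}, Lemma~\ref{lem:affineembdbnd}).
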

\begin{rmk}
    More generally and more exactly, the number of measurements is always $ndm^{d-1} + m_c^{d}$ where $m$ and $m_c$ are as in Theorem~\ref{thm:onepass_kron} for Kronecker structured measurements, or  $d \tilde{m} n+m_c^d$ where $\tilde{m}$ and $m_c$ are as in Theorem~\ref{thm:onepass_khat} in the case of Khatri-Rao structured measurements.  
\end{rmk}

\begin{proof}
Combine Theorem~\ref{thm:onepass_kron} (or Theorem~\ref{thm:onepass_khat}) with Theorem \ref{thm:kron_runtime} and Lemma~\ref{lem:QuasiOptimalfromSVDs}.
\end{proof}

Although our bound almost certainly not tight, overall we demonstrate the procedure has errors that depend on the number of degrees of freedom of the sketched tensor, not its full size; as is typical in similar compressive setting scenarios. Furthermore, in our numerical experiments we demonstrate some of the main trade-offs involved when choosing between Kronecker or Khatri-Rao constructed sketches in terms of approximation error and performance. 

\section{Preliminaries}

The following preliminaries will provide the necessary definitions and basic results needed to conduct our probabilistic error analysis of the various leave-one-out measurements and Algorithm \ref{alg:loo_one_pass_prime}. We will rely heavily on some definitions in randomized numerical linear algebra and compressive sensing. 

\subsection{Randomized Numerical Linear Algebra Preliminaries}
\label{sec:RandNumAlgReview}

\begin{definition}[\textbf{$(\epsilon,\delta, p)$-Johnson-Lindenstrauss (JL) property}] \label{def:jl} 
Let $\epsilon > 0$, $\delta \in (0,1)$, and $p \in \mathbbm{N}$.  A random matrix $\Omega \in \R^{m \times n}$ has the $(\epsilon,\delta, p)$-Johnson-Lindenstrauss (JL) property\footnote{Formally, a distribution over $m \times n$ matrices has the JL property if a matrix selected from this distribution satisfies \eqref{equ:JLproperty} with probability at least $1 - \delta$. For brevity, here and in the next similar cases, the term ``random matrix" will refer to a distribution over matrices.} for an arbitrary set $S \subset \R^n$ with cardinality at most $p$ if it satisfies
\begin{equation}
(1-\epsilon) \norm{\vb{x}}{2}^2 \leq \|\Omega\vb{x}\|_2^2 \leq (1+\epsilon) \norm{\vb{x}}{2}^2~~\textrm{for all } \vb{x}\in S
\label{equ:JLproperty}
\end{equation}
with probability at least $1-\delta$.

\end{definition}
For all random matrix distributions discussed in this work, only an upper-bound on the cardinality of the set $S$ appearing in Definition~\ref{equ:JLproperty} is required in order to ensure with high probability that a given realization satisfies \eqref{equ:JLproperty}.  No other property of the set $S$ to be embedded will be required to define, generate, or apply $\Omega$. Such random matrices are referred to as \textit{data oblivious}, or simply \textit{oblivious}.  The following variant of the famous Johnson-Lindenstrauss Lemma \cite{johnson1984extensions} demonstrates the existence of random matrices with the oblivious JL property.

\begin{thm}[sub-gaussian random matrices have the JL property]\label{thm:subgisJL}
Let $S\subset\R^n$ be an arbitrary finite subset of $\R^n$. Let $\delta,\epsilon \in(0,1)$. Finally, let $\Omega \in \R^{m\times n}$ be a matrix with independent, mean zero, variance $m^{-1}$, sub-gaussian entries all admitting the same parameter $c \in \R^+$.\footnote{See, e.g., Remark 7.25 in \cite{Foucart2013} for details regarding sub-gaussian random variables are their parameters.} Then
\[
(1-\epsilon)\|\vb{x}\|_2^2 \leq \|\Omega \vb{x} \|_2^2 \leq (1+\epsilon)\|\vb{x}\|_2^2
\] 
will hold simultaneously for all $\vb{x}\in S$ with probability at least $1-\delta$, provided that 
\begin{equation}
m\geq \frac{C}{\epsilon^2} \ln\left(\frac{\abs{S}}{\delta}\right),
\end{equation}
where $C \leq 8c(16c+1)$ is an absolute constant that only depends on the sub-gaussian parameter $c$.
\end{thm}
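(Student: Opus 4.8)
The plan is to reduce the simultaneous embedding statement to a concentration bound for one fixed vector and then close with a union bound over $S$. First I would fix an arbitrary $\vb{x}\in S$ and, using the homogeneity of \eqref{equ:JLproperty} in $\vb{x}$, assume $\norm{\vb{x}}{2}=1$. Since the entries of $\Omega$ are independent, its rows are independent, so the coordinates $Z_i := \sqrt{m}\,(\Omega\vb{x})_i$, $i\in[m]$, are i.i.d.; each $Z_i$ is a linear combination (with weights $\vb{x}$) of independent mean-zero sub-gaussian entries, hence mean zero and sub-gaussian with a parameter controlled by $c$ alone, while a direct variance computation from $\E\,\Omega_{ij}^2 = m^{-1}$ and $\norm{\vb{x}}{2}=1$ gives $\E Z_i^2 = 1$. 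Consequently $\E\norm{\Omega\vb{x}}{2}^2 = \frac1m\sum_{i=1}^m \E Z_i^2 = 1 = \norm{\vb{x}}{2}^2$, so \eqref{equ:JLproperty} becomes the statement that $\frac1m\sum_{i=1}^m Z_i^2$ stays within $\epsilon$ of its mean.

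Second I would establish the Bernstein-type tail for this average. Each $Z_i^2 - 1$ is a centered sub-exponential random variable whose Orlicz parameters are dominated by a function of $c$, so bounding $\E\exp(\lambda(Z_i^2-1))$ for small $|\lambda|$, tensorizing over the $m$ independent summands, and optimizing over $\lambda$ yields, for $\epsilon\in(0,1)$,
\[
\pr\!\left[\,\big|\,\norm{\Omega\vb{x}}{2}^2 - 1\,\big| \ge \epsilon\,\right] \;\le\; 2\exp\!\left(-\,c' m \epsilon^2\right)
\]
for an absolute constant $c'>0$ depending only on $c$; carrying the moment generating function estimate through with the sharpest available constants produces $c' = 1/\big(8c(16c+1)\big)$. (This single-vector concentration inequality for sub-gaussian matrices can alternatively be quoted directly from \cite{Foucart2013}.) Third I would take a union bound over the at most $|S|$ vectors in $S$: the probability that \eqref{equ:JLproperty} fails for some $\vb{x}\in S$ is at most $2|S|\exp(-c' m \epsilon^2)$, which is $\le \delta$ as soon as $m \ge \frac{1}{c'\epsilon^2}\ln\!\big(2|S|/\delta\big)$; absorbing the harmless factor $2$ into the constant gives exactly $m \ge \frac{C}{\epsilon^2}\ln(|S|/\delta)$ with $C \le 8c(16c+1)$.

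The crux is the second step: obtaining sub-exponential concentration of $\norm{\Omega\vb{x}}{2}^2$ about its mean with a clean explicit constant requires the careful moment generating function estimate for $Z_i^2 - 1$, and this is the only place where the sub-gaussian hypothesis on the entries is genuinely used. The reduction by homogeneity, the mean-one computation, and the union bound over $S$ are all routine.
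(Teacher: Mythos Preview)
Your proposal is correct and is precisely the standard argument the paper defers to: the paper does not give its own proof but simply cites Lemma~9.35 in \cite{Foucart2013}, whose proof proceeds exactly as you outline (single-vector sub-exponential concentration of $\norm{\Omega\vb{x}}{2}^2$ from the sub-gaussianity of the rows, followed by a union bound over $S$).
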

\begin{proof} See, e.g., Lemma 9.35 in \cite{Foucart2013}.
\end{proof}

Next, we define a similar property for an infinite, yet rank constrained, set.

\subsubsection{Oblivious Subspace Embedding (OSE) Results}

In this section we will review so-called Oblivious Subspace Embedding property, and describe how to construct them from any random matrix distribution with the JL property.

\begin{definition} \label{def:ose}
[\textbf{$(\epsilon,\delta, r)$-OSE property}]  Let $\epsilon>0$, $\delta\in(0,1)$, and $r \in \mathbbm{N}$.  Fix an arbitrary rank $r$ matrix $A\in\R^{n\times N}$.  A random matrix $\Omega \in \R^{m\times n}$ has the $(\epsilon,\delta, r)$-Oblivious Subspace Embedding (OSE) property for (the column space of) $A$ if it satisfies
\begin{equation}
(1-\epsilon)\norm{A\vb{x}}{2}^2 \leq \norm{\Omega A \vb{x}}{2}^2 \leq (1+\epsilon)\norm{A\vb{x}}{2}^2 ~~\textrm{for all } \vb{x}\in\R^{N}
\label{equ:JLsubspaceEmbedding}
\end{equation}
with probability at least $1 - \delta$.
\end{definition}

Note that if $Q \in\R^{n\times r}$ is an orthonormal basis for the column space of a rank $r$ matrix $A\in\R^{n\times N}$, then the images of $A$ and $Q$ coincide, i.e. $\left\{Q\vb{y} ~|~ \forall \vb{y} \in \R^r \right\} = \left\{A\vb{x} ~|~ \forall \vb{x} \in \R^N \right\}$. Furthermore, since $Q$ has orthonormal columns so that $\norm{Q\vb{y}}{2} = \norm{\vb{y}}{2}$ holds, \eqref{equ:JLsubspaceEmbedding} is equivalent to 
\begin{equation}
(1-\epsilon)\norm{\vb{y}}{2}^2 \leq \norm{\Omega Q\vb{y}}{2}^2 \leq (1+\epsilon)\norm{\vb{y}}{2}^2 ~~ \textrm{for all } \vb{y} \in \R^r
\label{equ:JLsubspaceEmbedding2}
\end{equation}
holding in this case.  Below we will use the equivalence of \eqref{equ:JLsubspaceEmbedding} and \eqref{equ:JLsubspaceEmbedding2} often by regularly constructing matrices satisfying \eqref{equ:JLsubspaceEmbedding} by instead constructing matrices satisfying \eqref{equ:JLsubspaceEmbedding2}.


The next result shows that random matrices with the JL property also have the OSE property.  It is a standard result in the compressive sensing and randomized numerical linear algebra literature (see, e.g., \cite[Lemma 5.1]{baraniuk2007simple} or \cite[Lemma 10]{sarlos2006improved}) that can be proven using an $\epsilon$-cover of the appropriate column space.  

\begin{lem}[Subspace embeddings from finite embeddings via a cover]
	\label{lem:subspaceembed}
	 Fix $\epsilon\in(0,1)$. Let $\mathcal{L}_{\mathcal{B}}^r \subset \R^n$ be the $r$-dimensional subspace of $\R^n$ spanned by an orthonormal basis $\mathcal{B}$, and define $S_{\mathcal{B}}^r := \left\{ \frac{\x}{\| \x \|_2} ~\big|~ \x \in 
\mathcal{L}_{\mathcal{B}}^r \setminus \{ {\bf 0} \} \right\}$.  Furthermore, let $C \subset S_{\mathcal{B}}^r $ be a minimal $\left(\frac{\epsilon}{16}\right)$-cover
 of $S_{\mathcal{B}}^r \subset \mathcal{L}_{\mathcal{B}}^r$. Then if $\Omega \in \C^{m\times n}$ satisfies \eqref{equ:JLproperty} with $S \leftarrow C$ and $\epsilon \leftarrow \frac{\epsilon}{2}$, it will also satisfy
 \begin{equation}
 \label{eqn:subspaceineq}
 	(1-\epsilon)\|\vb{x}\|_2^2 \leq \|\Omega\vb{x}\|_2^2 \leq (1+\epsilon)\|\vb{x}\|_2^2 ~~\forall \vb{x}\in\mathcal{L}_{\mathcal{B}}^r.
 \end{equation}
\end{lem}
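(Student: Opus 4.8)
The plan is to use the classical $\epsilon$-net (covering) argument. By homogeneity of both sides of \eqref{eqn:subspaceineq} in $\vb{x}$, it suffices to verify the two inequalities for every unit vector $\vb{x}\in S_{\mathcal{B}}^r$. The key quantity to control is the operator norm of $\Omega$ restricted to the subspace, namely $M := \sup_{\vb{x}\in S_{\mathcal{B}}^r}\|\Omega\vb{x}\|_2$, which is finite and attained because $S_{\mathcal{B}}^r$ is a compact subset of the finite-dimensional space $\mathcal{L}_{\mathcal{B}}^r$.

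Next I would bound $M$ by a self-improving argument. Fix $\vb{x}\in S_{\mathcal{B}}^r$ and choose $\vb{y}\in C$ with $\|\vb{x}-\vb{y}\|_2\le \epsilon/16$; since $\vb{x},\vb{y}\in\mathcal{L}_{\mathcal{B}}^r$, the difference $\vb{x}-\vb{y}$ also lies in $\mathcal{L}_{\mathcal{B}}^r$, so $\|\Omega(\vb{x}-\vb{y})\|_2\le M\|\vb{x}-\vb{y}\|_2\le M\epsilon/16$ directly from the definition of $M$ (this step does not use the JL hypothesis). Combining the triangle inequality with the hypothesis $\|\Omega\vb{y}\|_2^2\le(1+\epsilon/2)\|\vb{y}\|_2^2 = 1+\epsilon/2$ gives $\|\Omega\vb{x}\|_2\le\sqrt{1+\epsilon/2}+M\epsilon/16$; taking the supremum over $\vb{x}$ yields $M(1-\epsilon/16)\le\sqrt{1+\epsilon/2}$, hence $M\le\sqrt{1+\epsilon/2}/(1-\epsilon/16)$, which for $\epsilon\in(0,1)$ is bounded by an absolute constant strictly less than $4/3$.

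Finally I would push this through both inequalities. For the upper bound, $\|\Omega\vb{x}\|_2\le\|\Omega\vb{y}\|_2+\|\Omega(\vb{x}-\vb{y})\|_2\le\sqrt{1+\epsilon/2}+M\epsilon/16$; squaring and using $M<4/3$ and $\epsilon<1$ gives $\|\Omega\vb{x}\|_2^2\le 1+\epsilon$. For the lower bound, $\|\Omega\vb{x}\|_2\ge\|\Omega\vb{y}\|_2-\|\Omega(\vb{x}-\vb{y})\|_2\ge\sqrt{1-\epsilon/2}-M\epsilon/16$, which is nonnegative; squaring and discarding the nonnegative cross term gives $\|\Omega\vb{x}\|_2^2\ge(1-\epsilon/2)-M\epsilon/8\ge 1-\epsilon$, again using $M<4/3$. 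The only delicate point is the constant-chasing: one must check that the specific pairing of an $(\epsilon/16)$-cover with an $\epsilon/2$-accurate finite embedding is tight enough to recover exactly the target distortion $\epsilon$ rather than a larger multiple of it — this is precisely where the constants $16$ and $2$ in the statement are consumed, and it is the only part of the argument that requires arithmetic rather than being purely formal.
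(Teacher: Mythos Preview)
Your argument is correct and is precisely the standard $\epsilon$-net/covering argument for passing from a finite embedding to a subspace embedding; the paper does not supply its own proof but simply cites \cite{iwen2021lower} for this well-known result, and what you have written is exactly that argument with the constants tracked. One small quibble: in the lower-bound step the phrase ``discarding the nonnegative cross term'' is imprecise (you are really dropping the nonnegative square $(M\epsilon/16)^2$ and using $\sqrt{1-\epsilon/2}\le 1$ in the genuine cross term), but the resulting inequality $\|\Omega\vb{x}\|_2^2\ge(1-\epsilon/2)-M\epsilon/8$ and the final conclusion are valid.
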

\begin{proof}
See, e.g., Lemma 3 in \cite{iwen2021lower} for this version.
\end{proof}

Using Lemma~\ref{lem:subspaceembed} one can now easily prove the following corollary of Theorem~\ref{thm:subgisJL} which demonstrates the existence of random matrices with the OSE property.

\begin{corollary}
\label{cor:subgOSE} A random matrix $\Omega\in\R^{m\times n}$ with mean zero, variance $m^{-1}$, independent sub-gaussian entries has the $(\epsilon,\delta, r)$-OSE property for an arbitrary rank $r$ matrix $A \in \R^{n \times N}$ provided that
\[
	 	m \geq C \frac{r}{\epsilon^2} \ln\left(\frac{C'}{ \epsilon \delta}\right) \geq \frac{C}{\epsilon^2} \ln\left(\frac{(\frac{47}{\epsilon})^{r}}{\delta}\right),
 \]
where $C' > 0$ is an absolute constant, and $C > 0$ is an absolute constant that only depends on the sub-gaussian norms/parameters of $\Omega$'s entries.
\end{corollary}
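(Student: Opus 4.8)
The plan is to obtain the OSE property as a corollary of the finite Johnson--Lindenstrauss embedding guarantee of Theorem~\ref{thm:subgisJL} by passing through the cover-based reduction of Lemma~\ref{lem:subspaceembed}. First I would fix an orthonormal basis $\mathcal{B}$ for the column space of $A$, so that $\mathcal{L}_{\mathcal{B}}^r$ is exactly the image of $A$, and recall the equivalence of \eqref{equ:JLsubspaceEmbedding} and \eqref{equ:JLsubspaceEmbedding2} noted just after Definition~\ref{def:ose}: it therefore suffices to show that, with probability at least $1-\delta$, the matrix $\Omega$ satisfies $(1-\epsilon)\norm{\x}{2}^2 \le \norm{\Omega\x}{2}^2 \le (1+\epsilon)\norm{\x}{2}^2$ for every $\x \in \mathcal{L}_{\mathcal{B}}^r$.

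Next I would invoke Lemma~\ref{lem:subspaceembed}, which reduces this infinite, subspace-wide statement to the finite statement \eqref{equ:JLproperty} with tolerance $\epsilon/2$ holding over a minimal $(\epsilon/16)$-cover $C$ of the unit sphere $S_{\mathcal{B}}^r$. A standard volumetric bound on the covering numbers of the Euclidean unit sphere in $\R^r$ gives $\abs{C} \le \bigl(1 + \tfrac{32}{\epsilon}\bigr)^r \le \bigl(\tfrac{47}{\epsilon}\bigr)^r$ for $\epsilon \in (0,1)$. Applying Theorem~\ref{thm:subgisJL} with $S \leftarrow C$ and $\epsilon \leftarrow \epsilon/2$, the matrix $\Omega$ embeds $C$ with the required distortion, except with probability at most $\delta$, as soon as
\[
m \;\ge\; \frac{4C_0}{\epsilon^2}\,\ln\!\left(\frac{\abs{C}}{\delta}\right) \;\ge\; \frac{C}{\epsilon^2}\,\ln\!\left(\frac{(47/\epsilon)^{r}}{\delta}\right),
\]
where $C_0$ is the constant of Theorem~\ref{thm:subgisJL} and $C := 4C_0$ depends only on the sub-gaussian parameters of the entries of $\Omega$; the second inequality just uses $\abs{C}\le(47/\epsilon)^r$. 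By Lemma~\ref{lem:subspaceembed} this forces \eqref{eqn:subspaceineq} on all of $\mathcal{L}_{\mathcal{B}}^r$, which by the first step is precisely the $(\epsilon,\delta,r)$-OSE property for $A$.

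Finally I would simplify the sufficient condition to the cleaner form stated in the corollary: expanding $\ln\bigl((47/\epsilon)^r/\delta\bigr) = r\ln(47/\epsilon) + \ln(1/\delta)$ and using $r \ge 1$ together with $\epsilon,\delta \in (0,1)$, one bounds this quantity by $C'\,r\,\ln\!\bigl(\tfrac{C''}{\epsilon\delta}\bigr)$ for absolute constants $C',C''$, which produces exactly the chain of inequalities in the statement. I do not anticipate any real obstacle here; the only care required is in tracking constants through the two reductions (the JL tolerance $\epsilon/2$ and the cover radius $\epsilon/16$) and in citing the covering-number bound for the sphere, after which everything is a direct application of the results already assembled in this section.
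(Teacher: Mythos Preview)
Your proposal is correct and follows essentially the same approach as the paper's proof: take a minimal $(\epsilon/16)$-cover of the unit sphere in the column span of $A$ as in Lemma~\ref{lem:subspaceembed}, bound its cardinality by $(47/\epsilon)^r$, and apply Theorem~\ref{thm:subgisJL} to this finite set. The paper's version is terser (it omits the explicit reduction via \eqref{equ:JLsubspaceEmbedding2} and the final constant-simplification step), but the argument is identical.
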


\begin{proof}
As is done in Lemma \ref{lem:subspaceembed}, suppose $S$ is a minimal $\frac{\epsilon}{16}$-cover of the $r$ dimensional unit sphere in the column span of $A$. The cardinaltiy of this cover is bounded by $\left(\frac{47}{\epsilon}\right)^{r}$. Apply Theorem \ref{thm:subgisJL} to the finite set $S$. 
\end{proof}

Finally, the following lemma demonstrates that matrices satisfying \eqref{equ:JLsubspaceEmbedding} for $A$ also approximately preserve the Frobenius norm of $A$.  We present its proof here as an illustration of basic notation and techniques.

\begin{lem}
	\label{lem:subembedimpliesfrobound}
Suppose $\Omega\in\R^{m\times n}$ satisfies \eqref{equ:JLsubspaceEmbedding} for a rank $r$ matrix $A \in \R^{n \times N}$. Then,
	\begin{align*}
	    \abs{\norm{A}{F}^2 - \norm{\Omega A}{F}^2 } \leq \epsilon \norm{A}{F}^2.
	\end{align*}
\end{lem}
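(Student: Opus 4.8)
The plan is to reduce this Frobenius-norm statement to the subspace embedding inequality \eqref{equ:JLsubspaceEmbedding} applied columnwise. Write $A = [\vb{a}_1 \mid \dots \mid \vb{a}_N]$ for the columns $\vb{a}_k = A\vb{e}_k \in \R^n$. Each $\vb{a}_k$ lies in the column space of $A$, i.e. it is of the form $A\vb{x}$ with $\vb{x} = \vb{e}_k$, so the hypothesis gives $(1-\epsilon)\norm{\vb{a}_k}{2}^2 \leq \norm{\Omega \vb{a}_k}{2}^2 \leq (1+\epsilon)\norm{\vb{a}_k}{2}^2$ for every $k \in [N]$.

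Next I would invoke the identities $\norm{A}{F}^2 = \sum_{k=1}^N \norm{\vb{a}_k}{2}^2$ and, since the $k$-th column of $\Omega A$ is exactly $\Omega\vb{a}_k$, also $\norm{\Omega A}{F}^2 = \sum_{k=1}^N \norm{\Omega\vb{a}_k}{2}^2$. Summing the columnwise bounds over $k$ yields $(1-\epsilon)\norm{A}{F}^2 \leq \norm{\Omega A}{F}^2 \leq (1+\epsilon)\norm{A}{F}^2$, and subtracting $\norm{A}{F}^2$ throughout gives $\left|\norm{A}{F}^2 - \norm{\Omega A}{F}^2\right| \leq \epsilon\norm{A}{F}^2$, which is the claim.

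There is essentially no obstacle here; the only thing worth noting is that \eqref{equ:JLsubspaceEmbedding} is a deterministic hypothesis on the fixed realization $\Omega$ that holds for \emph{all} $\vb{x} \in \R^N$ at once, so it may be applied to the $N$ specific vectors $\vb{e}_1, \dots, \vb{e}_N$ simultaneously without any additional probabilistic cost. Equivalently, one could pass to an orthonormal basis $Q \in \R^{n \times r}$ of the range of $A$, write $A = Q(Q^T A)$, and apply the equivalent form \eqref{equ:JLsubspaceEmbedding2} to the columns of $Q^T A$; but the columnwise argument above is the most direct route.
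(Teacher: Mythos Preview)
Your proof is correct and follows essentially the same approach as the paper: apply \eqref{equ:JLsubspaceEmbedding} with $\vb{x}=\vb{e}_k$ to each column $\vb{a}_k=A\vb{e}_k$, then sum using $\norm{A}{F}^2=\sum_k\norm{\vb{a}_k}{2}^2$. The only cosmetic difference is that the paper bounds $\bigl|\norm{A}{F}^2-\norm{\Omega A}{F}^2\bigr|$ via a triangle inequality on the termwise differences, whereas you sum the two-sided inequalities first and subtract; these are equivalent.
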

\begin{proof}
Let $\vb{e}_i \in\R^{N}$ denote the standard basis vector where the $i$-th entry is $1$, and all others are zero (i.e., $\vb{e}_i$ is the $i$-th column of the $N \times N$ identity matrix $I_N$). Similarly let $\vb{b}_i$ denote the $i$-th column of any given matrix $B$, and set $\tilde{A} := \Omega A$. By \eqref{equ:JLsubspaceEmbedding}, we conclude that for all $i\in[N]$ we have
\[
 \abs{\norm{A\vb{e}_i}{2}^2 - \norm{\Omega A \vb{e}_i}{2}^2 } \leq \epsilon \norm{A\vb{e}_i}{2}^2.
\]
To establish the desired result, we represent the squared Frobenius norm of a matrix as the sum of the squared $\ell_2$-norms of its columns. Doing so, we see that 
\begin{align*}
     \abs{\norm{A}{F}^2 - \norm{\Omega A}{F}^2 } &= \abs{\sum_{i=1}^N \left(\norm{\vb{a}_i}{2}^2 - \norm{\tilde{\vb{a}}_i}{2}^2\right)} 
     ~\leq~ \sum_{i=1}^N \abs{ \norm{\vb{a}_i}{2}^2 - \norm{\tilde{\vb{a}}_i}{2}^2} \\
    &= \sum_{i=1}^N \abs{ \norm{A\vb{e}_i}{2}^2 - \norm{(\Omega A)\vb{e}_i}{2}^2}
    ~\leq~ \epsilon \sum_{i=1}^N \norm{A\vb{e}_i}{2}^2
    ~=~ \epsilon \norm{A}{F}^2.
\end{align*}
\end{proof}

We will now define a property of random matrices related to fast approximate matrix multiplication.  

\subsubsection{The Approximate Matrix Multiplication (AMM) Property}

First proposed in \cite[Lemma 6]{sarlos2006improved} (see also, e.g., \cite{magen2011low} for other variants), the approximate matrix multiplication property will be crucial to our analysis below.

\begin{definition}[\textbf{$(\epsilon,\delta)$-AMM property}] \label{def:amm} Let $\epsilon > 0$ and $\delta \in (0,1)$.  A random matrix $\Omega\in\R^{m\times n}$ satisfies the $(\epsilon,\delta)$-Approximate Matrix Multiplication property for two arbitrary matrices $A\in\R^{p\times n}$ and $B\in\R^{n\times q}$ if
\begin{equation}
\label{equ:AMMdetprop}
\|A\Omega^T \Omega B - AB\|_F \leq  \epsilon \|A\|_F \|B\|_F
\end{equation}
holds with probability at least $1-\delta$. 
\end{definition}

The following lemma can be used to construct random matrices with the AMM property from random matrices with the JL property.  A slightly generalized version is proven in Appendix~\ref{appendix:PCPproofs} for the sake of completeness.

\begin{lem}[The JL property provides the AMM property] \label{lem:fastmatmulbyJLReal}
Let $A\in\R^{p\times n}$ and $B\in\R^{n\times q}$.  There exists a finite set $S \subset \R^n$ with cardinality $|S| \leq 2(p + q)^2$ (determined entirely by $A$ and $B$) such that the following holds:   If a random matrix $\Omega\in\R^{m\times n}$ has the $(\epsilon/2, \delta, 2(p + q)^2 )$-JL property for $S$, then $\Omega$ will also have the $(\epsilon, \delta)$-AMM property for $A$ and $B$.
\end{lem}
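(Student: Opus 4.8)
\emph{Proof sketch.} The plan is to reduce the matrix statement to a family of scalar statements, each controlled by the $\ell_2$-norm preservation in \eqref{equ:JLproperty}. Writing $\vb{a}_1,\dots,\vb{a}_p$ for the rows of $A$ and $\vb{b}_1,\dots,\vb{b}_q$ for the columns of $B$, the $(i,j)$ entry of $A\Omega^T\Omega B - AB$ equals $\inner{\Omega\vb{a}_i}{\Omega\vb{b}_j} - \inner{\vb{a}_i}{\vb{b}_j}$, so
\[
\norm{A\Omega^T\Omega B - AB}{F}^2 \;=\; \sum_{i=1}^{p}\sum_{j=1}^{q}\Big(\inner{\Omega\vb{a}_i}{\Omega\vb{b}_j} - \inner{\vb{a}_i}{\vb{b}_j}\Big)^2 .
\]
Hence it suffices to show that, with probability at least $1-\delta$, every summand obeys $\abs{\inner{\Omega\vb{a}_i}{\Omega\vb{b}_j} - \inner{\vb{a}_i}{\vb{b}_j}} \le \tfrac{\epsilon}{2}\norm{\vb{a}_i}{2}\norm{\vb{b}_j}{2}$; summing the squares then yields $\tfrac{\epsilon^2}{4}\big(\sum_i\norm{\vb{a}_i}{2}^2\big)\big(\sum_j\norm{\vb{b}_j}{2}^2\big) = \tfrac{\epsilon^2}{4}\norm{A}{F}^2\norm{B}{F}^2 \le \epsilon^2\norm{A}{F}^2\norm{B}{F}^2$, which is \eqref{equ:AMMdetprop}.

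For the scalar bound I would normalize first. After discarding any zero rows of $A$ and zero columns of $B$ (which contribute nothing to either side), set $\hat{\vb{a}}_i := \vb{a}_i/\norm{\vb{a}_i}{2}$ and $\hat{\vb{b}}_j := \vb{b}_j/\norm{\vb{b}_j}{2}$, and take $S := \{\,\hat{\vb{a}}_i \pm \hat{\vb{b}}_j \;:\; i\in[p],\,j\in[q]\,\}$, a set determined entirely by $A$ and $B$ with $\abs{S}\le 2pq\le 2(p+q)^2$. The polarization identity $\inner{\vb{u}}{\vb{v}} = \tfrac14(\norm{\vb{u}+\vb{v}}{2}^2 - \norm{\vb{u}-\vb{v}}{2}^2)$, applied once to $(\Omega\hat{\vb{a}}_i,\Omega\hat{\vb{b}}_j)$ and once to $(\hat{\vb{a}}_i,\hat{\vb{b}}_j)$ and subtracted, writes $\inner{\Omega\hat{\vb{a}}_i}{\Omega\hat{\vb{b}}_j} - \inner{\hat{\vb{a}}_i}{\hat{\vb{b}}_j}$ as $\tfrac14$ times the difference of the two quantities $\norm{\Omega(\hat{\vb{a}}_i\pm\hat{\vb{b}}_j)}{2}^2 - \norm{\hat{\vb{a}}_i\pm\hat{\vb{b}}_j}{2}^2$. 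If $\Omega$ has the $(\epsilon/2,\delta,2(p+q)^2)$-JL property for this $S$, then on the success event each of those quantities is at most $\tfrac{\epsilon}{2}\norm{\hat{\vb{a}}_i\pm\hat{\vb{b}}_j}{2}^2$ in absolute value, and the parallelogram law $\norm{\hat{\vb{a}}_i+\hat{\vb{b}}_j}{2}^2 + \norm{\hat{\vb{a}}_i-\hat{\vb{b}}_j}{2}^2 = 4$ gives $\abs{\inner{\Omega\hat{\vb{a}}_i}{\Omega\hat{\vb{b}}_j} - \inner{\hat{\vb{a}}_i}{\hat{\vb{b}}_j}} \le \tfrac{\epsilon}{2}$. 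Rescaling by $\norm{\vb{a}_i}{2}\norm{\vb{b}_j}{2}$ recovers the per-entry bound, and the first paragraph then closes the argument; the slightly more general statement proven in the appendix follows the same way after stacking $A^{T}$ and $B$ into a single matrix and arguing over all pairs of its columns.

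The one place where care is genuinely needed is the normalization. Applying polarization directly to the un-normalized vectors $\vb{a}_i\pm\vb{b}_j$ only yields a per-entry error of order $\epsilon(\norm{\vb{a}_i}{2}^2+\norm{\vb{b}_j}{2}^2)$, and $\sum_{i,j}(\norm{\vb{a}_i}{2}^2+\norm{\vb{b}_j}{2}^2)^2$ need not be comparable to $\norm{A}{F}^2\norm{B}{F}^2$; passing to unit vectors before invoking \eqref{equ:JLproperty} (legitimate since that inequality is scale invariant) is precisely what forces the aggregate error to factor as the product of the two Frobenius norms. The remaining ingredients — the entrywise identity for $A\Omega^T\Omega B$, the polarization arithmetic, and the bookkeeping of $\abs{S}$ — are routine.
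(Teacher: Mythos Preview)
Your argument is correct. The key ingredients---the entrywise identity $(A\Omega^T\Omega B - AB)_{i,j} = \inner{\Omega\vb{a}_i}{\Omega\vb{b}_j} - \inner{\vb{a}_i}{\vb{b}_j}$, the real polarization identity applied to \emph{normalized} vectors, and the parallelogram law---combine exactly as you describe, and your observation that normalization is what makes the aggregate error factor as $\norm{A}{F}\norm{B}{F}$ is precisely the point. Your set $S$ has $\abs{S}\le 2pq \le 2(p+q)^2$, so the cardinality constraint in the statement is met.

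The paper's route is genuinely different: rather than working with the rows of $A$ and columns of $B$ directly, it passes to the SVDs $A = U_1\Sigma_1 V^T$ and $B = U\Sigma_2 V_2^T$ and applies the polarization/JL argument to the (already orthonormal) columns of $V$ and $U$. The entrywise bound $\abs{(V^T\Omega^T\Omega U - V^T U)_{k,j}}\le\epsilon$ is then weighted by $\sigma_k(A)\sigma_j(B)$ inside the Frobenius norm, and $\sum_k\sigma_k(A)^2 = \norm{A}{F}^2$ supplies the factorization. What this buys is that the normalization step is ``free'' (singular vectors are unit by construction) and the argument extends verbatim to $\C$ via the complex polarization identity, yielding Lemma~\ref{lem:fastmatmulbyJLComplex}; the real statement you proved is then recovered via Remark~\ref{Rem:RealPolarizationHelpsalittle}. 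Your approach, by contrast, is more elementary---no SVD is needed---and in fact produces a smaller embedding set ($2pq$ vectors versus the paper's $2(p+q)^2$), though both sit comfortably under the stated bound.
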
 

\begin{proof}
Combine Lemma~\ref{lem:fastmatmulbyJLComplex} with Remark~\ref{Rem:RealPolarizationHelpsalittle}.
\end{proof}

Using Lemma~\ref{lem:fastmatmulbyJLReal} one can now prove the following corollary of Theorem~\ref{thm:subgisJL} which demonstrates the existence of random matrices with the AMM property for any two fixed matrices.

\begin{corollary}
\label{cor:AMMsetexist} Fix $A\in\R^{p\times n}$ and $B\in\R^{n\times q}$.  A random matrix $\Omega\in\R^{m\times n}$ with mean zero, variance $\frac{1}{m}$, independent sub-gaussian entries will have the $(\epsilon,\delta)$-Approximate Matrix Multiplication property for $A$ and $B$ provided that
\[
	 	m\geq \frac{C}{\epsilon^2} \ln\left(\frac{ 2(p + q)^2}{\delta} \right),
 \]
where $C > 0$ is an absolute constant that only depends on the sub-gaussian norms/parameters of $\Omega$'s entries.
\end{corollary}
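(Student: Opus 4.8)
The plan is to simply chain together the two tools already developed: Lemma~\ref{lem:fastmatmulbyJLReal}, which converts a JL guarantee into an AMM guarantee, and Theorem~\ref{thm:subgisJL}, which supplies the JL guarantee for sub-gaussian matrices. First I would invoke Lemma~\ref{lem:fastmatmulbyJLReal} with the given $A \in \R^{p \times n}$ and $B \in \R^{n \times q}$: this produces a finite set $S \subset \R^n$, depending only on $A$ and $B$, with $|S| \le 2(p+q)^2$, and tells us that it suffices for $\Omega$ to have the $(\epsilon/2, \delta, 2(p+q)^2)$-JL property for this particular $S$.

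Next I would apply Theorem~\ref{thm:subgisJL} to the finite set $S$ with the parameter choice $\epsilon \leftarrow \epsilon/2$ and the same $\delta$. That theorem guarantees the needed JL inequality holds simultaneously over $S$ with probability at least $1 - \delta$ as soon as
\[
m \ge \frac{C_0}{(\epsilon/2)^2}\ln\!\left(\frac{|S|}{\delta}\right) = \frac{4C_0}{\epsilon^2}\ln\!\left(\frac{|S|}{\delta}\right),
\]
where $C_0$ depends only on the sub-gaussian parameters of the entries of $\Omega$. Since $|S| \le 2(p+q)^2$ we have $\ln(|S|/\delta) \le \ln\!\left(2(p+q)^2/\delta\right)$, so the displayed condition is implied by $m \ge \frac{4C_0}{\epsilon^2}\ln\!\left(2(p+q)^2/\delta\right)$. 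Setting $C := 4C_0$ (still an absolute constant depending only on the sub-gaussian norms of $\Omega$'s entries) yields exactly the stated sample-size bound, and on this event $\Omega$ has the claimed $(\epsilon,\delta)$-AMM property for $A$ and $B$ by Lemma~\ref{lem:fastmatmulbyJLReal}.

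There is essentially no obstacle here — the only point requiring the slightest care is bookkeeping of constants: the passage from $\epsilon$ to $\epsilon/2$ in the JL step contributes the factor $4$, and the monotonicity bound $|S| \le 2(p+q)^2$ must be used to replace the cardinality of the (otherwise unspecified) cover set $S$ by the explicit quantity appearing in the statement. Both are absorbed into the single absolute constant $C$.
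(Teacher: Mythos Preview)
Your proposal is correct and follows essentially the same approach as the paper: invoke Lemma~\ref{lem:fastmatmulbyJLReal} to obtain the finite set $S$ with $|S|\le 2(p+q)^2$, then apply Theorem~\ref{thm:subgisJL} to $S$ with $\epsilon\leftarrow\epsilon/2$ and absorb constants. The paper's proof is just the one-line version of exactly this argument.
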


\begin{proof}
Apply Theorem \ref{thm:subgisJL} to the finite set $S$ guaranteed by Lemma~\ref{lem:fastmatmulbyJLReal}. 
\end{proof}

We now present a capstone definition for this section that will be useful in our analysis of the general measurement ensembles considered herein.

\subsubsection{Projection Cost Preserving (PCP) Sketches}

The following property first appeared in the form below in  \cite[Definition 1]{cohen2015dimensionality} (see also, however, \cite[Definition 13]{feldman2020turning} for the statement of an equivalent property in a different form that appeared earlier).

\begin{definition}[\textbf{A $(\epsilon,c,r)$-Projection Cost Preserving (PCP) sketch}] \label{def:pcp} Let $\epsilon, c > 0$, and $r \in \mathbbm{N}$.  A matrix $\tilde{X}\in \R^{n\times m}$ is a $(\epsilon,c,r)$-PCP sketch of $X\in\R^{n \times N}$ if for all orthogonal projection matrices $P\in\R^{n\times n}$ with rank at most $r$,
\begin{equation}\label{eq:pcp}
(1-\epsilon) \norm{X - PX}{F}^2 \leq \norm{\tilde{X} - P\tilde{X}}{F}^2 + c \leq (1+\epsilon) \norm{X - PX}{F}^2 
\end{equation}
holds.
\end{definition}

The next lemma can be used to construct random matrices that are PCP sketches of a given matrix $X$ with high probability.  Before the lemma can be stated, however, we will need one additional definition.

\begin{definition}[Head-Tail Split] \label{def:headtail} For any $A\in\R^{m\times n}$, we can split $A$ into his leading $r$-term and its tail $(n-r)$-term Singular Value Decomposition (SVD) components. That is, consider the SVD of $A = U\Sigma V^T$.  For any $r \leq \text{rank}(A)$, let $U_{r} \in \R^{m\times r}$ and $V_{r}\in\R^{n\times r}$ denote the first $r$ columns of $U \in \R^{m \times m}$ and $V \in \R^{n \times n}$, respectively. We then define $A_{r} := U_{r} U_{r}^T A = A V_{r} V_{r}^T$ to be $A$'s best rank $r$ approximation with respect to $\| \cdot \|_{F}$. Furthermore, we denote the tail term by $A_{\setminus r} := A - A_{r}$. 
\end{definition}

One can now see that random matrices with the OSE and AMM properties for matrices derived from $X \in \R^{n \times N}$ will also be PCP sketches of $X$ with high probability.  Variants of the following result are proven in \cite{Chowdhury2019, Musco2020}.  We include the proof in Appendix~\ref{Appsec:FixingMusco2020} for the sake of completeness.

\begin{thm}[Projection-Cost-Preservation via the AMM and OSE properties]
\label{thm:musc2}
Let $X\in \R^{n\times N}$ of rank $\tilde{r} \leq \min \{ n, N \}$ have the full SVD $X = U\Sigma V^T$, and let $V_{r'}\in\R^{N\times r'}$ denote the first $r'$ columns of $V \in \R^{N \times N}$ for all $r' \in [N]$.  Fix $r \in [n]$ and consider the head-tail split $X = X_r + X_{\setminus r}$.  If $\Omega \in\R^{m\times N}$ satisfies 
\begin{enumerate}
    \item subspace embedding property \eqref{equ:JLsubspaceEmbedding} with $\epsilon \leftarrow \frac{\epsilon}{3}$ for $A \leftarrow X_{r}^T$, \label{thm:musc2_assump1}
    \item approximate multiplication property \eqref{equ:AMMdetprop} with $\epsilon \leftarrow \frac{\epsilon}{6\sqrt{\min\{r,\tilde{r} \} } }$ for $A \leftarrow X_{\setminus r}$ and $B \leftarrow V_{\min\{r,\tilde{r} \} }$, \label{thm:musc2_assump2}
    \item JL property \eqref{equ:JLproperty} with $\epsilon \leftarrow \frac{\epsilon}{6}$ for $S \leftarrow \{$the $n$ columns of $X_{\setminus r}^T \}$, and \label{thm:musc2_assump3}
    \item approximate multiplication property \eqref{equ:AMMdetprop} with $\epsilon \leftarrow \frac{\epsilon}{6\sqrt{r}}$ for $A \leftarrow X_{\setminus r}$ and $B \leftarrow X_{\setminus r}^T$, \label{thm:musc2_assump4}
    \end{enumerate}
then $\tilde{X} := X\Omega^T $ is an $(\epsilon,0,r)$-PCP sketch of $X$.
\end{thm}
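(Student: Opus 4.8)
The plan is to prove the two-sided bound \eqref{eq:pcp} with $c = 0$ directly, by writing both projection costs as traces and controlling their difference. Write $C(P) := \norm{X - PX}{F}^2 = \tr\!\left((I-P)XX^T\right)$ and, likewise, $\tilde C(P) := \norm{\tilde X - P\tilde X}{F}^2 = \tr\!\left((I-P)X\Omega^T\Omega X^T\right)$, using $(I-P)^T(I-P) = I-P$. The head--tail split of $X$ involves orthogonal singular subspaces, so $X_r X_{\setminus r}^T = 0$ and hence $XX^T = X_r X_r^T + X_{\setminus r}X_{\setminus r}^T$; consequently $C(P) = \norm{(I-P)X_r}{F}^2 + \norm{(I-P)X_{\setminus r}}{F}^2$. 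I will use two ``anchor'' inequalities throughout: $\norm{(I-P)X_r}{F}^2 \le C(P)$ (immediate from the last identity) and $\norm{X_{\setminus r}}{F}^2 \le C(P)$ (Eckart--Young: $PX$ has rank at most $r$, so it cannot beat the truncation $X_r$). These are what upgrade the absolute error bounds below into the multiplicative form required.

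First I would expand $X\Omega^T\Omega X^T$ over the head--tail split. The $X_r X_{\setminus r}^T$ blocks vanish, but their $\Omega$-perturbed counterparts do not, so
\[
\tilde C(P) - C(P)
= \underbrace{\tr\!\left((I-P)\!\left[X_r\Omega^T\Omega X_r^T - X_r X_r^T\right]\right)}_{(\mathrm A)}
+ \underbrace{\tr\!\left((I-P)\!\left[X_{\setminus r}\Omega^T\Omega X_{\setminus r}^T - X_{\setminus r}X_{\setminus r}^T\right]\right)}_{(\mathrm B)}
+ 2\,\underbrace{\tr\!\left((I-P)X_r\Omega^T\Omega X_{\setminus r}^T\right)}_{(\mathrm C)} ,
\]
and it suffices to show each of $|(\mathrm A)|$, $|(\mathrm B)|$, $2|(\mathrm C)|$ is at most $\tfrac{\epsilon}{3}C(P)$.

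For $(\mathrm A)$ and $(\mathrm C)$ I would use the factorization $X_r = X_r V_k V_k^T$, where $k := \min\{r,\tilde r\}$ and $V_k = V_{\min\{r,\tilde r\}}$ is an orthonormal basis of the row space of $X_r$. For $(\mathrm A)$ the bracketed matrix becomes $X_r V_k\bigl(V_k^T\Omega^T\Omega V_k - I_k\bigr)V_k^T X_r^T$, and assumption~\ref{thm:musc2_assump1} (the OSE property for $X_r^T$, in the equivalent form \eqref{equ:JLsubspaceEmbedding2}) gives $\norm{V_k^T\Omega^T\Omega V_k - I_k}{2}\le \epsilon/3$; cycling the trace and using $|\tr(NS)|\le \norm{N}{2}\norm{S}{*}$ with $S := V_k^T X_r^T(I-P)X_r V_k \succeq 0$ (so $\norm{S}{*} = \tr S = \norm{(I-P)X_r}{F}^2$) yields $|(\mathrm A)| \le \tfrac{\epsilon}{3}\norm{(I-P)X_r}{F}^2 \le \tfrac{\epsilon}{3}C(P)$. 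For $(\mathrm C)$, since $X_{\setminus r}V_k = 0$, assumption~\ref{thm:musc2_assump2} (AMM for $X_{\setminus r}$ and $V_k$, with tolerance $\tfrac{\epsilon}{6\sqrt k}$ and $\norm{V_k}{F} = \sqrt k$) gives $\norm{V_k^T\Omega^T\Omega X_{\setminus r}^T}{F} = \norm{X_{\setminus r}\Omega^T\Omega V_k - X_{\setminus r}V_k}{F} \le \tfrac{\epsilon}{6}\norm{X_{\setminus r}}{F}$; rewriting $(\mathrm C) = \tr\!\left(V_k^T\Omega^T\Omega X_{\setminus r}^T(I-P)X_rV_k\right)$, Cauchy--Schwarz and then AM--GM give $|(\mathrm C)| \le \tfrac{\epsilon}{6}\norm{X_{\setminus r}}{F}\norm{(I-P)X_r}{F} \le \tfrac{\epsilon}{12}\!\left(\norm{X_{\setminus r}}{F}^2 + \norm{(I-P)X_r}{F}^2\right) \le \tfrac{\epsilon}{6}C(P)$, hence $2|(\mathrm C)| \le \tfrac{\epsilon}{3}C(P)$.

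The main obstacle is $(\mathrm B)$, the tail block, which I would split as $\tr\!\left(X_{\setminus r}\Omega^T\Omega X_{\setminus r}^T - X_{\setminus r}X_{\setminus r}^T\right) - \tr\!\left(P\!\left[X_{\setminus r}\Omega^T\Omega X_{\setminus r}^T - X_{\setminus r}X_{\setminus r}^T\right]\right)$. The first term equals $\norm{\Omega X_{\setminus r}^T}{F}^2 - \norm{X_{\setminus r}}{F}^2$, which assumption~\ref{thm:musc2_assump3} (JL for the columns of $X_{\setminus r}^T$ with tolerance $\epsilon/6$) bounds by $\tfrac{\epsilon}{6}\norm{X_{\setminus r}}{F}^2$, exactly as in Lemma~\ref{lem:subembedimpliesfrobound}. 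For the second term, write $P = WW^T$ with $W$ having at most $r$ orthonormal columns; then $|\tr(W^T M W)| \le \sqrt{r}\,\norm{M}{F}$, and assumption~\ref{thm:musc2_assump4} (AMM for $X_{\setminus r}$ and $X_{\setminus r}^T$, whose factor $\tfrac{1}{\sqrt r}$ in the tolerance is calibrated precisely to cancel this $\sqrt r$) bounds $\norm{M}{F} = \norm{X_{\setminus r}\Omega^T\Omega X_{\setminus r}^T - X_{\setminus r}X_{\setminus r}^T}{F}$ by $\tfrac{\epsilon}{6\sqrt r}\norm{X_{\setminus r}}{F}^2$, so the second term is at most $\tfrac{\epsilon}{6}\norm{X_{\setminus r}}{F}^2$. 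Altogether $|(\mathrm B)| \le \tfrac{\epsilon}{3}\norm{X_{\setminus r}}{F}^2 \le \tfrac{\epsilon}{3}C(P)$. Summing the three contributions gives $|\tilde C(P) - C(P)| \le \epsilon\,C(P)$ for every rank-at-most-$r$ orthogonal projection $P$, i.e.\ $(1-\epsilon)C(P) \le \tilde C(P) \le (1+\epsilon)C(P)$, which is exactly the assertion that $\tilde X = X\Omega^T$ is an $(\epsilon,0,r)$-PCP sketch of $X$. The subtle part of the whole argument is the bookkeeping that matches each hypothesis's tolerance to the loss it must absorb (especially the $\sqrt r$ in the tail block), together with the systematic use of the two anchor inequalities to convert absolute errors into relative ones.
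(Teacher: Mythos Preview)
Your proof is correct and follows the same three-term trace decomposition and assignment of the four assumptions as the paper's proof in Appendix~\ref{Appsec:FixingMusco2020}. The only noteworthy difference is your handling of the cross term $(\mathrm C)$: where the paper inserts $\tilde\Sigma\tilde V^T\tilde V\tilde\Sigma^{-1}$ to factor out $\norm{(I-P)X}{F}$ directly before Cauchy--Schwarz, you instead apply AM--GM to the product $\norm{X_{\setminus r}}{F}\,\norm{(I-P)X_r}{F}$ and invoke your two anchor inequalities --- a minor but clean simplification.
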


\begin{proof} 
See Appendix~\ref{Appsec:FixingMusco2020}.
\end{proof}

The following lemma can be used to construct PCP sketches from random matrices with the JL property.

\begin{lem}[The JL property provides PCP sketches] \label{lem:PCPpropbyJLReal}
Let $X\in \R^{n\times N}$ have rank $\tilde{r} \leq \min \{ n, N \}$.  Fix $r \in [n]$.  There exist finite sets $S_1, S_2 \subset \R^N$ (determined entirely by $X$) with cardinalities $|S_1| \leq \left( \frac{141}{\epsilon} \right)^{\min \{ r, \tilde{r} \}}$ and $|S_2| \leq 16 n^2 + n$ such that the following holds:   If a random matrix $\Omega\in\R^{m\times N}$ has both the $\left(\frac{\epsilon}{6}, \frac{\delta}{2}, \left( \frac{141}{\epsilon} \right)^r \right)$-JL property for $S_1$ and the $\left(\frac{\epsilon}{6 \sqrt{r}}, \frac{\delta}{2}, 16 n^2 + n \right)$-JL property for $S_2$, then $X\Omega^T$ will be an $(\epsilon,0,r)$-PCP sketch of $X$ with probability at least $1 - \delta$.
\end{lem}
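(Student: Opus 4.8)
The plan is to deduce the claim from Theorem~\ref{thm:musc2}, which guarantees that $\tilde X := X\Omega^T$ is an $(\epsilon,0,r)$-PCP sketch of $X$ as soon as $\Omega$ satisfies its four hypotheses for the head--tail split $X = X_r + X_{\setminus r}$. Accordingly, I would construct $S_1$ to be responsible for hypothesis~\ref{thm:musc2_assump1} (the subspace embedding of the column space of $X_r^T$) and $S_2$ to be responsible for hypotheses~\ref{thm:musc2_assump2}, \ref{thm:musc2_assump3}, and~\ref{thm:musc2_assump4} (the JL condition on the tail columns and the two approximate-matrix-multiplication conditions), and then finish with a union bound over two JL failure events.

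For $S_1$: the column space of $X_r^T$ is a subspace of $\R^N$ of dimension $\min\{r,\tilde r\}$, so I would take $S_1$ to be a minimal $\tfrac{\epsilon}{48}$-cover of its unit sphere. The covering-number estimate used in the proof of Corollary~\ref{cor:subgOSE}, applied now at scale $\tfrac{\epsilon}{48} = \tfrac{\epsilon/3}{16}$, gives $|S_1| \le \bigl(\tfrac{141}{\epsilon}\bigr)^{\min\{r,\tilde r\}} \le \bigl(\tfrac{141}{\epsilon}\bigr)^{r}$. If $\Omega$ has the $\bigl(\tfrac{\epsilon}{6},\tfrac{\delta}{2},\bigl(\tfrac{141}{\epsilon}\bigr)^{r}\bigr)$-JL property for $S_1$, then on that event (of probability at least $1-\tfrac{\delta}{2}$) Lemma~\ref{lem:subspaceembed}, invoked with $\epsilon \leftarrow \tfrac{\epsilon}{3}$, produces the subspace-embedding inequality \eqref{equ:JLsubspaceEmbedding} with parameter $\tfrac{\epsilon}{3}$ for the column space of $X_r^T$, which is exactly hypothesis~\ref{thm:musc2_assump1}. (In the degenerate case $\tilde r \le r$ one has $X_r = X$ and $X_{\setminus r} = 0$, so only this hypothesis has content and $\min\{r,\tilde r\} = \tilde r$.)

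For $S_2$: I would form it as the union of three finite sets. The first is simply the $n$ columns of $X_{\setminus r}^T$, which is the set appearing in hypothesis~\ref{thm:musc2_assump3}. The second comes from applying Lemma~\ref{lem:fastmatmulbyJLReal} to the pair $A \leftarrow X_{\setminus r} \in \R^{n\times N}$ and $B \leftarrow V_{\min\{r,\tilde r\}} \in \R^{N\times \min\{r,\tilde r\}}$, which furnishes a finite set of cardinality at most $2\bigl(n + \min\{r,\tilde r\}\bigr)^2 \le 8n^2$ whose JL embedding certifies the AMM property \eqref{equ:AMMdetprop} required in hypothesis~\ref{thm:musc2_assump2}; the third comes from applying the same lemma to $A \leftarrow X_{\setminus r}$ and $B \leftarrow X_{\setminus r}^T \in \R^{N\times n}$, giving a set of cardinality at most $2(2n)^2 = 8n^2$ for hypothesis~\ref{thm:musc2_assump4}. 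Hence $|S_2| \le 16 n^2 + n$. On the event (of probability at least $1-\tfrac{\delta}{2}$) that $\Omega$ has the $\bigl(\tfrac{\epsilon}{6\sqrt r},\tfrac{\delta}{2},16n^2+n\bigr)$-JL property for $S_2$, hypothesis~\ref{thm:musc2_assump3} holds directly since $\tfrac{\epsilon}{6\sqrt r}\le\tfrac{\epsilon}{6}$, while hypotheses~\ref{thm:musc2_assump2} and~\ref{thm:musc2_assump4} follow from Lemma~\ref{lem:fastmatmulbyJLReal} applied to the two pairs above, because the JL inequalities hold on the corresponding subsets of $S_2$ at a precision $\tfrac{\epsilon}{6\sqrt r}$ sufficient for the AMM levels $\tfrac{\epsilon}{6\sqrt{\min\{r,\tilde r\}}}$ and $\tfrac{\epsilon}{6\sqrt r}$ demanded there.

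Finally, a union bound over the (at most $\tfrac{\delta}{2}$-probability) failure events for $S_1$ and for $S_2$ shows that with probability at least $1-\delta$ all four hypotheses of Theorem~\ref{thm:musc2} hold for $\Omega$, so $X\Omega^T$ is an $(\epsilon,0,r)$-PCP sketch of $X$. I expect the only real difficulty to be the constant bookkeeping in the two middle paragraphs: one must confirm that the single JL precision/cardinality budgeted to $S_1$ (respectively $S_2$) simultaneously dominates every precision and cardinality that Theorem~\ref{thm:musc2} and Lemma~\ref{lem:fastmatmulbyJLReal} actually require --- in particular the covering scale $\tfrac{\epsilon}{48}$ behind the OSE, the $\tfrac{\epsilon}{6}$ for the tail columns, and the $\tfrac{1}{\sqrt{\min\{r,\tilde r\}}}$- and $\tfrac{1}{\sqrt r}$-type scalings that enter through the two approximate-matrix-multiplication conditions --- and to handle the degenerate $\tilde r \le r$ case separately.
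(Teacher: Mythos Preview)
Your proposal is correct and follows essentially the same route as the paper: construct $S_1$ as a minimal $\tfrac{\epsilon}{48}$-cover of the unit sphere in the column space of $X_r^T$ to secure hypothesis~\ref{thm:musc2_assump1} via Lemma~\ref{lem:subspaceembed}, build $S_2$ as the union of the $n$ tail columns with the two finite sets supplied by Lemma~\ref{lem:fastmatmulbyJLReal} for hypotheses~\ref{thm:musc2_assump2} and~\ref{thm:musc2_assump4}, and finish with a union bound and Theorem~\ref{thm:musc2}. The cardinality estimates, precision choices, and probability budgets you spell out coincide with the paper's.
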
 

\begin{proof}
To ensure property~\ref{thm:musc2_assump1} of Theorem~\ref{thm:musc2} we can appeal to Lemma~\ref{lem:subspaceembed} to see that $\Omega$ being an $(\epsilon/ 6)$-JL-embedding of a minimal $\left(\frac{\epsilon}{48} \right)$-cover of the at most $\min \{ r, \tilde{r}\}$-dimensional unit ball in the column space of $X_{r}^T$ will suffice.  Letting $S_1$ be this aforementioned $\left(\frac{\epsilon}{48} \right)$-cover, we can further see that $|S_1| \leq (141 / \epsilon)^{\min \{ r, \tilde{r} \}}$ by the proof of Corollary~\ref{cor:subgOSE}.  Hence, if $\Omega\in\R^{m\times N}$ has the $\left(\frac{\epsilon}{6}, \frac{\delta}{2}, \left( \frac{141}{\epsilon} \right)^r \right)$-JL property for $S_1$, then property~\ref{thm:musc2_assump1} of Theorem~\ref{thm:musc2} will be satisfied with with probability at least $1 - \frac{\delta}{2}$.

Applying Lemma~\ref{lem:fastmatmulbyJLReal} one can see that there exist sets $S_2', S_2'' \subset \mathbbm{R}^N$ with $| S_2' | \leq 2( n +\min \{ r, \tilde{r}\} )^2 \leq 8n^2$ and $| S_2'' | \leq 2( n + n)^2 = 8n^2$ such that an $(\epsilon/ 6 \sqrt{r})$-JL-embedding of $S_2' \cup S_2''$ will satisfy both properties~\ref{thm:musc2_assump2} and~\ref{thm:musc2_assump4} of Theorem~\ref{thm:musc2}.  Hence, since $r \geq 1$, we can see that an $(\epsilon/ 6 \sqrt{r})$-JL-embedding of $S_2 := S_2' \cup S_2'' \cup S$ will satisfy Theorem~\ref{thm:musc2} properties~\ref{thm:musc2_assump2} -- \ref{thm:musc2_assump4}, where $S$ is defined as per property~\ref{thm:musc2_assump3}.  Noting that $|S_2| \leq |S_2'| + |S_2''| + |S| \leq 16n^2 + n$, we can now see that $\Omega$ will satisfy all of Theorem~\ref{thm:musc2}'s properties~\ref{thm:musc2_assump2} -- \ref{thm:musc2_assump4} with probability at least $1 - \frac{\delta}{2}$ if it has the $\left(\frac{\epsilon}{6 \sqrt{r}}, \frac{\delta}{2}, 16n^2 + n \right)$-JL property for $S_2$.

Concluding, the prior two paragraphs in combination with the union bound imply that all of Theorem~\ref{thm:musc2}'s properties~\ref{thm:musc2_assump1} -- \ref{thm:musc2_assump4} will hold with probability at least $1 - \delta$ if $\Omega$ has both the $\left(\frac{\epsilon}{6}, \frac{\delta}{2}, \left( \frac{141}{\epsilon} \right)^r \right)$-JL property for $S_1$ and the $\left(\frac{\epsilon}{6 \sqrt{r}}, \frac{\delta}{2}, 16 n^2 + n \right)$-JL property for $S_2$.  An application of Theorem~\ref{thm:musc2} now finishes the proof.
\end{proof}

Using Lemma~\ref{lem:PCPpropbyJLReal} one can now prove the following corollary of Theorem~\ref{thm:subgisJL} which demonstrates the existence of a PCP sketch for any fixed matrix $X$.

\begin{corollary}
\label{cor:PCPsetexist} 
Fix $X\in \R^{n\times N}$ and $r \in [n]$.  Let $\Omega\in\R^{m\times N}$ be a random matrix with mean zero, variance $\frac{1}{m}$, independent sub-gaussian entries.  Then, $X\Omega^T$ will be an $(\epsilon,0,r)$-PCP sketch of $X$ with probability at least $1 - \delta$ provided that
\[
	 	m\geq C \frac{r}{\epsilon^2} \max \left\{ \ln\left(\frac{C_1}{\epsilon \delta}\right), \ln\left(\frac{C_2 n}{\delta}\right) \right\},
 \]
where $C_1, C_2 > 0$ are absolute constants, and $C > 0$ is an absolute constant that only depends on the sub-gaussian norms/parameters of $\Omega$'s entries.
\end{corollary}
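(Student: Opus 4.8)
The plan is to deduce this corollary from Lemma~\ref{lem:PCPpropbyJLReal} in exactly the way Corollaries~\ref{cor:subgOSE} and~\ref{cor:AMMsetexist} were deduced from their corresponding ``JL $\Rightarrow$ property'' lemmas: quote the lemma, which reduces producing a PCP sketch to verifying two JL properties, and then establish each JL property by invoking Theorem~\ref{thm:subgisJL}. Concretely, Lemma~\ref{lem:PCPpropbyJLReal} tells us that $X\Omega^T$ is an $(\epsilon,0,r)$-PCP sketch of $X$ with probability at least $1-\delta$ provided $\Omega$ has both the $\left(\frac{\epsilon}{6},\frac{\delta}{2},\left(\frac{141}{\epsilon}\right)^r\right)$-JL property for the set $S_1$ it produces (of cardinality at most $(141/\epsilon)^r$) and the $\left(\frac{\epsilon}{6\sqrt r},\frac{\delta}{2},16n^2+n\right)$-JL property for the set $S_2$ (of cardinality at most $16n^2+n$). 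So the whole task is to pick $m$ large enough that a mean-zero, variance-$1/m$, independent sub-gaussian-entries matrix $\Omega$ enjoys both of these.

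For the first JL property I would apply Theorem~\ref{thm:subgisJL} with $\epsilon \leftarrow \epsilon/6$, $\delta\leftarrow\delta/2$, $|S|\leq(141/\epsilon)^r$ (all parameters legitimately in $(0,1)$), which demands
\[
m \;\geq\; \frac{36C}{\epsilon^2}\,\ln\!\left(\frac{2(141/\epsilon)^r}{\delta}\right) \;=\; \frac{36C}{\epsilon^2}\Big(r\ln\tfrac{141}{\epsilon}+\ln\tfrac{2}{\delta}\Big).
\]
For the second I would apply Theorem~\ref{thm:subgisJL} with $\epsilon\leftarrow\epsilon/(6\sqrt r)$, $\delta\leftarrow\delta/2$, $|S|\leq 16n^2+n$, which demands
\[
m \;\geq\; \frac{36Cr}{\epsilon^2}\,\ln\!\left(\frac{2(16n^2+n)}{\delta}\right).
\]
Requiring $m$ to be at least the maximum of these two lower bounds then makes $\Omega$ satisfy both JL properties, and Lemma~\ref{lem:PCPpropbyJLReal} immediately yields the conclusion. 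To land on the clean statement as written, I would absorb the lower-order logarithmic terms: since $r\geq 1$ and $\epsilon,\delta\in(0,1)$ one has $r\ln\tfrac{141}{\epsilon}+\ln\tfrac{2}{\delta}\leq r\ln\tfrac{282}{\epsilon\delta}$, and since $n\geq 1$ and $\delta<1$ one has $\ln\tfrac{2(16n^2+n)}{\delta}\leq 2\ln\tfrac{34n}{\delta}$; hence both bounds are implied by $m\geq C'\tfrac{r}{\epsilon^2}\max\{\ln\tfrac{C_1}{\epsilon\delta},\,\ln\tfrac{C_2 n}{\delta}\}$ for appropriate absolute constants $C_1,C_2$ and a constant $C'$ depending only on the sub-gaussian parameters.

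I do not expect a genuine obstacle: the argument is a routine chaining of already-proven results, structurally identical to the proofs of Corollaries~\ref{cor:subgOSE} and~\ref{cor:AMMsetexist}. The only place demanding a little care is the final constant-wrangling --- checking that the two $m$-requirements really do collapse into the single displayed $\max$ with absolute constants, and confirming that the dependence of $C$ on the sub-gaussian norms is inherited entirely from the $C$ in Theorem~\ref{thm:subgisJL} (no other probabilistic ingredient enters, since the two union-bounded failure events of total probability $\delta$ are handled internally by Lemma~\ref{lem:PCPpropbyJLReal}).
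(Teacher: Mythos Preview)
Your proposal is correct and follows exactly the same approach as the paper: invoke Lemma~\ref{lem:PCPpropbyJLReal} to reduce to two JL requirements, then apply Theorem~\ref{thm:subgisJL} twice with the substitutions $\epsilon\leftarrow\epsilon/6$, $\delta\leftarrow\delta/2$ for $S_1$ and $\epsilon\leftarrow\epsilon/(6\sqrt r)$, $\delta\leftarrow\delta/2$ for $S_2$. The paper's own proof is a two-sentence version of precisely this, so your additional constant-wrangling is more detail than the paper provides, not less.
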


\begin{proof}
Apply Theorem \ref{thm:subgisJL} to the finite set $S_1$ guaranteed by Lemma~\ref{lem:PCPpropbyJLReal} with $\epsilon \leftarrow \frac{\epsilon}{6}$ and $\delta \leftarrow \frac{\delta}{2}$.  Similarly, apply Theorem \ref{thm:subgisJL} to the finite set $S_2$ guaranteed by Lemma~\ref{lem:PCPpropbyJLReal} with $\epsilon \leftarrow \frac{\epsilon}{6 \sqrt{r}}$ and $\delta \leftarrow \frac{\delta}{2}$.  The result now follows by Lemma~\ref{lem:PCPpropbyJLReal}.
\end{proof}

We finish here by noting that Corollary~\ref{cor:PCPsetexist} is just one example of a PCP sketching result that one can prove with relative ease using Lemma~\ref{lem:PCPpropbyJLReal}.  Indeed, Lemma~\ref{lem:PCPpropbyJLReal} can be combined with other standard results concerning more structured matrices with the JL property (see, e.g., \cite{Krahmer2011,Ward2019,doi:10.1137/21M1432491,https://doi.org/10.48550/arxiv.2302.06165}) to produce similar theorems where $\Omega$ has a fast matrix-vector multiply.

\section{The Proofs of Our Main Results}
\label{sec:MainResBRKS}

The objective is to show that we can retrieve a accurate low rank Tucker approximation of a tensor $\mathcal{X}$ via Algorithm \ref{alg:loo_one_pass_prime} from valid sets of linear leave-one-out and core measurements as described in Section~\ref{sec:The_MEASUREMENTS!}. We will denote the approximation of $\mathcal{X}$ output by Algorithm \ref{alg:loo_one_pass_prime} as $\mathcal{X}_1$ here to emphasize that a single pass over the original data tensor $\mathcal{X}$ suffices in order to compute the linear input measurements required by Algorithm \ref{alg:loo_one_pass_prime}.  Hence, Algorithm \ref{alg:loo_one_pass_prime} in this setting is an example of a \textit{streaming algorithm} which doesn't need to store a copy the original uncompressed tensor $\mathcal{X}$ in memory in order to successfully approximate it.  Nonetheless, we wish to show that this algorithm still produces a quasi-optimal approximation of $\mathcal{X}$ in the sense of \eqref{equ:quasi_optimal} with high probability when given such highly compressed linear input measurements. Particular choices for measurement ensembles will make explicit the dependence on other parameters of the problem (these choices define specializations of Algorithm \ref{alg:loo_one_pass_prime}, and are summarized as Algorithm \ref{alg:kron_1pass} and \ref{alg:khat_1pass} in Appendix~\ref{appendix:algs}). 

More specifically, in this section we will show that with high probability for a given $d$-mode tensor $\mathcal{X}$, error tolerance $\epsilon > 0$, and chosen rank truncation parameter $r$, that
\begin{align} \label{equ:1pass_main_bound}
\norm{\mathcal{X} - \mathcal{X}_1}{2} 
      & \leq (1+e^{\epsilon}) \sqrt{ \frac{1+\epsilon}{1-\epsilon} \sum_{j=1}^d \Delta_{r,j}}
\end{align}
will hold whenever Algorithm \ref{alg:loo_one_pass_prime} is provided with sufficiently informative input measurements.  Here the $\Delta_{r,j}$ are defined as per Lemma~\ref{lem:QuasiOptimalfromSVDs}, and ``sufficiently informative" means that $(i)$ the leave-one-out measurements used to form $\mathcal{X}_1$ are of sufficient size to satisfy several PCP properties, and that $(ii)$ the core measurements used to form $\mathcal{X}_1$ are of sufficient size to ensure the accurate solution of least squares problems computed as part of Algorithm \ref{alg:loo_one_pass_prime}.  Finally, we note that one can see from \eqref{equ:1pass_main_bound} together with  Lemma~\ref{lem:QuasiOptimalfromSVDs} that Algorithm~\ref{alg:loo_one_pass_prime} will perfectly recover exactly low Tucker-rank tensors if the rank parameter $r$ is made sufficiently large.

In order to prove that \eqref{equ:1pass_main_bound} holds, we will need to also consider a weaker variant of Algorithm \ref{alg:loo_one_pass_prime} which permits a \textit{second} pass of the data tensor $\mathcal{X}$.  These weaker algorithms will first compute estimates of the factors of the tensor $Q_i$ as Algorithm \ref{alg:loo_one_pass_prime} does, but thereafter will be allowed to use those factors to operate on the original tensor $\mathcal{X}$ in order to approximate its core (see Algorithms \ref{alg:kron_2pass} and \ref{alg:khat_2pass}).  We denote the estimate of the tensor that results from this procedure as $\mathcal{X}_2$ to emphasize that it requires a second accesses to the original data during core recovery.  Note that such two-pass algorithms are of less practical value in the big data and compressive sensing settings since it is often not possible to directly access the data tensor again after the initial compressed measurements have been taken in these scenarios.  Nevertheless, this two-pass estimate will be extremely useful when proving \eqref{equ:1pass_main_bound}.  In particular, our proof will result from the following triangle inequality:
\begin{align}
\label{eqn:triangle_inequlaity_main}
    \norm{\mathcal{X} - \mathcal{X}_1}{2} &=     \norm{\mathcal{X} - \mathcal{X}_1 + \mathcal{X}_2 - \mathcal{X}_2}{2} \leq   \underbrace{\norm{\mathcal{X} - \mathcal{X}_2}{2}}_{\text{Term I}} + \underbrace{ \norm{\mathcal{X}_1 - \mathcal{X}_2}{2} }_{\text{Term II}}.
\end{align}

Bounding Term I in \eqref{eqn:triangle_inequlaity_main} will be the subject of Section~\ref{sec:bounding_x2}. As we shall see, bounding Term I is straightforward if we have that leave-one-out measurements imply various PCP properties; and so the main work of Sections~\ref{sec:bound_2pass_error_kron} and \ref{sec:MainResKhatri-Rao} will be to demonstrate how, for the structured choices of measurement operators considered herein, we can ensure that the PCP property is satisfied. Bounding Term II, on the other hand, will require us to apply a bound on the error incurred by solving sketched least square problems on a 
carefully partitioned re-expression of the Term II error.  That argument is the subject of Section~\ref{sec:bounding_termII}.  Finally, we combine our analysis of these two error terms along with particular choices for measurement operators to state the full versions of our main results in Section~\ref{sec:main_putting_it_all_together}.

\subsection{Bounding \texorpdfstring{$\norm{\mathcal{X} - \mathcal{X}_2}{2}$}{}}
\label{sec:bounding_x2}

In the two pass scenario, we first compute estimates for the factor matrices, $Q_i$  (see Algorithm \ref{alg:recover_factors}), using leave-one-out measurements $B_i$ for each $i\in[d]$. Then, using these factor matrices, we wish to solve 
\[\arg \min_{\mathcal{H}}\norm{\mathcal{X} - [\![ \mathcal{H}, Q_1, Q_2, \dots, Q_d]\!]}{2}.\]
One can see that the solution will be
\[
\mathcal{G} := \mathcal{X}\times_1 Q_1^T \times_2 Q_2^T \dots \times_d Q_d^T
\]
(see, e.g., \cite{de2000best}).  Let \begin{equation}\label{eq:x2-def
}\mathcal{X}_2 := \mathcal{G} \times_1 Q_1 \times_2 Q_2 \dots \times_d Q_d
\end{equation}
denote the estimate obtained from a two-pass recovery procedure (i.e., Algorithm \ref{alg:kron_2pass} or \ref{alg:khat_2pass}). Additionally, we note the following fact about modewise products (see, e.g., \cite[Lemma 1]{zare2018extension}):
\[\mathcal{X}\times_i A \times _i B = \mathcal{X} \times_i (BA).\]
As a result, if we are permitted a second pass over $\mathcal{X}$ to compute the core we have that
\begin{align*}
\mathcal{X}_2 &= \mathcal{G} \times_1 Q_1 \times_2 Q_2 \dots \times_d Q_d \\
 &= \left(\mathcal{X}\times_1 Q_1^T \times_2 Q_2^T \dots \times_d Q_d^T\right) \times_1 Q_1 \times_2 Q_2 \dots \times_d Q_d \\
 &=  \mathcal{X}\times_1 Q_1 Q_1^T \times_2 \dots \times_d Q_d Q_d^T
\end{align*}
Using this expression we can now bound the two pass error term  $\norm{\mathcal{X} - \mathcal{X}_2}{2}$.

\begin{thm}[Error bound for Two-Pass $\norm{\mathcal{X} - \mathcal{X}_2}{2}$] \label{thm:errorboundfactors} Suppose $ \tilde{X}_{[j]} :=  X_{[j]}\Omega^T_{-j} \in \R^{n \times m^{d-1}}$ are $(\epsilon,0,r)$-PCP sketches of $ X_{[j]}$ for each $j\in[d]$. 
If $Q_j \in \R^{n\times r}$ for $r \leq m$ are factor matrices obtained from Algorithm \ref{alg:recover_factors}, then 
\begin{align}
\label{eqn:factorerrorbound}
    \norm{\mathcal{X} - \mathcal{X}_2}{2} = \norm{\mathcal{X} - \mathcal{X}\times_1 Q_1 Q_1^T \times_2 \dots \times_d Q_d Q_d^T}{2}  
    \leq \sqrt{\frac{1+\epsilon}{1-\epsilon} \sum_{j=1}^d \Delta_{r,j} } ~.
\end{align}
\end{thm}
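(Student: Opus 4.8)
The plan is to combine two ingredients: the classical ``sum of mode-wise projection errors'' estimate underlying HOSVD quasi-optimality, and the defining $(1\pm\epsilon)$ inequalities of the PCP property applied to each unfolding together with the Eckart--Young optimality of the factors $Q_j$.

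\textbf{Step 1 (Telescoping over the modes).} Write $P_j := Q_jQ_j^T\in\R^{n\times n}$, an orthogonal projection of rank at most $r$. Using the identity $\mathcal{X}_2 = \mathcal{X}\times_1 P_1\times_2\cdots\times_d P_d$ already derived above, decompose the error as a telescoping sum
\[
\mathcal{X}-\mathcal{X}_2 \;=\; \sum_{j=1}^d T_j,\qquad T_j := \mathcal{X}\times_1 P_1\times_2\cdots\times_{j-1}P_{j-1}\times_j\left(I_n-P_j\right),
\]
where successive summands replace the unprojected mode $j$ by its projection. For $j<k$ the mode-$j$ unfolding of $T_j$ has columns in the range of $I_n-P_j$, while (since $k>j$ forces the factor $\times_j P_j$ to appear in $T_k$) the mode-$j$ unfolding of $T_k$ has columns in the range of $P_j$; these ranges are orthogonal, so $\langle T_j,T_k\rangle = 0$ and by the Pythagorean theorem $\norm{\mathcal{X}-\mathcal{X}_2}{2}^2 = \sum_{j=1}^d\norm{T_j}{2}^2$. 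Since modewise products along distinct modes commute and each $P_i$ is a contraction in $\norm{\cdot}{2}$, we have $\norm{T_j}{2}\le\norm{\mathcal{X}\times_j(I_n-P_j)}{2} = \norm{X_{[j]}-Q_jQ_j^T X_{[j]}}{F}$ using Definition~\ref{def:mwprod}. Hence
\[
\norm{\mathcal{X}-\mathcal{X}_2}{2}^2 \;\le\; \sum_{j=1}^d \norm{X_{[j]}-Q_jQ_j^T X_{[j]}}{F}^2.
\]

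\textbf{Step 2 (From PCP to the $\Delta_{r,j}$ bound).} Fix $j$. By construction in Algorithm~\ref{alg:recover_factors}, $Q_j$ spans the leading $r$-dimensional left-singular subspace of $\tilde X_{[j]} = X_{[j]}\Omega_{-j}^T$, so by Eckart--Young $\norm{\tilde X_{[j]}-Q_jQ_j^T\tilde X_{[j]}}{F}^2\le\norm{\tilde X_{[j]}-P\tilde X_{[j]}}{F}^2$ for every rank-$\le r$ orthogonal projection $P$. Let $P_j^\star$ be the optimal rank-$r$ projection for $X_{[j]}$ itself, so $\norm{X_{[j]}-P_j^\star X_{[j]}}{F}^2 = \Delta_{r,j}$. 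Applying the left inequality of the PCP property~\eqref{eq:pcp} (with $c=0$) to $P=Q_jQ_j^T$, then Eckart--Young, then the right inequality of~\eqref{eq:pcp} to $P=P_j^\star$, gives the chain
\[
(1-\epsilon)\norm{X_{[j]}-Q_jQ_j^T X_{[j]}}{F}^2 \;\le\; \norm{\tilde X_{[j]}-Q_jQ_j^T\tilde X_{[j]}}{F}^2 \;\le\; \norm{\tilde X_{[j]}-P_j^\star\tilde X_{[j]}}{F}^2 \;\le\; (1+\epsilon)\,\Delta_{r,j}.
\]
Thus $\norm{X_{[j]}-Q_jQ_j^T X_{[j]}}{F}^2\le\frac{1+\epsilon}{1-\epsilon}\Delta_{r,j}$, and substituting into Step 1 and taking square roots yields $\norm{\mathcal{X}-\mathcal{X}_2}{2}\le\sqrt{\frac{1+\epsilon}{1-\epsilon}\sum_{j=1}^d\Delta_{r,j}}$, which is~\eqref{eqn:factorerrorbound}.

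The main obstacle is Step 1: carefully justifying, in tensor notation, the orthogonality of the telescoping summands $T_j$ and the contraction bound $\norm{T_j}{2}\le\norm{X_{[j]}-Q_jQ_j^T X_{[j]}}{F}$ (this is the classical De Lathauwer--De Moor--Vandewalle / Hackbusch HOSVD error argument, but one must be precise about which mode each projection acts on). Step 2 is then routine, the only point to keep in mind being that a PCP sketch with additive constant $c=0$ comparably preserves the rank-$r$ projection cost of \emph{every} rank-$\le r$ projector — in particular both the sketch-optimal projector $Q_jQ_j^T$ and the $X_{[j]}$-optimal projector $P_j^\star$ — so the Eckart--Young step can be inserted between the two PCP inequalities.
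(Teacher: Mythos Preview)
Your proof is correct and follows essentially the same approach as the paper. The only difference is in Step~1: the paper cites the projection-error inequality $\norm{\mathcal{X}-\mathcal{X}\times_1 P_1\cdots\times_d P_d}{2}^2\le\sum_{j=1}^d\norm{\mathcal{X}-\mathcal{X}\times_j P_j}{2}^2$ from \cite[Theorem 5.1]{Vann2012}, whereas you prove it directly via the telescoping decomposition and orthogonality of the $T_j$; Step~2 is identical to the paper's argument.
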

\begin{proof}
Since the $Q_i Q_i ^T$ are orthogonal projectors, we have by \cite[Theorem 5.1]{Vann2012} that 
\begin{equation} \label{eqn:pyth_like_upper}
    \norm{\mathcal{X} - \mathcal{X}\times_1 Q_1 Q_1^T \times_2 \dots \times_d Q_d Q_d^T}{2}^2 \leq \sum_{j=1}^d \norm{\mathcal{X} - \mathcal{X} \times_j Q_j Q_j^T }{2}^2.
\end{equation}
From Algorithm \ref{alg:recover_factors}  (recalling that $\Omega_{(j,j)}^{-1} B_j =  X_{[j]}\Omega^T_{-j} = \tilde{X}_{[j]}$), we have that the $Q_j$'s are the best rank-$r$ approximations for their respective sketched problems, since
\begin{align*}
    Q_j 
    &= \argmin_{\substack{rank(Q) =r \\ Q^T Q = I_{r}}}\norm{\tilde{X}_{[j]} -QQ^T \tilde{X}_{[j]} }{F}
\end{align*}
by the Eckart–Young Theorem.

Now suppose that each $U_j\in\R^{n \times r}$ forms an optimal rank $r$ approximation to $X_{[j]}$ in the sense that
\[
U_j= \argmin_{\substack{rank(U) =r \\ U^T U = I_{r}}} \norm{X_{[j]} - U U^T X_{[j]}}{F}.
\]
By the hypothesis that $\tilde{X}_{[j]}$ is a $(\epsilon,0,r)$-PCP sketch of $X_{[j]}$, we have that
\begin{align*}
 (1-\epsilon) \norm{\mathcal{X} - \mathcal{X} \times_j Q_j Q_j^T }{2}^2 &= (1-\epsilon)\norm{X_{[j]} - Q_j Q_j^T X_{[j]}}{F}^2 \\ &\leq \norm{\tilde{X}_{[j]} -Q_jQ_j^T \tilde{X}_j }{F}^2 \\
  &\leq \norm{\tilde{X}_{[j]} -U_j U_j^T \tilde{X}_{[j]} }{F}^2 \\
  &\leq (1+\epsilon)\norm{X_{[j]} -U_j U_j^T X_{[j]} }{F}^2 = (1 + \epsilon)\Delta_{r,j},
\end{align*}
 where we have used the definition of $(\epsilon,0,r)$-PCP sketches in the first and third inequalities.
After a rearrangement of terms, substituting the above into \eqref{eqn:pyth_like_upper} now yields the inequality in \eqref{eqn:factorerrorbound}. 
\end{proof}

We have now established in Theorem~\ref{thm:errorboundfactors} that we have a quasi-optimal error bound for Term I in \eqref{eqn:triangle_inequlaity_main} whenever our leave-one-out measurement matrices $\Omega^T_{-j}$ yield $(\epsilon,0,r)$-PCP sketches of all $d$ unfoldings $X_j$. 
Next, we will demonstrate how to ensure that Kronecker structured and Khatri-Rao structured leave-one-out measurement matrices provide PCP sketches.

\subsection{PCP Sketches via Kronecker-Structured Leave-one-out Measurement Matrices }
\label{sec:bound_2pass_error_kron}

In this section we study when Kronecker-structured measurement matrices will provide the PCP property. To begin we will show that the JL and OSE properties are inherited under matrix direct sums and compositions.  These are useful facts because our overall leave-one-out matrices can be constructed using these operations. In particular, we will follow the example in the last paragraph of Section~\ref{sec:KronMeasIntro} and consider a matrix ${\Omega}_{-j} \in \R^{m^{d-1} \times n^{d-1}}$ defined as 

\begin{equation}\label{eq:omega_tilde}
\Omega_{-j} = \bigotimes_{\substack{i=1 \\ i\neq j}}^{d} \Omega_i = \prod_{i'=1}^{d-1} \tilde{\Omega}_{i'} \quad \text{ for } {\Omega}_{i} \in\R^{m\times n}
\end{equation}
where 
\begin{equation}\label{omega_tilde_j}\tilde{\Omega}_{i'} := \underbrace{I_n \otimes \dots \otimes I_n}_{d-1-i'} \otimes \,\,\Omega_{i_j(i')} \otimes \underbrace{I_m \otimes \dots \otimes I_{m}}_{i'-1} \, \in \, \R^{m^{i'}n^{d-1-i'}\times m^{i'-1} n^{d-i'}}
\end{equation}
for 
$$i_j (i') := \begin{cases}
    i' & \textrm{if}~ i' < j\\
    i' + 1 & \textrm{if}~ i' \geq j
\end{cases}.$$
Here, $I_n$ denotes an $n \times n$ identity matrix.  

The next three lemmas will be used to help show that $\Omega_{-j}$ as defined in \eqref{eq:omega_tilde} inherits both the JL and OSE properties from its component $\Omega_i$ matrices.  Having established this, we can then use, e.g., Lemma~\ref{lem:PCPpropbyJLReal} to prove PCP sketching results for such Kronecker-structured $\Omega_{-j}$.

\begin{lem} 
\label{lem:directsum1}
Suppose that $\Omega_1\in\R^{m_1 \times N_1}$ and $\Omega_2\in\R^{m_2 \times N_2}$ are two random matrices. Denote their matrix direct sum by $\Omega = \Omega_1 \oplus \Omega_2  \in \R^{(m_1 + m_2)\times (N_1 + N_2)}$.  Then,
\begin{enumerate}
    \item If $\Omega_1$ and $\Omega_2$ are $\left(\epsilon ,\delta_1,p\right)$ and $\left(\epsilon ,\delta_2,p\right)$-JLs respectively, then $\Omega$ is an $\left(\epsilon ,\delta_1 + \delta_2,p\right)$-JL.
    \item If $\Omega_1$ and $\Omega_2$ are $\left(\epsilon ,\delta_1,r\right)$ and $\left(\epsilon ,\delta_2,r\right)$-OSEs respectively, then $\Omega$ is an $\left(\epsilon,\delta_1 + \delta_2,r\right)$-OSE.
\end{enumerate}
\end{lem}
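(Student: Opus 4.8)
The plan is to exploit the block-diagonal structure of the direct sum. Writing any $\vb{x}\in\R^{N_1+N_2}$ as $\vb{x}=(\vb{x}^{(1)},\vb{x}^{(2)})$ with $\vb{x}^{(1)}\in\R^{N_1}$ and $\vb{x}^{(2)}\in\R^{N_2}$, one has $\Omega\vb{x}=(\Omega_1\vb{x}^{(1)},\,\Omega_2\vb{x}^{(2)})$, so that $\norm{\Omega\vb{x}}{2}^2=\norm{\Omega_1\vb{x}^{(1)}}{2}^2+\norm{\Omega_2\vb{x}^{(2)}}{2}^2$ while $\norm{\vb{x}}{2}^2=\norm{\vb{x}^{(1)}}{2}^2+\norm{\vb{x}^{(2)}}{2}^2$. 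Both parts then follow by applying the relevant embedding property of each block to the corresponding ``half'' of the data and summing the two one-sided estimates, with the failure probabilities combined by a union bound. (This is exactly the mechanism that lets us control identity-Kronecker terms such as those in \eqref{omega_tilde_j}, since $I_k\otimes\Omega$ is the $k$-fold direct sum of $\Omega$.)

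For part (1), given any $S\subset\R^{N_1+N_2}$ with $|S|\le p$, I would pass to the coordinate projections $S_1:=\{\vb{x}^{(1)}:\vb{x}\in S\}$ and $S_2:=\{\vb{x}^{(2)}:\vb{x}\in S\}$, which satisfy $|S_1|,|S_2|\le p$. Applying the $(\epsilon,\delta_1,p)$-JL property of $\Omega_1$ to $S_1$ and the $(\epsilon,\delta_2,p)$-JL property of $\Omega_2$ to $S_2$, the union bound gives that with probability at least $1-\delta_1-\delta_2$ the bounds $(1-\epsilon)\norm{\vb{x}^{(i)}}{2}^2\le\norm{\Omega_i\vb{x}^{(i)}}{2}^2\le(1+\epsilon)\norm{\vb{x}^{(i)}}{2}^2$ hold simultaneously for $i\in\{1,2\}$ and every $\vb{x}\in S$. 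Adding the $i=1$ and $i=2$ inequalities for a fixed $\vb{x}$ and invoking the two identities above yields \eqref{equ:JLproperty} for $\Omega$ on $S$; note that the distortion parameter stays $\epsilon$ (not $2\epsilon$) precisely because $(1-\epsilon)(a+b)=(1-\epsilon)a+(1-\epsilon)b$. This establishes the $(\epsilon,\delta_1+\delta_2,p)$-JL property.

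For part (2), I would fix an arbitrary rank $r$ matrix $A\in\R^{(N_1+N_2)\times N}$ and split it into its first $N_1$ rows $A_1\in\R^{N_1\times N}$ and last $N_2$ rows $A_2\in\R^{N_2\times N}$; since deleting rows cannot raise the rank, $\rank(A_i)\le r$. After noting that the $(\epsilon,\delta_i,r)$-OSE property, although phrased for a rank-$r$ matrix, also applies to the column space of any matrix of rank at most $r$ lying in an ambient space of dimension $\ge r$ (as is the case here, where the block dimensions $N_i\ge r$) --- e.g.\ by completing an orthonormal basis of that column space to $r$ vectors and invoking the equivalent form \eqref{equ:JLsubspaceEmbedding2} --- I can apply the OSE property of $\Omega_i$ to $A_i$. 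The union bound again yields, with probability at least $1-\delta_1-\delta_2$, that $(1-\epsilon)\norm{A_i\vb{x}}{2}^2\le\norm{\Omega_iA_i\vb{x}}{2}^2\le(1+\epsilon)\norm{A_i\vb{x}}{2}^2$ for all $\vb{x}\in\R^N$ and both $i$; summing over $i$ and using $\norm{A\vb{x}}{2}^2=\norm{A_1\vb{x}}{2}^2+\norm{A_2\vb{x}}{2}^2$ together with the analogous identity for $\Omega A$ gives \eqref{equ:JLsubspaceEmbedding} for $\Omega$ and $A$, i.e.\ the $(\epsilon,\delta_1+\delta_2,r)$-OSE property.

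I do not anticipate a genuine obstacle here: the argument is essentially bookkeeping. The only points that require a little care are that coordinate projection of $S$ does not increase its cardinality and row-restriction of $A$ does not increase its rank, that the union bound needs no independence assumption on $\Omega_1,\Omega_2$, and the minor observation that an OSE for rank $r$ is automatically an OSE for all ranks $\le r$ (needed because a row-block of a rank-$r$ matrix may be rank-deficient).
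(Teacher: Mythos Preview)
Your proposal is correct and matches the paper's proof essentially step for step: split a vector (or the matrix $A$) along the block boundary, apply each $\Omega_i$'s embedding property to its half, add the two resulting inequalities, and union-bound the failure events. The only difference is cosmetic---the paper works with $X\in\R^{n\times(N_1+N_2)}$ and $\Omega X^T$ rather than your $A\in\R^{(N_1+N_2)\times N}$---and you are slightly more explicit than the paper about why an $(\epsilon,\delta,r)$-OSE also handles the rank-$\le r$ blocks that arise from the row split.
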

\begin{proof}
{\it Part 1.:}  Consider a set $ S \subset \R^{N_1 + N_2}$ with cardinality $p$. Let $\vb{z}\in S$.  Group the first $N_1$ coordinates of $\vb{z}$ into $\vb{x}\in\R^{N_1}$ and the last $N_2$ coordinates of $\vb{z}$ into $\vb{y}\in\R^{N_2}$. Observe that
\begin{align*}
    \norm{{\Omega} \vb{z}}{2}^2 &= \norm{\begin{bmatrix} 
\Omega_1 & 0 \\
0 & \Omega_2
\end{bmatrix}
\begin{bmatrix} 
\vb{x}\\
\vb{y}
\end{bmatrix}
}{2}^2=\norm{\Omega_1 \vb{x}}{2}^2 + \norm{\Omega_2 \vb{y}}{2}^2 \leq(1+\epsilon) \left(\norm{ \vb{x}}{2}^2 + \norm{ \vb{y}}{2}^2 \right)=(1+\epsilon)\norm{ \vb{z}}{2}^2 
\end{align*} 
will hold whenever both $\norm{\Omega_1 \vb{x}}{2}^2 \leq (1+\epsilon) \| \vb{x} \|_2^2$ and $\norm{\Omega_2 \vb{y}}{2}^2 \leq (1+\epsilon) \| \vb{y} \|_2^2$ hold.  The $(1 - \epsilon)$-distortion lower bound is similar.  As a result, we can use the union bound to see that ${\Omega}$ will have the $(\epsilon,\delta_1 + \delta_2,p)$-JL property.

{\it Part 2.:}  Suppose $X\in\R^{n \times (N_1 + N_2)}$ has rank $r$. Let $X_1$ and $X_2$ denote the sub-matrices of $X$ containing the first $N_1$ and last $N_2$ columns of $X$, respectively. Note that both $X_1$ and $X_2$ have at most rank $r$. 
Furthermore, note also that
\begin{align*}
    \norm{{\Omega}X^T \vb{y}}{2}^2 &= \norm{\left(
\begin{array}{c}
	\Omega_1 X_1^T  \\
	\Omega_2 X_2^T \\
\end{array}\right)\vb{y}}{2}^2\leq (1+\epsilon) \left(\norm{ X_1^T \vb{y}}{2}^2 +  \norm{ X_2^T \vb{y}}{2}^2 \right) = (1+\epsilon)\norm{ X^T \vb{y}}{2}^2.
\end{align*}
will hold for any arbitrary vector $\vb{y}\in\R^{n}$ whenever both $\| \Omega_1 X_1^T \vb{y} \|^2_2 \leq (1 + \epsilon) \| X_1^T \vb{y} \|^2_2$ and $\| \Omega_2 X_2^T \vb{y} \|^2_2 \leq (1 + \epsilon) \| X_2^T \vb{y} \|^2_2$ hold.  The $(1 - \epsilon)$-distortion lower bound is similar.  As a result, we can see that ${\Omega}$ will be an $(\epsilon,\delta_1 + \delta_2,r)$-OSE by the union bound.
\end{proof}

Note that there is no requirement that $\Omega_1$ and $\Omega_2$ need to be independent in Lemma \ref{lem:directsum1}. This is crucial for the next lemma, which will involve many copies of the same measurement matrix.

\begin{lem}[Direct Sums Inherit the OSE and JL Properties]\label{lem:directsums2} For some $i' \in [d-1]$, let $\tilde{\Omega}_{i'}$ be defined as in \eqref{omega_tilde_j}
 and set $\Omega := \Omega_{i_j(i')} \in\R^{m\times n}$. 
 \begin{enumerate}
     \item If $\Omega$ has the $\left(\epsilon,\frac{\delta}{m^{i'-1}n^{d-i'-1}}, r\right)$-OSE property, then $\tilde{\Omega}_{i'}$ will have the $(\epsilon,\delta,r)$-OSE property.
     \item If $\Omega$ has the $\left(\epsilon,\frac{\delta}{m^{i'-1} n^{d-i'-1}}, p\right)$-JL property, then $\tilde{\Omega}_{i'}$ will have the $(\epsilon,\delta,p)$-JL, property.
 \end{enumerate}
\end{lem}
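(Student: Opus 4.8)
The plan is to recognize $\tilde{\Omega}_{i'}$ as a block‑diagonal matrix of identical copies of $\Omega$, up to permutations of rows and columns, and then apply Lemma~\ref{lem:directsum1} inductively. Write $a := n^{d-1-i'}$ and $b := m^{i'-1}$, so that the matrix in \eqref{omega_tilde_j} is $\tilde{\Omega}_{i'} = I_a \otimes \Omega \otimes I_b$. First I would recall the standard fact (via commutation matrices) that $\Omega \otimes I_b$ and $I_b \otimes \Omega$ differ only by left‑ and right‑multiplication by permutation matrices; tensoring this identity with $I_a$ and using the mixed‑product property of the Kronecker product then gives
\[
\tilde{\Omega}_{i'} = P_1 \,(I_{K} \otimes \Omega)\, P_2
\]
for suitable permutation matrices $P_1, P_2$, where $K := ab = m^{i'-1} n^{d-1-i'}$, and $I_{K} \otimes \Omega = \bigoplus_{k=1}^{K} \Omega$ is genuinely a direct sum of $K$ identical copies of $\Omega$.

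Second, I would observe that both the JL and OSE properties are invariant under left‑ and right‑multiplication by permutation matrices. For JL this is immediate: for a permutation matrix $P$ one has $\|P M x\|_2 = \|M x\|_2$, and $\{M P_2 x : x \in S\} = \{M y : y \in P_2 S\}$ with $|P_2 S| = |S|$, so $P_1 M P_2$ is an $(\epsilon,\delta,p)$‑JL whenever $M$ is. For OSE one additionally uses that a permutation matrix preserves rank and Euclidean norm: if $A$ is rank $r$ then so is $P_2 A$, and on the event (of probability $\ge 1-\delta$) that $M$ has the OSE distortion bound on the column space of $P_2 A$ we get $\|P_1 M P_2 A x\|_2^2 = \|M (P_2 A) x\|_2^2 \in (1\pm\epsilon)\|P_2 A x\|_2^2 = (1\pm\epsilon)\|A x\|_2^2$, so $P_1 M P_2$ is an $(\epsilon,\delta,r)$‑OSE whenever $M$ is. Hence it suffices to prove both parts of the lemma for $\bigoplus_{k=1}^{K}\Omega$ in place of $\tilde{\Omega}_{i'}$.

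Third, I would induct on $K$ via Lemma~\ref{lem:directsum1}, crucially invoking the remark that no independence of the summands is required there — here all $K$ summands are literally the same random matrix $\Omega$. Applying Part~1 (resp. Part~2) of Lemma~\ref{lem:directsum1} once shows that $\Omega \oplus \Omega$ is an $(\epsilon, 2\delta', p)$‑JL (resp. $(\epsilon, 2\delta', r)$‑OSE) when $\Omega$ is an $(\epsilon,\delta',\cdot)$‑JL (resp. ‑OSE); feeding this back in with $\Omega_1 \leftarrow \bigoplus^{k}\Omega$ and $\Omega_2 \leftarrow \Omega$ gives, after $K-1$ steps, that $\bigoplus_{k=1}^{K}\Omega$ is an $(\epsilon, K\delta', \cdot)$‑JL/OSE. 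Setting $\delta' = \delta/K = \delta/(m^{i'-1} n^{d-1-i'})$ and combining with the permutation‑invariance of the previous paragraph yields exactly the two claimed statements. At each application of Lemma~\ref{lem:directsum1} one checks that the hypotheses line up: splitting a cardinality‑$p$ set (resp. a rank‑$r$ matrix) along the block structure of the direct sum produces, in each block, a set of cardinality $\le p$ (resp. a matrix of rank $\le r$), so the common parameter is preserved throughout.

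I do not expect a genuine obstacle: the only steps requiring care are the Kronecker bookkeeping in the first step — getting the identity $I_a\otimes\Omega\otimes I_b = P_1(I_K\otimes\Omega)P_2$ right and confirming the count $K = m^{i'-1}n^{d-1-i'}$ — together with the minor verification that permutations preserve the OSE property, which is phrased in terms of the column space of a rank‑$r$ matrix rather than in terms of a finite set.
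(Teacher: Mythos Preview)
Your proposal is correct and follows essentially the same approach as the paper: both recognize $\tilde{\Omega}_{i'}$ as a permuted block-diagonal matrix with $K = m^{i'-1}n^{d-1-i'}$ identical copies of $\Omega$, apply Lemma~\ref{lem:directsum1} inductively (exploiting that no independence is needed), and conclude via invariance of the JL/OSE properties under permutation (unitary) transformations. Your explicit verification of permutation-invariance for the OSE case is a bit more careful than the paper's one-line remark, but the argument is otherwise the same.
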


\begin{proof}
First consider the rearrangement of $\tilde{\Omega}_{i'}$ in \eqref{omega_tilde_j} defined as follows 
\[\tilde{\Omega} :=  \underbrace{I_m \otimes I_m \otimes \dots \otimes I_m}_{i'-1} \otimes \underbrace{I_n \otimes I_n \otimes \dots \otimes I_n}_{d-1-i'} \otimes \,\, \Omega .\]
Note that the Kronecker product of two identity matrices is itself an identity matrix. Thus, we can rewrite this as simply 
\begin{equation*}
 \tilde{\Omega} = \underbrace{I_m \otimes I_m \otimes \dots \otimes I_m}_{i'-1} \otimes \underbrace{I_n \otimes I_n \otimes \dots \otimes I_n}_{d-1-i'} \otimes \,\, \Omega 
 =\begin{bmatrix} 
\Omega & 0 & \dots &0 \\
0 & \Omega & \dots &0 \\
0 & 0 & \ddots &0 \\
0 & 0 & \dots &\Omega \\
\end{bmatrix}.
\end{equation*}
That is, we have a block diagonal matrix with $\bar m = m^{i'-1}n^{d-i'-1}$ copies of $\Omega$ along its diagonal. Thus, if $\Omega$ has either the $(\epsilon,\delta/\bar{m},r)$-OSE or the $(\epsilon,\delta/\bar{m},p)$-JL property, repeated applications of Lemma \ref{lem:directsum1} will then establish the desired OSE or JL properity for $\tilde{\Omega}$. 

Now consider $\tilde{\Omega}_{i'}$ as in \eqref{omega_tilde_j}. There exist unitary (permutation) matrices $L$ and $R$ which interchange rows and columns such that 
\begin{align*}
L\tilde{\Omega}R &= L \left( \underbrace{I_m \otimes I_m \otimes \dots \otimes I_m}_{i'-1} \otimes \underbrace{I_n \otimes I_n \otimes \dots \otimes I_n}_{d-1-i'} \otimes \,\, \Omega \right) R\\ &= L \left( I_{m^{i'-1}n^{d-1-i'}} \otimes \Omega \right)R
= \underbrace{I_n \otimes I_n \otimes \dots \otimes I_n}_{d-1-i'} \otimes \,\, \Omega \otimes \underbrace{I_m \otimes I_m \otimes \dots \otimes I_{m}}_{i'-1} \\
&= I_{n^{d-1-i'}} \otimes \,\, \Omega \otimes \,\, I_{m^{i'-1}} = \tilde{\Omega}_{i'}.
\end{align*}
Noting that both the OSE and JL properties are invariant to unitary transformations of a given random matrix, one can now see that $\tilde{\Omega}_{i'} = L\tilde{\Omega}R$ will indeed have the same desired OSE or JL property as was established for $\tilde{\Omega}$. 
\end{proof}

Lemma~\ref{lem:directsums2} allows us to infer JL and OSE properties of the $\tilde{\Omega}_{i'}$ matrices in \eqref{omega_tilde_j} from the properties of the smaller random matrices $\Omega_i \in \R^{m \times n}$ appearing in \eqref{eq:omega_tilde}.  The next lemma will allow us to then use these inferred properties of the $\tilde{\Omega}_{i'}$ matrices to derive OSE and JL properties for $\Omega_{-j}$ from \eqref{eq:omega_tilde} in terms of the properties of its component $\Omega_i \in \R^{m \times n}$.

\begin{lem}[A Composition Lemma for the OSE and JL Properties]\label{lem:composition} Let $\epsilon \in (0,1)$ and $\tilde{\Omega}_{i'} \in \R^{m^{i'}n^{d-1-i'}\times m^{i'-1} n^{d-i'}}
$ 
for $i' \in[d-1]$. 
\begin{enumerate}
    \item If $\tilde{\Omega}_{i'}$ is an $\left( \frac{\epsilon}{2(d-1)},\frac{\delta}{d-1},r \right)$-OSE for all $i' \in[d-1]$, then $\tilde{\Omega}= \prod\limits_{i'=1}^{d-1} \tilde{\Omega}_{i'}$ is an $\left( \epsilon,\delta,r \right)$-OSE.
    \item If $\tilde{\Omega}_{i'}$ is an $\left(\frac{\epsilon}{2(d-1)},\frac{\delta}{d-1},p \right)$-JL for all $i' \in[d-1]$, then $\tilde{\Omega}= \prod\limits_{i'=1}^{d-1} \tilde{\Omega}_{i'}$ is an $\left( \epsilon,\delta,p \right)$-JL.
    \end{enumerate}
\end{lem}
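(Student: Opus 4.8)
The plan is to prove both parts by a single template: fix the object to be preserved (a rank-$r$ subspace for part~1, a set of cardinality at most $p$ for part~2), follow its image under the successive partial products $\tilde\Omega_1, \tilde\Omega_2\tilde\Omega_1, \dots$ (recall $\tilde\Omega = \tilde\Omega_{d-1}\cdots\tilde\Omega_1$, so that $\tilde\Omega_1$ acts first), and combine a union bound over the $d-1$ constituent matrices with a telescoping estimate on the accumulated multiplicative distortion. For part~1 I would first reduce to orthonormal form using the equivalence of \eqref{equ:JLsubspaceEmbedding} and \eqref{equ:JLsubspaceEmbedding2}, so that it suffices to fix an $r$-dimensional subspace $V_0\subseteq\R^{n^{d-1}}$ and show $(1-\epsilon)\|v\|_2^2 \le \|\tilde\Omega v\|_2^2 \le (1+\epsilon)\|v\|_2^2$ for all $v\in V_0$ with probability at least $1-\delta$; for part~2 we simply fix any $S\subseteq\R^{n^{d-1}}$ with $|S|\le p$.

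Next, set $V_{i'} := \tilde\Omega_{i'}\cdots\tilde\Omega_1 V_0$ (respectively $S_{i'} := \tilde\Omega_{i'}\cdots\tilde\Omega_1 S$, noting $|S_{i'}|\le|S|\le p$), and let $E_{i'}$ be the event that $\tilde\Omega_{i'}$ embeds $V_{i'-1}$ (resp.\ $S_{i'-1}$) with distortion $\tfrac{\epsilon}{2(d-1)}$. The one point requiring care is that $V_{i'-1}$ (resp.\ $S_{i'-1}$) is itself random; but it is a deterministic function of $\tilde\Omega_1,\dots,\tilde\Omega_{i'-1}$ alone, so after conditioning on those factors it is a \emph{fixed} subspace of dimension at most $r$ (resp.\ a fixed set of size at most $p$). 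Since the OSE/JL property is oblivious — it holds for every fixed input, with the probability taken only over the matrix itself — and $\tilde\Omega_{i'}$ is drawn independently of the earlier factors, the hypothesis gives $\pr\left[E_{i'}^c \mid \tilde\Omega_1,\dots,\tilde\Omega_{i'-1}\right]\le\tfrac{\delta}{d-1}$, hence $\pr[E_{i'}^c]\le\tfrac{\delta}{d-1}$. (On $E_{i'}$ the distortion of $\tilde\Omega_{i'}$ on $V_{i'-1}$ is strictly below $1$, so $\tilde\Omega_{i'}$ is injective there and $\dim V_{i'}=\dim V_{i'-1}\le r$, which keeps the induction well-posed.) A union bound then gives $\pr\left[\bigcap_{i'=1}^{d-1}E_{i'}\right]\ge 1-\delta$.

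On the event $\bigcap_{i'}E_{i'}$ I would telescope: writing $v_{i'} := \tilde\Omega_{i'}\cdots\tilde\Omega_1 v$ with $v_0=v\in V_0$, each step gives $\left(1-\tfrac{\epsilon}{2(d-1)}\right)\|v_{i'-1}\|_2^2 \le \|v_{i'}\|_2^2 \le \left(1+\tfrac{\epsilon}{2(d-1)}\right)\|v_{i'-1}\|_2^2$, and multiplying the $d-1$ inequalities yields $\left(1-\tfrac{\epsilon}{2(d-1)}\right)^{d-1}\|v\|_2^2 \le \|\tilde\Omega v\|_2^2 \le \left(1+\tfrac{\epsilon}{2(d-1)}\right)^{d-1}\|v\|_2^2$ (verbatim with $x\in S$ in place of $v$ for part~2). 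It then remains to check the elementary numerical facts $\left(1+\tfrac{\epsilon}{2(d-1)}\right)^{d-1}\le e^{\epsilon/2}\le 1+\epsilon$ (the last step using $\epsilon\in(0,1)$) and $\left(1-\tfrac{\epsilon}{2(d-1)}\right)^{d-1}\ge 1-\tfrac{\epsilon}{2}\ge 1-\epsilon$ (Bernoulli's inequality), which deliver the claimed $\epsilon$-distortion. Since $V_0$ (resp.\ $S$) was an arbitrary rank-$r$ subspace (resp.\ set of size at most $p$), this shows $\tilde\Omega$ has the $(\epsilon,\delta,r)$-OSE (resp.\ $(\epsilon,\delta,p)$-JL) property.

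The main — and essentially only — obstacle is the randomness of the intermediate images: the set or subspace that $\tilde\Omega_{i'}$ must embed depends on the earlier random factors, so the OSE/JL hypothesis cannot be applied to it verbatim. This is exactly what obliviousness and independence resolve: conditioning on $\tilde\Omega_1,\dots,\tilde\Omega_{i'-1}$ freezes that image, and $\tilde\Omega_{i'}$'s independence from the earlier factors lets us invoke the hypothesis with the single-matrix failure probability $\tfrac{\delta}{d-1}$. Everything else is the union bound over $i'$ and the telescoping product of $1\pm\tfrac{\epsilon}{2(d-1)}$ factors, where the extra factor of $2$ in the per-matrix distortion supplies the slack needed for $\left(1+\tfrac{\epsilon}{2(d-1)}\right)^{d-1}\le 1+\epsilon$.
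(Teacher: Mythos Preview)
Your proposal is correct and follows essentially the same approach as the paper: telescope the per-factor distortion bounds $(1\pm\tfrac{\epsilon}{2(d-1)})$ along the chain of partial products, union-bound over the $d-1$ failure events, and collapse $(1\pm\tfrac{\epsilon}{2(d-1)})^{d-1}$ to $1\pm\epsilon$ via elementary inequalities. You are actually more careful than the paper about the conditioning step (the intermediate image $\tilde Y_{i'-1}$ being random), explicitly invoking obliviousness and independence of the $\tilde\Omega_{i'}$ where the paper simply asserts the union bound.
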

\begin{proof}
{\it Part 1.:}  Let $Y^T \in\R^{ n^{d-1}\times n}$ be an arbitrary matrix of rank at most $r$. Denote $\tilde{Y}_{i'} =\left(\tilde{\Omega}_{i'} \dots \tilde{\Omega}_1\right) Y^T \in \R^{m^{i'} n^{d-1-i'} \times n}$ for $i' \in [d-1]$. Note that each $\tilde{Y}_{i'}$ has rank at most $r$. Fix some $\vb{z}\in\R^{n}$.  Suppose for the moment that \eqref{equ:JLsubspaceEmbedding} holds for each $i' \in [d-1]$ with $\Omega = \tilde{\Omega}_{i'}$, $A = \tilde{Y}_{i'-1}$, and $\vb{x} = \vb{z}$, we have that 
\begin{align*}
    \norm{\tilde{\Omega}Y^T \vb{z}}{2}^2 &= \norm{\tilde{\Omega}_{d-1} \left(\tilde{\Omega}_{d-2} \dots \tilde{\Omega}_1\right) Y^T \vb{z}}{2}^2\\
    &= \norm{\tilde{\Omega}_{d-1} \tilde{Y}_{d-2}\vb{z}}{2}^2\\
    &\leq \left(1+\frac{\epsilon}{2(d-1)}\right) \norm{\tilde{Y}_{d-2}\vb{z}}{2}^2 \\
    &= \left(1+\frac{\epsilon}{2(d-1)}\right) \norm{\tilde{\Omega}_{d-2}\tilde{Y}_{d-3}\vb{z}}{2}^2 \\
    &\vdots\\
    &\leq \left(1+\frac{\epsilon}{2(d-1)}\right)^{d-1} \norm{Y^T\vb{z}}{2}^2 \\
    &\leq \left(\frac{1}{1-\epsilon/2}\right) \norm{Y^T\vb{z}}{2}^2 \\
    &\leq \left(1+\epsilon \right) \norm{Y^T\vb{z}}{2}^2, 
\end{align*}
where we have used the general bound $(1+k/n)^n \leq e^{k} \leq (1-k)^{-1}$ for $k\in [0,1)$ in the second to last inequality.
Similarly, for a lower bound one can see that 
\begin{align*}
    \norm{\tilde{\Omega} Y^T \vb{z}}{2}^2 &\geq \left(1-\frac{\epsilon}{2(d-1)}\right)^{d-1} \norm{Y^T \vb{z}}{2}^2 \\ 
     &\geq (1-\epsilon) \norm{Y^T \vb{z}}{2}^2.
\end{align*}
Union bounding over the failure probability that \eqref{equ:JLsubspaceEmbedding} holds for each $i' \in [d-1]$ as supposed above now yields the desired result.

{\it Part 2.:}  An essentially identical arguments also applies to obtain the desired JL property result.
\end{proof}

We now have all the necessary results to show how the component maps $\Omega_i$ of a Kronecker structured measurement ensemble as per \eqref{eq:omega_tilde} can guarantee a Kronecker sketch with the projection cost preserving property.

\begin{thm}[Kronecker Products of JL matrices yield PCP Sketchs] \label{Thm:GeneralPCPbyKron}
    Let $\epsilon \in (0,1)$, $X\in \R^{n \times n^{d-1}}$ have rank $r \in [n]$, and ${\Omega}_{-j} \in \R^{m^{d-1} \times n^{d-1}}$ be defined as in \eqref{eq:omega_tilde} and \eqref{omega_tilde_j}.  Furthermore, suppose that the $\Omega_i \in \R^{m \times n}$ in \eqref{eq:omega_tilde} have both the 
    \begin{enumerate}
        \item $\left(\frac{\epsilon}{12(d-1)}, \frac{\delta}{2(d-1) n^{d-2}}, \left( \frac{141}{\epsilon} \right)^r \right)$-JL property, and the
        \item $\left(\frac{\epsilon}{12 \sqrt{r} (d-1)}, \frac{\delta}{2(d-1) n^{d-2}}, 16 n^2 + n \right)$-JL property
    \end{enumerate}
    for all $i' \in[d-1]$.  Then, $X\Omega_{-j}^T$ will be an $(\epsilon,0,r)$-PCP sketch of $X$ with probability at least $1 - \delta$.
\end{thm}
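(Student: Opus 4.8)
The plan is to derive the conclusion from Lemma~\ref{lem:PCPpropbyJLReal} by showing that the composite measurement matrix $\Omega_{-j}=\prod_{i'=1}^{d-1}\tilde\Omega_{i'}$ inherits the two Johnson--Lindenstrauss properties that lemma requires, namely the $(\epsilon/6,\delta/2,(141/\epsilon)^r)$-JL property and the $(\epsilon/(6\sqrt r),\delta/2,16n^2+n)$-JL property. Because the JL property in Definition~\ref{def:jl} is a statement about \emph{every} set of the stated cardinality, once these two properties are established for $\Omega_{-j}$ we may instantiate them on the particular, $X$-dependent sets $S_1$ (with $|S_1|\le(141/\epsilon)^r$) and $S_2$ (with $|S_2|\le16n^2+n$) furnished by Lemma~\ref{lem:PCPpropbyJLReal}; a union bound over its two hypotheses then yields that $X\Omega_{-j}^T$ is an $(\epsilon,0,r)$-PCP sketch of $X$ with probability at least $1-\delta$.

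To get these two JL properties for $\Omega_{-j}$ I would simply chain the structural lemmas already available. Lemma~\ref{lem:composition}(2) says that $\Omega_{-j}=\prod_{i'=1}^{d-1}\tilde\Omega_{i'}$ has the $(\epsilon',\delta',p)$-JL property provided each factor $\tilde\Omega_{i'}$ has the $(\epsilon'/(2(d-1)),\delta'/(d-1),p)$-JL property. Applying this once with $(\epsilon',\delta',p)=(\epsilon/6,\delta/2,(141/\epsilon)^r)$ and once with $(\epsilon',\delta',p)=(\epsilon/(6\sqrt r),\delta/2,16n^2+n)$ reduces the task to showing that every $\tilde\Omega_{i'}$ enjoys both the $(\epsilon/(12(d-1)),\delta/(2(d-1)),(141/\epsilon)^r)$-JL property and the $(\epsilon/(12\sqrt r(d-1)),\delta/(2(d-1)),16n^2+n)$-JL property. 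Now Lemma~\ref{lem:directsums2}(2) transfers a $(\epsilon'',\delta''/(m^{i'-1}n^{d-1-i'}),p)$-JL property of the generating block $\Omega_{i_j(i')}=\Omega_i$ to a $(\epsilon'',\delta'',p)$-JL property of $\tilde\Omega_{i'}$, since $\tilde\Omega_{i'}$ is, up to a permutation of rows and columns, block diagonal with $m^{i'-1}n^{d-1-i'}$ copies of $\Omega_i$. Since $i'\in[d-1]$ and $m\le n$ in the compressive regime, one has $m^{i'-1}n^{d-1-i'}\le n^{d-2}$ uniformly in $i'$, so the slightly stronger $(\epsilon'',\delta''/n^{d-2},p)$-JL property is more than enough; taking $\delta''=\delta/(2(d-1))$ and $\epsilon''\in\{\epsilon/(12(d-1)),\ \epsilon/(12\sqrt r(d-1))\}$ recovers exactly the two hypotheses assumed on the $\Omega_i$ in the theorem statement.

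I do not anticipate a genuinely hard step here: the whole argument is the telescoping of Lemmas~\ref{lem:directsums2}, \ref{lem:composition}, and~\ref{lem:PCPpropbyJLReal}, and the only inequality that actually has to be verified is $m^{i'-1}n^{d-1-i'}\le n^{d-2}$ for all $i'\in[d-1]$, which guarantees that a per-block failure budget of $\delta/(2(d-1)n^{d-2})$ suffices for every $\tilde\Omega_{i'}$. The probability bookkeeping is likewise routine: Lemma~\ref{lem:directsums2} already absorbs the union bound over the $m^{i'-1}n^{d-1-i'}$ diagonal copies of $\Omega_i$ inside one $\tilde\Omega_{i'}$; Lemma~\ref{lem:composition} union-bounds over the $d-1$ factors, turning each per-factor budget $\delta/(2(d-1))$ into $\delta/2$; and Lemma~\ref{lem:PCPpropbyJLReal} spends a final $\delta/2+\delta/2=\delta$ over its two JL hypotheses. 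The one conceptual point worth spelling out in the write-up is the observation in the first paragraph, that the JL property is quantified over arbitrary sets of bounded cardinality, which is precisely what lets us apply the derived property of $\Omega_{-j}$ to the specific sets $S_1,S_2$ produced by Lemma~\ref{lem:PCPpropbyJLReal}.
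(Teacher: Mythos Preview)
Your proposal is correct and follows essentially the same approach as the paper: reduce via Lemma~\ref{lem:PCPpropbyJLReal} to two JL properties on $\Omega_{-j}$, push these through Lemma~\ref{lem:composition} to the factors $\tilde\Omega_{i'}$, and then through Lemma~\ref{lem:directsums2} to the component matrices $\Omega_i$. You even make explicit the inequality $m^{i'-1}n^{d-1-i'}\le n^{d-2}$ (via $m\le n$) that the paper leaves implicit when matching the block count in Lemma~\ref{lem:directsums2} to the $n^{d-2}$ appearing in the hypotheses.
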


\begin{proof}
By Lemma~\ref{lem:PCPpropbyJLReal} we know that $X\Omega_{-j}^T$ will be an $(\epsilon,0,r)$-PCP sketch of $X$ with probability at least $1 - \delta$ if $\Omega_{-j}$ has both the $\left(\frac{\epsilon}{6}, \frac{\delta}{2}, \left( \frac{141}{\epsilon} \right)^r \right)$-JL property and the $\left(\frac{\epsilon}{6 \sqrt{r}}, \frac{\delta}{2}, 16 n^2 + n \right)$-JL property.  In fact, by Lemma~\ref{lem:composition} we can further see that it suffices to have the $\tilde{\Omega}_{i'}$ from \eqref{eq:omega_tilde} and \eqref{omega_tilde_j} have both the
\begin{enumerate}
    \item $\left(\frac{\epsilon}{12(d-1)}, \frac{\delta}{2(d-1)}, \left( \frac{141}{\epsilon} \right)^r \right)$-JL property, and the
    \item $\left(\frac{\epsilon}{12 \sqrt{r} (d-1)}, \frac{\delta}{2(d-1)}, 16 n^2 + n \right)$-JL property
\end{enumerate}
for all $i' \in[d-1]$.  Finally, looking now at Lemma~\ref{lem:directsums2} for each $i' \in[d-1]$ we can see that the assumed properties of the $\Omega_i \in \R^{m \times n}$ in \eqref{eq:omega_tilde} will guarantee both of these sufficient conditions.
\end{proof}

The following corollary of Theorem~\ref{Thm:GeneralPCPbyKron} guarantees a Kronecker sketch with the projection cost preserving property when the component matrices $\Omega_i$ in \eqref{eq:omega_tilde} are sub-gaussian random matrices.

\begin{corollary}[Kronecker Products of sub-gaussian Matrices Yield PCP Sketches]
\label{lem:measurementsdefinepcp}
Suppose $\mathcal{X}$ is a real valued $d$-mode tensor with side-lengths all equal to $n$. Let $\epsilon \in (0,1), \delta\in(0,1), r \in [n]$, $j\in [d]$. If ${\Omega}_{-j} =\bigotimes_{\substack{i=1 \\ i \neq j}}^{d} \Omega_i \in \mathbb{R}^{m^{d-1} \times n^{d-1}}$ defined as in \eqref{eq:omega_tilde} with random matrices $\Omega_i\in\R^{m\times n}$ having i.i.d centered variance $m^{-1}$, sub-gaussian entries
 such that
\[
m\geq \max \left\{\frac{C_1 r (d-1)^2}{\epsilon^2} \ln \left(\frac{n^{d}(d-1)}{\delta}\right),\frac{C_2 (d-1)^2 }{\epsilon^2} \ln \left(\left( \frac{141}{\epsilon} \right)^r \frac{n^{d-2} (d-1)}{\delta} \right)  \right\}
\]
for absolute constants $C_1, C_2 > 0$ then the sketched unfolding $\tilde{X}_{[j]} = X_{[j]} {\Omega}_{-j}^T \in \mathbb{R}^{n \times m^{d-1}}$ is an $(\epsilon,0,r)$-PCP sketch of $X_{[j]}$ with probability at least $1-\delta$. 
\end{corollary}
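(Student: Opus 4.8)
The plan is to obtain this statement as an immediate corollary of Theorem~\ref{Thm:GeneralPCPbyKron}, which already reduces the claim ``$X_{[j]}\Omega_{-j}^{T}$ is an $(\epsilon,0,r)$-PCP sketch of $X_{[j]}$'' to verifying two Johnson--Lindenstrauss properties for each component matrix $\Omega_i\in\R^{m\times n}$: the $\left(\tfrac{\epsilon}{12(d-1)},\tfrac{\delta}{2(d-1)n^{d-2}},\left(\tfrac{141}{\epsilon}\right)^{r}\right)$-JL property for the cover $S_1$ furnished by Lemma~\ref{lem:PCPpropbyJLReal} (whose cardinality is at most $(141/\epsilon)^{r}$), and the $\left(\tfrac{\epsilon}{12\sqrt{r}(d-1)},\tfrac{\delta}{2(d-1)n^{d-2}},16n^{2}+n\right)$-JL property for the set $S_2$ furnished there. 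So the entire task is to certify, via Theorem~\ref{thm:subgisJL}, that a sub-gaussian matrix with the stated number of rows $m$ satisfies both requirements.

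First I would apply Theorem~\ref{thm:subgisJL} to the set $S_1$ using the substitutions $\epsilon\leftarrow\tfrac{\epsilon}{12(d-1)}$ and $\delta\leftarrow\tfrac{\delta}{2(d-1)n^{d-2}}$; this holds provided
\[
m\;\ge\;\frac{144\,C\,(d-1)^{2}}{\epsilon^{2}}\ln\!\left(\frac{2(d-1)n^{d-2}(141/\epsilon)^{r}}{\delta}\right),
\]
which, after collapsing the additive constant inside the logarithm into the absolute constant, is exactly the second term of the maximum in the statement. Next I would apply Theorem~\ref{thm:subgisJL} to $S_2$ (with $|S_2|\le 16n^{2}+n$) using $\epsilon\leftarrow\tfrac{\epsilon}{12\sqrt{r}(d-1)}$ and the same failure budget $\tfrac{\delta}{2(d-1)n^{d-2}}$; this needs
\[
m\;\ge\;\frac{144\,C\,r\,(d-1)^{2}}{\epsilon^{2}}\ln\!\left(\frac{2(d-1)n^{d-2}(16n^{2}+n)}{\delta}\right),
\]
and bounding $(16n^{2}+n)n^{d-2}\le 17n^{d}$ turns the logarithm into $O(\ln(n^{d}(d-1)/\delta))$, recovering the first term of the maximum with a suitably enlarged absolute constant $C_1$. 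Requiring $m$ to exceed the larger of these two thresholds — which is precisely the hypothesis of the corollary once $C_1,C_2$ are chosen large enough — makes every $\Omega_i$ simultaneously enjoy both JL properties demanded by Theorem~\ref{Thm:GeneralPCPbyKron}, and invoking that theorem (with $X\leftarrow X_{[j]}$ and target rank $r$) completes the proof.

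I do not expect any conceptual obstacle: all of the substantive machinery — inheritance of the JL/OSE properties under direct sums and compositions (Lemmas~\ref{lem:directsum1}, \ref{lem:directsums2}, \ref{lem:composition}), PCP-from-JL (Lemma~\ref{lem:PCPpropbyJLReal} via Theorem~\ref{thm:musc2}), and their assembly into Theorem~\ref{Thm:GeneralPCPbyKron} — is already in place. The only care needed is the parameter bookkeeping: the three nested substitutions (error tolerance divided by $12(d-1)$ or $12\sqrt{r}(d-1)$, failure probability divided by $2(d-1)n^{d-2}$, and cardinality either $(141/\epsilon)^{r}$ or $16n^{2}+n$) together with the absorption of the stray constants $\ln 2,\ln 17,\dots$ into $C_1$ and $C_2$; the mild condition $m\le n$ implicit in the $n^{d-2}$ union-bound denominator is harmless in the compressive regime of interest.
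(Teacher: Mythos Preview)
Your proposal is correct and follows essentially the same route as the paper: reduce to Theorem~\ref{Thm:GeneralPCPbyKron}, then invoke Theorem~\ref{thm:subgisJL} twice with the substitutions $\epsilon\leftarrow\epsilon/(12\sqrt{r}(d-1))$, $|S|\leftarrow 16n^2+n$ (giving the first term of the maximum) and $\epsilon\leftarrow\epsilon/(12(d-1))$, $|S|\leftarrow(141/\epsilon)^r$ (giving the second), with common failure budget $\delta/(2(d-1)n^{d-2})$. Your side remark that the $n^{d-2}$ denominator tacitly uses $m\le n$ (so that $m^{i'-1}n^{d-i'-1}\le n^{d-2}$ in Lemma~\ref{lem:directsums2}) is a fair observation the paper leaves implicit.
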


\begin{proof}
To obtain the first quantity maximized over we apply Theorem \ref{thm:subgisJL} with $\epsilon \leftarrow \frac{\epsilon}{12 \sqrt{r} (d-1)}$, $\delta \leftarrow \frac{\delta}{2(d-1) n^{d-2}}$, and $|S| \leftarrow 16 n^2 + n$.  Similarly, for the second quantity maximized over we apply Theorem \ref{thm:subgisJL} with $\epsilon \leftarrow \frac{\epsilon}{12(d-1)}$, $\delta \leftarrow \frac{\delta}{2(d-1) n^{d-2}}$, and $|S| \leftarrow \left( \frac{141}{\epsilon} \right)^r$.  The result now follows from Theorem~\ref{Thm:GeneralPCPbyKron}.
\end{proof}

\begin{rmk}
    Note that Theorem~\ref{Thm:GeneralPCPbyKron} is quite general, requiring only that the random matrices $\Omega_{i}$ in \eqref{eq:omega_tilde} should be drawn from some distribution having a couple JL properties.  As a result of this generality, its Corollary~\ref{lem:measurementsdefinepcp} concerning sub-gaussian component matrices turns out to be sub-optimal by (at least a) factor of $d$ in that setting.  To obtain a slightly sharper result in $d$ for sub-gaussian $\Omega_i$ we recommend replacing our implicit use of Lemma~\ref{lem:directsums2} in the proof of Corollary~\ref{lem:measurementsdefinepcp} (via Theorem~\ref{Thm:GeneralPCPbyKron}) with \cite[Lemma 14]{Woodruff2019} instead.
\end{rmk}

\subsection{PCP Sketches via Khatri-Rao Structured Leave-one-out Measurement Matrices}
\label{sec:MainResKhatri-Rao}

In this section we study how to ensure that Khatri-Rao structured leave-one-out measurement matrices will provide the PCP property.  
To start we will first show that random Khatri-Rao structured measurement maps, denoted in this section by 
$$\Omega_{-j} = \frac{1}{m} \Omega_1 \sbullet \Omega_2 \sbullet \dots \Omega_{j-1} \sbullet \Omega_{j+1}\sbullet \Omega_{d}~{\rm where}~\Omega_i \in \R^{m \times n} ~\forall i \in[d] \setminus \{j\},$$ will have the JL property whenever all their component matrices $\Omega_{i}$ have i.i.d. sub-gaussian entries.  Having established this, we can then use, e.g., Lemma~\ref{lem:PCPpropbyJLReal} to prove PCP sketching results for such Khatri-Rao Structured $\Omega_{-j}$.

\begin{thm} \label{thm:khatasJL}
Let $\epsilon > 0, 0 <\delta \leq e^{-2}$, and $\Omega_{-j} = \frac{1}{m} \Omega_1 \sbullet \Omega_2 \sbullet \dots \Omega_{j-1} \sbullet \Omega_{j+1}\sbullet \Omega_{d}$ where all the $\Omega_i \in \R^{m \times n}$ for $i\in[d] \setminus \{j\}$ have i.i.d. mean zero, variance one, sub-gaussian entries.  Then $\Omega_{-j}$ is an $(\epsilon, \delta, k)$-JL whenever
\[
m \geq C^{d-1}\max \left\{ \epsilon^{-2} \log \frac{k}{\delta} , \epsilon^{-1}\left(\log\frac{k}{\delta}\right)^{d-1} \right\}
\]for a constant $C \in \mathbbm{R}^+$ that depends only on the sub-gaussian norm of the i.i.d. $\Omega_i$-entries.
\end{thm}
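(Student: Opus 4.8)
The plan is to reduce the claim, via a union bound, to a concentration estimate for a single fixed vector, to rewrite $\norm{\Omega_{-j}\mathbf{x}}{2}^2$ as an average of $m$ i.i.d.\ squared multilinear chaoses of order $d-1$, to control the tails of one such chaos, and finally to invoke a Bernstein-type inequality for sums of heavy-tailed (sub-Weibull) random variables. First, since \eqref{equ:JLproperty} is positively homogeneous in $\mathbf{x}$, it suffices to show that for a fixed unit vector $\mathbf{x}$ one has $\pr\big[\,\big|\norm{\Omega_{-j}\mathbf{x}}{2}^2 - 1\big| > \epsilon\,\big] \le \delta/k$ and then union bound over the at most $k$ elements of $S$. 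Writing $\omega^{(\ell)}_i \in \R^n$ for the $\ell$-th row of $\Omega_i$ and $Z_\ell := \big\langle \bigotimes_{i\neq j} \omega^{(\ell)}_i,\ \mathbf{x} \big\rangle$, the normalization built into $\Omega_{-j}$ is chosen exactly so that $\norm{\Omega_{-j}\mathbf{x}}{2}^2 = \frac1m \sum_{\ell=1}^m Z_\ell^2$; since the $\Omega_i$ have mean-zero, variance-one entries the coordinates of $\bigotimes_{i\neq j}\omega^{(\ell)}_i$ are uncorrelated with unit variance, so $\E Z_\ell^2 = \norm{\mathbf{x}}{2}^2 = 1$ and hence $\E\norm{\Omega_{-j}\mathbf{x}}{2}^2 = 1$. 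The $Z_\ell$ are i.i.d.\ over $\ell \in [m]$, so the whole question reduces to the concentration of an empirical mean of i.i.d.\ copies of a single variable $Z := Z_1$ with $\E Z^2 = 1$.

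The crux is a moment/tail estimate for $Z$. Reshaping $\mathbf{x}$ into a $(d-1)$-mode tensor $\mathcal{T}$ with $\norm{\mathcal{T}}{F} = 1$, $Z$ is the full contraction $\mathcal{T}\times_1 \omega_1^T \times_2 \cdots \times_{d-1} \omega_{d-1}^T$, i.e.\ a degree-$(d-1)$ multilinear form in the independent sub-gaussian vectors $\{\omega_i\}_{i\neq j}$. I would prove by induction on the number of contracted modes that for every $p \ge 1$
\begin{equation*}
\norm{Z}{L^p} \ \le\ (c_1\sqrt{p}\,)^{\,d-1},
\end{equation*}
with $c_1$ depending only on the sub-gaussian norm of the entries: the inductive quantity is $\big\|\,\norm{\,\mathcal{T}\times_1\omega_1^T\times_2\cdots\times_q\omega_q^T\,}{F}\,\big\|_{L^p}$ (a lower-order tensor for $q<d-1$, a scalar for $q=d-1$), and one peels off $\omega_1$ using that $\mathcal{T}\times_1\omega_1^T$ has Frobenius norm $\norm{T_{[1]}^T \omega_1}{2}$ — a linear image of a sub-gaussian vector — whose $L^p$ norm is at most $\big(1 + c\sqrt{p}\,\big)\norm{T_{[1]}}{F} \le c'\sqrt{p}\,\norm{\mathcal{T}}{F}$ by Hanson--Wright and $\norm{A}{\mathrm{op}} \le \norm{A}{F}$, the remaining $d-2$ factors being absorbed by the inductive hypothesis applied conditionally. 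Equivalently, $Z$ is sub-Weibull of order $2/(d-1)$ with $\norm{Z}{\psi_{2/(d-1)}} \le c_2^{(d-1)/2}$, whence $Z^2 - 1$ is centered with $\norm{Z^2 - 1}{\psi_{1/(d-1)}} \le K := c_3^{\,d-1}$ and $\norm{Z^2 - 1}{L^p} \le (c_3 p)^{\,d-1}$. Establishing this chaos bound with a clean $C^{d-1}$-type dependence on $d$, uniformly over the unit vector $\mathbf{x}$, is the step I expect to be the main obstacle.

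Finally I would apply a Bernstein-type inequality for the sum $S_m := \sum_{\ell=1}^m (Z_\ell^2 - 1)$ of i.i.d.\ centered sub-Weibull variables of order $\alpha := 1/(d-1) \le 1$, namely
\begin{equation*}
\pr\big[\,|S_m| > \epsilon m\,\big] \ \le\ 2\exp\!\left(-\,c\,\min\!\left\{\frac{\epsilon^2 m}{K^2},\ \Big(\frac{\epsilon m}{K}\Big)^{\alpha}\right\}\right),
\end{equation*}
which can be obtained directly from the moment bounds above (Markov's inequality with the exponent $p$ optimized) or quoted from standard sub-Weibull concentration results. Forcing the right-hand side to be at most $\delta/k$ makes the exponent dominate $\log(2k/\delta)$, which splits into the two regimes $\epsilon^2 m/K^2 \gtrsim \log(k/\delta)$ and $(\epsilon m/K)^{\alpha} \gtrsim \log(k/\delta)$; raising the second to the power $1/\alpha = d-1$ and absorbing all powers of $c_3$ into one constant $C$ turns them into $m \ge C^{d-1}\epsilon^{-2}\log(k/\delta)$ and $m \ge C^{d-1}\epsilon^{-1}(\log(k/\delta))^{d-1}$, so the stated hypothesis on $m$ guarantees both. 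Here the assumption $\delta \le e^{-2}$ (which gives $\log(k/\delta) \ge 2$) is used to absorb the additive constants, e.g.\ the $\log 2$ arising from the factor $2$ in the tail bound and from the union bound over $S$. A union bound over the $|S| \le k$ vectors then yields the $(\epsilon,\delta,k)$-JL property; the Bernstein step and the union bound are routine once the inductive chaos estimate is in hand.
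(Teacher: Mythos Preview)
Your proposal is correct and follows essentially the same architecture as the paper: reduce to a single unit vector, write $\norm{\Omega_{-j}\mathbf{x}}{2}^2-1$ as an average of $m$ i.i.d.\ centered terms, establish the chaos moment bound $\norm{Z}{L^p}\le (C\sqrt{p})^{d-1}$, and then pass from moments to tails and union bound. Your inductive peeling argument for the moment bound is precisely the content of Lemma~\ref{lem: wooodruff} (attributed to Ahle--Knudsen/Woodruff), which the paper simply quotes; invoking Hanson--Wright there is harmless overkill, since Khintchine already gives $\norm{\langle\omega,\mathbf{y}\rangle}{L^p}\le C\sqrt{p}\,\norm{\mathbf{y}}{2}$. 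The only genuine difference is the concentration step for the sum: you appeal to a sub-Weibull Bernstein inequality (or Markov with an optimized exponent), whereas the paper symmetrizes, invokes Lata{\l}a's sharp moment bound for i.i.d.\ sums (Lemma~\ref{lem:latalya} and Corollary~\ref{cor:latalya}) to obtain $\norm{\tfrac1m\sum Z_i}{L^p}\le C^{d-1}\max\{2^{d-1}\sqrt{p/m},(pe)^{d-1}/m\}$, and then applies Markov with $p=\log(k/\delta)$. Both routes yield the same two-regime requirement on $m$; your black-box Bernstein approach is arguably cleaner to state, while the paper's Lata{\l}a route makes the constants in the two regimes slightly more explicit.
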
 


The proof of Theorem \ref{thm:khatasJL} largely follows the argument proposed in Section 2 of \cite{Ahle2019} concerning the so-called $p$-moment JL property of Khatri-Rao structured measurements.  
We note that Kronecker products of sub-gaussian vectors are not sub-gaussian in general, so the general idea is to use Markov's inequality for higher moments of the norm of $\Omega_{-j}\vb{y}$ with a  fixed $\vb{y} \in \mathbb{R}^n$ to obtain the desired result. To proceed with the argument, we will need the following two concentration results.  

\begin{lem} \label{lem: wooodruff}[Lemma 19 in \cite{Woodruff2019}]
Let $\mathcal{Y}$ be a $d-1$ mode tensors with side lengths of size $n$, $p\geq 1$, and $\Omega_j (i,:) \in \R^{n}$ for $j\in [d-1]$, $i\in[m]$ be independent random vectors each satisfying the Khintchine inequality $\norm{\langle \Omega_j (i,:) , \vb{y}\rangle }{L^p} \leq C_p \norm{\vb{y}}{2}$ for any vector $\vb{y}\in \R^{n}$ where $C_p$ a constant depending only on $p$. Then
\[
\norm{\langle \Omega_1 (i,:) \otimes  \Omega_2 (i,:) \otimes \dots \otimes \Omega_{d-1}(i,:) , \textrm{vec}(\mathcal{Y})\rangle }{L^p} \leq C_p^{d-1} \norm{\mathcal{Y}}{2}.
\]
\end{lem}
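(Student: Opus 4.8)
The plan is to prove the bound by induction on the order $\ell := d-1$ of $\mathcal{Y}$ (equivalently, on the number of random vectors), peeling off one vector at a time via a conditioning argument. The base case $\ell = 1$ is nothing but the hypothesis: applying the assumed Khintchine inequality with $\vb{y} = \textrm{vec}(\mathcal{Y})$ gives $\norm{\inner{\Omega_1(i,:)}{\textrm{vec}(\mathcal{Y})}}{L^p} \le C_p \norm{\mathcal{Y}}{2} = C_p^{\,\ell}\norm{\mathcal{Y}}{2}$. Throughout I keep the row index $i$ fixed, so it plays no role beyond labeling the realization, and the constant $C_p$ is the same for every mode and for every sub-tensor encountered (since the slices inherit the same random vectors).

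For the inductive step I would first rewrite the full contraction as a single inner product against $\Omega_1(i,:)$. Writing $z_{i_1} := \inner{\Omega_2(i,:)\otimes\cdots\otimes\Omega_{d-1}(i,:)}{\textrm{vec}(\mathcal{Y}_{i_1,:,\dots,:})}$, where $\mathcal{Y}_{i_1,:,\dots,:}$ is the order-$(\ell-1)$ slice of $\mathcal{Y}$ obtained by fixing the first index, one has $\inner{\Omega_1(i,:)\otimes\cdots\otimes\Omega_{d-1}(i,:)}{\textrm{vec}(\mathcal{Y})} = \inner{\Omega_1(i,:)}{\vb{z}}$. Conditioning on $\Omega_2(i,:),\dots,\Omega_{d-1}(i,:)$ makes $\vb{z}\in\R^n$ deterministic while leaving $\Omega_1(i,:)$ independent, so the Khintchine hypothesis yields $\E_{\Omega_1}\!\big[\abs{\inner{\Omega_1(i,:)}{\vb{z}}}^p\big] \le C_p^{\,p}\norm{\vb{z}}{2}^p$. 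Taking the total expectation by the tower property reduces everything to controlling $\norm{\,\norm{\vb{z}}{2}\,}{L^p}$.

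To control $\norm{\,\norm{\vb{z}}{2}\,}{L^p}$ I would pass to the $L^{p/2}$ norm of $\norm{\vb{z}}{2}^2 = \sum_{i_1} z_{i_1}^2$ and use the triangle inequality there: $\norm{\,\norm{\vb{z}}{2}\,}{L^p}^2 = \norm{\sum_{i_1} z_{i_1}^2}{L^{p/2}} \le \sum_{i_1}\norm{z_{i_1}^2}{L^{p/2}} = \sum_{i_1}\norm{z_{i_1}}{L^p}^2$. Each $z_{i_1}$ is an order-$(\ell-1)$ contraction of the slice $\mathcal{Y}_{i_1,:,\dots,:}$ against the $\ell-1$ vectors $\Omega_2(i,:),\dots,\Omega_{d-1}(i,:)$, so the inductive hypothesis gives $\norm{z_{i_1}}{L^p} \le C_p^{\,\ell-1}\norm{\mathcal{Y}_{i_1,:,\dots,:}}{2}$. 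Summing and using the elementary norm-additivity identity $\sum_{i_1}\norm{\mathcal{Y}_{i_1,:,\dots,:}}{2}^2 = \norm{\mathcal{Y}}{2}^2$ produces $\norm{\,\norm{\vb{z}}{2}\,}{L^p} \le C_p^{\,\ell-1}\norm{\mathcal{Y}}{2}$. Combining this with the conditional Khintchine bound and taking $p$-th roots gives the claimed factor $C_p^{\,\ell} = C_p^{\,d-1}$.

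The genuinely delicate point, and the step I expect to require care, is the triangle inequality in $L^{p/2}$: it is valid as written only when $p \ge 2$, so that $p/2 \ge 1$ and $L^{p/2}$ is a normed space. For $1 \le p < 2$ I would handle the estimate separately — most simply by monotonicity of $L^p$ norms on a probability space, which lets the $p<2$ case be absorbed into the $p=2$ estimate (at the cost of replacing $C_p$ by $\max\{C_p,C_2\}$), or alternatively via the reverse-triangle/quasi-norm inequality appropriate to $L^{p/2}$. Everything else — the slice bookkeeping, the conditioning, and verifying $\sum_{i_1}\norm{\mathcal{Y}_{i_1,:,\dots,:}}{2}^2 = \norm{\mathcal{Y}}{2}^2$ — is routine, so the $p<2$ edge case is the only part I would treat with real caution.
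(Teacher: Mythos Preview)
The paper does not prove this lemma; it merely cites it as Lemma~19 of \cite{Woodruff2019}. Your argument --- induction on the number of factors, conditioning to peel off $\Omega_1(i,:)$, and the triangle inequality in $L^{p/2}$ to pass from $\norm{\,\norm{\vb{z}}{2}\,}{L^p}$ to $\sum_{i_1}\norm{z_{i_1}}{L^p}^2$ --- is the standard proof and is essentially what appears in the cited reference. Your identification of the $p<2$ subtlety is apt: the triangle inequality in $L^{p/2}$ does require $p\ge 2$, and your monotonicity workaround (at the cost of $C_p \mapsto \max\{C_p,C_2\}$) is a legitimate patch. In the paper's only use of this lemma (the proof of Theorem~\ref{thm:khatasJL}) one takes $p=\log(k/\delta)\ge 2$ under the hypothesis $\delta\le e^{-2}$, so the $p<2$ regime never arises there and the constant discrepancy is immaterial.
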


\begin{lem} \label{lem:latalya}[Corollary 2 in \cite{Latala1997}]
If $p\geq 2$ and $Z, Z_1,\dots,Z_m$ are i.i.d symmetric random variables then we have
\[
\norm{\sum_{i=1}^m Z_i}{L^p} \leq C \sup_{s \in \left[ \max\{2,\frac{p}{m}\}, p\right]} \left\{\frac{p}{s} \left(\frac{m}{p}\right)^{1/s} \norm{Z}{L^s} \right\}.
\]
Here $C > 0$ is an absolute constant. 
\end{lem}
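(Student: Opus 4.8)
Since the statement is a one-sided upper bound and the $Z_i$ are i.i.d.\ and symmetric, the plan is to deduce it from Latała's two-sided moment comparison, the main theorem of \cite{Latala1997}, which asserts that for i.i.d.\ symmetric $Z_i$ one has $\norm{\sum_{i=1}^m Z_i}{L^p} \asymp t^\ast$, where $t^\ast := \inf\{ t > 0 : m \log \E\, |1 + Z/t|^p \le p \}$. Given this backbone, proving Corollary~2 reduces entirely to the explicit estimate $t^\ast \le C \sup_{s} \frac{p}{s}(m/p)^{1/s}\norm{Z}{L^s}$; that is, it suffices to exhibit a single value $t = C\,M$, with $M := \sup_{s \in [\max\{2,p/m\},p]} \frac{p}{s}(m/p)^{1/s}\norm{Z}{L^s}$, for which the defining log-condition $m\log\E|1+Z/t|^p \le p$ holds, since then $t^\ast \le CM$ and the claim follows from the general comparison.

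The core of the argument is therefore a moment expansion of $\log \E|1 + Z/t|^p$. First I would pass to $\E|1+Z/t|^p - 1$ via $\log(1+u)\le u$, then symmetrize: because $Z$ is symmetric, $\E|1+Z/t|^p = \E\,\tfrac12\big(|1+Z/t|^p + |1-Z/t|^p\big)$, so only the even part of $x \mapsto |1+x|^p$ enters and all odd moments drop out. Expanding the even function $\tfrac12(|1+x|^p+|1-x|^p)-1$ and controlling the generalized binomial coefficients by $\binom{p}{2j} \le (ep/2j)^{2j}$ would give a bound of the form $m\big(\E|1+Z/t|^p - 1\big) \le m\sum_{2 \le 2j \le p}\binom{p}{2j}\norm{Z}{L^{2j}}^{2j} t^{-2j}$. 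The choice $t = C M$ makes $t \ge \frac{p}{2j}(m/p)^{1/2j}\norm{Z}{L^{2j}}$ for every admissible even integer $2j$, which forces each summand $m\binom{p}{2j}\norm{Z}{L^{2j}}^{2j}t^{-2j}$ to be at most $p\,(e/C)^{2j}$; taking $C$ a large enough absolute constant makes this a summable geometric series bounded by $p$, giving exactly $m\log\E|1+Z/t|^p \le p$ as required.

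The main obstacle is making the even-moment expansion rigorous for non-integer $p$ and for all $x \in \R$, not just $|x| \le 1$. For integer $p$ the binomial theorem gives the expansion exactly on $|x|\le 1$, but for real $p \ge 2$ the Taylor series of $|1+x|^p$ has contributions of every order, the coefficients past order $p$ alternate in sign, and the series must be handled with a remainder estimate together with a separate treatment of large $|x|$ (where $|1+x|^p \sim |x|^p$ and the $s=p$ endpoint of the supremum takes over). Relatedly, I must reconcile the discrete even orders $2j$ with the continuous supremum over $s \in [\max\{2,p/m\},p]$: here the log-convexity and monotonicity of $s \mapsto \norm{Z}{L^s}$ let me interpolate $\norm{Z}{L^{2j}}$ between admissible values of $s$, and the lower endpoint $\max\{2,p/m\}$ is precisely what guarantees enough moments are controlled in the regime $m < p$. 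I would also remark that a naive alternative --- a single truncation at level $M$ followed by a Bernstein moment bound on the bounded part --- only reproduces Rosenthal's inequality and loses a factor of $p$ in the heavy-tailed ($s=p$) regime, which is exactly why the refined log-moment characterization (or an equivalent dyadic/order-statistics argument) is essential to recover the sharp supremum form.
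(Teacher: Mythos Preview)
The paper does not supply its own proof of this lemma: it is quoted verbatim as ``Corollary 2 in \cite{Latala1997}'' and used as a black box (via the paper's Corollary~\ref{cor:latalya}) inside the proof of Theorem~\ref{thm:khatasJL}. There is therefore no in-paper argument to compare your proposal against.

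That said, your sketch does track the structure of Latała's own derivation: Corollary~2 in \cite{Latala1997} is indeed obtained by starting from the two-sided characterization $\norm{\sum Z_i}{L^p}\asymp t^\ast$ of his main theorem and then upper-bounding $t^\ast$ by exhibiting an admissible $t$ of the form $CM$. The symmetrization step you describe (replacing $|1+x|^p$ by its even part to kill odd moments) and the termwise comparison $\binom{p}{2j}\le (ep/2j)^{2j}$ are the right ingredients. The technical point you flag --- that for non-integer $p$ one cannot literally expand $|1+x|^p$ as a finite even polynomial and must instead control the remainder and treat $|x|\ge 1$ separately, where the $s=p$ endpoint of the supremum absorbs the $\norm{Z}{L^p}$ contribution --- is exactly the place where care is needed, and Latała handles it with a direct pointwise inequality for $\tfrac12(|1+x|^p+|1-x|^p)$ rather than a Taylor expansion. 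Your remark that a crude truncation-plus-Bernstein argument would only recover Rosenthal and lose the sharp $s=p$ behavior is also accurate. In short: the proposal is on the right track as a reconstruction of Latała's proof, but since the present paper simply cites the result, there is nothing further to match here.
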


In particular, we will utilize the following corollary of Lemma~\ref{lem:latalya}.

\begin{corollary}\label{cor:latalya}
    If under conditions of Lemma~\ref{lem:latalya}, in addition, we know that $\norm{Z}{L^s} \leq (Cs)^{d-1}$, then
$$
\norm{\frac{1}{m}\sum_{i=1}^m Z_i}{L^p} \leq C^{d-1} \max \left\{2^{d-1}\sqrt{\frac{p}{m}} , \frac{(pe)^{d-1}}{m}\right\}.
$$
\end{corollary}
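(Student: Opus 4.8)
The plan is to substitute the assumed moment growth $\norm{Z}{L^s}\le (Cs)^{d-1}$ directly into Latała's inequality (Lemma~\ref{lem:latalya}) and then perform the resulting optimization over $s$ by hand. Denoting the absolute constant of Lemma~\ref{lem:latalya} by $C_L$ and abbreviating $s_0:=\max\{2,p/m\}$, for $p\ge 2$ one gets
\[
\norm{\frac{1}{m}\sum_{i=1}^m Z_i}{L^p}\;\le\;\frac{C_L}{m}\sup_{s\in[s_0,p]}\frac{p}{s}\Big(\frac{m}{p}\Big)^{1/s}(Cs)^{d-1}\;=\;\frac{C_L\,C^{d-1}}{m}\sup_{s\in[s_0,p]} g(s),\qquad g(s):=p\,s^{d-2}\Big(\frac{m}{p}\Big)^{1/s}.
\]
Since dividing by $m$ turns $2^{d-1}\sqrt{pm}$ into $2^{d-1}\sqrt{p/m}$ and $(pe)^{d-1}$ into $(pe)^{d-1}/m$, it suffices to prove the purely analytic estimate $\sup_{s\in[s_0,p]}g(s)\le \max\{2^{d-1}\sqrt{pm},\,(pe)^{d-1}\}$; the residual factor $C_L$ is then folded into the constant, using $C_L C^{d-1}\le (C_LC)^{d-1}$ for $d\ge 2$ after enlarging $C_L$ so that $C_L\ge 1$.

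First I would show that $g$ attains its maximum on $[s_0,p]$ at one of the two endpoints. Working with $\ell(s):=\log g(s)=\log p+(d-2)\log s+\frac{1}{s}\log\frac{m}{p}$, one has $\ell'(s)=\frac{d-2}{s}-\frac{\log(m/p)}{s^2}$, which for $d>2$ and $m>p$ has the unique positive zero $s^*=\log(m/p)/(d-2)$, and a one-line computation gives $\ell''(s^*)=(d-2)/(s^*)^2>0$; hence every interior critical point of $g$ is a local \emph{minimum}. When $d=2$, or when $m\le p$, $\ell$ is monotone. So in all cases $\sup_{s\in[s_0,p]}g(s)=\max\{g(s_0),g(p)\}$, and it remains to bound the two endpoints.

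For the right endpoint $g(p)=p^{d-1}(m/p)^{1/p}$ I would split on the size of $m$. If $m\le p\,e^{p(d-1)}$, then $(m/p)^{1/p}\le e^{d-1}$, so $g(p)\le (pe)^{d-1}$. If instead $m> p\,e^{p(d-1)}$, then $g(p)/\big(2^{d-1}\sqrt{pm}\big)=\frac{(p/2)^{d-1}}{p}(m/p)^{1/p-1/2}$, and since $p\ge 2$ forces the exponent $1/p-1/2\le 0$ while $m/p>e^{p(d-1)}$, the factor $(m/p)^{1/p-1/2}$ is at most $(e^{1-p/2})^{d-1}$, whence $g(p)/\big(2^{d-1}\sqrt{pm}\big)\le \frac{1}{p}\big(\tfrac{p}{2}e^{1-p/2}\big)^{d-1}\le 1$, the last step using the scalar inequality $p\,e^{-p/2}\le 2/e$ (maximized at $p=2$). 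For the left endpoint: if $s_0=2$ (i.e. $p\le 2m$) then $g(2)=2^{d-2}\sqrt{pm}\le 2^{d-1}\sqrt{pm}$; if $s_0=p/m>2$ then $g(p/m)=p(p/m)^{d-2}(m/p)^{m/p}\le p^{d-1}/m^{d-2}\le (pe)^{d-1}$, using $(m/p)^{m/p}\le 1$ and $m\ge 1$. Combining the endpoint bounds gives $\sup_{s\in[s_0,p]}g(s)\le \max\{2^{d-1}\sqrt{pm},(pe)^{d-1}\}$, which is exactly the required analytic estimate.

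The only genuinely delicate part is the case analysis for the right endpoint $g(p)=p^{d-1}(m/p)^{1/p}$: one must check that it never exceeds $\max\{2^{d-1}\sqrt{pm},(pe)^{d-1}\}$ regardless of how $m$ sits relative to $p$ and $d$, and the threshold $m=p\,e^{p(d-1)}$ together with the elementary bound $p\,e^{-p/2}\le 2/e$ is what makes both regimes close. The reduction to endpoints (via the sign of $\ell''$) and the left-endpoint estimates are routine once that observation is in place.
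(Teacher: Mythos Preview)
Your argument is correct and follows essentially the same route as the paper: substitute the moment bound into Lata\l{}a's inequality, reduce the supremum over $s$ to the endpoints by checking that any interior critical point of $\ell(s)=\log g(s)$ is a minimum, bound $g(2)$ by $2^{d-1}\sqrt{pm}$, and bound $g(p)$ by $\max\{2^{d-1}\sqrt{pm},(pe)^{d-1}\}$. The only cosmetic difference is in how $g(p)$ is handled: the paper assumes $g(p)$ (more precisely, the cruder upper bound $m^{1/p}p^{d-1}$) exceeds $2^{d-1}\sqrt{pm}$ and from that derives $m^{1/p}\le e^{d-1}$, whereas you split directly on the threshold $m=p\,e^{p(d-1)}$; these are two phrasings of the same dichotomy. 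Your treatment is slightly tidier in that you explicitly dispose of the $d=2$ case and the possible left endpoint $s_0=p/m$, both of which the paper handles more tersely.
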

\begin{proof}
    The proof of this corollary loosely follows the argument presented in \cite{Ahle2019}. Since $\norm{Z}{L^s} \leq (Cs)^{d-1}$, $Z_i \sim Z$, the expression over which we are taking the supremum in Lemma~\ref{lem:latalya} is a function whose derivative with respect to $s$ is non-decreasing. That is, the derivative 
\[
\frac{\partial }{\partial s} \left[\frac{p}{s} \left(\frac{m}{p}\right)^{1/s} s^{d-1} \right] = ps^{d-4}\left(\frac{m}{p}\right)^{1/2} \left((d-2)s - \log\frac{m}{p}\right)
\]
has at most a single root in the interval of interest at $s = \frac{\log \frac{m}{p}}{d-2}$. Noting the sign change at this root, we conclude that the maximum value must occur at the endpoints of the interval, and cannot occur at the critical point that is interior to the interval. Evaluating the function of interest at $s=2$ we obtain
$2^{d-2}\sqrt{mp}$; we will further upper-bound this by $2^{d-1}\sqrt{mp}$ in order to simplify analysis for the right endpoint. Additionally, since we are interested only in an upper bound, we need not evaluate the possible endpoint $s=\frac{p}{m}$, since if $2<\frac{p}{m}$, we are increasing the interval over which we are maximizing by instead considering $s=2$. 

We will now bound the expression at the right endpoint when $s=p \geq 2$. Clearly $(1/p)^{1/p} \leq 1$ thus
\begin{align*}
    \left(\frac{m}{p}\right)^{1/p} p^{d-1} \leq m^{1/p} p^{d-1}.
\end{align*}
If the function value at the right endpoint actually dominates $2^{d-1}  \sqrt{mp}$ (i.e., our upper bound of the function value at the left endpoint), we will have that $ m^{1/p} p^{d-1} \geq 2^{d-1}  \sqrt{mp}$ must hold. We will now use this assumption to remove the dependence on $m$ in our current upper bound for the function value at the right endpoint after noting that doing so will still yield a valid upper bound whenever the functions value at the right endpoint fails to already be bounded by $2^{d-1}  \sqrt{mp}$.  

Proceeding as planned, our assumption yields that
\begin{align*}
&m^{1/2 - 1/p} \leq \left(\frac{p}{2}\right)^{d-1} p^{-1/2}.
\end{align*}
Rearranging of terms, we get that
\begin{align*}
     m^{1/p} &\leq \left[ \left(\frac{p}{2}\right)^{d-1} p^{-1/2} \right]^{\frac{2}{p-2}} \\
     &= \left[  \left(\frac{p}{2}\right)^{\frac{2d-2}{p-2}} \right]  p^{\frac{-1}{p-2}}.
\end{align*}
The factor $p^{\frac{-1}{p-2}}$ is less than one. A tedious calculation reveals that $\left(\frac{p}{2}\right)^{\frac{2d-2}{p-2}} $ is decreasing for all $p\geq 2$, and therefore $m^{1/p}$ bounded by $\lim_{p\to 2} \left(\frac{p}{2}\right)^{\frac{2d-2}{p-2}} = e^{d-1}$. maximizing over our two upper bounds and then averaging over $m$, we obtain the desired inequality.
\end{proof}

Now we are ready to give a formal proof of Theorem~\ref{thm:khatasJL}.

\begin{proof}[Proof of Theorem~\ref{thm:khatasJL}:] 
 Note that the result is unchanged for any choice of mode to leave out, so we will shorten the notation and work with $\Omega := \Omega_{-j}$ within this proof. Let $K$ be the sub-gaussian norm of an entry in the $\Omega_j$'s. We aim to bound the probability that 
$$
 \left|\frac{1}{m} \norm{\Omega \vb{x}}{2}^2 -\norm{\vb{x}}{2}^2\right| \ge \epsilon \|\vb{x}\|_2^2 \quad \text{ for a fixed  } \vb{x} \in \R^{n^{d-1}}.
$$
Without loss of generality, assume $\|\vb{x}\|_2 = 1$. Furthermore, note that 
 \begin{equation}\label{eq:sos}
\frac{1}{m} \norm{\Omega \vb{x}}{2}^2 - 1 = \frac{1}{m}\sum_{i=1}^m \left[\langle\Omega (i,:), \vb{x}\rangle^2 -1\right].
\end{equation}

Now, since the entries of each $\Omega_{j}$ are i.i.d. mean zero and variance one sub-gaussian random variables, they satisfy Khintchine's inequality (see, e.g., \cite{Vershynin2018}) in the form
$$
 \norm{\langle \Omega_j (i,:), \vb{y} \rangle}{L^p} \leq C K\sqrt{p}\norm{\vb{y}}{2} \quad \text{ for any fixed  } \vb{y} \in \R^n.
 $$
 By Lemma~\ref{lem: wooodruff}, this implies that the rows of $\Omega$ satisfy a generalized Khintchine's inequality in the form
 \begin{equation} \label{eqn:genkhint_ineq1}
 \norm{\langle \Omega(i,:), \vb{x}\rangle}{L^p} = \norm{ \langle \Omega_1 (i,:) \tensor \Omega_2 (i,:) \tensor \dots \tensor \Omega_{d-1}(i,:), \vb{x} \rangle }{L^p} \leq (C' p)^{\frac{d-1}{2}}\norm{\vb{x}}{2},
\end{equation}
where $C'$ is a new constant that only depends on $K$.

To bound the $L^p$-norm of the sum in \eqref{eq:sos} we will now bound the $L^p$-norm of each summand.  Using the centering Lemma~\ref{lem:rawmomentsboundcentered} and continuing to estimate for one term we see that
\begin{align*}
    \norm{\langle \Omega (i,:), \vb{x}\rangle^2 -1}{L^p} &\leq 2 \norm{\langle \Omega (i,:) , \vb{x}\rangle^2}{L^p} = 2 \norm{\langle \Omega (i,:), \vb{x}\rangle }{L^{2p}}^2 \\
          &= 2 \norm{\langle \Omega_1 (i,:) \tensor \Omega_2 (i,:) \tensor \dots \tensor \Omega_{d-1}(i,:), \vb{x} \rangle}{L^{2p}}^2 \\
&\overset{\eqref{eqn:genkhint_ineq1}}{\leq} 2 ((C'2p)^{\frac{d-1}{2}})^2  \norm{\vb{x}}{2}^2 \leq (C'' p)^{d-1} \norm{\vb{x}}{2}^2 = (C'' p)^{d-1}.
\end{align*}
We would now like to apply Corollary~\ref{cor:latalya} to help bound the $L^p$-norm of the sum in \eqref{eq:sos}.  However, we need to symmetrize our random variables first.  Toward that end, define
\begin{equation}\label{zi}
Z_i =  \rho_i \left( \langle \Omega (i,:), \vb{x}\rangle^2 -1\right),
\end{equation}
where $\rho_i$ are i.i.d. Rademacher random variables.  Note that $\| Z_i \|_{L^p} = \| \left( \langle \Omega (i,:), \vb{x}\rangle^2 -1\right) \|_{L^p} \leq (C'' p)^{d-1}$.

Appealing now to Corollary~\ref{cor:latalya} we have that,
\begin{equation} \label{eqn:finallatala}
\norm{\frac{1}{m}\sum_{i=1}^m Z_i}{L^p} \leq (C'')^{d-1} \max \left\{2^{d-1}\sqrt{\frac{p}{m}} , \frac{(pe)^{d-1}}{m}\right\}
\end{equation}
for any $p \ge 2$. The upper bound in \cite[Lemma 6.3]{Talagrand2002} then further implies that 
\[
 \norm{  \frac{1}{m}\sum_{i=1}^m \left[\langle\Omega (i,:), \vb{x}\rangle^2 -1\right]}{L^p} \leq 2 \norm{\frac{1}{m}\sum_{i=1}^m Z_i}{L^p} \leq 2 (C'')^{d-1} \max \left\{2^{d-1}\sqrt{\frac{p}{m}} , \frac{(pe)^{d-1}}{m}\right\}.
\]
Employing Markov's inequality, we finally have that 
\begin{align*}   
    \pr\left\{ \left|\frac{1}{m}\sum_{i=1}^m \left[\langle\Omega (i,:), \vb{x}\rangle^2 -1\right] \right|\geq \epsilon \right\} &=     \pr\left\{ \abs{\frac{1}{m} \sum_{i=1}^m \left[\langle\Omega (i,:), \vb{x}\rangle^2 -1\right] }^p \geq \epsilon^p \right\} \\  
    &\le (C''')^{p(d-1)}\frac{\max \left\{(\frac{p}{m})^{p/2} , \frac{(pe)^{p(d-1)}}{m^p}\right\}}{\epsilon^p}.
\end{align*}
Taking $p = \log(k/\delta)$ and $m \geq \tilde{C}^{d-1}\max \left\{ \epsilon^{-2} \log \frac{k}{\delta} , \epsilon^{-1}\left(\log\frac{k}{\delta}\right)^{d-1} \right\}$, the last expression is upper bounded by $\delta/k$. 
Hence, $\Omega$ is an $(\epsilon,\delta,k)$-JL by the union bound over $k$ vectors.
\end{proof}

\begin{rmk} In order to employ Lemmas~\ref{lem: wooodruff} and \ref{lem:latalya}, it is necessary that the \emph{rows} $\Omega_j(i,:)$ have independent and identical distributions and that these rows satisfy Khintchine's inequality. Assuming that the matrices $\Omega_i$ all have i.i.d sub-gaussian \emph{entries} as we have done in Theorem~\ref{thm:khatasJL} implies both these necessary properties of the rows.  However, we note that more general (though perhaps less natural) assumptions will also suffice.  For example, the distributions of the i.i.d sub-gaussian entries of the $\Omega_i$ may also vary by column.
\end{rmk}
We can now use Theorem~\ref{thm:khatasJL} to derive row bounds that guarantee that our Khatri-Rao structured sub-gaussian measurement matrices will provide PCP sketches with high probability.

\begin{thm} \label{thm:khat_loo_is_pcp}
Let $\epsilon > 0, 0 <\delta \leq e^{-2}$, and $\mathcal{X}$ be a $d$ mode tensor with side-lengths equal to $n$.  Furthermore, suppose that $\Omega_{-j} := \frac{1}{m} \Omega_1 \sbullet \Omega_2 \sbullet \dots \Omega_{j-1} \sbullet \Omega_{j+1}\sbullet \dots \sbullet \Omega_{d}$ where the $\Omega_i \in\R^{m\times n}$ for $i\in[d] \setminus \{ j \}$ are as in Theorem \ref{thm:khatasJL} with 
\begin{equation}
\label{eqn:khatrowreq}
m\geq C^{d-1}  \max\left\{\epsilon^{-2} \log \frac{\left( \frac{141}{\epsilon}\right)^r}{\delta/2} ,~  \epsilon^{-1}\left(\log\frac{\left( \frac{141}{\epsilon}\right)^r}{ \delta/2}\right)^{d-1},~\frac{r}{\epsilon^{2}} \log \frac{17n^2}{\delta/2} ,~ \frac{r}{\epsilon}\left(\log\frac{17n^2}{ \delta/2}\right)^{d-1}
\right\}
\end{equation}
for a positive constant $C \in \mathbbm{R}^+$. Then, $\tilde{X}_{[j]} = X_{[j]}\Omega_{-j}^T$ will be an $(\epsilon,0,r)$-PCP sketch of $X_{[j]}$ with probability at least $1-\delta$.
\end{thm}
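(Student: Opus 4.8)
The plan is to reduce the statement entirely to Lemma~\ref{lem:PCPpropbyJLReal} together with Theorem~\ref{thm:khatasJL}. Applied to $X \leftarrow X_{[j]} \in \R^{n \times n^{d-1}}$, Lemma~\ref{lem:PCPpropbyJLReal} guarantees that $\tilde X_{[j]} = X_{[j]}\Omega_{-j}^T$ is an $(\epsilon,0,r)$-PCP sketch of $X_{[j]}$ with probability at least $1-\delta$ as soon as $\Omega_{-j}$ has (a) the $\left(\frac{\epsilon}{6},\frac{\delta}{2},(141/\epsilon)^r\right)$-JL property and (b) the $\left(\frac{\epsilon}{6\sqrt r},\frac{\delta}{2},16n^2+n\right)$-JL property; here the two covers $S_1, S_2$ produced by that lemma have cardinalities at most $(141/\epsilon)^{\min\{r,\tilde r\}} \le (141/\epsilon)^r$ and $16n^2+n$ respectively, so these two JL properties indeed suffice. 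Since the Khatri-Rao structured $\Omega_{-j}$ in the statement is built from i.i.d.\ sub-gaussian component matrices exactly as in Theorem~\ref{thm:khatasJL}, that theorem hands us the row counts forcing (a) and (b), and the remainder of the proof is bookkeeping of the parameters.

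Concretely, I would invoke Theorem~\ref{thm:khatasJL} twice. First, with distortion $\epsilon/6$, failure probability $\delta/2$ (admissible since $\delta/2 \le e^{-2}$), and $k = (141/\epsilon)^r$, so that (a) holds once
\[
m \;\ge\; C_0^{d-1}\,\max\!\left\{36\,\epsilon^{-2}\log\frac{(141/\epsilon)^r}{\delta/2},\ 6\,\epsilon^{-1}\Big(\log\frac{(141/\epsilon)^r}{\delta/2}\Big)^{d-1}\right\},
\]
where $C_0$ is the constant of Theorem~\ref{thm:khatasJL}. Second, with distortion $\epsilon/(6\sqrt r)$, failure probability $\delta/2$, and $k = 16n^2+n \le 17n^2$, so that (b) holds once
\[
m \;\ge\; C_0^{d-1}\,\max\!\left\{36\,r\,\epsilon^{-2}\log\frac{17n^2}{\delta/2},\ 6\sqrt r\,\epsilon^{-1}\Big(\log\frac{17n^2}{\delta/2}\Big)^{d-1}\right\}.
\]
Absorbing the numerical factors $36$ and $6$ into a single constant $C = 36 C_0$ (legitimate because $36^{d-1} \ge 36 \ge 6$ for every $d \ge 2$), bounding $\sqrt r$ by $r$ in the last term of the second display, and taking the maximum of the two displays, one obtains exactly the four-term row requirement \eqref{eqn:khatrowreq}. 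Thus \eqref{eqn:khatrowreq} forces both (a) and (b), and Lemma~\ref{lem:PCPpropbyJLReal} then delivers the PCP conclusion with probability at least $1-\delta$.

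I expect no real obstacle here; all of the analytic difficulty is already contained in Theorem~\ref{thm:khatasJL} (whose proof is the long one, via the Lata\l a-type moment estimate) and in Lemma~\ref{lem:PCPpropbyJLReal}. The only points needing a moment's care are that halving the failure probability keeps us within the $\delta \le e^{-2}$ regime required by Theorem~\ref{thm:khatasJL}, that the crude simplifications $16n^2+n \le 17n^2$ and $\sqrt r \le r$ only strengthen the hypotheses, and that replacing $\min\{r,\tilde r\}$ by $r$ in the size of $S_1$ is harmless because $(141/\epsilon)^{\min\{r,\tilde r\}} \le (141/\epsilon)^r$ whenever $\epsilon < 1$, so requiring the JL property for sets of size up to $(141/\epsilon)^r$ is more than enough.
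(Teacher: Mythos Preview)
Your proposal is correct and follows essentially the same route as the paper: reduce to Lemma~\ref{lem:PCPpropbyJLReal}, obtain the two required JL properties from Theorem~\ref{thm:khatasJL}, and absorb the numerical factors into the constant $C$. Your write-up is in fact more detailed than the paper's own proof, which simply states that combining the two JL requirements with Theorem~\ref{thm:khatasJL} ``yields \eqref{eqn:khatrowreq} after combining and simplifying constants.''
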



\begin{proof}  By Lemma~\ref{lem:PCPpropbyJLReal} we know that it suffices for the measurement matrix $\Omega_{-j}$ to have both the $\left(\frac{\epsilon}{6},\frac{\delta}{2}, \left( \frac{141}{\epsilon} \right)^r \right)$-JL property and $\left(\frac{\epsilon}{6 \sqrt{r}}, \frac{\delta}{2}, 16 n^2 + n \right)$-JL property.  Combining these two required JL properties with Theorem~\ref{thm:khatasJL} yields \eqref{eqn:khatrowreq} after combining and simplifying constants.
\end{proof}

We can now see that both Kronecker and Khatri-Rao structured measurements can satisfy the PCP property required by Theorem~\ref{thm:errorboundfactors}.  This demonstrates that such memory-efficient measurements may be used to bound the first term in \eqref{eqn:triangle_inequlaity_main} as desired.  Given this partial success, we will now turn our attention to the second term in \eqref{eqn:triangle_inequlaity_main}.

\subsection{Bounding \texorpdfstring{$\norm{\mathcal{X}_{1} - \mathcal{X}_2}{2}$}{}}
\label{sec:bounding_termII}

Next, we show how to bound Term II in \eqref{eqn:triangle_inequlaity_main}. Recall that $\mathcal{X}_1$ is the output of Algorithm \ref{alg:loo_one_pass_prime}, and that $\mathcal{X}_2$ is the tensor recovered by the two-pass algorithm consisting of the first ``Factor matrix recovery'' phase of Algorithm \ref{alg:loo_one_pass_prime} followed by the second pass core recovery procedure discussed in Section~\ref{sec:bounding_x2}.  That is, 
$\mathcal{X}_1 := [\![ \mathcal{H},Q_1, \dots, Q_d ]\!]$ is the single-pass estimate of the tensor $\mathcal{X}$ output by Algorithm~\ref{alg:loo_one_pass_prime}, and
$$
\mathcal{X}_2 = \mathcal{G} \times_1 Q_1 \times_2 \dots \times_d Q_d = \mathcal{X}\times_1 Q_1 Q_1^T \times_2 \dots \times_d Q_d Q_d^T,
$$
where $\mathcal{G}$ is the core estimate from the two-pass algorithm, and where the $Q_i \in \R^{n \times r}$ have $r$ orthonormal columns.

To begin, we note that the one-pass core $\mathcal{H}$ computed by Algorithm \ref{alg:loo_one_pass_prime} can be recovered from its input measurements by
\begin{equation}
\begin{aligned}[b]
    \mathcal{H} &= \mathcal{B}_c \times_1 (\Phi_1 Q_1)^{\dagger} \times_2 \dots \times_d (\Phi_d Q_d)^{\dagger} \\
    &= \left(\mathcal{X}\times_1 \Phi_1 \times_2 \dots \times_d \Phi_d \right)\times_1 (\Phi_1 Q_1)^{\dagger} \times_2 \dots \times_d (\Phi_d Q_d)^{\dagger}.
    \label{eqn:core_linear_solve}
\end{aligned}
\end{equation}
In addition, we note that the norm of the difference between the two estimates is the same as the norm of the difference of their cores since factor matrices have orthonormal columns.

\begin{lem}\label{lem:core_diffs}
In the notation outlined above,
$$
\norm{\mathcal{X}_1 - \mathcal{X}_2}{2} = \norm{\mathcal{H}  - \mathcal{G}}{2}.
$$
\end{lem}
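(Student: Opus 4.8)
The plan is to reduce everything to the single elementary fact that applying a mode-$i$ product with a matrix having orthonormal columns along \emph{every} mode preserves the tensor norm. First I would use the bilinearity of the modewise product in its tensor argument (Definition~\ref{def:mwprod}) together with $\mathcal{X}_1 = \mathcal{H}\times_1 Q_1 \times_2 \dots \times_d Q_d$ and $\mathcal{X}_2 = \mathcal{G}\times_1 Q_1 \times_2 \dots \times_d Q_d$ to write
\begin{equation*}
\mathcal{X}_1 - \mathcal{X}_2 = (\mathcal{H} - \mathcal{G})\times_1 Q_1 \times_2 \dots \times_d Q_d .
\end{equation*}
It therefore suffices to show that $\norm{\mathcal{Y}\times_1 Q_1 \times_2 \dots \times_d Q_d}{2} = \norm{\mathcal{Y}}{2}$ for any $\mathcal{Y}\in\R^{r\times\dots\times r}$, since applying this with $\mathcal{Y} = \mathcal{H}-\mathcal{G}$ gives the claim.

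To prove the norm-preservation claim I would pass to the mode-$1$ unfolding via the identity \eqref{eqn:mat_tucker}: writing $\mathcal{Z} := \mathcal{Y}\times_1 Q_1 \times_2 \dots \times_d Q_d$, we have $Z_{[1]} = Q_1 Y_{[1]} \left(Q_d \otimes \dots \otimes Q_2\right)^T$, and $\norm{\mathcal{Z}}{2} = \norm{Z_{[1]}}{F}$ because the standard tensor norm is the Frobenius norm of any unfolding. Now $Q_1^TQ_1 = I_r$ gives $\norm{Q_1 M}{F} = \norm{M}{F}$ for every $M$; and since each $Q_i$ has orthonormal columns, the mixed-product property of the Kronecker product shows $\left(Q_d \otimes \dots \otimes Q_2\right)^T\left(Q_d \otimes \dots \otimes Q_2\right) = I$, i.e. $Q_d \otimes \dots \otimes Q_2$ also has orthonormal columns, so $\norm{M \left(Q_d \otimes \dots \otimes Q_2\right)^T}{F}^2 = \tr\!\left(M \left(Q_d \otimes \dots \otimes Q_2\right)^T\left(Q_d \otimes \dots \otimes Q_2\right) M^T\right) = \norm{M}{F}^2$. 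Chaining these two observations yields $\norm{Z_{[1]}}{F} = \norm{Y_{[1]}}{F} = \norm{\mathcal{Y}}{2}$, which is exactly what we need.

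Combining the two displays, $\norm{\mathcal{X}_1 - \mathcal{X}_2}{2} = \norm{(\mathcal{H}-\mathcal{G})\times_1 Q_1 \times_2 \dots \times_d Q_d}{2} = \norm{\mathcal{H} - \mathcal{G}}{2}$. There is no real obstacle here; the only point requiring a modicum of care is the bookkeeping with the Kronecker factor in the unfolding and the fact that right-multiplication by a matrix with orthonormal columns (transposed) preserves the Frobenius norm, which is the $\tr$ computation above.
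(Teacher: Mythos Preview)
Your proof is correct and follows essentially the same approach as the paper: both factor out $(\mathcal{H}-\mathcal{G})$ by linearity and then use that a Kronecker product of matrices with orthonormal columns has orthonormal columns, hence preserves the norm. The only cosmetic difference is that the paper passes to the full vectorization and a single matrix $(Q_1\otimes\dots\otimes Q_d)$, whereas you use the mode-$1$ unfolding \eqref{eqn:mat_tucker} and handle $Q_1$ and $Q_d\otimes\dots\otimes Q_2$ separately.
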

\begin{proof}
\begin{align*}
\norm{\mathcal{X}_1 - \mathcal{X}_2}{2} &= \norm{ \mathcal{H} \times_1 Q_1 \times_2 \dots \times_d Q_d - \mathcal{G} \times_1 Q_1 \times_2 \dots \times_d Q_d}{2}\\
&= \norm{ \left( \mathcal{H}  - \mathcal{G} \right)\times_1 Q_1 \times_2 \dots \times_d Q_d}{2}\\
&= \norm{(Q_1 \otimes \dots\otimes Q_d) \text{vec}\left( \mathcal{H}  - \mathcal{G} \right)}{2}\\
&=  \norm{\mathcal{H}  - \mathcal{G}}{2}
\end{align*}
since $(Q_1 \otimes \dots\otimes Q_d)$ has orthonormal columns. 
\end{proof}

 In order to simplify the presentation of our culminating results we next state a definition for an Affine-Embedding property. In Lemma~\ref{lem:affineembdbnd} below we then describe how this property relates to the OSE and AMM properties. 
\begin{definition}[\textbf{$(\epsilon,r,N)$-AE property}] \label{def:affineemd}  Let $\epsilon>0$, and $r \in \mathbbm{N}$, Fix $Q\in \R^{n\times r}$ an arbitrary matrix with orthonormal columns and an arbitrary matrix $B\in \R^{n\times N}$. A matrix $\Phi \in \R^{m\times n}$ is an $(\epsilon,r,N)$-Affine Embedding (AE) for given matrices $Q$ and $B$ if it satisfies 
    \begin{equation} \label{equ:affineembd}
\norm{(\Phi Q)^{\dagger} \Phi B }{F} \leq (1+\epsilon) \norm{B}{F}.
\end{equation} 
\end{definition}

With this definition in hand, we are now able to prove the main theorem of this section.  It will allow us to relate Term II of \eqref{eqn:triangle_inequlaity_main} with Term I.

\begin{thm}\label{thm:error-between-two-passes}
Let $\mathcal{X}_2 = [\![\mathcal{G}, Q_1,\dots, Q_d]\!]$ denote the two-pass tensor estimate, and $\mathcal{X}_1 = [\![\mathcal{H}, Q_1,\dots, Q_d]\!]$ denote the single-pass tensor estimate for a $d$-mode tensor $\mathcal{X}$. Furthermore, let $\epsilon \in (0,1)$, and $\Phi_i \in \R^{m_{c} \times n}$ be a $(\epsilon/d,r, n^{i-1} r^{d-i})$-AE for the matrices $Q_i$ and $\left( X_{[i]} - (X_2)_{[i]}\right) \bigotimes_{j=1}^{i-1} I_n \bigotimes_{\substack{j=i+1}}^d \left(\Phi_j Q_j \right)^{\dagger}\Phi_j$ for each $i\in[d]$.
Then,
\begin{equation}
  \norm{\mathcal{X}_1 - \mathcal{X}_2}{2}  \leq e^{\epsilon} \norm{\mathcal{X} - \mathcal{X}_2}{2}. 
\end{equation}
\end{thm}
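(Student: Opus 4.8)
The plan is to push the problem down to the two core tensors and then telescope one mode at a time, spending one factor of $(1+\epsilon/d)$ per mode. First I would invoke Lemma~\ref{lem:core_diffs} to reduce the claim to $\norm{\mathcal{H}-\mathcal{G}}{2}\le e^{\epsilon}\norm{\mathcal{X}-\mathcal{X}_2}{2}$. Writing $M_i:=(\Phi_iQ_i)^{\dagger}\Phi_i\in\R^{r\times n}$, equation~\eqref{eqn:core_linear_solve} together with the identity $\mathcal{Y}\times_iA\times_iB=\mathcal{Y}\times_i(BA)$ gives $\mathcal{H}=\mathcal{X}\times_1M_1\times_2\cdots\times_dM_d$, while by construction (Section~\ref{sec:bounding_x2}) $\mathcal{G}=\mathcal{X}\times_1Q_1^T\times_2\cdots\times_dQ_d^T$.

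The key structural step is to show $\mathcal{H}-\mathcal{G}=(\mathcal{X}-\mathcal{X}_2)\times_1M_1\times_2\cdots\times_dM_d$. Since $\mathcal{H}=\mathcal{X}\times_1M_1\times_2\cdots\times_dM_d$, by linearity of modewise products this is equivalent to $\mathcal{G}=\mathcal{X}_2\times_1M_1\times_2\cdots\times_dM_d$. I would establish the latter by writing $\mathcal{X}_2=[\![\mathcal{G},Q_1,\dots,Q_d]\!]$, so that $\mathcal{X}_2\times_1M_1\times_2\cdots\times_dM_d=\mathcal{G}\times_1(M_1Q_1)\times_2\cdots\times_d(M_dQ_d)$, and then using that $\Phi_iQ_i$ has full column rank $r$ (which is guaranteed under the OSE-type hypotheses that supply the AE property; cf.\ Lemma~\ref{lem:affineembdbnd}), whence $M_iQ_i=(\Phi_iQ_i)^{\dagger}(\Phi_iQ_i)=I_r$ for every $i$. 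This is the step I expect to be the main obstacle: it says that applying the sketched least-squares operators $M_i$ to the ``clean'' two-pass reconstruction $\mathcal{X}_2$ simply reproduces its core, so that all of the one-pass error against $\mathcal{X}_2$ is pure image-of-$\mathcal{E}$ noise passed through the $M_i$.

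With $\mathcal{E}:=\mathcal{X}-\mathcal{X}_2$ in hand, I would telescope by peeling off $M_i$ in descending order of $i$ (this is the order that matches the matrices named in the hypothesis). Set $\mathcal{R}^{(d+1)}:=\mathcal{E}$ and $\mathcal{R}^{(i)}:=\mathcal{R}^{(i+1)}\times_iM_i$ for $i=d,d-1,\dots,1$, so that $\mathcal{R}^{(1)}=\mathcal{H}-\mathcal{G}$ by the previous paragraph. Unfolding along mode $i$ via \eqref{eqn:mat_tucker}, the matrix $(\mathcal{R}^{(i+1)})_{[i]}$ equals, up to an irrelevant reordering of its columns, exactly the matrix $\left(X_{[i]}-(X_2)_{[i]}\right)\bigotimes_{j=1}^{i-1}I_n\bigotimes_{j=i+1}^d(\Phi_jQ_j)^{\dagger}\Phi_j$ from the hypothesis, since modes $i+1,\dots,d$ of $\mathcal{R}^{(i+1)}$ have been hit by $M_{i+1},\dots,M_d$ and modes $1,\dots,i-1$ are untouched; in particular its number of columns is $n^{i-1}r^{d-i}$, matching the AE parameter. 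Hence
\[
\norm{\mathcal{R}^{(i)}}{2}=\norm{(\Phi_iQ_i)^{\dagger}\Phi_i\,(\mathcal{R}^{(i+1)})_{[i]}}{F}\le\Bigl(1+\tfrac{\epsilon}{d}\Bigr)\norm{(\mathcal{R}^{(i+1)})_{[i]}}{F}=\Bigl(1+\tfrac{\epsilon}{d}\Bigr)\norm{\mathcal{R}^{(i+1)}}{2}
\]
by the $(\epsilon/d,r,n^{i-1}r^{d-i})$-AE property of $\Phi_i$ (Frobenius norms are insensitive to the column reordering). Iterating over $i=d,\dots,1$ and using $(1+\epsilon/d)^d\le e^{\epsilon}$ yields $\norm{\mathcal{H}-\mathcal{G}}{2}\le e^{\epsilon}\norm{\mathcal{E}}{2}$, and combining with Lemma~\ref{lem:core_diffs} finishes the proof. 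Apart from the structural identity above, everything is routine: bookkeeping with modewise unfoldings, the product rule $\mathcal{Y}\times_iA\times_iB=\mathcal{Y}\times_i(BA)$, and a $d$-fold application of the AE bound whose distortion $\epsilon/d$ is chosen precisely so that the product telescopes to $e^{\epsilon}$.
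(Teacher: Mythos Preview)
Your proposal is correct and follows essentially the same route as the paper's proof: reduce to $\norm{\mathcal{H}-\mathcal{G}}{2}$ via Lemma~\ref{lem:core_diffs}, establish the identity $\mathcal{H}-\mathcal{G}=(\mathcal{X}-\mathcal{X}_2)\times_1 M_1\times_2\cdots\times_d M_d$ by cancelling $\mathcal{G}\times_1(M_1Q_1)\cdots\times_d(M_dQ_d)-\mathcal{G}$, and then apply the AE hypothesis once per mode to peel off the $M_i$ and collect $(1+\epsilon/d)^d\le e^{\epsilon}$. Your intermediate tensors $\mathcal{R}^{(i+1)}$ are exactly the objects whose mode-$i$ unfoldings the paper calls $X'_i$, so the telescoping is identical. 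The one point you flag that the paper leaves implicit is that $M_iQ_i=I_r$ requires $\Phi_iQ_i$ to have full column rank; strictly speaking this is not part of Definition~\ref{def:affineemd}, but in the paper's setting it is supplied by the OSE half of Lemma~\ref{lem:affineembdbnd}, exactly as you note.
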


\begin{proof}
By Lemma~\ref{lem:core_diffs} it is enough to estimate the difference between the cores $\mathcal{G}$ and $\mathcal{H}$. He have that
\begin{align*}
    \mathcal{H} - \mathcal{G} &\overset{\eqref{eqn:core_linear_solve}}{=}\left(\mathcal{X}\times_1 \Phi_1 \times_2 \dots \times_d \Phi_d\right) \times_1 (\Phi_1 Q_1)^{\dagger}\times_2 \dots \times_d (\Phi_d Q_d)^{\dagger} - \mathcal{G} \\
     &= \left( (\mathcal{X}-\mathcal{X}_2)\times_1 \Phi_1 \times_2 \dots \times_d \Phi_d\right) \times_1 (\Phi_1 Q_1)^{\dagger} \times_2 \dots \times_d (\Phi_d Q_d)^{\dagger} 
        \\&\quad\quad+ \left(\mathcal{X}_2\times_1 \Phi_1 \times_2 \dots \times_d \Phi_d \right)\times_1 (\Phi_1 Q_1)^{\dagger} \times_2 \dots \times_d (\Phi_d Q_d)^{\dagger} - \mathcal{G} \\
        &= (\mathcal{X}-\mathcal{X}_2)\times_1 (\Phi_1 Q_1)^{\dagger} \Phi_1 \times_2 \dots \times_d (\Phi_d Q_d)^{\dagger}
        \\&\quad\quad+\mathcal{G}\times_1  (\Phi_1 Q_1)^{\dagger} \Phi_1 Q_1  \dots \times (\Phi_d Q_d)^{\dagger} \Phi_d Q_d  - \mathcal{G}\\
   &= (\mathcal{X}-\mathcal{X}_2)\times_1 (\Phi_1 Q_1)^{\dagger} \Phi_1 \times_2 \dots \times_d (\Phi_d Q_d)^{\dagger} \Phi_d.
\end{align*}
Now consider the following related mode-$i$ unfolding for $i\in[d]$, where $\left(\Phi_j Q_j\right)^{\dagger} \Phi_j$ for $j<i$ is replaced with an $n \times n$ identity matrix $I_n$,
\begin{equation}\label{eqn:unfolding_core_diff}
 X'_i :=  \left( X_{[i]} - (X_2)_{[i]}\right) \bigotimes_{j=1}^{i-1} I_n \bigotimes_{\substack{j=i+1}}^d \left(\Phi_j Q_j \right)^{\dagger}\Phi_j.
\end{equation}
Each $\Phi_i$ is an $(\epsilon/d,r,n^{i-1} r^{d-i})$-AE, where $Q \gets Q_i$ and $B \gets X'_i$, for $i=1,2,\dots, d$. Thus,
\begin{align*}
    \norm{\mathcal{H} - \mathcal{G}}{2} &= \norm{ \left(\Phi_1 Q_1\right)^{\dagger}\Phi_1 \left( X_{[1]} - (X_2)_{[1]}\right) \bigotimes_{j=2}^d \left(\Phi_j Q_j \right)^{\dagger}\Phi_j }{F}\\
        &\leq \left(1+\frac{\epsilon}{d}\right) \norm{ X'_1}{F}\\
        &~~~~~~\vdots \\
        &\leq \left(1+\frac{\epsilon}{d}\right)^d \norm{ X'_d}{F} \\
    &\leq e^{\epsilon} \norm{  \mathcal{X}-\mathcal{X}_2}{2},
\end{align*}
where we have used Definition~\ref{def:affineemd} $d$-times together with the bound $\left(1+\frac{\eps}{d}\right)^{d} \leq e^{\eps}$.
\end{proof}
 
\subsection{Putting it All Together with Row Bounds} \label{sec:main_putting_it_all_together}

In the previous subsections we have demonstrated that we can bound both error terms in \eqref{eqn:triangle_inequlaity_main} when the leave-one-out and core measurements satisfy certain embedding properties.  In particular, we have shown how the Johnson-Lindenstrauss property (Definition~\ref{def:jl}) can be used to obtain both the Oblivious Subspace Embedding (Definition~\ref{def:ose}) and an Approximate Matrix Multiplication properties (Definition~\ref{def:amm}). These two properties are then used with compositions and direct sums to show how a tensor unfolding will satisfy a Projection Cost Preserving property (Definition~\ref{def:pcp}), which was the essential ingredient in bounding the error term $\norm{\mathcal{X}-\mathcal{X}_2}{2}$. Next, we introduced Affine-Embeddings (Definition~\ref{def:affineemd}), which are the crucial ingredient needed for bounding the error term $\norm{\mathcal{X}_1-\mathcal{X}_2}{2}$.  In this section we will show how JL and OSE properties imply the AE property. All together then, these will enable us to verify the requirements of Theorem~\ref{thm:onepass_kron_nondist} in a straightforward manner once we have specified the type of leave-one-out measurements, and the particular type of sensing matrices.

\begin{thm}[Error bound for one-pass] \label{thm:onepass_kron_nondist} Let $\mathcal{X}_1 = [\![ \mathcal{H}, Q_1,\dots, Q_d ]\!]$ denote the single-pass tensor estimate for a $d$-mode tensor $\mathcal{X}$ with side length $n$ obtained from Algorithm~\ref{alg:loo_one_pass_prime} using leave-one-out linear measurements $B_i$ for $i\in[d]$, and core measurements $\mathcal{B}_c$. Let $\epsilon > 0$ and $\delta \in (0,1/2)$.  Then,

\begin{equation} \label{eqn:onepassbound}
    \norm{\mathcal{X}_1 - \mathcal{X}}{2}  \leq   (1+e^{\epsilon}) \sqrt{ \frac{1+\epsilon}{1-\epsilon} \sum_{j=1}^d \Delta_{r,j}} \quad \text{ where } \Delta_{r,j} := \sum_{i=r+1}^{\tilde{n}_j} \sigma_i \left( X_{[j]} \right)^2
\end{equation}
 will hold whenever \begin{enumerate}
    \item $\Omega_{(i,i)}\in\R^{n \times n}$ are full-rank matrices for all $i\in[d]$,
    \item  $\Omega_{(i,i)}^{-1} B_i =  X_{[i]}\Omega^T_{-j} \in \R^{n \times m^{d-1}}$ are $(\epsilon,0,r)$-PCP sketches of $ X_{[i]}$ for each $i\in[d]$, and
    \item $\Phi_i \in \R^{m_{c} \times n}$ is an $(\epsilon/d,r,n^{i-1} r^{d-i})$-AE for the matrices $Q_i$ and $X'_i$ as in  \eqref{eqn:unfolding_core_diff} for all $i \in [d]$. 
\end{enumerate} 
\end{thm}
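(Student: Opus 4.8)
The plan is to obtain \eqref{eqn:onepassbound} from the single triangle inequality \eqref{eqn:triangle_inequlaity_main}, namely $\norm{\mathcal{X} - \mathcal{X}_1}{2} \leq \norm{\mathcal{X} - \mathcal{X}_2}{2} + \norm{\mathcal{X}_1 - \mathcal{X}_2}{2}$, where $\mathcal{X}_2$ is the two-pass estimate built from exactly the same factor matrices $Q_i$ that Algorithm~\ref{alg:loo_one_pass_prime} produces. The two terms are then controlled by Theorem~\ref{thm:errorboundfactors} and Theorem~\ref{thm:error-between-two-passes}, respectively, so essentially all of the analytic content is already in hand; the work that remains is to check that hypotheses (1)--(3) supply precisely what those two theorems ask for once the objects in the pseudocode of Algorithm~\ref{alg:loo_one_pass_prime} are matched with the abstract quantities appearing in those statements.

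First I would dispatch Term I. Hypothesis (1) guarantees that the $n\times n$ linear solve $\Omega_{(i,i)}F_i = B_i$ in the factor-recovery loop of Algorithm~\ref{alg:loo_one_pass_prime} has the unique solution $F_i = \Omega_{(i,i)}^{-1}B_i = X_{[i]}\Omega_{-j}^T =: \tilde{X}_{[i]}$, so the matrices $Q_i$ (the top-$r$ left singular vectors of $F_i$) coincide with the factor matrices produced by Algorithm~\ref{alg:recover_factors}. Hypothesis (2) states that these $\tilde{X}_{[i]}$ are $(\epsilon,0,r)$-PCP sketches of $X_{[i]}$, so Theorem~\ref{thm:errorboundfactors} applies verbatim and yields
\[
\norm{\mathcal{X} - \mathcal{X}_2}{2} \leq \sqrt{\frac{1+\epsilon}{1-\epsilon}\sum_{j=1}^d \Delta_{r,j}},
\]
with $\mathcal{X}_2 = \mathcal{X}\times_1 Q_1 Q_1^T \times_2 \dots \times_d Q_d Q_d^T$.

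Next I would handle Term II. The core-recovery loop of Algorithm~\ref{alg:loo_one_pass_prime} solves the over-determined least-squares problem $\Phi_i Q_i H_{\mathrm{new}} = H$ in each mode, i.e.\ applies $(\Phi_i Q_i)^\dagger$ along mode $i$; unwinding the loop shows the returned one-pass core is exactly $\mathcal{H} = \mathcal{B}_c \times_1 (\Phi_1 Q_1)^\dagger \times_2 \dots \times_d (\Phi_d Q_d)^\dagger$, which is the expression \eqref{eqn:core_linear_solve} used in Theorem~\ref{thm:error-between-two-passes}. Hypothesis (3) is then word-for-word the hypothesis of Theorem~\ref{thm:error-between-two-passes} (with $B \leftarrow X'_i$ as in \eqref{eqn:unfolding_core_diff}), so that theorem gives $\norm{\mathcal{X}_1 - \mathcal{X}_2}{2} \leq e^\epsilon \norm{\mathcal{X} - \mathcal{X}_2}{2}$; Lemma~\ref{lem:core_diffs} is already folded into that statement.

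Finally I would combine the two estimates, $\norm{\mathcal{X} - \mathcal{X}_1}{2} \leq (1+e^\epsilon)\norm{\mathcal{X} - \mathcal{X}_2}{2} \leq (1+e^\epsilon)\sqrt{\frac{1+\epsilon}{1-\epsilon}\sum_{j=1}^d \Delta_{r,j}}$, which is \eqref{eqn:onepassbound}. I do not expect a genuine obstacle: the only thing demanding care is the bookkeeping identifying the $Q_i$ and $\mathcal{H}$ emitted by the pseudocode with the objects Theorems~\ref{thm:errorboundfactors} and \ref{thm:error-between-two-passes} are stated for, together with the (implicit) restriction $\epsilon \in (0,1)$ that makes the $\sqrt{(1+\epsilon)/(1-\epsilon)}$ factor meaningful.
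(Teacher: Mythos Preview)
Your proposal is correct and follows essentially the same approach as the paper: split via the triangle inequality \eqref{eqn:triangle_inequlaity_main}, bound Term~I using Theorem~\ref{thm:errorboundfactors} under hypotheses (1)--(2), bound Term~II using Theorem~\ref{thm:error-between-two-passes} under hypothesis (3), and combine. Your extra bookkeeping (matching the pseudocode outputs to the abstract objects and noting the implicit $\epsilon\in(0,1)$ restriction) is more explicit than the paper's version but changes nothing substantive.
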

\begin{proof} Recalling \eqref{eqn:triangle_inequlaity_main} we have that
\begin{align} 
    \norm{\mathcal{X}_1 - \mathcal{X}}{2} =     \norm{\mathcal{X}_1 - \mathcal{X} + \mathcal{X}_2 - \mathcal{X}_2}{2} \leq   \underbrace{\norm{\mathcal{X} - \mathcal{X}_2}{2}}_{\text{Term I}} + \underbrace{ \norm{\mathcal{X}_1 - \mathcal{X}_2}{2} }_{\text{Term II}}.
\end{align}
We know from Theorem~\ref{thm:errorboundfactors} that
\begin{equation} \label{eqn:term_i}
\norm{\mathcal{X} - \mathcal{X}_2}{2}   
    \leq \sqrt{\frac{1+\epsilon}{1-\epsilon} \sum_{j=1}^d \Delta_{r,j} }
\end{equation}
whenever $ X_{[j]}\Omega^T_{-j}  \in \R^{n \times m^{d-1}}$ are $(\epsilon,0,r)$-PCP sketches of $X_{[j]}$ for each $j\in[d]$. Furthermore, Theorem~\ref{thm:error-between-two-passes} together with our third assumption implies that 
\begin{equation} \label{eqn:term_ii}
  \norm{\mathcal{X}_1 - \mathcal{X}_2}{2}  \leq e^{\epsilon} \norm{\mathcal{X} - \mathcal{X}_2}{2}.  
\end{equation} Using \eqref{eqn:term_i} and \eqref{eqn:term_ii} in \eqref{eqn:triangle_inequlaity_main} now yields the desired result.
\end{proof}

We now have need for the lemma that links the AE property to the JL and OSE properties, and thus provides the necessary machinery to fully account for row bounds that guarantee with high probability that the recovered tensor satisfies the stated bound in Theorem~\ref{thm:onepass_kron_nondist}.

\begin{lem} \label{lem:affineembdbnd} Let $B\in\R^{n\times N}$ and suppose that $Q \in \R^{n\times r}$, $n\geq r$, has orthonormal columns.  If $\Phi \in \R^{m\times n}$ is an $\left(\frac{1}{2},\delta,r \right)$-OSE for $Q$, and has the $\left(\frac{\epsilon}{2\sqrt{r}},\delta \right)$-AMM property for $Q^T$ and $(I-Q Q^T )B $, then $\Phi$ will be an $(\epsilon, r, N)$-AE for the matrices $Q$ and $B$
with probability at least $1-2\delta$.
\end{lem}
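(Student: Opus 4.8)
The plan is to verify the affine-embedding inequality of Definition~\ref{def:affineemd}, namely $\norm{(\Phi Q)^{\dagger}\Phi B}{F} \le (1+\epsilon)\norm{B}{F}$, by decomposing $B$ into its component $B_{\parallel} := QQ^T B$ lying in the column space of $Q$ and its orthogonal complement $B_{\perp} := (I-QQ^T)B$. Since these are orthogonal, Pythagoras gives $\norm{B}{F}^2 = \norm{B_{\parallel}}{F}^2 + \norm{B_{\perp}}{F}^2$, and by linearity $(\Phi Q)^{\dagger}\Phi B = (\Phi Q)^{\dagger}\Phi B_{\parallel} + (\Phi Q)^{\dagger}\Phi B_{\perp}$. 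I would treat the two pieces separately: the parallel piece is handled \emph{exactly} (it contributes precisely $\norm{B_{\parallel}}{F}$), while the perpendicular piece is shown to be of size $O(\epsilon)\norm{B_{\perp}}{F}$ by exploiting the cancellation $Q^T B_{\perp} = 0$ together with the AMM hypothesis.

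First I would extract the needed consequences of the OSE hypothesis. The $(\tfrac12,\delta,r)$-OSE property for $Q$, rewritten as in \eqref{equ:JLsubspaceEmbedding2}, states that $\tfrac12\norm{y}{2}^2 \le \norm{\Phi Q y}{2}^2 \le \tfrac32\norm{y}{2}^2$ for all $y \in \R^r$. In particular $\Phi Q$ has full column rank $r$, so $(\Phi Q)^{\dagger} = (Q^T\Phi^T\Phi Q)^{-1}Q^T\Phi^T$ and $(\Phi Q)^{\dagger}(\Phi Q) = I_r$. Moreover these norm bounds say that every eigenvalue of the Gram matrix $Q^T\Phi^T\Phi Q$ lies in $[\tfrac12,\tfrac32]$, whence $\norm{(Q^T\Phi^T\Phi Q)^{-1}}{2} \le 2$; this spectral bound is the only quantitative consequence of the OSE hypothesis I will need.

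For the parallel term, writing $B_{\parallel} = Q(Q^T B)$ gives $\Phi B_{\parallel} = (\Phi Q)(Q^T B)$, so $(\Phi Q)^{\dagger}\Phi B_{\parallel} = (\Phi Q)^{\dagger}(\Phi Q)(Q^T B) = Q^T B$, and hence $\norm{(\Phi Q)^{\dagger}\Phi B_{\parallel}}{F} = \norm{Q^T B}{F} = \norm{B_{\parallel}}{F}$. For the perpendicular term I would substitute the pseudoinverse formula to obtain $(\Phi Q)^{\dagger}\Phi B_{\perp} = (Q^T\Phi^T\Phi Q)^{-1}\,(Q^T\Phi^T\Phi B_{\perp})$. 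The crucial observation is that $Q^T B_{\perp} = Q^T(I-QQ^T)B = 0$, so the $(\tfrac{\epsilon}{2\sqrt r},\delta)$-AMM hypothesis applied with $A \leftarrow Q^T$ and $B \leftarrow (I-QQ^T)B = B_{\perp}$ yields $\norm{Q^T\Phi^T\Phi B_{\perp}}{F} = \norm{Q^T\Phi^T\Phi B_{\perp} - Q^T B_{\perp}}{F} \le \tfrac{\epsilon}{2\sqrt r}\norm{Q^T}{F}\norm{B_{\perp}}{F}$. Since $Q$ has $r$ orthonormal columns, $\norm{Q^T}{F} = \sqrt r$, so this is $\tfrac{\epsilon}{2}\norm{B_{\perp}}{F}$; combining with $\norm{(Q^T\Phi^T\Phi Q)^{-1}}{2}\le 2$ gives $\norm{(\Phi Q)^{\dagger}\Phi B_{\perp}}{F} \le 2\cdot\tfrac{\epsilon}{2}\norm{B_{\perp}}{F} = \epsilon\norm{B_{\perp}}{F}$.

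Finally I would combine the two estimates by the triangle inequality and use $\norm{B_{\parallel}}{F},\norm{B_{\perp}}{F}\le\norm{B}{F}$ to conclude $\norm{(\Phi Q)^{\dagger}\Phi B}{F} \le \norm{B_{\parallel}}{F} + \epsilon\norm{B_{\perp}}{F} \le (1+\epsilon)\norm{B}{F}$, which is exactly \eqref{equ:affineembd}. The OSE and AMM hypotheses each fail with probability at most $\delta$, so a union bound delivers the conclusion with probability at least $1-2\delta$. The only subtle point—and the step I expect to require the most care—is recognizing that the orthogonality $Q^T B_{\perp}=0$ converts the AMM hypothesis into a genuine $O(\epsilon)$ bound on the cross term $Q^T\Phi^T\Phi B_{\perp}$; without this exact cancellation the AMM estimate would only control a \emph{difference} rather than the quantity itself, and the argument would not close.
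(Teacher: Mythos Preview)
Your proof is correct and follows essentially the same approach as the paper's: both arguments identify $(\Phi Q)^{\dagger}\Phi B - Q^T B = (Q^T\Phi^T\Phi Q)^{-1}Q^T\Phi^T\Phi B_{\perp}$, apply the AMM hypothesis together with $Q^T B_{\perp}=0$ to bound $\norm{Q^T\Phi^T\Phi B_{\perp}}{F}\le\tfrac{\epsilon}{2}\norm{B_{\perp}}{F}$, and then use the OSE hypothesis to control the Gram-matrix factor. Your route is in fact slightly more direct than the paper's, since you bound $\norm{(Q^T\Phi^T\Phi Q)^{-1}}{2}\le 2$ immediately from the eigenvalue estimate, whereas the paper instead bounds $\norm{Q^T\Phi^T\Phi Q - I}{}\le\tfrac12$ and reaches the same conclusion via an add-and-subtract rearrangement.
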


\begin{proof}
Denote  $\tilde{Y} :=(\Phi Q)^{\dagger}\Phi B $ and $Y' :=Q^T B $, the solutions to the sketched and un-skechted linear least square problems given by minimizing  $\norm{\Phi B-\Phi Q Y }{F}$ and $\norm{B - QY}{F}$, respectively, with respect to $Y$.  Whenever $\Phi$ is a $(1/2,\delta,r)$-OSE for $Q$ we know that $\Phi Q$ will be full-rank.  It follows that $(\Phi Q)^{T} \Phi (B-Q \tilde{Y}) = 0$ will then also hold because $\left(\Phi Q\right)^{\dagger} = \left[\left(\Phi Q\right)^T \left(\Phi Q\right)\right]^{-1}\left(\Phi Q\right)^T$.  As a consequence, 
\begin{align*}
 Q^T \Phi^T \Phi Q (\tilde{Y} - Y') 
&= Q^T \Phi^T \Phi Q (\tilde{Y} - Y') + (\Phi Q)^{T} \Phi (B - Q \tilde{Y}) \\
&= Q^T \Phi^T \Phi (Q \tilde{Y} - QY' + B - Q \tilde{Y}) \\
&= Q^T \Phi^T \Phi (B - QY').
\end{align*}
When the approximate matrix multiplication property also holds we will now have that
\begin{align}
\norm{(\Phi Q)^T \Phi Q (\tilde{Y} - Y')}{F} &= \norm{Q^T \Phi^T \Phi (B - QY') }{F} ~=~ \norm{Q^T \Phi^T \Phi (I - QQ^T)B }{F} \nonumber \\
 &\leq \frac{\epsilon}{2\sqrt{r}} \norm{Q^T}{F}\norm{(I - QQ^T)B}{F} \label{equ:DiffboundYtildeY'}\\
  &= \frac{\epsilon}{2} \norm{(I - QQ^T)B}{F} ~=~\frac{\epsilon}{2} \norm{B - QY'}{F}. \nonumber
\end{align}

Furthermore, whenever $\Phi$ is a $\left(\frac{1}{2},\delta,r \right)$-OSE for the column space of $Q$, all the eigenvalues of $Q^T \Phi ^T \Phi Q - I$ will be within the interval $[-1/2,1/2]$.  Thus, we can bound its operator norm $\norm{Q^T \Phi ^T \Phi Q - I}{} \leq 1/2.$
We may now combine this operator norm bound with \eqref{equ:DiffboundYtildeY'} and see that
\begin{align*}
\norm{  \tilde{Y} - Y'}{F} &=
\norm{ (\tilde{Y} - Y') - (\Phi Q)^T \Phi Q  (\tilde{Y} - Y') +(\Phi Q)^T \Phi Q (\tilde{Y} - Y') }{F} \\
&\leq \norm{(\Phi Q)^T \Phi Q (\tilde{Y} - Y')}{F}+\norm{\left[(\Phi Q)^T \Phi Q - I \right] (\tilde{Y} - Y')}{F} \\
  &\leq \frac{\epsilon}{2}  \norm{(I - QQ^T)B}{F} + \norm{\left[(\Phi Q)^T \Phi Q - I \right]}{}\norm{ \tilde{Y} - Y'}{F} \\
    &\leq \frac{\epsilon}{2}  \norm{(I - QQ^T)B}{F} + \frac{1}{2}\norm{\tilde{Y} - Y'}{F}.
\end{align*}
Rearranging the inequality above while noting the invariance of the Frobenius norm to multiplication by a matrix with orthogonal columns, we learn that 
$$\norm{\tilde{Y} - Y'}{F} = \norm{ Q( \tilde{Y} - Y')}{F} \leq \epsilon \norm{(I - QQ^T)B}{F}.$$

To finish, we may now apply the triangle inequality to see that
\begin{align*}
    \norm{\tilde{Y}}{F} & \leq \norm{\tilde{Y} - Y'}{F} + \norm{Y'}{F} \\
    &\leq \epsilon \norm{(I - QQ^T)B}{F} +  \norm{Q^T B }{F} \\ 
    &\leq \epsilon \norm{(I-QQ^T)}{}\norm{B}{F} +  \norm{Q^T}{}\norm{ B }{F} \\ 
    &= (1+\epsilon) \norm{ B }{F}.
\end{align*}
In addition, we note that taking a union bound over the two necessary OSE and AMM conditions establishes the stated probability guarantee.
\end{proof}

We are now prepared to state how a particular choice of distribution used to generate our measurement matrices as well as the leave-one-out measurement type (Kronecker or Khatri-Rao) can satisfy the error bound \eqref{eqn:onepassbound} with high probability.  Note that below Algorithms~\ref{alg:kron_1pass} and \ref{alg:khat_1pass} refer to the specialization of Algorithm~\ref{alg:loo_one_pass_prime} to the type of leave-one-out measurement (Kronecker or Khatri-Rao, respectively).

\begin{thm}[Error bound for one-pass Kronecker-structured sub-gaussian measurements]  \label{thm:onepass_kron}
Suppose $\mathcal{X}$ is a $d$-mode tensor with side length $n$. Let $\epsilon > 0$, $\delta \in (0,\frac{1}{3})$, $r\in[n]$. Furthermore, let
\begin{enumerate}
    \item $\Omega_{(i,i)}\in\R^{n \times n}$ be arbitrary full-rank matrices, 
    \item $\Omega_{(i,j)}\in\R^{m\times n}$ for $i\neq j$ be random matrices with mutually independent, mean zero, variance $m^{-1}$, sub-gaussian entries with
   \[ m\geq \max \left\{\frac{C_1 r (d-1)^2}{\epsilon^2} \ln \left(\frac{n^{d} d^2}{\delta}\right),\frac{C_2 (d-1)^2 }{\epsilon^2} \ln \left(\left( \frac{141}{\epsilon} \right)^r \frac{n^{d-2}d^2}{\delta} \right)  \right\}, {\rm and}\]
    \item $\Phi_i \in\R^{m_c \times n}$ be random matrices with mutially independent, mean zero, variance $m^{-1}$, sub-gaussian entries with
    \begin{equation}
    \label{equ:mc_subgauss}
    m_c\geq \max \left\{C_3 \ln\left(\frac{(94)^{r} d}{\delta}\right),\frac{C_4 r d^2}{\epsilon^2} \ln\left( \frac{ d (r + n^{d-1})^2}{\delta} \right) \right\}.
    \end{equation}
\end{enumerate} 
Then $\mathcal{X}_1 = [\![ \mathcal{H}, Q_1,\dots, Q_d ]\!]$ the output of Algorithm \ref{alg:kron_1pass} (i.e., Algorithm~\ref{alg:loo_one_pass_prime} specialized to Kronecker sub-gaussian measurements $\mathcal{B}_i$ and $\mathcal{B}_c$), 
will satisfy \eqref{eqn:onepassbound} with probability at least $1-3\delta$.
\end{thm}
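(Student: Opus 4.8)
The plan is to obtain this as a consequence of the generic one-pass bound, Theorem~\ref{thm:onepass_kron_nondist}, by checking each of its three hypotheses for the specified sub-gaussian Kronecker ensemble while keeping a failure budget of $\delta$ for the PCP hypothesis and $2\delta$ for the affine-embedding hypothesis, so that the whole collection of events holds with probability at least $1-3\delta$ and \eqref{eqn:onepassbound} follows on that event. Hypothesis~(1) of Theorem~\ref{thm:onepass_kron_nondist} is literally assumption~(1) here, so nothing is required. Throughout, $Q_i$ denotes the matrix with $r$ orthonormal columns produced by the factor-recovery loop of Algorithm~\ref{alg:loo_one_pass_prime}, and $X'_i$ is as in \eqref{eqn:unfolding_core_diff}.

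For the PCP hypothesis~(2), I would invoke Corollary~\ref{lem:measurementsdefinepcp} once per mode $i\in[d]$ with its failure parameter set to $\delta/d$. Replacing $\delta$ by $\delta/d$ in that corollary's row bound only inflates the logarithms by an additive $\ln d$, which is subsumed by the displayed lower bound on $m$ in the present theorem (where the factor $d-1$ inside each logarithm of Corollary~\ref{lem:measurementsdefinepcp} has been replaced by $d^2$). Hence each $\Omega_{(i,i)}^{-1}B_i = X_{[i]}\Omega_{-i}^{T}$ is an $(\epsilon,0,r)$-PCP sketch of $X_{[i]}$ with probability at least $1-\delta/d$, and a union bound over the $d$ modes yields hypothesis~(2) with probability at least $1-\delta$.

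For the affine-embedding hypothesis~(3), fix $i\in[d]$ and observe that both $Q_i$ and $X'_i$ are functions only of the $\Omega$-matrices and of $\Phi_{i+1},\dots,\Phi_d$, hence are independent of $\Phi_i$. Conditioning on that randomness makes $Q_i$ and $X'_i$ deterministic and lets me apply the data-oblivious guarantees to the fresh matrix $\Phi_i$. By Lemma~\ref{lem:affineembdbnd} (used with $\epsilon\leftarrow\epsilon/d$, $Q\leftarrow Q_i$, $B\leftarrow X'_i$), it suffices for $\Phi_i$ to be a $(1/2,\delta/d,r)$-OSE for $Q_i$ and to satisfy the $(\epsilon/(2d\sqrt{r}),\delta/d)$-AMM property for $Q_i^{T}$ and $(I-Q_iQ_i^{T})X'_i$. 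Corollary~\ref{cor:subgOSE} with $\epsilon\leftarrow 1/2$, $\delta\leftarrow\delta/d$ supplies the first once $m_c$ exceeds a constant multiple of $\ln(94^{r}d/\delta)$, which is the first term of \eqref{equ:mc_subgauss}; Corollary~\ref{cor:AMMsetexist} applied with $p=r$, $q=n^{i-1}r^{d-i}\le n^{d-1}$, $\epsilon\leftarrow\epsilon/(2d\sqrt{r})$, $\delta\leftarrow\delta/d$ supplies the second once $m_c$ exceeds a constant multiple of $(rd^2/\epsilon^2)\ln(d(r+n^{d-1})^2/\delta)$, which (after absorbing absolute constants and the crude bound $q\le n^{d-1}$) is the second term of \eqref{equ:mc_subgauss}. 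Thus $\Phi_i$ is an $(\epsilon/d,r,n^{i-1}r^{d-i})$-AE for $Q_i$ and $X'_i$ with probability at least $1-2\delta/d$, and a union bound over $i\in[d]$ gives hypothesis~(3) with probability at least $1-2\delta$.

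Finally, a union bound combining the previous two paragraphs shows that all hypotheses of Theorem~\ref{thm:onepass_kron_nondist} hold simultaneously with probability at least $1-3\delta$ (and $\delta\in(0,\tfrac13)\subset(0,\tfrac12)$, as that theorem requires), whence \eqref{eqn:onepassbound} on that event. I expect the only genuinely delicate point to be the conditioning step for hypothesis~(3): because $X'_i$ is itself assembled from the random projectors $(\Phi_jQ_j)^{\dagger}\Phi_j$ with $j>i$, one must reveal the randomness in the correct order -- the $\Omega$'s together with $\Phi_{i+1},\dots,\Phi_d$ first, and only then $\Phi_i$ -- to legitimately invoke the oblivious embedding guarantees of Corollaries~\ref{cor:subgOSE} and \ref{cor:AMMsetexist}; everything else is the routine bookkeeping of matching the rescaled $(\epsilon,\delta)$ parameters fed into Lemma~\ref{lem:affineembdbnd} and those two corollaries against the displayed row bounds for $m$ and $m_c$.
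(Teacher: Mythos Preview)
Your proof is correct and follows essentially the same route as the paper: verify the three hypotheses of Theorem~\ref{thm:onepass_kron_nondist} by invoking Corollary~\ref{lem:measurementsdefinepcp} with $\delta\gets\delta/d$ for the PCP condition, and Corollaries~\ref{cor:subgOSE} and~\ref{cor:AMMsetexist} (fed through Lemma~\ref{lem:affineembdbnd}) with $\delta\gets\delta/d$ for the AE condition, then union bound. Your explicit conditioning argument for hypothesis~(3) --- revealing the $\Omega$-ensemble and $\Phi_{i+1},\dots,\Phi_d$ before applying the oblivious guarantees to the fresh $\Phi_i$ --- is a welcome clarification that the paper leaves implicit.
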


\begin{proof}  We verify that the requirements of Theorem~\ref{thm:onepass_kron_nondist} are satisfied.  Note that:
\begin{enumerate}
    \item $\Omega_{(i,i)}\in\R^{n \times n}$ are full-rank matrices by assumption.
    
    \item  We have from Corollary~\ref{lem:measurementsdefinepcp} where $\delta \gets \frac{\delta}{d}$ that $X_{[j]}\Omega^T_{-j} $ is a $(\epsilon,0,r)$-PCP sketch of $X_{[j]}$ for all $j \in [d]$ with probability at least $1-\delta$ when $\Omega_{(i,j)}\in\R^{m\times n}$ are independent sub-gaussian random matrices with
\[
m\geq \max \left\{\frac{C_1 r (d-1)^2}{\epsilon^2} \ln \left(\frac{n^{d} d^2}{\delta}\right),\frac{C_2 (d-1)^2 }{\epsilon^2} \ln \left(\left( \frac{141}{\epsilon} \right)^r \frac{n^{d-2}d^2}{\delta} \right)  \right\}.
\]

    \item A substitution of $\epsilon \gets \frac{1}{2}$, $\delta\gets \frac{\delta}{d}$ into Corollary~\ref{cor:subgOSE} yields,
\begin{equation} \label{eqn:mc_1}
    m_c \geq C_3 \ln\left(\frac{(94)^{r} d}{\delta}\right)
\end{equation}
in order to ensure $\Phi_i$  is $(\frac{1}{2}, \frac{\delta}{d},r)$-OSE. Using Corollary~\ref{cor:AMMsetexist}, where $\epsilon \gets \frac{\epsilon}{2d\sqrt{r}}$, $\delta \gets  \frac{\delta}{d}$ and noting that the matrices $Q_i$ and $X'_i$ as in  \eqref{eqn:unfolding_core_diff} have are $r\times n$ and $n\times n^{i-1} r^{d-i}$, respectively, we have that when
\begin{equation} \label{eqn:mc_2}
    m_c \geq \frac{C_4 r d^2}{\epsilon^2} \ln\left( \frac{ 2 d (r + n^{d-1})^2}{\delta} \right)
\end{equation}
then $\Phi_i$ has the  $(\frac{\epsilon}{2d\sqrt{r}}, \frac{\delta}{d })$-AMM property for each $i \in [d]$. Lemma~\ref{lem:affineembdbnd} now shows how the OSE and AMM properties ensure that $\Phi_i$ has the desired AE property for each $i \in [d]$ with probability at least $1 - 2 \delta/ d$. The union bound now implies that the third requirement of Theorem~\ref{thm:onepass_kron_nondist} will hold with probability at least $1-2 \delta$.
\end{enumerate}

Taking a maximum over \eqref{eqn:mc_1} and \eqref{eqn:mc_2} after simplifying and adjusting constants then yields \eqref{equ:mc_subgauss}. 
A final union bound over the failure probabilities for requirements of $\Omega_{(i,j)}$ and $\Phi_i$ now yields the result.
\end{proof}

The following runtime analysis demonstrates that instances of Algorithm~\ref{alg:loo_one_pass_prime} can indeed recover low-rank approximations of $d$-mode tensors of side length $n$ in $o(n^d)$-time.  As a result, one can see that Algorithm~\ref{alg:loo_one_pass_prime} is effectively a sub-linear time recovery algorithm for a large class of low Tucker-rank tensors.

\begin{thm}
\label{thm:kron_runtime}
Suppose that $m < n < m^{d-1}$ and that $m > C m_c$ for some absolute constant $C > 0$.  Then, Algorithm \ref{alg:kron_1pass}, when given the $B_i$ and $\mathcal{B}_c$ measurements as input, runs in $\mathcal{O}(d n^2 m^{d-1})-time$ and requires the storage of $dnm^{d-1} + m_c^{d}$ measurement tensor entries, and at most $d(d-1)mn + n^2 + dm_cn$ total measurement matrix entries.
\end{thm}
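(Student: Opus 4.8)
The plan is purely an accounting exercise: walk through Algorithm~\ref{alg:kron_1pass} (i.e.\ Algorithm~\ref{alg:loo_one_pass_prime} instantiated with Kronecker-structured sensing) one line at a time, tally floating-point operations and the number of scalars that must be kept resident in memory, and then collapse the resulting bounds using the hypotheses $m < n < m^{d-1}$ and $m > C m_c$ (which, assuming $C \ge 1$, gives $m_c < m$). Throughout I use, as in the statement, that the measurement tensors $B_i$ and $\mathcal{B}_c$ are handed to the algorithm as input, so the cost of \emph{forming} them is not counted.

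\emph{Factor-recovery loop.} For each mode $i \in [d]$ the algorithm factors the $n \times n$ matrix $\Omega_{(i,i)}$ and back-substitutes against the $n \times m^{d-1}$ right-hand side $B_i$, at cost $\mathcal{O}(n^3 + n^2 m^{d-1})$; it then computes a thin SVD of $F_i \in \R^{n \times m^{d-1}}$ at cost $\mathcal{O}(n^2 m^{d-1})$ (only the $r$ leading left singular vectors are kept, but even the thin SVD fits this bound because $n < m^{d-1}$); and finally slices off $Q_i = U_{:,:r}$ for free. Since $n < m^{d-1}$ absorbs the $\mathcal{O}(n^3)$ term, each mode costs $\mathcal{O}(n^2 m^{d-1})$, so this loop costs $\mathcal{O}(d\, n^2 m^{d-1})$.

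\emph{Core-recovery loop.} The delicate point is to track how the shape of $\mathcal{B}_c$ evolves: entering iteration $i$, its first $i-1$ modes have length $r$ and the remaining $d-i+1$ have length $m_c$. Hence the $\mathrm{unfold}$ and $\mathrm{fold}$ steps are reshapes costing $\mathcal{O}(r^{i-1} m_c^{d-i+1})$; forming $\Phi_i Q_i \in \R^{m_c \times r}$ costs $\mathcal{O}(m_c n r)$; and the least-squares solve (e.g.\ QR of $\Phi_i Q_i$ followed by back-substitution over all $r^{i-1} m_c^{d-i}$ right-hand-side columns) costs $\mathcal{O}(m_c r^2 + r^{i} m_c^{d-i+1})$. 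Because $m_c \ge r$, the quantity $r^{i} m_c^{d-i+1}$ is largest at $i = 1$, so summing over $i$ the whole loop costs $\mathcal{O}(d\, m_c n r + d\, r m_c^{d})$. Using $m_c < m$, $r \le n$, and $m < n < m^{d-1}$ one has $m_c n r = \mathcal{O}(m n^2) = \mathcal{O}(n^2 m^{d-1})$ and $r m_c^{d} \le r\, m \cdot m^{d-1} = \mathcal{O}(n^2 m^{d-1})$, so the core loop is also $\mathcal{O}(d\, n^2 m^{d-1})$; adding the two loops gives the stated runtime. The only real work here is this bookkeeping — confirming that neither the reshapes, nor the many-column least-squares solves, nor the $\mathcal{O}(d n^3)$ from the $d$ linear systems, ever dominate $\mathcal{O}(d\, n^2 m^{d-1})$.

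\emph{Storage.} The inputs resident in memory are the $d$ matrices $B_i \in \R^{n \times m^{d-1}}$ together with the core tensor $\mathcal{B}_c$ of $m_c^d$ entries, for $dnm^{d-1} + m_c^d$ measurement-tensor entries; and the $d(d-1)$ off-diagonal matrices $\Omega_{(i,j)} \in \R^{m \times n}$ ($i \ne j$), a single shared uncompressed-mode matrix $\Omega_{(i,i)} \in \R^{n \times n}$ (taken identical across modes, e.g.\ the identity), and the $d$ core matrices $\Phi_i \in \R^{m_c \times n}$, for $d(d-1)mn + n^2 + d m_c n$ measurement-matrix entries. Every intermediate object the algorithm creates ($F_i$ and its SVD factors, the $Q_i$, the partially contracted core, the $m_c \times r$ least-squares systems) is no larger than one of these inputs, so the total storage footprint is as claimed.
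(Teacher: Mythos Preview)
Your proposal is correct and takes essentially the same accounting approach as the paper: decompose into factor-recovery and core-recovery loops, bound each by $\mathcal{O}(n^2 m^{d-1})$ per mode under the stated hypotheses, and read off the storage from the sizes of the input tensors and measurement matrices. Your treatment of the core loop is in fact slightly more careful than the paper's (you track the shape of $\mathcal{B}_c$ at each iteration rather than simply asserting that the first iteration dominates), and your explicit justification that a single shared $\Omega_{(i,i)}$ accounts for the lone $n^2$ term in the storage count is a detail the paper leaves implicit.
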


\begin{proof}
Inside the factor matrix recovery loop of Algorithm \ref{alg:recover_factors}, called by Algorithm \ref{alg:kron_1pass}, the two main sub-tasks are to solve the linear system $\Omega_{(i,i)} F_{i} = B_{i}$ and to compute a truncated SVD, $F_{i} = U\Sigma V^T$. Solving the linear system can be accomplished using $QR$-factorization via Householder orthogonalization. Doing so requires $\frac{4}{3}n^3$ floating point operations to compute the factorization of $\Omega_{(i,i)}$, and $2 n^2 m^{d-1}$ operations to form $Q^T B_{i}$ and $n^2 m^{d-1}$ operations to solve $RF_{i} = Q^T B_{i}$ via back substitution.  The complexity of computing the SVD is $\mathcal{O} (n m^{d-1} \min (n,m^{d-1}))$. Therefore the factor recovery loop has overall complexity
\[
\mathcal{O} (d n^2 m^{d-1})
\]
if we assume that $n < m^{d-1}$. 

Next we consider the core recovery loop. for the first iteration of the loop of Algorithm \ref{alg:recover_core}, we must form $\Phi_1 Q_1 $ at a cost of $\mathcal{O}(m_c n r)$. Next we solve a linear system $\Phi_1 Q_1 H = B_{1}$ at a cost of $\mathcal{O}(2m_c r^2 - \frac{2}{3}r^3 +3m_c^{d} r )$. The first iteration dominates the complexity, since subsequent solves use a smaller right hand side formed from solutions from the previous iterations. Furthermore, if we assume $m_c = \mathcal{O}(m)$, we have a core recovery loop with complexity
\[
\mathcal{O}(d m^d r).
\]Thus overall the recovery algorithm has $\mathcal{O} (d n^2 m^{d-1})$ complexity. In the situation where $\Omega_{(i,i)}=I_n$ then the computation of the SVDs in the factor recovery step dominates the run-time of the algorithm. Clearly the size of the measurement tensors are $nm^{d-1}$ per factor and $m_c^d$ for the core, which yields the space complexity of the measurements. 
\end{proof}

One of the advantages of the structure of the argument in Theorem~\ref{thm:onepass_kron_nondist} is that once it is known how to ensure a given random matrix will satisfy the JL property, we can (with the help of Lemma~\ref{lem:affineembdbnd}) account for how to assemble related measurement operators that satisfy the error bound \eqref{eqn:onepassbound} in a straightforward way.  For example, using Theorem 3.1 in \cite{Krahmer2011} along with bounds appearing in that work on the sketching dimension, we have that a sub-sampled and scrambled Fourier matrix is a $(\epsilon,\delta,p)$-JL of vectors in $\mathbbm{R}^n$ provided that
\[
m\geq \frac{C}{\epsilon^2}\log\left(\frac{p}{\delta}\right) \log^4 n.
\]
Using such existing results one can easily update Theorem~\ref{thm:onepass_kron} to instead use sub-sampled and scrambled Fourier measurement matrices $\Omega_{(i,j)}$ and $\Phi_i$ instead of matrices with independent sub-gaussian entries.

Furthermore, it need not be the case that the distribution is the same for each component map $\Omega_{(i,j)}$ or $\Phi_i$. It is important only that each map satisfies the JL primitive for arbitrary sets. We are free to choose a measurement map for a particular mode to suit some other purpose. For example, in the case that the side lengths of the tensor are unequal, we may prefer to choose a map that admits a fast matrix-vector multiply in order to economize run-time for the modes which are long, and on smaller modes, choose maps which have better trade-offs for quality of approximation in terms of $m$ (e.g., we may prefer dense sub-gaussian random matrices for these modes).

On the other hand, a measurement ensemble which does not rely on the Kronecker product of components, like that in Theorem \ref{thm:khatasJL} does not admit this sort of reasoning.  Mixing measurement maps of different kinds in that case has no clear advantage, and indeed, may even serve to undermine the advantages.  Nonetheless, 

Theorem~\ref{thm:onepass_kron_nondist} can still be used to provide Khatri-Rao structured leave-one-out measurement results.  Verifying the requirements of Theorem~\ref{thm:onepass_kron_nondist} for Khatri-Rao measurements using Theorem~\ref{thm:khat_loo_is_pcp} yields the following result. 

\begin{thm}[Error bound for one-pass Khatri-Rao structured sub-gaussian measurements] \label{thm:onepass_khat}
Suppose $\mathcal{X}$ is a $d$-mode tensor with side length $n$, Let $\epsilon > 0$, $\delta \in (0,\frac{1}{3})$, $r\in[n]$. Furthermore, let
\begin{enumerate}
    \item $\Omega_{(i,i)}\in\R^{n \times n}$ be full-rank matrices of any kind,
    \item $\Omega_{(i,j)}\in\R^{\tilde{m}\times n}$ for $i\neq j$ be random matrices with mutually independent, mean zero, variance $\tilde{m}^{-1}$, sub-gaussian entries with
   \begin{align*}   \tilde{m}\geq C^{d-1}  \max\biggl\{ \epsilon^{-2} \ln \left( \frac{\left(\frac{141}{\epsilon}\right)^r 2d}{\delta} \right) ,~  \epsilon^{-1}\left(\ln \left( \frac{\left( \frac{141}{\epsilon}\right)^r 2d}{ \delta}\right)\right)^{d-1},  \\ 
    ~\frac{r}{\epsilon^{2}} \ln \left(\frac{34 n^2 d}{\delta} \right),~ \frac{r}{\epsilon}\left(\ln \left(\frac{34n^2d}{ \delta}\right)\right)^{d-1}\biggr\}, \end{align*}
   and
    \item $\Phi_i \in\R^{m_c \times n}$ be random matrices with mutually independent, mean zero, variance $m^{-1}$, sub-gaussian entries with
    \[m_c\geq \max \left\{C_3 \ln\left(\frac{(94)^{r} d}{\delta}\right),\frac{C_4 r d^2}{\epsilon^2} \ln\left( \frac{ d (r + n^{d-1})^2}{\delta} \right) \right\}.\]
\end{enumerate} 
Then $\mathcal{X}_1 = [\![ \mathcal{H}, Q_1,\dots, Q_d ]\!]$ the output of Algorithm \ref{alg:khat_1pass} (i.e., Algorithm~\ref{alg:loo_one_pass_prime} specialized to Khatri-Rao sub-gaussian leave-one-out measurements $\mathcal{B}_i$ and Kronecker measurements $\mathcal{B}_c$)
 will satisfy \eqref{eqn:onepassbound} with probability at least $1-3\delta$.
\end{thm}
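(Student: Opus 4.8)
The plan is to verify the three hypotheses of Theorem~\ref{thm:onepass_kron_nondist} for the Khatri-Rao structured ensemble, in exactly the same way that the proof of Theorem~\ref{thm:onepass_kron} does for the Kronecker ensemble; the only substantive change is to replace the Kronecker PCP sketching result by its Khatri-Rao counterpart. The first requirement is immediate, since the $\Omega_{(i,i)}$ are assumed full-rank. For the second requirement, instead of invoking Corollary~\ref{lem:measurementsdefinepcp} I would invoke Theorem~\ref{thm:khat_loo_is_pcp} with $\delta \leftarrow \delta/d$, concluding that each sketched unfolding $\tilde{X}_{[j]} = X_{[j]}\Omega_{-j}^T$ is an $(\epsilon,0,r)$-PCP sketch of $X_{[j]}$ provided $\tilde m$ satisfies the stated bound; a union bound over the $d$ modes then gives all $d$ PCP sketch properties simultaneously with probability at least $1-\delta$. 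Here one must check that the hypothesis $0<\delta\le e^{-2}$ of Theorem~\ref{thm:khat_loo_is_pcp} is met after the substitution (indeed after the internal $\delta/2$ split one needs $\delta/(2d)\le e^{-2}$, which holds since $\delta<1/3$ and $d\ge 2$).

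For the third requirement (the AE property of the core matrices $\Phi_i$), the argument is verbatim that of Theorem~\ref{thm:onepass_kron}, because the core measurements are Kronecker sub-gaussian in both theorems. Concretely, I would apply Corollary~\ref{cor:subgOSE} with $\epsilon\leftarrow\frac{1}{2}$ and $\delta\leftarrow\frac{\delta}{d}$ to make each $\Phi_i$ a $(\frac{1}{2},\frac{\delta}{d},r)$-OSE for $Q_i$; apply Corollary~\ref{cor:AMMsetexist} with $\epsilon\leftarrow\frac{\epsilon}{2d\sqrt{r}}$ and $\delta\leftarrow\frac{\delta}{d}$, noting $Q_i$ is $r\times n$ and $X'_i$ from \eqref{eqn:unfolding_core_diff} is $n\times n^{i-1}r^{d-i}$, to give $\Phi_i$ the $(\frac{\epsilon}{2d\sqrt{r}},\frac{\delta}{d})$-AMM property for $Q_i^T$ and $(I-Q_iQ_i^T)X'_i$; and then invoke Lemma~\ref{lem:affineembdbnd} to upgrade these to the $(\frac{\epsilon}{d},r,n^{i-1}r^{d-i})$-AE property with failure probability at most $\frac{2\delta}{d}$ per mode, hence at most $2\delta$ over all $d$ modes by a union bound. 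Taking the maximum of the two resulting lower bounds on $m_c$ and absorbing constants recovers the stated condition on $m_c$. A final union bound over the leave-one-out failure event ($\le\delta$) and the core failure event ($\le 2\delta$) shows that all three hypotheses of Theorem~\ref{thm:onepass_kron_nondist} hold simultaneously with probability at least $1-3\delta$, at which point \eqref{eqn:onepassbound} is exactly the conclusion of that theorem.

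The only step that requires genuine care — and thus the main obstacle — is translating the two JL requirements of Lemma~\ref{lem:PCPpropbyJLReal} (the $(\frac{\epsilon}{6},\frac{\delta}{2},(141/\epsilon)^r)$ and $(\frac{\epsilon}{6\sqrt r},\frac{\delta}{2},16n^2+n)$ JL properties) into the explicit row count for $\tilde m$: one must track how the $\frac{1}{6}$ and $\frac{1}{6\sqrt r}$ $\epsilon$-scalings, the $\frac{\delta}{2}$ (and the extra factor $\frac{1}{d}$), and the cover cardinalities $(141/\epsilon)^r$ and $16n^2+n$ all propagate through the $\max$ in Theorem~\ref{thm:khatasJL} of a quadratic-in-$\epsilon^{-1}$ term against a $(d-1)$-th-power-of-the-log term, producing the four-way $\max$ in the statement. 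Since Theorem~\ref{thm:khat_loo_is_pcp} already packages precisely this computation, the remaining work is purely a bookkeeping of constants, and no new idea beyond those already assembled in Sections~\ref{sec:MainResKhatri-Rao}--\ref{sec:main_putting_it_all_together} is needed.
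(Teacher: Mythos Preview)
Your proposal is correct and follows essentially the same approach as the paper: verify the three hypotheses of Theorem~\ref{thm:onepass_kron_nondist} by (1) assumption, (2) applying Theorem~\ref{thm:khat_loo_is_pcp} with $\delta\gets\delta/d$ and union bounding, and (3) repeating verbatim the $\Phi_i$ argument from the proof of Theorem~\ref{thm:onepass_kron}, then union bounding for the $1-3\delta$ guarantee. Your write-up is in fact more thorough than the paper's, in particular your explicit check that $\delta/d\le e^{-2}$ so that Theorem~\ref{thm:khat_loo_is_pcp} applies.
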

\begin{proof} The proof is again based on verifying the 
requirements of Theorem~\ref{thm:onepass_kron_nondist}.
\begin{enumerate}
    \item The first requirement is satisfied by assumption.
    \item The row requirement for the $\Omega_{(i,j)}$ follows from an application of Theorem \ref{thm:khat_loo_is_pcp} with $\delta \gets \delta/d$.  As a result of doing so, we learn that the second requirement will be satisfied with probability at least $1 - \delta$ after a union bound.
    \item The third requirement is satisfied identically to the argument in the proof of Theorem~\ref{thm:onepass_kron}.
\end{enumerate}
The proof now concludes identically to the proof of Theorem~\ref{thm:onepass_kron}.
\end{proof}

Comparing the Theorem~\ref{thm:kron_runtime}, we note, e.g., that Theorem~\ref{thm:onepass_khat} will require the storage of $d \tilde{m} n+m_c^d$ measurement tensor entries (where the $\tilde{m}$ here is from the second condition of Theorem~\ref{thm:onepass_khat}).  Comparing this to the 
$dnm^{d-1}+m_c^d$ measurements from Theorem~\ref{thm:kron_runtime} (where $m$ here is as in 
Theorem~\ref{thm:onepass_kron}) one can see that there are parameter regimes where Khatri-Rao structured measurements will lead to a smaller overall measurement budget.

\section{Numerical Experiments}

In this section we present numerical results that support our theoretical contributions and address practical trade-offs involved in the different choices for measurement type. (Code and resulting data are available at \burl{https://github.com/cahaselby/leave_one_out_recovery}.) Unless otherwise specified, the tensors in the experiments are random three-mode cubic tensors, with side length $n=300$ and rank $r=10$. We use the following procedure (same as in \cite{Sun2019}) to generate low-rank tensors; the core's entries are uniformly and independently drawn from $[0,1]$ and the factors are formed by first sampling a standard normal distribution for each entry and then normalized and made orthogonal using $QR$-factorization. The data points in the plots are the mean of $100$ independent trials. The parameter $m$ refers to the sketching dimension for maps $\Omega_{(i,j)}\in \R^{m\times n}$ used in recovering factor $Q_i$. For the left-out mode we remove the need to solve the full $n\times n$ linear system by setting $\Omega_{(i,i)}$ to be the $n\times n$ identity matrix. The parameter $m_c$ refers to the sketching dimension for $\Phi_{i}\in \R^{m_c \times n}$ which are used in recovering the core $\mathcal{H}$. 

In experiments with noise, the additive Gaussian noise tensor $\mathcal{N}$ is scaled according to the desired signal to noise ratio and added to the true, (low-rank) tensor $\mathcal{X}_0$. That is, $\mathcal{X} = \mathcal{X}_0+\mathcal{N} $ is the observed, noisy tensor. 

Signal to noise ratio (SNR) is calculated as $10 \log_{10}\left(\norm{\mathcal{X}}{2}/\norm{\mathcal{X}_0 - \mathcal{X}}{2}\right)$. Relative error is calculated as $(\|\hat{\mathcal{X}} - \mathcal{X}
\|_2)/\norm{\mathcal{X}_0}{2}$ where $\hat{\mathcal{X}}$ is the full estimated tensor. 

\subsection{Recovering low-rank tensors}

In this first simple experiment, we fix the signal to noise ratio at $30$ decibels (dB) and vary the sketching dimension $m$ to show the dependence on the accuracy of our estimate on the number of measurements. For each $m$ we set $m_c = 2m$. Rank truncation is fixed at $r=10$, which matches the rank of the true, noiseless tensor $\mathcal{X}_0$, see Figure~\ref{fig:base_result}. For the plot (b) in the figure, we have the maximum principal angle among the three estimated factors and true factors $(Q_i, U_i)$ in degrees, see \cite{Knyazev2002}. Note, there is no straightforward way to plot the relative error which is due to the factor estimates vs. the core estimate, because the decomposition will in general not be unique. However, since principal angle is invariant to non-singular transformations, plot (b) provides empirical evidence that the factor estimates alone are improving with sketching dimension. We note that for these low-rank tensors with noise, we are able to fit at or below the level of noise (relative error of $0.001$) easily - evidently finding good rank $10$ approximations to the (full-rank) noisy tensor $\mathcal{X}$. 
\begin{figure}[ht]
\
\centering
\includegraphics[width=1\textwidth]{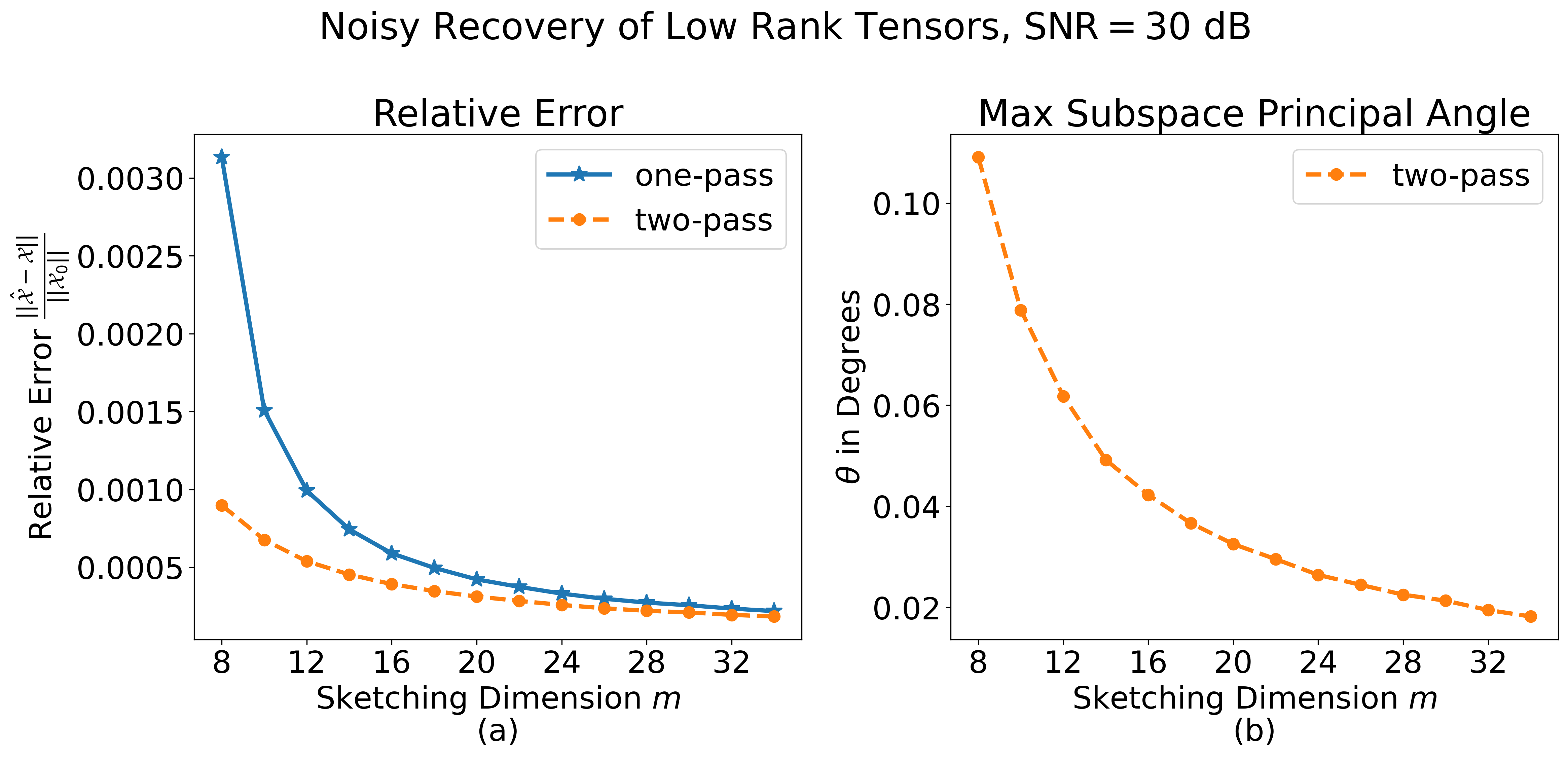}
\caption{Error plots for different sketching dimensions and a fixed SNR of 30 dB and a fixed rank truncation of $10$. Plot (a) compares relative errors for both one-pass and two-pass recovery. Plot (b) shows the maximum principal angle among all estimated sub-spaces and the true factor matrices.}
\label{fig:base_result}
\end{figure}

This perhaps surprising result motivated us to try the method on a class of tensors in which we could be more certain about what quality of rank $10$ approximation is achievable. In our second set of experiments, we examine performance on super-diagonal tensors with tail decay. Since we are truncating to rank 10, this tail can be thought of as structured, deterministic noise. These are tensors where all values are zero except for those on the diagonal, and where all values on the diagonal for indices larger than $r=10$, we have some decay in their magnitude. In particular we have two types, exponential tail decay in plot (a), where
\begin{equation} \label{eqn:super_exp}
\mathcal{X}_{ijk} = \begin{cases} 1 & i=j=k \in [r+1] \\
10^{-1(i - r)} & i=j=k \in [r+2,n] \\
0 & \text{otherwise}
\end{cases}
\end{equation}
and polynomial tail decay in plot (b),
\begin{equation}\label{eqn:super_poly}
   \mathcal{X}_{ijk} = \begin{cases} 1 & i=j=k \in [r+1] \\
(i-r)^{-1} & i=j=k \in [r+2,n] \\
0 & \text{otherwise.}
\end{cases} 
\end{equation}

These highly constrained tensors are clearly not low-rank, however it is reasonable to suppose that a recovery algorithm for a given rank truncation would output an estimate that is close to the leading $r$ terms of the diagonal. The residual in that case will simply be the norm of the tail-sum $\sqrt{\sum_{i=r+1}^n \mathcal{X}_{iii}^2 }$, which we have included as the red horizontal line in Figure~\ref{fig:super_diagonal}.

\begin{figure}[ht]
\
\centering
\includegraphics[width=1\textwidth]{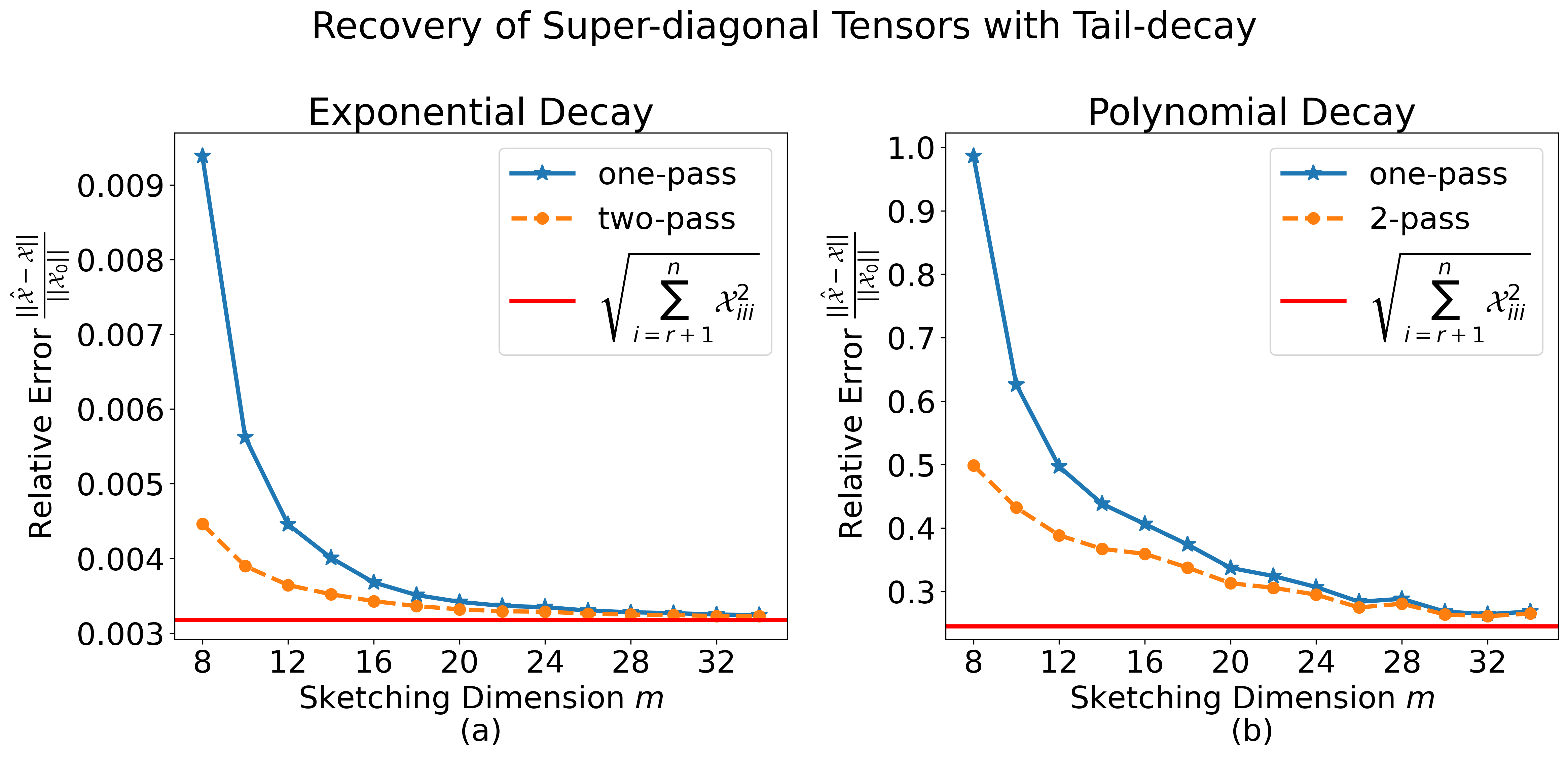}
\caption{Error plots for different sketching dimensions in the noiseless setting with a fixed rank truncation of 10. Plot (a) compares relative errors for both one-pass and two-pass recovery for super-diagonal tensors where the diagonal entries have exponential decay of type \eqref{eqn:super_exp}, and plot (b) super-diagonal tensors with polynomial decay of the type \eqref{eqn:super_poly}}
\label{fig:super_diagonal}
\end{figure}

\subsection{Allocating Core and Factor Measurements}
 One question raised by our error analysis is how to weigh the error contribution between the tasks of estimating the factor matrices and estimating the core. In other words, for a given total measurement budget, how should we allocate between the two tasks if we wish to decrease overall relative error? In the following experiment (see Figure~\ref{fig:corevsfactor}) we find the relative error under various noise levels for pairs of sketching dimensions $(m,m_c)$. We compare pairs $(13,12)$ and $(11,36)$ and $(8,48)$. These choices of sketching dimensions were chosen since they have nearly equal overall compression ratios of 0.57\%, 0.58\%, 0.62\% respectively, however they vary considerably on whether they emphasize measurements to be used in estimating the factors or the core of the tensor. Note that the two-pass error, which relies only on the factor matrix estimates is naturally best when the factor sketches are larger, i.e. the $m=13$ case. However the relative error of the recovered tensor in the one-pass setting is more than ten times better when more of the total measurement budget is allocated to estimate the core as shown in Figure~\ref{fig:corevsfactor}. This shows that in some situations it is preferable to allocate more resources to obtain measurements for the core than the factors, up to some threshold. For example in Figure~\ref{fig:corevsfactor}, the rank of the true signal is 10, and going below this dimension for the factor sketches does correspond with no longer improving on the accuracy in terms of the trade-off between $m$ and $m_c$. 

\begin{figure}[ht]
\centering
\includegraphics[width=1\textwidth]{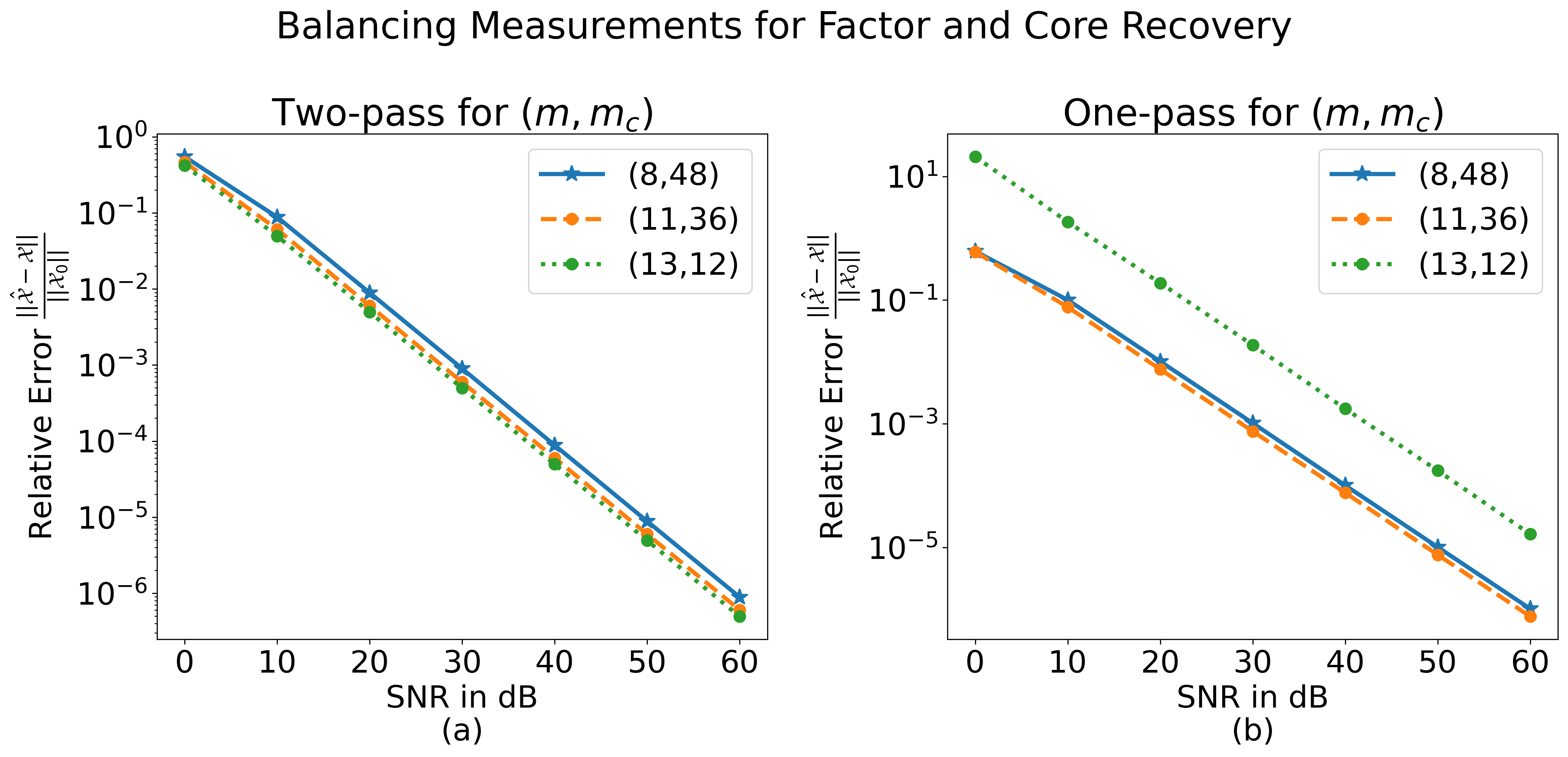}
\caption{Relative error plots at different signal to noise ratios for two-pass and one-pass recovery of noisy low-rank tensors. Ordered pairs indicates the choice for sketching dimensions $(m,m_c)$.}
\label{fig:corevsfactor}
\end{figure}

\subsection{Error bounds apply to sub-gaussian measurement matrices}

In this next experiment we demonstrate, in a similar manner as done in Figure 1 in \cite{Sun2019}, that recovery performance of Algorithm~\ref{alg:khat_1pass} does not vary greatly for different choices of types of sub-gaussian measurement matrices. What is different from that earlier work is that the measurement ensembles are all Kronecker structured. Plotted in Figure~\ref{fig:meas_types} are relative errors for Gaussian (g), sparse uniform from $[-1,0,1]$ with weights $\frac{1}{6},\frac{2}{3},\frac{1}{3}$ (sp0), sub-sampled randomized Fourier transforms as in \cite{Woolfe2008} (rfd)  and a mixed measurement ensemble that uses Gaussian-RFD-sparse measurements where we vary by mode which measurement type is used, which is a scenario that is practically and theoretically not well suited for the Khatri-Rao structured measurement operators used in \cite{Sun2019}. 

\begin{figure}[ht]
\centering
\includegraphics[width=1\textwidth]{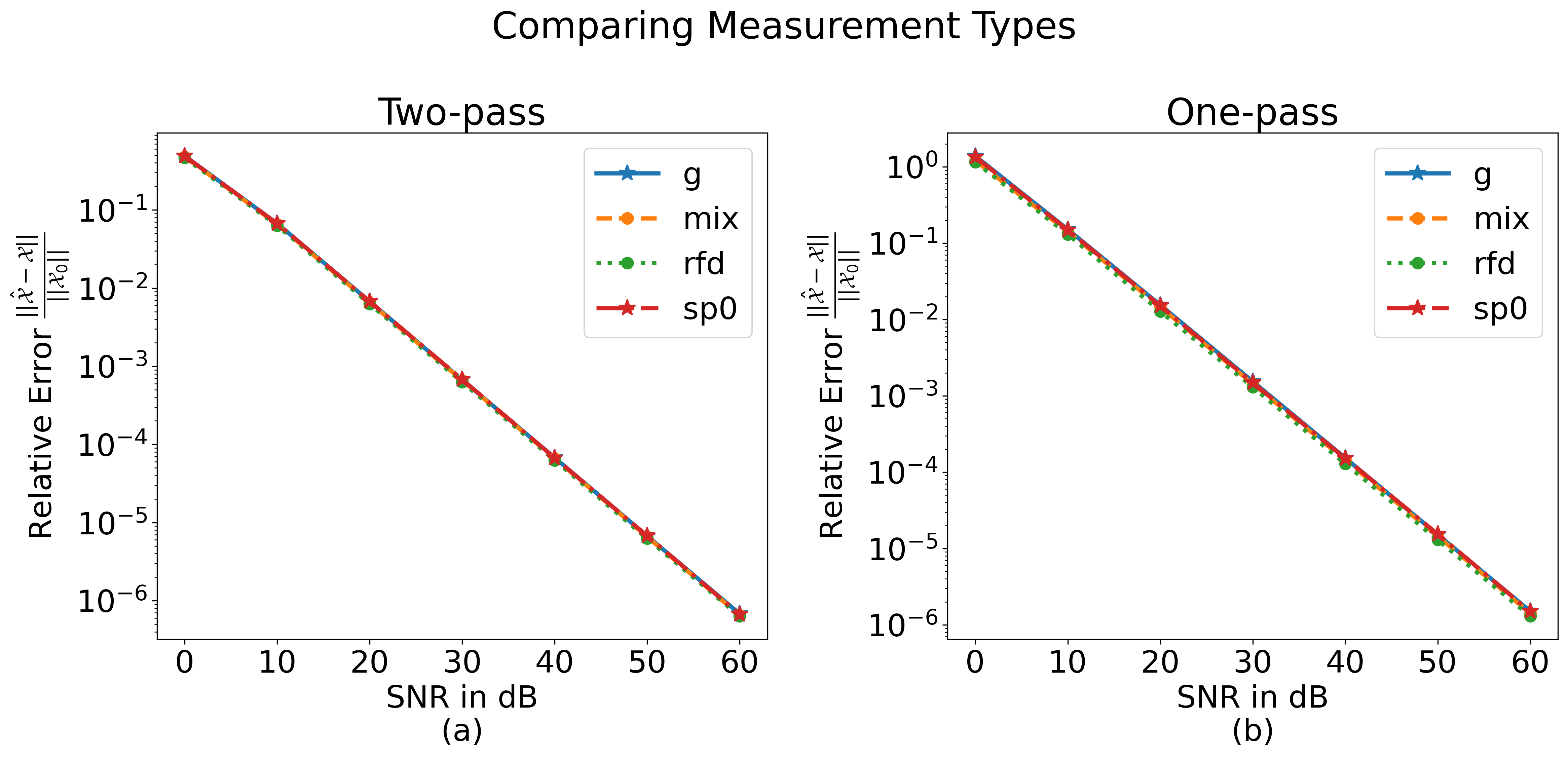}
\caption{Relative errors for one-pass, two-pass for Kronecker measurement ensembles made up of different kinds of sub-gaussian random matrices. The legend g, sp0, rfd, mix correspond to Gaussian, sparse, sub-sampled random Fourier transform, and a mixture of all three for the measurement ensembles.}
\label{fig:meas_types}
\end{figure}

\subsection{Comparison to Khatri-Rao}

\begin{figure}[ht]
\centering
\includegraphics[width=1\textwidth]{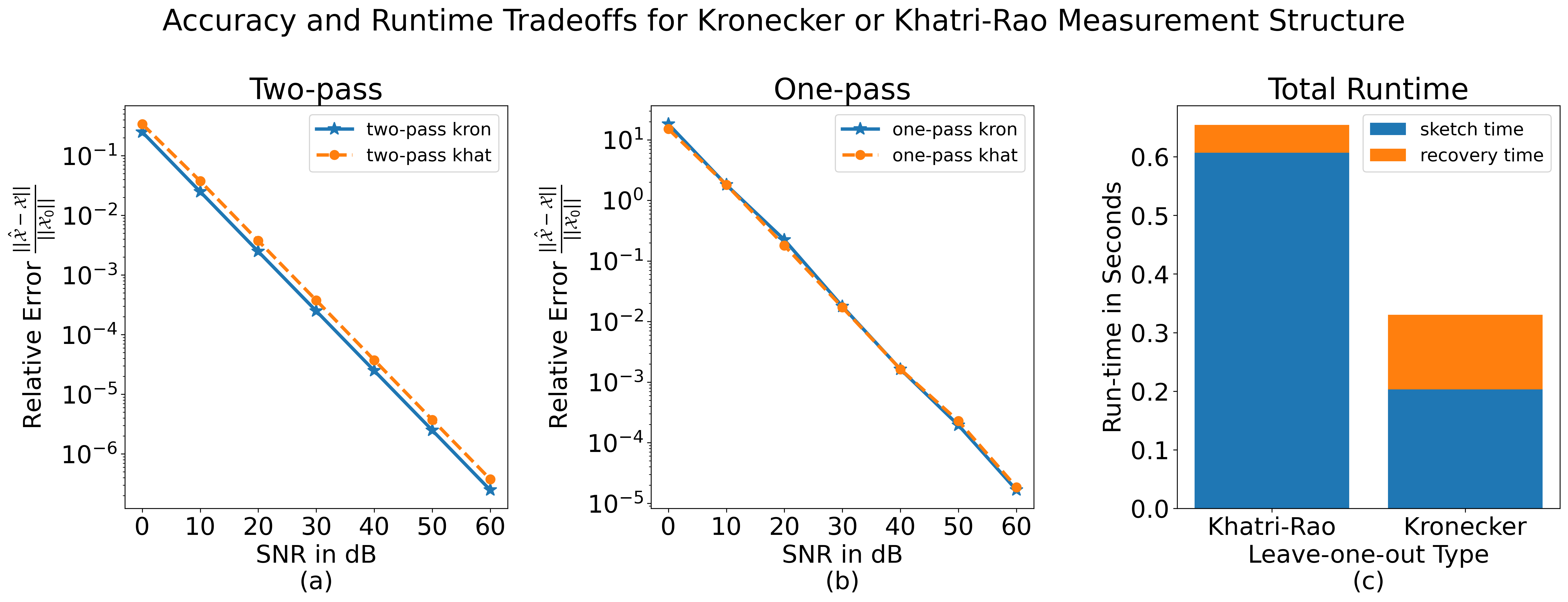}
\caption{Relative error comparison for Khatri-Rao and Kronecker-structured measurements. Right hand figure shows the average time for sketching and recovery phases of the algorithm.}
\label{fig:khatvskron}
\end{figure}
This set of experiments demonstrates that the sketching phase will dominate the run-time of  Algorithm~\ref{alg:loo_one_pass_prime} regardless of the choice of leave-one-out type, however the Kronecker-structured measurements are able to generate more measurements for a fixed number of operations as compared to Khatri-Rao structured measurements, see Figure~\ref{fig:khatvskron}. This means that it is possible to achieve similar or better performance using strictly modewise measurements and in less overall time as problems grow in size with respect to total number of tensor elements; i.e. both number of modes and length of those modes. In Figure~\ref{fig:khatvskron} for the Kronecker-structured measurements we sketch to $m=25$, for the Khatri-Rao ensemble we sketch pairs of modes to $225$. We see that the Kronecker measurements perform incrementally better in terms of relative error but at less than half the overall run-time. Sketching times are about five times faster for the Kronecker-structured measurements as compared to the Khatri-Rao. Note that this does trade speed for space - the total number of entries in the leave one out measurements is nearly three times larger for the Kronecker-structured measurements versus the Khatri-Rao, i.e. sketches $B_i$ as per \eqref{eq:kronecker-sketch} and \eqref{eqn:loo_khat} have sizes $300\times 25^2$, and $300 \times 225$ respectively.

\subsection{Application to Video Summary Task}

As a practical demonstration, we consider the same video summary task first described in \cite{Malik2018} and again in \cite{Sun2019}. In this demonstration, the video is taken with a camera in a fixed position. The video is a nature scene and a person walks in front of the camera at two different time points in the second half of the video. The first 100 and the last 193 frames are removed since they include setup that results in small shifts of the camera. The entire video has been converted to grayscale. This yields a three mode tensor of dimensions $2200 \times 1080 \times 1980$ which has a size of about 41 GB when stored as an array of doubles. We wish to identify the parts of the scene that include the person walking and distinguish them from the relatively static scene elsewhere. As discussed in \cite{Sun2019}, there is a third salient time varying feature in this particular video, which is the light intensity of the scene, since at around frame 940 the scene darkens. Furthermore there are changes in the light intensity as the camera automatically adjusts after the person walks in and out of frame. For this reason, we cluster the frames using three centers, rather than two.

In all cases, we use $k$-means to cluster the frames, however we assign features to frames in four different ways:
\begin{enumerate}
    \item Using the sketch $B_1 \in \R^{2200 \times 20^2}$, as in \eqref{equ:BRKSfactors} that leaves out the time dimension, then clustering using $k$-means on the rows of the unfolding of the sketch along the first, temporal mode. 
    \item Unfolding the temporal mode of the reconstructed tensor using a one-pass set of measurements, i.e. $\left(X_1\right)_{[1]}\in\R^{2200 \times 2138400}$ (recall that $\mathcal{X}_1$ denotes the output of Algorithm~\ref{alg:loo_one_pass_prime}).
    \item Unfolding the reconstructed tensor using a two-pass scenario, $\left(X_2\right)_{[1]}\in\R^{2200 \times 2138400}$ (recall that $\mathcal{X}_2$ denotes the output of Algorithm~\ref{alg:kron_2pass}). 
    \item Estimated temporal factor matrix $U_1 \in \R^{2200\times 20}$ (see in Algorithm~\ref{alg:loo_one_pass_prime}). 
\end{enumerate}
As we can see in Figures~\ref{fig:frame_classes}, \ref{fig:frame_ref}, the sketch alone shows reliable clustering of the main temporal changes in the video, which verifies the observation in \cite{Sun2019} about using the measurements as an effective feature set for clustering, although in that case the measurements were Khatri-Rao structured whereas ours are Kronecker-structured.
The unfoldings of the reconstructed tensor also reliably distinguishes the main parts of the scene. The reconstruction is useful at least to get clusted interpretability. Although certainly natural to wish to cluster on the temporal factor, this method appears inferior to any of the preceding. 

As an added advantage of using the modewise, Kronecker structured measurements, we can in principle select measurement maps for different modes. Gaussian measurement maps theoretically have some advantages over other types in terms of accuracy for a fixed number of measurements, whereas applying RFD or other Fourier-like transforms to modes that have longer fibers would net a better payoff in terms of overall run-time because of the faster matrix-vector multiply permitted by these structured matrices. In this demonstration, we use Gaussian matrices along the spatial modes, and $RFD$ matrices for the temporal mode. 

 In the earlier work \cite{Malik2018}, the authors describe a variant of Tucker-Alternating Least Squares (aka Higher Order Orthogonal Iteration, multi-pass scenario) that employed TensorSketch to produce the necessary measurements used to reconstruct the same video tensor data we have used here. In the subsequent work \cite{Sun2019}, those authors again perform the same task, but use a single-pass approach which fits the framework we have described as Algorithm~\ref{alg:loo_one_pass_prime}, where the measurement matrices are Khatri-Rao structured, and the $\Omega_i$ have entries drawn from standard Gaussian distribution. Furthermore, analysis of the type afforded by Theorem~\ref{thm:onepass_khat} may also explain the discrepancy between the sketching dimensions seen in \cite{Malik2018} and \cite{Sun2019}. Naturally there are several differences between the approaches, but the CountSketch matrices used in TensorSketch operators as shown in \cite{Woodruff2019} have a $\mathcal{O} \left(\frac{1}{\delta}\right)$ dependency in order to ensure the OSE property, whereas the other ensembles, such as dense Gaussians, enjoy an $\mathcal{O}\left(\log \frac{1}{\delta}\right)$ dependence for this parameter.

\begin{figure}[ht]
\centering
  \subfloat[]{\includegraphics[width=1\textwidth]{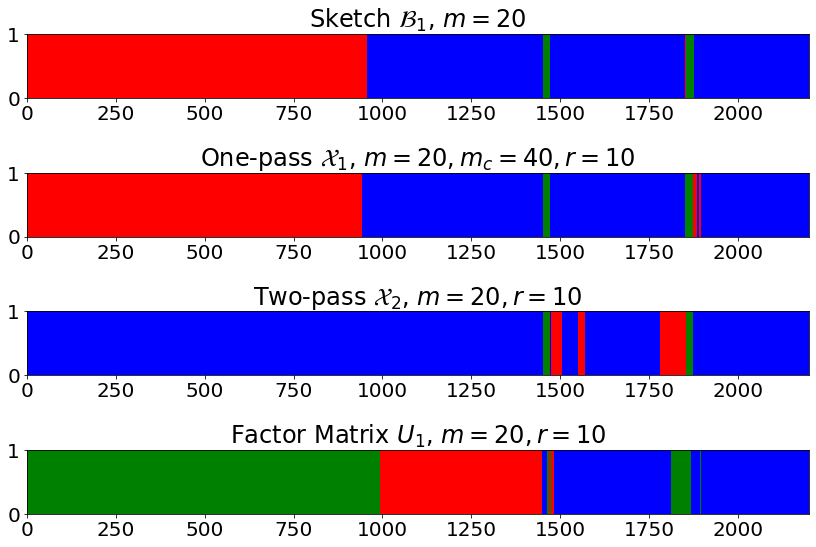} \label{fig:frame_classes}}   \\
  \subfloat[]{\includegraphics[width=1\textwidth]{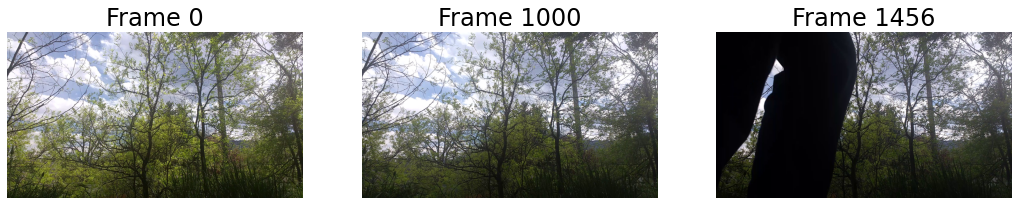} \label{fig:frame_ref}}  
 \caption{(a) Cluster assignments for the 2200 frames in the video. Top run corresponds to using the measurements for the first mode only $B_1$, middle rows use one and two pass approximations of the tensor, and the last row uses the factor matrix $U_1$ for the temporal mode only. We use Gaussian sketching matrices for both spatial modes and the real part of $RFD$ for the temporal mode. Sketching parameters are $m=20$, $m_c = 40$ and rank truncation of $r=10$ in all modes. (b) Three reference frames at 0, 1000 and 1496.} 
 \label{fig:video}
\end{figure}

As was discussed in \cite{Sun2019}, the video is not especially low-rank in practice - in particular along the spatial dimensions in terms of relative error of the reconstruction. However the clusters appear distinct enough that assigning clusters with this summary type of information is still possible.
\begin{figure}[ht]
\centering
\includegraphics[width=1\textwidth]{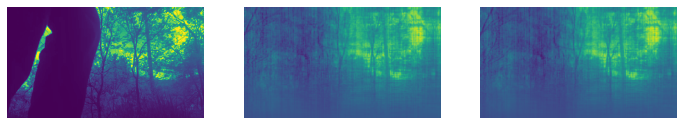}
\caption{The 1456th frame of the grey-scale video is shown for the original, the one-pass, and two-pass reconstructions using sketching dimension of $m=300, m_c = 601$ and $r=50$ for each mode. Although the reconstructions for this choice of sketching dimension and rank truncation are not particularly accurate for this video, nevertheless the reconstructions provide enough information to perform the summary task of clustering the frames into the major temporal changes that occur during the scene.}
\label{fig:video_error}
\end{figure}

\section*{Declarations} 

\textbf{Funding:} Cullen Haselby and Mark A. Iwen were partially supported by NSF DMS 2106472. Deanna Needell and William Swartworth ere partially supported by NSF DMS 2011140 and NSF DMS 2108479. Elizaveta Rebrova was partially supported by NSF DMS 2309685 and NSF DMS 2108479.

\textbf{Conflicts of Interest:} The authors declare no competing interests.
 
\begin{appendices}
\section{Technical Proofs}
\label{appendix:PCPproofs}

Herein we provide proofs for selected results from Section~\ref{sec:RandNumAlgReview}.

\subsection{Proof of Lemma~\ref{lem:fastmatmulbyJLReal}}

Our proof of Lemma~\ref{lem:fastmatmulbyJLReal} will utilize several intermediate lemmas.  Our first lemma concerning the approximate preservation of inner products is a slight generalization of \cite[Corollary 2]{arriaga2006algorithmic}.

\begin{lem}[The JL property implies angle preservation]
	\label{lem:polarizationJL}
	 Let $S\subset \C^n$ with cardinality at most $p$ and $\epsilon\in(0,1)$. If a random matrix $\Omega\in\C^{m\times n}$ has the $(\epsilon / 4,\delta, 4 p^2)$-JL property for 
	\[
		S^{\prime} = \left\{\frac{\vb{x}}{\|\vb{x}\|_2} + \frac{\vb{y}}{\|\vb{y}\|_2}, \frac{\vb{x}}{\|\vb{x}\|_2} - \frac{\vb{y}}{\|\vb{y}\|_2}, \frac{\vb{x}}{\|\vb{x}\|_2} + i\frac{\vb{y}}{\|\vb{y}\|_2}, \frac{\vb{x}}{\|\vb{x}\|_2} - i \frac{\vb{y}}{\|\vb{y}\|_2} ~\big \vert~ \vb{x},\vb{y} \in S\right\},
	\] 
 then
 \begin{equation} \label{eqn:JLforinnerprodonsetdiff}
 |\langle \Omega\vb{x} ,\Omega\vb{y} \rangle - \langle\vb{x},\vb{y}\rangle| \leq \epsilon \|\vb{x}\|_2 \|\vb{y}\|_2~~\forall \vb{x},\vb{y}\in S 
 \end{equation}
 will be satisfied with probability at least $1 - \delta$.
\end{lem}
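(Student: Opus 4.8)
The plan is to reduce to unit vectors and then apply the complex polarization identity, invoking the assumed $(\epsilon/4,\delta,4p^2)$-JL property of $\Omega$ on the four combination vectors that make up $S'$. First I would observe that $|S'| \le 4p^2$ (there are four combinations for each of the at most $p^2$ ordered pairs $(\vb{x},\vb{y}) \in S \times S$), so the hypothesis guarantees that, with probability at least $1-\delta$, the two-sided inequality \eqref{equ:JLproperty} with distortion parameter $\epsilon/4$ holds simultaneously for every vector of $S'$; all of the deductions below take place on this single event. Fixing $\vb{x},\vb{y} \in S$: if either is $\vb{0}$ then both sides of \eqref{eqn:JLforinnerprodonsetdiff} vanish and there is nothing to prove, so I would assume both are nonzero and set $\vb{u} := \vb{x}/\|\vb{x}\|_2$, $\vb{v} := \vb{y}/\|\vb{y}\|_2$. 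Since $\Omega$ is $\C$-linear and the norms are real scalars, $\langle \Omega\vb{x},\Omega\vb{y}\rangle = \|\vb{x}\|_2\|\vb{y}\|_2\,\langle \Omega\vb{u},\Omega\vb{v}\rangle$ and likewise $\langle \vb{x},\vb{y}\rangle = \|\vb{x}\|_2\|\vb{y}\|_2\,\langle \vb{u},\vb{v}\rangle$, so it suffices to establish $|\langle \Omega\vb{u},\Omega\vb{v}\rangle - \langle \vb{u},\vb{v}\rangle| \le \epsilon$ for unit vectors $\vb{u},\vb{v}$.

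Next I would use the polarization identities $4\,\mathrm{Re}\langle \vb{a},\vb{b}\rangle = \|\vb{a}+\vb{b}\|_2^2 - \|\vb{a}-\vb{b}\|_2^2$ and $4\,\mathrm{Im}\langle \vb{a},\vb{b}\rangle = \|\vb{a}+i\vb{b}\|_2^2 - \|\vb{a}-i\vb{b}\|_2^2$, applied both with $(\vb{a},\vb{b})=(\vb{u},\vb{v})$ and with $(\vb{a},\vb{b})=(\Omega\vb{u},\Omega\vb{v})$ (the latter being legitimate since $\Omega(\vb{u}\pm i\vb{v}) = \Omega\vb{u}\pm i\Omega\vb{v}$). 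The distortion bound applied to the vectors $\vb{u}+\vb{v},\vb{u}-\vb{v} \in S'$ gives $\bigl|\,\|\Omega(\vb{u}\pm\vb{v})\|_2^2 - \|\vb{u}\pm\vb{v}\|_2^2\,\bigr| \le \tfrac{\epsilon}{4}\|\vb{u}\pm\vb{v}\|_2^2$; combining this with the parallelogram law $\|\vb{u}+\vb{v}\|_2^2 + \|\vb{u}-\vb{v}\|_2^2 = 2\|\vb{u}\|_2^2 + 2\|\vb{v}\|_2^2 = 4$ and the triangle inequality yields $|\mathrm{Re}\langle \Omega\vb{u},\Omega\vb{v}\rangle - \mathrm{Re}\langle \vb{u},\vb{v}\rangle| \le \epsilon/4$. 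The identical computation with $\vb{u}\pm i\vb{v} \in S'$ (using $\|\vb{u}+i\vb{v}\|_2^2 + \|\vb{u}-i\vb{v}\|_2^2 = 4$) gives $|\mathrm{Im}\langle \Omega\vb{u},\Omega\vb{v}\rangle - \mathrm{Im}\langle \vb{u},\vb{v}\rangle| \le \epsilon/4$.

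Finally I would combine the real and imaginary estimates via $|z| \le |\mathrm{Re}\,z| + |\mathrm{Im}\,z|$ to get $|\langle \Omega\vb{u},\Omega\vb{v}\rangle - \langle \vb{u},\vb{v}\rangle| \le \epsilon/2 \le \epsilon$, and then rescale by $\|\vb{x}\|_2\|\vb{y}\|_2$ to recover \eqref{eqn:JLforinnerprodonsetdiff} for the pair $(\vb{x},\vb{y})$. Since every step is valid on the one event of probability at least $1-\delta$ on which \eqref{equ:JLproperty} holds over all of $S'$, and $(\vb{x},\vb{y})\in S$ was arbitrary, this completes the argument. I do not expect a genuine obstacle: the only points needing mild care are the degenerate zero-vector case, the sign conventions in the complex polarization identity (which affect only which combination recovers the imaginary part), and the cardinality count $|S'|\le 4p^2$ — beyond these it is a direct computation once the reduction to unit vectors is in place.
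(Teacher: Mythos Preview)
Your proposal is correct and follows essentially the same approach as the paper: reduce to unit vectors, handle the zero case, apply the (complex) polarization identity, and invoke the $(\epsilon/4)$-JL bound on the four combination vectors in $S'$. The only cosmetic difference is that the paper writes the polarization identity as a single sum $\tfrac14\sum_{\ell=0}^3 i^\ell(\cdot)$ and bounds each term by $\tfrac{\epsilon}{4}(\|\vb u\|_2+\|\vb v\|_2)^2=\epsilon$, obtaining exactly $\epsilon$, whereas you split into real and imaginary parts and use the parallelogram law to get the slightly sharper $\epsilon/2$.
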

\begin{proof}
Note that if either $\vb{x} = \vb{0}$ or $\vb{y} = \vb{0}$, then \eqref{eqn:JLforinnerprodonsetdiff} automatically holds because $0 \leq 0$.  Thus, suppose without loss of generality that $\vb{x},\vb{y} \neq 0$. Considering the normalizations $\vb{u} = \frac{\vb{x}}{\|\vb{x}\|_2}, \vb{v} = \frac{\vb{y}}{\|\vb{y}\|_2}$, one can see that the polarization identity implies that
		\begin{align*}
		\abs{\langle\Omega\vb{u},\Omega\vb{v} \rangle - \langle\vb{u},\vb{v} \rangle} &= \abs{ \frac{1}{4} \sum_{\ell = 0}^{3} i^\ell \left(\|\Omega\vb{u}+i^{\ell}\Omega\vb{v}\|_2^2 - \|\vb{u}+i^{\ell}\vb{v}\|_2^2 \right)} \\
		&\leq \frac{1}{4}\sum_{\ell=0}^{3} \frac{\epsilon}{4} \left( \|\vb{u}\|_2 + \|\vb{v}\|_2\right)^2 ~=~ \epsilon
		\end{align*}
 will hold whenever \eqref{equ:JLproperty} holds with $S \leftarrow S'$ and $\epsilon \leftarrow \epsilon / 4$.  The result now follows by renormalizing.  
\end{proof}

\begin{rmk}
\label{Rem:RealPolarizationHelpsalittle}
Note that if $S\subset \R^n$ it suffices for a random matrix $\Omega \in \R^{m \times n}$ to have the $(\epsilon / 2,\delta, 2 p^2)$-JL property for a smaller set $S' \subset \R^n$ in Lemma~\ref{lem:polarizationJL}.  This can be seen by using the real version of the polarization identity instead of the complex version.
\end{rmk}

The next lemma constructs a set $S$ to utilize in Lemma~\ref{lem:polarizationJL} based on two matrices with normalized columns.  The end result is an entrywise approximate matrix multiplication property for the two column-normalized matrices in question.

\begin{lem}[The JL property allows approximate matrix multiplies for unitary matrices] \label{lem:compresseduntrymatmul}
	Let $V\in\C^{n\times p}$ and $U\in\C^{n\times q}$ have unit $\ell^2$-normalized columns. Suppose that $\Omega\in\C^{m \times n}$ satisfies \eqref{eqn:JLforinnerprodonsetdiff} from Lemma \ref{lem:polarizationJL} where $S=\left\{\vb{u}_j \vert \vb{u}_j = U[:,j]\right\}\cup\left\{\vb{v}_j \vert \vb{v}_j = V[:,j]\right\}$. Then
	\[
		\abs{\left(V^*\Omega^* \Omega U - V^* U\right)_{k,j}} \leq \epsilon,\, \textrm{ for all } 1 \leq k \leq p~\textrm{ and }~ 1 \leq j \leq q.
	\]
\end{lem}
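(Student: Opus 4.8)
The plan is to observe that the claimed entrywise bound is nothing more than \eqref{eqn:JLforinnerprodonsetdiff} read off one pair of columns at a time. First I would rewrite the $(k,j)$ entry of $V^*\Omega^*\Omega U$ as a Hermitian inner product: by the definition of matrix multiplication, $\left(V^*\Omega^*\Omega U\right)_{k,j} = (\Omega \vb{v}_k)^* (\Omega \vb{u}_j) = \inner{\Omega \vb{v}_k}{\Omega \vb{u}_j}$, where $\vb{v}_k := V[:,k]$ and $\vb{u}_j := U[:,j]$ are the relevant columns. Similarly, $\left(V^* U\right)_{k,j} = \vb{v}_k^* \vb{u}_j = \inner{\vb{v}_k}{\vb{u}_j}$. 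Subtracting, $\left(V^*\Omega^*\Omega U - V^*U\right)_{k,j} = \inner{\Omega \vb{v}_k}{\Omega \vb{u}_j} - \inner{\vb{v}_k}{\vb{u}_j}$.

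Next I would invoke the hypothesis. Since $\vb{v}_k,\vb{u}_j \in S$ by the construction of $S$, the assumed property \eqref{eqn:JLforinnerprodonsetdiff} applies with $\vb{x} \leftarrow \vb{v}_k$ and $\vb{y} \leftarrow \vb{u}_j$, yielding $\abs{\inner{\Omega \vb{v}_k}{\Omega \vb{u}_j} - \inner{\vb{v}_k}{\vb{u}_j}} \leq \epsilon \norm{\vb{v}_k}{2}\norm{\vb{u}_j}{2}$. Finally, because both $U$ and $V$ have unit $\ell^2$-normalized columns we have $\norm{\vb{v}_k}{2} = \norm{\vb{u}_j}{2} = 1$, so the right-hand side collapses to $\epsilon$. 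Since $k$ and $j$ were arbitrary, this gives the stated bound for all $1 \le k \le p$ and $1 \le j \le q$.

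There is essentially no obstacle here beyond careful bookkeeping: one must be sure that the entry of $V^*\Omega^*\Omega U$ is genuinely the Hermitian inner product $\inner{\cdot}{\cdot}$ in the same convention as in \eqref{eqn:JLforinnerprodonsetdiff} (using the adjoint $^*$, not the ordinary transpose), and that the column indices are matched correctly as $k \leftrightarrow V$ and $j \leftrightarrow U$. This per-entry estimate is exactly the input needed for the subsequent Frobenius-norm aggregation step used to establish the full AMM guarantee of Lemma~\ref{lem:fastmatmulbyJLReal}.
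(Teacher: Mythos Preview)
Your proof is correct and follows essentially the same approach as the paper: identify the $(k,j)$ entry as an inner product of the corresponding columns, apply \eqref{eqn:JLforinnerprodonsetdiff}, and use the unit-norm assumption to reduce the bound to $\epsilon$. The only cosmetic difference is the order of arguments in the inner product (and the paper additionally remarks that $|S|=p+q$, hence $|S'|\le 4(p+q)^2$), but this does not affect the argument.
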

\begin{proof}
	Note that $\abs{S} = p + q$. Thus, $\abs{S^{\prime}} \leq 4(p+q)^2$ in Lemma \ref{lem:polarizationJL}.
	Furthermore, 
	\[\Omega V = \left( \begin{matrix}
		\vrule & \vrule & \dots & \vrule \\
		\Omega \vb{v}_1 & \Omega\vb{v}_2 & \dots & \Omega\vb{v}_p \\
		\vrule & \vrule & \dots & \vrule 
	\end{matrix}\right),\, \textrm{ and } \Omega U = \left( \begin{matrix}
		\vrule & \vrule & \dots & \vrule \\
		\Omega \vb{u}_1 & \Omega\vb{u}_2 & \dots & \Omega\vb{u}_q \\
		\vrule & \vrule & \dots & \vrule 
	\end{matrix}\right).\]
	Hence, $\left(\left(\Omega V\right)^* \Omega U\right)_{k,j} = \langle \Omega\vb{u}_j, \Omega\vb{v}_k \rangle$. Therefore, given Lemma \ref{lem:polarizationJL}, for all choices of $k,j$ we have
	\begin{align*}
		\abs{\left(V^*\Omega^* \Omega U - V^* U\right)_{k,j}} & = \abs{\langle \Omega\vb{u}_j,  \Omega \vb{v}_k \rangle - \langle \vb{u}_j, \vb{v}_k \rangle} ~\leq~ \epsilon	\|\vb{v}_k \|_2 \|\vb{u}_j\|_2 ~=~ \epsilon.	
	\end{align*}
\end{proof}

The next lemma constructs a new set $S$ to utilize in Lemma~\ref{lem:polarizationJL} by selecting a well chosen subset of the singular vectors of both $A$ and $B$.  This set will ultimately determine how the finite set $S$ promised by Lemma~\ref{lem:fastmatmulbyJLReal} depends on $A$ and $B$.  As we shall see, it's proven by applying  Lemma~\ref{lem:compresseduntrymatmul} to two unitary matrices provided by the SVDs of $A$ and $B$.

\begin{lem}[The JL property implies the AMM property for arbitrary matrices] \label{lem:fastmatmulbyJL}
Let $A\in\C^{p\times n}$ and $B\in\C^{n\times q}$ have SVDs given by $A = U_1 \Sigma_1 V^*$ and $B =  U \Sigma_2 V^*_2$, and suppose that $\Omega\in\C^{m\times n}$ satisfies the conditions of Lemma~\ref{lem:compresseduntrymatmul} for $U$ and $V$. Then, 
\[
\|A\Omega^* \Omega B - AB\|_F \leq \epsilon \|A\|_F \|B\|_F
\]
\end{lem}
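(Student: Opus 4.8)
The plan is to substitute the two given singular value decompositions into $A\Omega^*\Omega B - AB$, factor the outer unitary pieces out of a Frobenius norm, and reduce everything to the entrywise estimate already furnished by Lemma~\ref{lem:compresseduntrymatmul}. Writing $A = U_1\Sigma_1 V^*$ and $B = U\Sigma_2 V_2^*$, a direct computation gives
\[
A\Omega^*\Omega B - AB = U_1\Sigma_1\left(V^*\Omega^*\Omega U - V^*U\right)\Sigma_2 V_2^*.
\]
Since $U_1$ has orthonormal columns and $V_2$ has orthonormal columns (so $V_2^*$ has orthonormal rows), left multiplication by $U_1$ and right multiplication by $V_2^*$ preserve the Frobenius norm. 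Hence $\norm{A\Omega^*\Omega B - AB}{F} = \norm{\Sigma_1 E\Sigma_2}{F}$, where $E := V^*\Omega^*\Omega U - V^*U$.

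Next I would apply Lemma~\ref{lem:compresseduntrymatmul} with its matrices ``$V$'' and ``$U$'' taken to be the right-singular-vector matrix $V$ of $A$ and the left-singular-vector matrix $U$ of $B$; these have unit $\ell^2$-normalized (indeed orthonormal) columns, and the hypothesis of the present lemma is exactly that $\Omega$ satisfies the conditions needed to invoke that lemma for these $U$ and $V$. This yields $\abs{E_{k,j}}\le\epsilon$ for every entry. Because $\Sigma_1$ and $\Sigma_2$ are diagonal with the singular values $\sigma_i(A)$ and $\sigma_j(B)$ on their diagonals, we have $(\Sigma_1 E\Sigma_2)_{i,j} = \sigma_i(A)\,E_{i,j}\,\sigma_j(B)$, so
\[
\norm{\Sigma_1 E\Sigma_2}{F}^2 = \sum_{i,j}\sigma_i(A)^2\,\abs{E_{i,j}}^2\,\sigma_j(B)^2 \leq \epsilon^2\Bigl(\sum_i \sigma_i(A)^2\Bigr)\Bigl(\sum_j \sigma_j(B)^2\Bigr) = \epsilon^2\norm{A}{F}^2\norm{B}{F}^2,
\]
using $\norm{A}{F}^2 = \sum_i\sigma_i(A)^2$ and $\norm{B}{F}^2 = \sum_j\sigma_j(B)^2$. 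Taking square roots finishes the argument.

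The argument is short and I do not anticipate a genuine obstacle; the only place to take care is the dimensional bookkeeping. One should fix a convention (thin versus full SVD) so that the factorization $U_1\Sigma_1 E\Sigma_2 V_2^*$ is well-formed and so that ``orthonormal columns'' is the correct statement to cite for the unitary-invariance step, and one should check that the index set over which Lemma~\ref{lem:compresseduntrymatmul} delivers the entrywise bound coincides with the index set appearing in the expansion of $\norm{\Sigma_1 E\Sigma_2}{F}^2$ — which it does, since both range over the columns of $V$ (the rows of $E$) and the columns of $U$ (the columns of $E$).
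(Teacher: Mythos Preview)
Your proposal is correct and follows essentially the same approach as the paper's proof: substitute the SVDs, strip off $U_1$ and $V_2^*$ by unitary invariance of the Frobenius norm, expand $\norm{\Sigma_1 E \Sigma_2}{F}^2$ entrywise, and apply the $|E_{k,j}|\le\epsilon$ bound from Lemma~\ref{lem:compresseduntrymatmul}. The paper's write-up is nearly line-for-line the same, so your caution about dimensional bookkeeping is the only thing to double-check when you write it up cleanly.
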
 
\begin{proof}
We will expand the quantity of interest according the SVD of $A$ and $B$.  Doing so we see that
	\begin{align*}
		\|A\Omega^* \Omega B - AB\|_F &= \| U_1\Sigma_1 V^* \Omega^* \Omega U \Sigma_2 V_2^* -  U_1\Sigma_1 V^*U \Sigma_2 V_2^*\|_F \\
		&= \| U_1\Sigma_1\left( V^* \Omega^* \Omega U  -   V^*U \right) \Sigma_2 V_2^*\|_F \\	
		&= \| \Sigma_1\left( V^* \Omega^* \Omega U  -   V^*U \right) \Sigma_2 \|_F \\	
		&= \sqrt{\sum_{k=1}^p\sum_{j=1}^q  \left(\Sigma_1\right)_{k,k}^2 \abs{V^* \Omega^* \Omega U  -   V^*U}_{k,j}^2 \left(\Sigma_2\right)_{j,j}^2 }  \\
		&\leq \sqrt{\sum_{k=1}^p\sum_{j=1}^q  \sigma_k (A)^2 \epsilon^2 \sigma_j (B)^2 }  \\
		&= \epsilon \sqrt{\sum_{k=1}^p \sigma_k (A)^2} \sqrt{\sum_{j=1}^q \sigma_j (B)^2 }  \\	
		&= \epsilon \|A\|_F \|B\|_F.	
\end{align*}
\end{proof}

Lemmas \ref{lem:polarizationJL},~\ref{lem:compresseduntrymatmul}, and~\ref{lem:fastmatmulbyJL} now collectively prove the following generalized version of Lemma~\ref{lem:fastmatmulbyJLReal}.

\begin{lem}[The JL property provides the AMM property] \label{lem:fastmatmulbyJLComplex}
Let $A\in\C^{p\times n}$ and $B\in\C^{n\times q}$.  There exists a finite set $S \subset \C^n$ with cardinality $|S| \leq 4(p + q)^2$ (determined entirely by $A$ and $B$) such that the following holds:   If a random matrix $\Omega\in\C^{m\times n}$ has the $(\epsilon/4, \delta, 4(p + q)^2 )$-JL property for $S$, then $\Omega$ will also have the $(\epsilon, \delta)$-AMM property for $A$ and $B$.
\end{lem}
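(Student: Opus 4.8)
The plan is to assemble the three intermediate lemmas just established (Lemmas~\ref{lem:polarizationJL}, \ref{lem:compresseduntrymatmul}, and~\ref{lem:fastmatmulbyJL}) into a single statement, taking the set $S$ appearing in the conclusion to be the polarization set built from the singular vectors of $A$ and $B$. First I would fix reduced singular value decompositions $A = U_1 \Sigma_1 V^*$ and $B = U \Sigma_2 V_2^*$, chosen so that $V$ (the right singular vectors of $A$) has at most $p$ columns and $U$ (the left singular vectors of $B$) has at most $q$ columns. Let $S_0 \subset \C^n$ be the union of the columns of $V$ and the columns of $U$; then $S_0$ depends only on $A$ and $B$, and $|S_0| \leq p + q$ regardless of the ranks of $A$ and $B$, so no separate treatment of rank-deficient cases is needed. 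I would then let $S$ be the polarization set $S'$ associated to $S_0$ exactly as in Lemma~\ref{lem:polarizationJL}, so that $|S| \leq 4 |S_0|^2 \leq 4(p+q)^2$, again a quantity determined entirely by $A$ and $B$.

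Next I would chain the implications on a single high-probability event. Assume $\Omega \in \C^{m\times n}$ has the $(\epsilon/4, \delta, 4(p+q)^2)$-JL property for $S$. Applying Lemma~\ref{lem:polarizationJL} with its original set taken to be $S_0$ (so its polarization set is precisely our $S$), we get that with probability at least $1 - \delta$ the inner-product bound \eqref{eqn:JLforinnerprodonsetdiff} holds simultaneously for all $\vb{x}, \vb{y} \in S_0$. On that event, since $S_0$ is exactly $\{$columns of $U\} \cup \{$columns of $V\}$, Lemma~\ref{lem:compresseduntrymatmul} yields the entrywise estimate $\abs{\left(V^* \Omega^* \Omega U - V^* U\right)_{k,j}} \leq \epsilon$ for all valid $k,j$; and then Lemma~\ref{lem:fastmatmulbyJL}, whose hypothesis is precisely this entrywise estimate for the $U$ and $V$ coming from the SVDs of $A$ and $B$, gives $\norm{A\Omega^* \Omega B - AB}{F} \leq \epsilon \norm{A}{F} \norm{B}{F}$. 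Hence \eqref{equ:AMMdetprop} holds with probability at least $1 - \delta$, i.e., $\Omega$ has the $(\epsilon, \delta)$-AMM property for $A$ and $B$, which is the claim.

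I do not anticipate a genuine obstacle here, since all the analytic content has already been proved; the only care required is bookkeeping. Specifically, one must verify that the distortion parameter $\epsilon/4$, the failure probability $\delta$, and the cardinality $4(p+q)^2$ fed into Lemma~\ref{lem:polarizationJL} match those in the statement, and that a single realization of $\Omega$ satisfies all three conclusions at once — which is automatic, because Lemmas~\ref{lem:compresseduntrymatmul} and~\ref{lem:fastmatmulbyJL} are deterministic consequences of the one event produced by Lemma~\ref{lem:polarizationJL}. Finally, the real-scalar version, Lemma~\ref{lem:fastmatmulbyJLReal}, follows by running the identical argument but invoking Remark~\ref{Rem:RealPolarizationHelpsalittle} (the real polarization identity) in place of the complex one, which replaces the constant $4$ by $2$ in both the JL distortion parameter and the cardinality bound.
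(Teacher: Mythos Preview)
Your proposal is correct and follows essentially the same approach as the paper, which simply states that Lemmas~\ref{lem:polarizationJL}, \ref{lem:compresseduntrymatmul}, and~\ref{lem:fastmatmulbyJL} ``collectively prove'' the result. Your write-up is in fact more explicit than the paper's, spelling out the construction of $S$ from the singular vectors, the cardinality bookkeeping, and the fact that the chain of implications happens on a single high-probability event.
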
 

We will make use of this simple centering result with regards to $L^p$ norms of random variables. 
\begin{lem}
	\label{lem:rawmomentsboundcentered}
Suppose $X$ a real random variable, and let $p\geq 1$, Then,
	\begin{align*}
	   \norm{X-\E[X]}{L^p} \leq 2 \norm{X}{L^p}
	\end{align*}
\end{lem}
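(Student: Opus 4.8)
The plan is to use Minkowski's inequality (the triangle inequality for $L^p$ norms) to split off the constant, and then bound the resulting constant by the $L^p$ norm itself via Jensen's inequality. First I would write $\E[X]$ as a (deterministic) random variable and apply Minkowski's inequality to get
\[
\norm{X - \E[X]}{L^p} \leq \norm{X}{L^p} + \norm{\E[X]}{L^p} = \norm{X}{L^p} + \abs{\E[X]}.
\]

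Next I would bound $\abs{\E[X]}$. Since we are working over a probability space, $\abs{\E[X]} \leq \E\abs{X} = \norm{X}{L^1} \leq \norm{X}{L^p}$, where the first inequality is Jensen's inequality applied to the convex function $t \mapsto \abs{t}$ (or simply the triangle inequality for the integral), and the second is the monotonicity of $L^q$ norms in $q$ on a probability space (a consequence of Jensen's inequality applied to $t \mapsto \abs{t}^p$). Substituting this into the previous display yields $\norm{X - \E[X]}{L^p} \leq \norm{X}{L^p} + \norm{X}{L^p} = 2\norm{X}{L^p}$, as claimed. If $\norm{X}{L^p} = \infty$ the statement is trivial, so one may assume $X \in L^p$ throughout, which also guarantees $\E[X]$ is well-defined and finite.

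There is no real obstacle here — the only thing to be mild about is making sure the hypotheses (probability measure, $p \geq 1$ so that Minkowski applies) are invoked cleanly, and noting the degenerate $L^p$-infinite case. This is a routine lemma used only to justify the centering step in the proof of Theorem~\ref{thm:khatasJL}.
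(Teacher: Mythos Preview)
Your proof is correct and follows essentially the same approach as the paper: apply Minkowski's inequality to split off the constant $\E[X]$, then bound $\abs{\E[X]} \leq \norm{X}{L^1} \leq \norm{X}{L^p}$ via Jensen's inequality and the monotonicity of $L^p$ norms on a probability space. Your added remark on the degenerate $\norm{X}{L^p}=\infty$ case is a nice touch the paper omits.
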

\begin{proof}
Let $\mu=\E[X]$. By Jensen's inequality, $\norm{\mu}{L^p} = \abs{\mu} \leq \E[\abs{X}] = \norm{X}{L^1}$ Observe, 
\begin{align*}
    \norm{X - \mu}{L^p} &\leq \norm{X}{L^p} + \norm{\mu}{L^p} \\
     &\leq \norm{X}{L^p} + \norm{X}{L^1} \\
    &\leq 2\norm{X}{L^p}
\end{align*}
Where we have used Minkowski's inequality in the first line, and Jensen's inequality in the third.
\end{proof}

\subsection{Proof of Theorem~\ref{thm:musc2}}
\label{Appsec:FixingMusco2020} 

A similar proof appears in support of \cite[Theorem 2]{Musco2020}, which was itself simplified from earlier work \cite{Chowdhury2019}.  We reproduce the proof here for completeness, and to clarify details.  We begin by restating Theorem~\ref{thm:musc2} for ease of reference.

\begin{thm}[Restatement of Theorem~\ref{thm:musc2}]
Let $X\in \R^{n\times N}$ of rank $\tilde{r} \leq \min \{ n, N \}$ have the full SVD $X = U\Sigma V^T$, and let $V_{r'}\in\R^{N\times r'}$ denote the first $r'$ columns of $V \in \R^{N \times N}$ for all $r' \in [N]$.  Fix $r \in [n]$ and consider the head-tail split $X = X_r + X_{\setminus r}$.  If $\Omega \in\R^{m\times N}$ satisfies 
\begin{enumerate}
    \item subspace embedding property \eqref{equ:JLsubspaceEmbedding} with $\epsilon \leftarrow \frac{\epsilon}{3}$ for $A \leftarrow X_{r}^T$,
    \item approximate multiplication property \eqref{equ:AMMdetprop} with $\epsilon \leftarrow \frac{\epsilon}{6\sqrt{\min\{r,\tilde{r} \} } }$ for $A \leftarrow X_{\setminus r}$ and $B \leftarrow V_{\min\{r,\tilde{r} \} }$, 
    \item JL property \eqref{equ:JLproperty} with $\epsilon \leftarrow \frac{\epsilon}{6}$ for $S \leftarrow \Big\{$the $n$ columns of $X_{\setminus r}^T \Big\}$, and 
    \item approximate multiplication property \eqref{equ:AMMdetprop} with $\epsilon \leftarrow \frac{\epsilon}{6\sqrt{r}}$ for $A \leftarrow X_{\setminus r}$ and $B \leftarrow X_{\setminus r}^T$, 
    \end{enumerate}
then $\tilde{X} := X\Omega^T $ is an $(\epsilon,0,r)$-PCP sketch of $X$.
\end{thm}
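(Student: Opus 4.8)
The plan is to verify the defining inequality \eqref{eq:pcp} of Definition~\ref{def:pcp} directly with $c = 0$. Fix an arbitrary orthogonal projection $P \in \R^{n\times n}$ of rank at most $r$. Using $(I-P)^2 = I-P$ and the cyclic invariance of the trace, one rewrites the two quantities being compared as $\|X - PX\|_F^2 = \tr\big((I-P)XX^T\big)$ and $\|\tilde X - P\tilde X\|_F^2 = \tr\big((I-P)X\Omega^T\Omega X^T\big)$, so it suffices to prove $\big|\tr\big((I-P)(XX^T - X\Omega^T\Omega X^T)\big)\big| \le \epsilon\,\|X-PX\|_F^2$ for every such $P$. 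I would then substitute the head--tail split $X = X_r + X_{\setminus r}$ of Definition~\ref{def:headtail}. Since $X_r$ and $X_{\setminus r}$ are assembled from orthogonal blocks of the SVD of $X$, the cross terms vanish ($X_rX_{\setminus r}^T = X_{\setminus r}X_r^T = 0$), giving $XX^T = X_rX_r^T + X_{\setminus r}X_{\setminus r}^T$; the same cancellation also yields $\|(I-P)X\|_F^2 = \|(I-P)X_r\|_F^2 + \|(I-P)X_{\setminus r}\|_F^2$. Expanding $X\Omega^T\Omega X^T$ with the split leaves three error contributions to bound, each of which I would show is at most $\tfrac{\epsilon}{3}\|X - PX\|_F^2$: a head--head term built from $X_rX_r^T - X_r\Omega^T\Omega X_r^T$, a head--tail cross term built from $X_r\Omega^T\Omega X_{\setminus r}^T + X_{\setminus r}\Omega^T\Omega X_r^T$, and a tail--tail term built from $X_{\setminus r}X_{\setminus r}^T - X_{\setminus r}\Omega^T\Omega X_{\setminus r}^T$.

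For the head--head term, hypothesis~\ref{thm:musc2_assump1} is exactly the statement that $-\tfrac{\epsilon}{3}X_rX_r^T \preceq X_r\Omega^T\Omega X_r^T - X_rX_r^T \preceq \tfrac{\epsilon}{3}X_rX_r^T$; multiplying by the PSD matrix $I-P$ and taking traces (the trace of a product of two PSD matrices is nonnegative) bounds this term by $\tfrac{\epsilon}{3}\,\tr\big((I-P)X_rX_r^T\big) = \tfrac{\epsilon}{3}\|(I-P)X_r\|_F^2 \le \tfrac{\epsilon}{3}\|X - PX\|_F^2$. For the tail--tail term I would split off the $P = 0$ piece, which equals $\big|\,\|X_{\setminus r}\|_F^2 - \|\Omega X_{\setminus r}^T\|_F^2\big|$ and is at most $\tfrac{\epsilon}{6}\|X_{\setminus r}\|_F^2$ by hypothesis~\ref{thm:musc2_assump3} summed over columns (as in the proof of Lemma~\ref{lem:subembedimpliesfrobound}); the remaining $\tr(P\,\cdot)$ piece is at most $\|P\|_F\,\big\|X_{\setminus r}\Omega^T\Omega X_{\setminus r}^T - X_{\setminus r}X_{\setminus r}^T\big\|_F \le \sqrt r\cdot\tfrac{\epsilon}{6\sqrt r}\|X_{\setminus r}\|_F^2$, using $\|P\|_F = \sqrt{\operatorname{rank}(P)} \le \sqrt r$ and hypothesis~\ref{thm:musc2_assump4}. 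Both pieces are then finished by the Eckart--Young fact $\|X_{\setminus r}\|_F^2 \le \|X - PX\|_F^2$, valid for every rank-$\le r$ projection $P$, so the tail--tail term is at most $(\tfrac{\epsilon}{6} + \tfrac{\epsilon}{6})\|X - PX\|_F^2$.

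The head--tail cross term is the crux, since it has no ``true'' value to be compared against ($\tr((I-P)X_rX_{\setminus r}^T) = 0$) and so must be absorbed entirely into the error. The trick is that the row space of $X_r$ is the span of $V_s$ with $s := \min\{r,\tilde r\}$, hence $X_r = X_rV_sV_s^T$ while $X_{\setminus r}V_s = 0$. Writing $\tr\big((I-P)X_r\Omega^T\Omega X_{\setminus r}^T\big) = \big\langle (I-P)X_rV_s,\ X_{\setminus r}\Omega^T\Omega V_s\big\rangle_F$ and applying Cauchy--Schwarz, the second factor equals $\|X_{\setminus r}\Omega^T\Omega V_s - X_{\setminus r}V_s\|_F \le \tfrac{\epsilon}{6\sqrt s}\|X_{\setminus r}\|_F\sqrt s = \tfrac{\epsilon}{6}\|X_{\setminus r}\|_F$ by hypothesis~\ref{thm:musc2_assump2}, while the first factor equals $\|(I-P)X_r\|_F$; one application of AM--GM together with $\|(I-P)X_r\|_F^2 \le \|X-PX\|_F^2$ and Eckart--Young gives $\tfrac{\epsilon}{6}\|(I-P)X_r\|_F\|X_{\setminus r}\|_F \le \tfrac{\epsilon}{6}\|X - PX\|_F^2$, and the two cross pieces (transposes of one another) double this to $\tfrac{\epsilon}{3}\|X - PX\|_F^2$. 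Summing the three $\tfrac{\epsilon}{3}$-bounds gives the desired $\epsilon\|X - PX\|_F^2$; since $P$ was arbitrary, $\tilde X = X\Omega^T$ is an $(\epsilon,0,r)$-PCP sketch of $X$. The main obstacle I anticipate is bookkeeping rather than conceptual: matching the four distortion parameters $\epsilon/3,\ \epsilon/6,\ \epsilon/(6\sqrt s),\ \epsilon/(6\sqrt r)$ to the hypotheses without slack, being careful that every bound uses ``rank $\le r$'' and not ``rank exactly $r$'', and noting that the degenerate case $r \ge \tilde r$ (where $X_{\setminus r} = 0$ and only hypothesis~\ref{thm:musc2_assump1} is active) is handled automatically.
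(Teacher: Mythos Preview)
Your proof is correct and follows the same overall architecture as the paper: reduce to a trace inequality, head--tail split, three error pieces, one hypothesis per piece, with Eckart--Young used to replace $\|X_{\setminus r}\|_F$ by $\|X-PX\|_F$. Your head--head and tail--tail bounds are essentially identical to the paper's (you phrase the head--head bound via the Loewner order $-\tfrac{\epsilon}{3}X_rX_r^T \preceq X_r\Omega^T\Omega X_r^T - X_rX_r^T \preceq \tfrac{\epsilon}{3}X_rX_r^T$; the paper rewrites it as $\big|\|X_r^T(I-P)\|_F^2 - \|\Omega X_r^T(I-P)\|_F^2\big|$ and invokes Lemma~\ref{lem:subembedimpliesfrobound}, but this is the same content).

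The only real difference is the cross term. The paper takes a longer route: it inserts the thin SVD via $\tilde U\tilde U^TX_r = X_r$ and factors the full $(I-P)X = (I-P)\tilde U\tilde\Sigma\tilde V^T$ out of the trace, arriving at $\|(I-P)X\|_F\cdot\|X_{\setminus r}\Omega^T\Omega V_s\|_F$ directly from Cauchy--Schwarz (this requires manipulating $\tilde\Sigma^{-1}\tilde U^TX_r$ down to $V_s^T$). You instead use $X_r = X_rV_sV_s^T$ to pair $(I-P)X_rV_s$ against $X_{\setminus r}\Omega^T\Omega V_s$, obtaining $\|(I-P)X_r\|_F\cdot\tfrac{\epsilon}{6}\|X_{\setminus r}\|_F$, and then close with AM--GM plus $\|(I-P)X_r\|_F^2 + \|X_{\setminus r}\|_F^2 \le 2\|X-PX\|_F^2$. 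Your route is shorter and avoids the $\tilde\Sigma^{-1}$ bookkeeping; the paper's route avoids the AM--GM step by extracting $\|(I-P)X\|_F$ in one shot. Both land on the same $\tfrac{\epsilon}{3}$ budget for the cross piece.
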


\begin{proof}

Let $Q\in\R^{n\times r}$ be a an arbitrary matrix with orthonormal columns so that $QQ^T \in \R^{n \times n}$ is an orthogonal projection matrix. It suffices to show that 

\begin{equation} \label{eqn:pcpbound}
    \abs{\norm{(I-QQ^T)X}{F}^2 - \norm{(I-QQ^T)X\Omega^T}{F}^2} \leq \epsilon\norm{(I - QQ^T) X}{F}^2.
\end{equation}
Writing $X$ in terms of its head-tail split, \eqref{eqn:pcpbound} becomes 
\begin{equation} \label{eqn:pcpbound_split}
    \abs{\norm{(I-QQ^T)(X_{r}+ X_{\setminus r})}{F}^2 - \norm{(I-QQ^T)(X_{r}+ X_{\setminus r})\Omega^T}{F}^2} \leq \epsilon\norm{(I - QQ^T) X}{F}^2, 
\end{equation}
where $X_{r}$ has the thin SVD representation $X_{r} = U_{r} \Sigma_r V^T_{r}$ with $U_r$ and $V_r$ containing the first $r$ columns of $U$ and $V$ from the full SVD of $X$, respectively.\footnote{If $r \geq N$ we let  $V_r = V$.} 

Letting $P := (I-QQ^T)$, and using that $\tr{AA^T} = \norm{A}{F}^2$, one may expand the left hand side of \eqref{eqn:pcpbound_split}.  Noting that $X_{\setminus r} X_{r}^T = 0$ and $P = P^T$ while doing so, we can now see that
\begin{align*}&\abs{\norm{(I-QQ^T)(X_{r}+ X_{\setminus r})}{F}^2 - \norm{(I-QQ^T)(X_{r}+ X_{\setminus r})\Omega^T}{F}^2} ~=~\\
    &\abs{\tr\left(P(X_{r} X_{r}^T  +X_{\setminus r}X_{\setminus r}^T)P\right) - \tr\left(P(X_{r} \Omega^T\Omega X_{r}^T  + X_{r}\Omega^T \Omega X_{\setminus r}^T  + X_{\setminus r} \Omega^T \Omega X_{r}^T +X_{\setminus r}\Omega^T \Omega X_{\setminus r}^T) P\right) }.
\end{align*}
Regrouping terms in the last expression while noting the invariance of trace to transposition, one can we can now further see that
\begin{align*}&\abs{\tr\left(P(X_{r} X_{r}^T  +X_{\setminus r}X_{\setminus r}^T)P\right) - \tr\left(P(X_{r} \Omega^T\Omega X_{r}^T  + X_{r}\Omega^T \Omega X_{\setminus r}^T  + X_{\setminus r} \Omega^T \Omega X_{r}^T +X_{\setminus r}\Omega^T \Omega X_{\setminus r}^T) P\right) }\\
    &=~\abs{ \tr\left(P(X_{r} X_{r}^T -  X_{r}\Omega^T \Omega X_{ r}^T)P\right) + 2 \tr \left(P(X_{\setminus r} \Omega^T \Omega X_{r}^T) P\right) + \tr\left(P (X_{\setminus r}X_{\setminus r}^T - X_{\setminus r} \Omega^T \Omega X_{\setminus r}^T)P\right)} \\
    &\leq~\abs{\tr\left(P(X_{r} X_{r}^T -  X_{r}\Omega^T \Omega X_{ r}^T)P\right)} + 2\abs{\tr\left(P(X_{\setminus r} \Omega^T \Omega X_{r}^T) P\right)} + \abs{\tr\left(P (X_{\setminus r}X_{\setminus r}^T - X_{\setminus r} \Omega^T \Omega X_{\setminus r}^T)P\right)}. 
\end{align*}

Looking back at \eqref{eqn:pcpbound_split} in the light of this last computation, one can now see that it suffices to prove that
\begin{align}
&\abs{\tr\left(P(X_{r} X_{r}^T -  X_{r}\Omega^T \Omega X_{ r}^T)P\right)} + 2\abs{\tr\left(P(X_{\setminus r} \Omega^T \Omega X_{r}^T) P\right)} + \abs{\tr\left(P (X_{\setminus r}X_{\setminus r}^T - X_{\setminus r} \Omega^T \Omega X_{\setminus r}^T)P\right)}   \nonumber\\
&\leq~ \epsilon \norm{ P X }{F}^2 
\label{equ:NewConditiontoProve}
\end{align}
holds to establish the desired result.  Going forward we will therefore aim to prove \eqref{equ:NewConditiontoProve} by proving each of the following three bounds:  
\begin{itemize}
\item[(a)] $\abs{\tr\left(P(X_{r} X_{r}^T -  X_{r}\Omega^T \Omega X_{ r}^T)P\right)} ~\leq~ \frac{\epsilon}{3} \norm{ P X }{F}^2$, 
\item[(b)] $\abs{\tr\left(P(X_{\setminus r} \Omega^T \Omega X_{r}^T) P\right)} ~\leq~ \frac{\epsilon}{6} \norm{ P X }{F}^2$, and
\item[(c)] $\abs{\tr\left(P (X_{\setminus r}X_{\setminus r}^T - X_{\setminus r} \Omega^T \Omega X_{\setminus r}^T)P\right)} ~\leq~ \frac{\epsilon}{3} \norm{ P X }{F}^2$.
\end{itemize}
Proving (a) -- (c) will establish \eqref{equ:NewConditiontoProve}, thereby completing the proof.\\

\underline{\textbf{Proof of Bound (a):}}  Using again that $\tr{AA^T} = \tr{A^TA} = \norm{A}{F}^2$ and that $P = P^T$ we have
\begin{align*}
     \abs{\tr\left(P(X_{r} X_{r}^T -  X_{r}\Omega^T \Omega X_{ r}^T)P\right)} = \abs{\norm{ X^T_{r}P}{F}^2 - \norm{\Omega X^T_{r}P}{F}^2}.
\end{align*}
Applying Lemma~\ref{lem:subembedimpliesfrobound} in the light of assumption \ref{thm:musc2_assump1}, we now have that
\begin{align*}
     \abs{\tr\left(P(X_{r} X_{r}^T -  X_{r}\Omega^T \Omega X_{ r}^T)P\right)} &\leq \frac{\epsilon}{3} \norm{ X^T_{r}P}{F}^2 ~=~ \frac{\epsilon}{3} \norm{ V_r V_r^T X^T P}{F}^2\\
     &\leq \frac{\epsilon}{3} \norm{ X^T P}{F}^2 ~=~ \frac{\epsilon}{3} \norm{ P X}{F}^2
\end{align*}
as desired.\\

\underline{\textbf{Proof of Bound (b):}}  Using the invariance of trace to both transposition and permutations, as well as that $P^T = P = P^2$, we can see that
\begin{align*}
\abs{\tr\left(P(X_{\setminus r} \Omega^T \Omega X_{r}^T) P\right)} = \abs{\tr\left(P(X_{r} \Omega^T \Omega X_{\setminus r}^T) P\right)} = \abs{\tr\left(P X_{r} \Omega^T \Omega X_{\setminus r}^T \right)}.
\end{align*}
Recalling the full SVD $X = U\Sigma V^T$, we note that if $\tilde{r}:= {\rm rank}(X) < \min \{n,N \}$ we can remove the last $n - \tilde{r}$ columns of $U$, the last $N - \tilde{r}$ columns of $V$, and the last $n - \tilde{r}$ rows and $N - \tilde{r}$ columns of $\Sigma$ to form the thin SVD $X = \tilde{U}\tilde{\Sigma} \tilde{V}^T$ with $\tilde{U} \in \mathbbm{R}^{n \times \tilde{r}}$, $\tilde{\Sigma} \in \R^{\tilde{r} \times \tilde{r}}$, and $\tilde{V} \in \mathbbm{R}^{N \times \tilde{r}}$.  Having done so we note that $\tilde{\Sigma}$ will be invertable and that $\tilde{U} \tilde{U}^T X_{r} = X_{r}$ so that we may write
\begin{align*}
\abs{\tr\left(P(X_{\setminus r} \Omega^T \Omega X_{r}^T) P\right)} &= \abs{\tr\left(P X_{r} \Omega^T \Omega X_{\setminus r}^T \right)}\\
&= \abs{\tr\left(P \tilde{U} \tilde{U}^T X_{r} \Omega^T \Omega X_{\setminus r}^T \right)}\\
&= \abs{\tr\left( P \tilde{U} \tilde{\Sigma} \tilde{V}^T \tilde{V} \tilde{\Sigma}^{-1} \tilde{U}^T X_{ r} \Omega^T \Omega X_{\setminus r}^T\right)}\\
&= \abs{\tr \left( \left( P \tilde{U} \tilde{\Sigma} \tilde{V}^T \right) \left(\tilde{V} \tilde{\Sigma}^{-1} \tilde{U}^T X_{ r} \Omega^T \Omega X_{\setminus r}^T \right) \right)}\\
&= \abs{\tr \left( \left( P X \right) \left(\tilde{V} \tilde{\Sigma}^{-1} \tilde{U}^T X_{ r} \Omega^T \Omega X_{\setminus r}^T \right) \right)}.
\end{align*}

Recall now that $\langle A,B \rangle_{F} := \tr(A B^T)$ is an inner product on matrices with $\| A \|_{F} = \sqrt{\tr(A A^T)}$.  Hence, we may apply the Cauchy–Schwarz inequality to our last expression to see that 
\begin{align*}
\abs{\tr\left(P(X_{\setminus r} \Omega^T \Omega X_{r}^T) P\right)} &= \abs{\tr \left( \left( P X \right) \left(\tilde{V} \tilde{\Sigma}^{-1} \tilde{U}^T X_{ r} \Omega^T \Omega X_{\setminus r}^T \right) \right)}\\
&\leq \| P X \|_F \left\| \tilde{V} \tilde{\Sigma}^{-1} \tilde{U}^T X_{ r} \Omega^T \Omega X_{\setminus r}^T \right\|_F\\
&= \| P X \|_F \left\| \tilde{\Sigma}^{-1} \tilde{U}^T X_{ r} \Omega^T \Omega X_{\setminus r}^T \right\|_F.
\end{align*}
Expanding $X_{r}$ in terms of its thin SVD representation $X_{r} = U_{r} \Sigma_r V^T_{r}$ we now have that 
\begin{align*}
\abs{\tr\left(P(X_{\setminus r} \Omega^T \Omega X_{r}^T) P\right)} &\leq \| P X \|_F \left\| \tilde{\Sigma}^{-1} \tilde{U}^T X_{ r} \Omega^T \Omega X_{\setminus r}^T \right\|_F\\
&= \| P X \|_F \left\| \tilde{\Sigma}^{-1} \tilde{U}^T U_{r} \Sigma_r V^T_{r} \Omega^T \Omega X_{\setminus r}^T \right\|_F\\
&= \| P X \|_F \left\| V^T_{\min \{ r, \tilde{r} \}} \Omega^T \Omega X_{\setminus r}^T \right\|_F\\ 
&= \| P X \|_F \left\| X_{\setminus r} \Omega^T \Omega V_{\min \{ r, \tilde{r} \}} \right\|_F.
\end{align*}

Finally, we may now use that $X_{\setminus r} V_{\min \{ r, \tilde{r} \}} = X(I_N - V_r V_r^T) V_{\min \{ r, \tilde{r} \}} = X \left(V_{\min \{ r, \tilde{r} \}} - V_{\min \{ r, \tilde{r} \}} \right) = 0$ to see that
\begin{align*}
\abs{\tr\left(P(X_{\setminus r} \Omega^T \Omega X_{r}^T) P\right)} &\leq \| P X \|_F \left\| X_{\setminus r} \Omega^T \Omega V_{\min \{ r, \tilde{r} \}} \right\|_F\\
&= \| P X \|_F \left\| X_{\setminus r} \Omega^T \Omega V_{\min \{ r, \tilde{r} \}} - X_{\setminus r} V_{\min \{ r, \tilde{r} \}} \right\|_F\\
&\leq \| P X \|_F \left( \frac{\epsilon}{6\sqrt{\min \{ r, \tilde{r} \}}} \norm{X_{\setminus r}}{F}\norm{V_{\min \{ r, \tilde{r} \}}}{F} \right),
\end{align*}
where we have utilized assumption \ref{thm:musc2_assump2} in the last inequality.  We are now finished after using that $\norm{V_{\min \{ r, \tilde{r} \}}}{F} = \sqrt{\min \{ r, \tilde{r} \}}$, and noting that $\norm{X_{\setminus r}}{F} \leq \| P X \|_F$ holds for all rank $n - r$ or greater orthogonal projections $P$ by the definition of $X_r$.\\

\underline{\textbf{Proof of Bound (c):}} Again using the invariance of trace to permutations as well as $P = P^2 = I-QQ^T$ we have that
\begin{align*}
\abs{\tr\left(P (X_{\setminus r}X_{\setminus r}^T - X_{\setminus r} \Omega^T \Omega X_{\setminus r}^T)P\right)} &= \abs{\tr\left(P (X_{\setminus r}X_{\setminus r}^T - X_{\setminus r} \Omega^T \Omega X_{\setminus r}^T)\right)}\\
&= \abs{\tr\left( (I-QQ^T) (X_{\setminus r}X_{\setminus r}^T - X_{\setminus r} \Omega^T \Omega X_{\setminus r}^T)\right)}\\
& \leq \abs{\tr\left( X_{\setminus r}X_{\setminus r}^T - X_{\setminus r} \Omega^T \Omega X_{\setminus r}^T\right)}\\ 
&~~~~~~~~~~~~~~~~+ \abs{\tr\left( QQ^T (X_{\setminus r}X_{\setminus r}^T - X_{\setminus r} \Omega^T \Omega X_{\setminus r}^T)\right)}\\
&\leq \abs{ \left\| X_{\setminus r}^T \right\|_F^2 - \left\| \Omega X_{\setminus r}^T \right\|_F^2} + \left\| QQ^T \right\|_F \left\| X_{\setminus r}X_{\setminus r}^T - X_{\setminus r} \Omega^T \Omega X_{\setminus r}^T \right\|_F,
\end{align*}
where we have again utilized Cauchy–Schwarz in the last inequality.  Utilizing assumption \ref{thm:musc2_assump3}, the first term just above can be bounded by $\frac{\epsilon}{6} \| X_{\setminus r}^T \|_F^2 = \frac{\epsilon}{6} \| X_{\setminus r}\|_F^2$ using an argument analogous to the proof of Lemma~\ref{lem:subembedimpliesfrobound}.  Doing so we see that
\begin{align*}
\abs{\tr\left(P (X_{\setminus r}X_{\setminus r}^T - X_{\setminus r} \Omega^T \Omega X_{\setminus r}^T)P\right)} &\leq \frac{\epsilon}{6} \| X_{\setminus r} \|_F^2 + \left\| QQ^T \right\|_F \left\| X_{\setminus r}X_{\setminus r}^T - X_{\setminus r} \Omega^T \Omega X_{\setminus r}^T \right\|_F\\
&= \frac{\epsilon}{6} \| X_{\setminus r} \|_F^2 + \sqrt{r} \left\| X_{\setminus r}X_{\setminus r}^T - X_{\setminus r} \Omega^T \Omega X_{\setminus r}^T \right\|_F.
\end{align*}

Finally, we may now employ assumption \ref{thm:musc2_assump4} to bound the second term just above.  Doing so we obtain that
\begin{align*}
\abs{\tr\left(P (X_{\setminus r}X_{\setminus r}^T - X_{\setminus r} \Omega^T \Omega X_{\setminus r}^T)P\right)} &\leq \frac{\epsilon}{6} \| X_{\setminus r} \|_F^2 + \sqrt{r} \left\| X_{\setminus r}X_{\setminus r}^T - X_{\setminus r} \Omega^T \Omega X_{\setminus r}^T \right\|_F\\
&\leq \frac{\epsilon}{6} \| X_{\setminus r} \|_F^2 + \sqrt{r} \frac{\epsilon}{6 \sqrt{r}} \| X_{\setminus r} \|_F^2\\
&= \frac{\epsilon}{3} \| X_{\setminus r} \|_F^2.
\end{align*}
To conclude we note again that $\norm{X_{\setminus r}}{F} \leq \| P X \|_F$ holds for all rank $n - r$ or greater orthogonal projections $P$ by the definition of $X_r$.
\end{proof}

\section{Algorithms}
\label{appendix:algs}
There are four tasks relevant in Algorithm \ref{alg:loo_one_pass_prime} where by making difference choices for these tasks, we get variants of the algorithm. Two of the tasks are related to the measurement process: sketching the tensor in order to produce leave-one-out measurements, and also sketching the tensor without leaving out a mode in order to produce measurements useful in computing the core. The second two tasks are then recovering the factor matrices, and recovering the core.

Note, if we permit a second pass on the tensor $\mathcal{X}$ after estimating factor matrices $Q_i$, then core measurements are not necessary, and the optimal core given these factor matrices can be found by computing
\[
\mathcal{G} = \mathcal{X}\times_1 Q_1^T \times_2 Q_2^T \dots \times_d Q_d^T
\]
See section 4.2 in \cite{Kolda2009}. This then is the computation used in two-pass versions of the algorithm; it requires first to compute the factor matrices from measurements, and then apply these factors modewise to the original tensor; no separate measurement tensor for the core is required. Crucially however, this relies on a second access to the data; for which we are computing a HOSVD. 

In this appendix then, we break apart Algorithm \ref{alg:loo_one_pass_prime} into these tasks, and show how combining different choices for these tasks produces the variants of the algorithm considered.

Within the pseudo-code, ``unfold'' refers to the operation of taking a tensor and flattening it into a matrix of the size listed by arranging the specified mode's fibers as the columns. The operation ``fold'' is the inverse of this, taking a matrix as viewed as an unfolding along the specified mode and reshaping it into a tensor of the given dimensions. 

Algorithm \ref{alg:kron_sketching} takes measurement matrices (e.g. sub-gaussian random matrices) and the data tensor $\mathcal{X}$ and produces a set of leave-one-out measurements which can be viewed as tensors or as flattened matrices. That is, it is practical to apply the measurement matrices along the modes and obtain a tensor of measurements with $d-1$ modes each of length $m$ and one mode of length $n$, see Figure~\ref{fig:brks_diagram} for a schematic depiction. The measurement process takes slices of the tensor and maps them to (smaller) slices with (mostly) shorter edge lengths. Or we can conceive of the measurements as a matrix by unfolding the measurement tensor along the mode that is uncompressed and thus obtaining a matrix of size $n\times m^{d-1}$, one such matrix for each mode $i\in [d]$.  It is Kronecker structured because of the correspondence of modewise products of tensors with matrices to matrix products of unfoldings of a tensor with matrices, see for example \eqref{eqn:mat_tucker}.

\begin{algorithm}[hbt!]
\SetKwComment{Comment}{\# }{}
\SetKwInOut{Input}{input}\SetKwInOut{Output}{output}%
\caption{Leave-One-Out Kronecker Sketching}	\label{alg:kron_sketching}
 \Input
  {\par $\Omega_{(i,j)}$ where $\text{rank}(\Omega_{(i,i)}) = n$ and $\Omega_{(i,j)} \in \R^{m \times n}$ for $i,j \in [d]$, $i\neq j$
  \par $\mathcal{X}$ a $d$ mode tensor with side lengths $n$}
\Output{$B_{i}$ for $i\in[d]$ }
\For{$i\in[d]$}{
    \par $\mathcal{B}_{i} \gets \mathcal{X}\times_1 \Omega_{(i,1)} \times_2 \Omega_{(i,2)} \times_3 \dots \times_{d} \Omega_{(i,d)}$
    
    \Comment{Now unfold the measurement tensor so the mode-$i$ fibers are columns, size $n\times m^{d-1}$}
    \par $B_{i} \gets \text{unfold} (\mathcal{B}_i,n\times m^{d-1},\text{mode}=i)$
}
\end{algorithm}
Alternatively, we can use Algorithm \ref{alg:khat_sketching} which also produces a set of leave-one-out measurements of the tensor $\mathcal{X}$ which can be viewed as a matrix. It is Khatri-Rao structured because the measurement matrix applied to the unfolding is formed using Khatri-Rao products. Note, unlike Algorithm~\ref{alg:kron_sketching}, there is not necessarily any natural way to view the measurements $B_i$ as tensors of $d$ modes - the sketching process in this case takes slices of the tensor and maps them to vectors and we gather these into matrices $B_i$ of size $n \times m$ for each $i\in[d]$.

\begin{algorithm}[hbt!]
\SetKwComment{Comment}{\# }{}
\SetKwInOut{Input}{input}\SetKwInOut{Output}{output}%
\caption{Leave-One-Out Khatri-Rao Sketching}	\label{alg:khat_sketching}
 \Input
  {\par $\Omega_{(i,j)}$ where $\text{rank}(\Omega_{(i,i)}) = n$ and $\Omega_{(i,j)} \in \R^{m \times n}$ for $i,j \in [d]$, $i\neq j$
  \par $\mathcal{X}$ a $d$ mode tensor with side lengths $n$}
\Output{$B_{i}$ for $i\in[d]$ }
\For{$i\in[d]$}{
    $B_i \gets \Omega_{(i,i) }X_{[i]} \left( \Omega_{(i,1)} \sbullet \Omega_{(i,2)} \sbullet \dots \sbullet \Omega_{i,i-1} \sbullet \Omega_{i,i+1} \sbullet \dots \sbullet \Omega_{i,d}  \right)^T$

}
\end{algorithm}

In order to estimate the core, we wish another, independent set of measurements. These are obtained in the manner as Algorithm \ref{alg:kron_sketching}, only now we are permitted to compress all the modes, producing a tensor of $d$ modes all with side length equal to $m_c$.

\begin{algorithm}[hbt!]
\SetKwComment{Comment}{\# }{}
\SetKwInOut{Input}{input}\SetKwInOut{Output}{output}%
\caption{Core Sketching}	\label{alg:core_sketching}
 \Input
  {\par $\Phi_{i}\in \R^{m_c \times n}$, $i\in[d]$
  \par $\mathcal{X}$ a $d$ mode tensor with side lengths $n$}
\Output{$\mathcal{B}_{c}$}
    \par $\mathcal{B}_{c} \gets \mathcal{X}\times_1 \Phi_{1} \times_2 \Phi_{2} \times_3 \dots \times_{d} \Phi_{d}$
\end{algorithm}

Next, we describe the procedure which takes as input $(a)$ leave-one-out measurements of $\mathcal{X}$ (from Algorithm \ref{alg:kron_sketching} or \ref{alg:khat_sketching}), $(b)$ the full-rank sensing matrices applied to the uncompressed modes of $\mathcal{X}$, and $(c)$ our desired target rank vector of $\vb{r}$, and then outputs a factor matrix $Q_i$ for each mode.  In the case of exact arithmetic, no rank truncation, and no noise, this exactly recovers the factors (see \eqref{eqn:factorerrorbound}).

\begin{algorithm}[hbt!]
\SetKwComment{Comment}{\# }{}
\SetKwInOut{Input}{input}\SetKwInOut{Output}{output}%
\caption{Recover HOSVD Factors from Leave-One-Out Measurements}	\label{alg:recover_factors}
 \Input
  { \par $B_{i} \in R^{n\times m^{d-1}}$ for $i\in[d]$ measurements that leave mode $i$ uncompressed
  \par $\Omega_{(i,i)}\in R^{n\times n}$ for $i \in [d]$
  \par $\vb{r} = (r, \dots, r)$ desired rank for HOSVD
 }
\Output{$ Q_1, \dots, Q_d \in \R^{n\times r}$ }
\Comment{Factor matrix recovery}
\For{$i\in[d]$}{
\Comment{Solve $n\times n$ linear system}
\par Solve $\Omega_{(i,i)}F_{i} = B_{i}$ for $F_i$ 

\Comment{Compute SVD and keep the top $r$ singular vectors}
\par $ U,\Sigma, V^T \gets \text{SVD}(F_{i})$
\par $Q_i \gets U_{:,:r}$
}
\end{algorithm}

Lastly, we consider the task of obtaining the core of the HOSVD of the data tensor. The two ways described in section \ref{sec:MainResBRKS} are to either compute the core using a second access to the data tensor - in which case this is a matter of applying the transpose of the factor matrices from Algorithm \ref{alg:recover_factors} to the data tensor. This is detailed in Algorithm \ref{alg:kron_2pass}.
\begin{algorithm}[hbt!]
\SetKwComment{Comment}{\# }{}
\SetKwInOut{Input}{input}\SetKwInOut{Output}{output}%
\caption{Compute HOSVD Core with Second Access}	\label{alg:compute_core}
 \Input
  {
  \par $ Q_1, \dots, Q_d \in \R^{n\times r}$ computed factor matrices
   \par $\mathcal{X}$ a $d$ mode tensor with side lengths $n$.
 }
 \Output{$ \mathcal{G}$ a $d$ mode tensor with side lengths $r$}
 
\par $\mathcal{G} = \mathcal{X}\times_1 Q_1^T \times_2 Q_2^T \dots \times_d Q_d^T$
\end{algorithm}

In the scenario in which a second access to the tensor is not desired, instead we obtain the core of the HOSVD by solving a linear system involving the measurement operators and the factor matrices, see \eqref{eqn:core_linear_solve}. This is equivalent to solving the linear system a mode at a time as detailed in Algorithm \ref{alg:brks_tucker_1pass_core}, a method of practical value because it does not require as much working memory.

\begin{algorithm}[hbt!]
\SetKwComment{Comment}{\# }{}
\SetKwInOut{Input}{input}\SetKwInOut{Output}{output}%
\caption{Recover HOSVD Core from Measurements}	\label{alg:brks_tucker_1pass_core}
 \Input
  {
  \par $\mathcal{B}_{c}$ a $d$ mode tensor with side lengths $m_c$
  \par $\Phi_i \in\R^{m_c \times n}$
  \par $ Q_1, \dots, Q_d \in \R^{n\times r}$ computed factor matrices
 }
 \Output{$ \mathcal{H}$ }
\For{$i\in[d]$}{
\Comment{unfold measurements, mode-$i$ fibers are columns, size $m_c\times r^{(i-1)}m_c^{d-1-(i-1)}$}
\par $H \gets \text{unfold}(\mathcal{B}_c,m_c\times r^{(i-1)}m_c^{d-1-(i-1)},\text{mode}=i)$ 

\Comment{Undo the mode-$i$ measurement operator and factor's action by finding least square solution to $m_c\times r$ over-determined linear system }
\par Solve $\Phi_i Q_i H_{\text{new}} = H$ for $H_{new}$

\Comment{reshape the flattened partially solved core into a tensor}

\par $\mathcal{B}_c \gets \text{fold}(H_{\text{new}},\underbrace{r\times r\dots \times r}_{i}\underbrace{\times m_c\times \dots \times m_c}_{d-i},\text{mode}=i)$

\Comment{Each iteration $m_c\to r$ in $i$th mode}
} 
\end{algorithm}

A third possibility is to ``re-use'' leave-one-out measurements to compute the core. Theoretically this involves new dependencies on the errors introduced by estimating the factors and errors introduced by recovering the core that are not addressed in any of our main results. Practically however, it would be desirable to avoid having to produce the core measurement tensor, and empirically on synthetic data the overall error is not effected. 

\begin{algorithm}[hbt!]
\SetKwComment{Comment}{\# }{}
\SetKwInOut{Input}{input}\SetKwInOut{Output}{output}%
\caption{Recover HOSVD Core from Recycled Leave One Out Measurements}	\label{alg:recover_core}
 \Input
  {
  \par $B_j$ for a fixed $j$.
  \par $\Omega_{j,i}$ for each $i\in[d]$
  \par $ Q_1, \dots, Q_d \in \R^{n\times r}$ computed factor matrices
 }
 \Output{$ \mathcal{H}$ }

\For{$i\in[d]$}{
\Comment{unfold measurements, mode-$i$ fibers are columns, size $m\times r^{(i-1)}m^{d-1-(i-1)}$}
\par $H \gets \text{unfold}(\mathcal{B}_j,m\times r^{(i-1)}m^{d-1-(i-1)},\text{mode}=i)$ 

\Comment{Undo the mode-$i$ measurement operator and factor's action by finding least square solution to $m\times r$ over-determined linear system }
\par Solve $\Omega_{(j,i)} Q_i H_{\text{new}} = H$ for $H_{new}$

\Comment{reshape the flattened partially solved core into a tensor}

\par $\mathcal{B}_j \gets \text{fold}(H_{\text{new}},\underbrace{r\times r\dots \times r}_{i}\underbrace{\times m_c\times \dots \times m_c}_{d-i},\text{mode}=i)$

\Comment{Each iteration $m\to r$ in $i$th mode or $n \to n$ when $i=j$}
} 
\end{algorithm}

We are now able to state the variants of the general algorithm.

\begin{algorithm}[hbt!]
\SetKwComment{Comment}{\# }{}
\SetKwInOut{Input}{input}\SetKwInOut{Output}{output}%
\caption{Recover HOSVD Kronecker One Pass}	\label{alg:kron_1pass}
\Comment{Obtain measurements for factors with Alg.~\ref{alg:kron_sketching} }
 \par $B_1, B_2, \dots, B_d \gets \text{Leave-One-Out Kronecker Sketching} (\left\{ \Omega_{(i,j)}\right\}_{i,j \in [d]}, \mathcal{X})$
 
 \Comment{Obtain measurement for core with Alg.~\ref{alg:core_sketching}}
 \par $\mathcal{B}_c \gets \text{Core Sketching} (\left\{ \Phi_i\right\}_{i\in[d]}, \mathcal{X})$
 
 \Comment{Estimate factor matrices using Alg.~\ref{alg:recover_factors}}
 \par $Q_1, Q_2, \dots, Q_d \gets$ Recover HOSVD Factors from 
 
 Leave-One-Out Measurements $(\left\{ \Omega_{(i,i)}\right\}_{i \in [d]},\left\{ B_{i}\right\}_{i \in [d]}, (r,\dots, r))$
 
 \Comment {Estimate core using Alg.~\ref{alg:recover_core}}
 \par $\mathcal{H} \gets \text{Recover HOSVD Core from Measurements} (\left\{ \Phi_i\right\}_{i \in [d]}, \mathcal{B}_c)$
  
 \Output{$ \hat{\mathcal{X}} = [\![ \mathcal{H}, Q_1,\dots Q_d ]\!]$ }
\end{algorithm}

\begin{algorithm}[hbt!]
\SetKwComment{Comment}{\# }{}
\SetKwInOut{Input}{input}\SetKwInOut{Output}{output}%
\caption{Recover HOSVD Kronecker Two Pass}	\label{alg:kron_2pass}
\Comment{Obtain measurements for factors with Alg.~\ref{alg:kron_sketching}}
 \par $B_1, B_2, \dots, B_d \gets \text{Leave-One-Out Kronecker Sketching} (\left\{ \Omega_{(i,j)}\right\}_{i,j \in [d]}, \mathcal{X})$
 
 \Comment{Estimate factor matrices using Alg.~\ref{alg:recover_factors}}
 \par $Q_1, Q_2, \dots, Q_d \gets$ Recover HOSVD Factors from 
 
 Leave-One-Out Measurements$(\left\{ \Omega_{(i,i)}\right\}_{i \in [d]},\left\{ B_{i}\right\}_{i \in [d]}, (r,\dots, r))$
 
 \Comment{Compute core using Alg.~\ref{alg:compute_core}}
 \par $\mathcal{H} \gets \text{Compute HOSVD Core with Second Access} (\left\{ Q_i \right\}_{i \in [d]}, \mathcal{X})$
  
 \Output{$ \hat{\mathcal{X}} = [\![ \mathcal{H}, Q_1,\dots Q_d ]\!]$ }
\end{algorithm}

\begin{algorithm}[hbt!]
\SetKwComment{Comment}{\# }{}
\SetKwInOut{Input}{input}\SetKwInOut{Output}{output}%
\caption{Recover HOSVD Khatri-Rao One Pass}	\label{alg:khat_1pass}
\Comment{Obtain measurements for factors with Alg.~\ref{alg:khat_sketching}}
 \par $B_1, B_2, \dots, B_d \gets \text{Leave-One-Out Khatri-Rao Sketching} (\left\{ \Omega_{(i,j)}\right\}_{i,j \in [d]}, \mathcal{X})$
 
 \Comment{Obtain measurements for core with Alg.~\ref{alg:core_sketching}}
 \par $\mathcal{B}_c \gets \text{Core Sketching} (\left\{ \Phi_i\right\}_{i\in[d]}, \mathcal{X})$
 
 \Comment{Estimate factor matrices using Alg.~\ref{alg:recover_factors}}
 \par $Q_1, Q_2, \dots, Q_d \gets$ Recover HOSVD Factors from 
 Leave-One-Out Measurements $(\left\{ \Omega_{(i,i)}\right\}_{i \in [d]},\left\{ B_{i}\right\}_{i \in [d]}, (r,\dots, r))$
 
 \Comment {Estimate core using Alg.~\ref{alg:recover_core}}
 \par $\mathcal{H} \gets \text{Recover HOSVD Core from Measurements} (\left\{ \Phi_i\right\}_{i \in [d]}, \mathcal{B}_c)$
  
 \Output{$ \hat{\mathcal{X}} = [\![ \mathcal{H}, Q_1,\dots Q_d ]\!]$ }
\end{algorithm}

\begin{algorithm}[hbt!]
\SetKwComment{Comment}{\# }{}
\SetKwInOut{Input}{input}\SetKwInOut{Output}{output}%
\caption{Recover HOSVD Khatri-Rao Two Pass}	\label{alg:khat_2pass}
\Comment{Obtain measurements for factors with Alg.~\ref{alg:khat_sketching}}
 \par $B_1, B_2, \dots, B_d \gets \text{Leave-One-Out Khatri-Rao Sketching} (\left\{ \Omega_{(i,j)}\right\}_{i,j \in [d]}, \mathcal{X})$
 
  \Comment{Estimate factor matrices using Alg.~\ref{alg:recover_factors}}
 \par $Q_1, Q_2, \dots, Q_d \gets$ Recover HOSVD Factors from
 Leave-One-Out Measurements $(\left\{ \Omega_{(i,i)}\right\}_{i \in [d]},\left\{ B_{i}\right\}_{i \in [d]}, (r,\dots, r))$
 
  \Comment{Compute core using Alg.~\ref{alg:compute_core}}
 \par $\mathcal{H} \gets \text{Compute HOSVD Core with Second Access} (\left\{ Q_i \right\}_{i \in [d]}, \mathcal{X})$
  
 \Output{$ \hat{\mathcal{X}} = [\![ \mathcal{H}, Q_1,\dots Q_d ]\!]$ }
\end{algorithm}
\end{appendices}

\clearpage


\bibliography{biblio}

\end{document}